\newcommand{\temp}[1]{\textcolor{blue}{TEMP: #1}}
\newcommand{\ii}{\mathrm{i}}
\newcommand{\dd}{\mathrm{d}}
\newcommand{\gconv}{\xrightarrow{G\text{-cov.}}}
\newcommand{\rap}{R}
\newcommand{\cov}{\mathrm{Cov}}
\newcommand{\dist}{\mathrm{d}}
\newcommand{\cost}{\mathrm{c}}
\DeclarePairedDelimiter\ceil{\lceil}{\rceil}
\DeclarePairedDelimiter\floor{\lfloor}{\rfloor}
\newtheorem{thm}{Theorem}
\newtheorem{lem}[thm]{Lemma}
\newtheorem*{lem*}{Lemma}
\newtheorem{cor}[thm]{Corollary}
\newtheorem{prop}[thm]{Proposition}
\theoremstyle{remark}
\begin{document}

\title{Quantum geometric tensor determines the pure-state i.i.d. conversion rate in the resource theory of asymmetry for any compact Lie group}

\makeatletter
\let\inserttitle\@title
\makeatother

\author{Koji Yamaguchi}
\affiliation{Department of Communication Engineering and Informatics, University of Electro-Communications, 1-5-1 Chofugaoka, Chofu,
Tokyo, 182-8585, Japan}

\author{Yosuke Mitsuhashi}
\affiliation{Department of Basic Science, University of Tokyo, 3-8-1 Komaba, Meguro-ku, Tokyo 153-8902, Japan}

\author{Tomohiro Shitara}
\affiliation{NTT Computer and Data Science Laboratories, NTT Corporation,
3-9-11 Midori-cho, Musashino-shi, Tokyo 180-8585, Japan}

\author{Hiroyasu Tajima}
\affiliation{Department of Communication Engineering and Informatics, University of Electro-Communications, 1-5-1 Chofugaoka, Chofu,
Tokyo, 182-8585, Japan}
\affiliation{JST, FOREST, 4-1-8 Honcho, Kawaguchi, Saitama, 332-0012, Japan}

\begin{abstract}
Quantifying physical concepts in terms of the ultimate performance of a given task
has been central to theoretical progress, as illustrated by thermodynamic entropy and entanglement entropy, which respectively quantify irreversibility and quantum correlations. Symmetry breaking is equally universal, yet lacks such an operational quantification. While an operational characterization of symmetry breaking through asymptotic state-conversion efficiency is a central goal of the resource theory of asymmetry (RTA), such a characterization has so far been completed only for the $U(1)$ group among continuous symmetries. Here, we identify the complete measure of symmetry breaking for a general continuous symmetry described by any compact Lie group. Specifically, we show that the asymptotic conversion rate between many copies of pure states in RTA is determined by the quantum geometric tensor, thereby establishing it as the complete measure of symmetry breaking.
As an immediate consequence of our conversion rate formula, we also resolve the Marvian-Spekkens conjecture on conditions for reversible conversion in RTA, which has remained unproven for over a decade.
Leveraging the connection between symmetry breaking and the theory of quantum reference frames, we also systematically introduce a standardized reference state for frameness based on our asymptotic conversion theory.
In addition, by applying our analysis to a standard quantum-thermodynamic scenario, we show that asymptotic state conversion in contact with heat baths generally requires macroscopic coherence in the thermodynamic limit.
\end{abstract}

\maketitle

\section{Introduction}
Quantifying fundamental concepts through the ultimate performance of a given task has driven theoretical advances in physics. Thermodynamic entropy, emerging from the study of the ultimate limits of state transformation under adiabatic operations, serves as a measure of irreversibility and has led to a modern formulation of the second law of thermodynamics \cite{lieb_physics_1999}.
Likewise, entanglement entropy was discovered as a measure of quantum correlations that fully characterizes the state convertibility under local operations and classical communication (LOCC) \cite{bennett_concentrating_1996}.  Its relevance now extends beyond quantum information science, playing a pivotal role in developments across various fields of physics, including condensed matter physics \cite{vidal_entanglement_2003,calabrese_entanglement_2004} and high-energy physics~\cite{ryu_holographic_2006}.

However, symmetry and its breaking, equally universal to irreversibility and quantum correlations, still lack such an operational characterization.
They are cornerstones of physics, which play a vital role in characterizing natural phenomena across almost every modern field.
Considering its ubiquity, quantifying symmetry breaking under symmetry constraints in terms of operational limits will be a foundation to yield profound theoretical insights and a far-reaching impact, paralleling the influential role of entanglement entropy in modern physics.

The operational quantification of fundamental concepts constitutes a central challenge in resource theories \cite{chitambar_quantum_2019}, which serve as versatile frameworks generalizing entanglement theory and thermodynamics. 
Their major goal is to identify the quantity that fully specifies the optimal asymptotic conversion rate between many identical copies of quantum states by given allowed operations, which is called a complete measure~\cite{sagawa_entropy_2022,datta_is_2023}. 
Entanglement entropy, von Neumann entropy, and Helmholtz free energy fulfill this role for LOCC operations \cite{bennett_concentrating_1996}, adiabatic operations \cite{horodecki_reversible_2003}, and isothermal operations \cite{brandao_resource_2013}, respectively. Similarly, identifying an analogous measure for symmetry breaking necessitates the development of a suitable resource theory.

The resource theory of asymmetry (RTA) \cite{bartlett_reference_2007,gour_resource_2008,gour_measuring_2009,marvian_mashhad_symmetry_2012,korzekwa_resource_2013,marvian_asymmetry_2014}, previously referred to as the resource theory of quantum reference frames \cite{bartlett_reference_2007,gour_resource_2008,gour_measuring_2009}, provides a rigorous framework for quantifying symmetry breaking. However, despite active research in RTA for over a decade \cite{bartlett_reference_2007,gour_resource_2008,gour_measuring_2009, marvian_quantum_2016, marvian_no-broadcasting_2019,lostaglio_coherence_2019,lostaglio_quantum_2015, Faist2015Gibbs-preserving, tajima_GibbsPreservingOperationsRequiringInfiniteAmount_2025,marvian_coherence_2020,marvian_operational_2022,kubica_using_2021,zhou_new_2021,yang_optimal_2022,tajima_UniversalLimitationQuantuminformationrecovery_2021,liu_QuantumErrorCorrectionmeetscontinuous_2021,tajima_universal_2022,liu_approximate_2023,tajima_uncertainty_2018,tajima_coherence_2020,tajima_universal_2022, WAY_RTA1, marvian_InformationtheoreticAccountWignerArakiYanasetheorem_2012,korzekwa_resource_2013,tajima_CoherencevarianceUncertaintyRelationcoherencecost_2019a, tajima_universal_2022,marvian_mashhad_symmetry_2012,marvian_asymmetry_2014,takagi_skew_2019}, progress toward identifying a complete measure for symmetry breaking remains limited. In particular, for continuous symmetries, a complete measure has been established only for the simplest $U(1)$ group \cite{gour_resource_2008,marvian_coherence_2020,marvian_operational_2022}, while nothing is currently known about any other continuous groups beyond a few highly specific examples of states \cite{gour_resource_2008,yang_units_2017}. 
Consequently, while the RTA for the $U(1)$ group has led to applications in a wide range of fields, including quantum thermodynamics \cite{lostaglio_quantum_2015,Faist2015Gibbs-preserving,marvian_coherence_2020,tajima_GibbsPreservingOperationsRequiringInfiniteAmount_2025}, measurements \cite{WAY_RTA1,marvian_InformationtheoreticAccountWignerArakiYanasetheorem_2012,korzekwa_resource_2013,tajima_CoherencevarianceUncertaintyRelationcoherencecost_2019a, tajima_universal_2022,ET2023}, quantum computing \cite{tajima_uncertainty_2018,tajima_coherence_2020,tajima_UniversalLimitationQuantuminformationrecovery_2021,tajima_universal_2022}, error-correcting codes \cite{kubica_using_2021,zhou_new_2021,yang_optimal_2022,tajima_UniversalLimitationQuantuminformationrecovery_2021,liu_QuantumErrorCorrectionmeetscontinuous_2021,tajima_universal_2022,liu_approximate_2023}, and black hole physics \cite{tajima_UniversalLimitationQuantuminformationrecovery_2021,tajima_universal_2022}, it has a limited impact on research domains dictated by non-Abelian Lie groups, such as $SU(2)$ spin‑rotation symmetry in quantum magnets \cite{haldane_nonlinear_1983,affleck_rigorous_1987} and cold‑atom Hubbard models \cite{jaksch_cold_2005}, and the $SU(4)$ symmetry realized in multicomponent fractional quantum Hall systems in graphene \cite{dean_multicomponent_2011}.
Therefore, identifying fundamental measures for symmetry breaking for non-Abelian Lie groups is an urgent task to open new avenues in these areas.

In this paper, we identify the complete measure of symmetry breaking for general continuous symmetries described by any compact Lie groups. Specifically, we show that the quantum geometric tensor (QGT) \cite{provost_riemannian_1980,berry_quantum_1989}, consisting of the quantum Fisher information matrix and the Berry curvature, fully determines the asymptotic conversion rate between independent and identically distributed (i.i.d.) pure states in RTA. 
Interestingly, the QGT has been studied in a different context of topological phases of matter~\cite{thouless_quantized_1982,zhao_singularities_2009,gu_fidelity_2010,xiao_berry_2010}.
Our result reveals its operational interpretation by establishing it as the complete measure of symmetry breaking in RTA, and thus provides information-theoretic foundations for the use of QGT, much like entanglement entropy in entanglement theory.

Our result also provides a fundamental formula for the exploitation of symmetry breaking as a resource in quantum technologies. 
Concretely, it presents the ultimate limits of distillation~\cite{nielsen_quantum_2010}, i.e., converting many low-quality resource states into a few high-quality ones, which has played a critical role in the field of entanglement, as illustrated in applications such as quantum cryptography \cite{ekert_quantum_1991}, distributed quantum computing \cite{cirac_distributed_1999}, and quantum internet \cite{kimble_quantum_2008}.
Moreover, since our formula is expressed through a single matrix inequality, it yields a computable prediction of the fundamental efficiency bound of the distillation processes.
As an immediate consequence of the formula, we also derive a necessary and sufficient condition for reversible conversion, which resolves a long-standing open problem known as the Marvian-Spekkens conjecture \cite{marvian_mashhad_symmetry_2012,marvian_asymmetry_2014}. 

The significance of quantifying symmetry breaking also extends to the investigation of quantum reference frames.
Any physical operation is fundamentally defined relative to a reference structure, such as a time origin or a Cartesian frame. If such a structure is absent or misaligned, operations cannot, in general, be perfectly implemented. This effective lack of reference gives rise to superselection rules~\cite{bartlett_reference_2007}, which restrict the set of feasible operations. 
A symmetry-breaking quantum state therefore serves as a valuable resource for overcoming these restrictions.
For example, a state that is not invariant under time translation can serve as a quantum clock, specifying the time origin. Similarly, a state that is not invariant under spatial rotations can function as a quantum gyroscope, defining the orientation of a Cartesian frame. Building on this perspective, the RTA has in fact been developed as a theory of quantum reference frames in earlier works~\cite{bartlett_reference_2007,gour_resource_2008,gour_measuring_2009}.

When symmetry breaking is regarded as a resource for correcting misalignment, the ability to manipulate such resources becomes essential. In particular, efficient communication requires systematic conversions of these resources. For instance, when quantum communication channels are limited, it is essential to convert existing resources into high-quality states, thereby maximizing the reference-frame information transmitted per channel use. Conversely, if the supply of symmetry-breaking resources is restricted but the goal is to distribute them among many parties, the priority shifts to producing as many resource states as possible, even at the expense of their individual quality. The conversion theory within the RTA directly applies to such scenarios, providing explicit protocols for optimal conversions. As a demonstration, we show that the conversion theory enables a systematic definition of standardized reference states for continuous symmetries associated with semisimple compact Lie groups.

The development of tools for quantifying symmetry breaking in RTA enables applications across diverse scenarios where dynamics are constrained by conservation laws. A particularly important case is energy conservation, which is equivalently expressed as time-translation symmetry. The breaking of this symmetry---namely, coherent superpositions of energy eigenstates with different eigenvalues---represents a fundamental resource, commonly referred to as energy coherence. Quantum thermodynamics~\cite{brandao_resource_2013,lostaglio_quantum_2015,lostaglio_description_2015,weilenmann_axiomatic_2016,gour_resource_2015,faist_macroscopic_2019,sagawa_asymptotic_2021} offers a natural setting for studying this resource, as its central objective is to determine which state transformations are achievable in a system in contact with thermal baths under total energy conservation. 

In the restricted setting where only a single heat bath at a fixed temperature is available, the associated transformations are termed thermal operations, which are central to the resource theory of athermality~\cite{brandao_resource_2013,lostaglio_quantum_2015,lostaglio_description_2015,weilenmann_axiomatic_2016,gour_resource_2015,faist_macroscopic_2019,sagawa_asymptotic_2021} (see also~\cite{lipka-bartosik_AllStatesAreUniversalCatalysts_2021,woods_AutonomousQuantumDevicesWhenAre_2023,kondra_CoherenceManipulationAsymmetryThermodynamics_2024,tajima_GibbsPreservingOperationsRequiringInfiniteAmount_2025,shiraishi_QuantumThermodynamicsCoherenceCovariantGibbsPreserving_2025,zambon_QuantumProcessesThermodynamicResourcesRole_2025} for recent works in this area). 
For states without any energy coherence, called quasiclassical states, the non-equilibrium free energy fully characterizes state convertibility under thermal operations~\cite{brandao_resource_2013}. However, the situation for general, non-quasiclassical states remains unsettled~\cite{faist_macroscopic_2019,sagawa_asymptotic_2021}. The difficulty arises from the fact that thermal operations cannot generate energy coherence, which must therefore be regarded as an independent resource, distinct from non-equilibrium free energy. Existing approaches~\cite{brandao_resource_2013,faist_macroscopic_2019,sagawa_asymptotic_2021} have often quantified the required coherence in terms of the operator norm of Hamiltonian of an external system that provides energy coherence, suggesting that only a small source of coherence is needed in the thermodynamic limit. However, this perspective does not yield a refined characterization of state-specific resource requirements, as it reflects only the external system's specifications. 

Building on the establishment of the QGT as a complete measure of asymmetry for a broad class of continuous symmetries, we revisit this problem in a more general scenario involving contact with multiple baths at different temperatures under energy conservation. We demonstrate that, contrary to the conventional view, achieving certain state conversions via thermal contact require a macroscopic amount of energy coherence that scales extensively with system size when quantified with the QGT, even when baths at different temperatures are allowed. This reveals that macroscopic coherence can be necessary in quantum thermodynamics not only in the single-bath setting, but also in multi-bath scenarios relevant, for example, to heat-engine setups. This finding highlights that the choice of asymmetry measure is a decisive factor for advancing our understanding of resource requirements in thermodynamic processes. Moreover, we show that the same observation extends to settings with additional conserved charges, such as particle number or angular momentum, implying that state transformations in thermodynamic processes generally require an extensive amount of asymmetry.

\section*{Outline}
The outline of the remainder of the paper with a summary of the main results and key ideas are as follows.

In the Preliminaries (Sec.~\ref{sec:preliminary}), we review and introduce key notions required to state our main results. Specifically,
\begin{itemize}
\item In Sec.~\ref{sec:preliminary_RTA}, we review the standard framework of the RTA. We also introduce the conversion rate between states in the RTA, which is the primary quantity we study in this work.
\item In Sec.~\ref{sec:preliminaries_QGT}, we briefly review the QGT, which quantifies the sensitivity of a quantum state under an infinitesimal change of parameters. In our study of symmetries associated with a compact Lie group $G$, the QGT is a $\dim G \times \dim G$ positive semidefinite matrix, which can be expressed as a non-symmetrized covariance matrix of the generators of infinitesimal group transformations (Eq.~\eqref{eq:qgt_covariance_matrix}).
\end{itemize}

In Sec.~\ref{sec:main_result}, we first present the main theorem of this paper (Theorem~\ref{thm:conversion_rate_projective_finite_number}). Equation~\eqref{eq:rate_formula_compact_connected} gives an explicit formula for the conversion rate between i.i.d.\ pure states, expressed as a single matrix inequality between the QGTs of the input and output states. This establishes the QGT as the complete measure of asymmetry for continuous symmetries described by a compact Lie group. We further rewrite the formula in terms of the quantum max-relative entropy between the QGTs, which makes its computability via semidefinite programming transparent. Finally, we illustrate the formula for various symmetry groups and clarify its relation to prior studies. Specifically,
\begin{itemize}
\item In Sec.~\ref{sec:U(1)_same_periods}, we apply the formula to the $U(1)$ symmetry, which reproduces the result of \cite{marvian_operational_2022}.
\item In Sec.~\ref{sec:reversible_conversion}, as a corollary of the main result, we prove a necessary and sufficient condition for asymptotically reversible conversion for a compact Lie group (Corollary~\ref{cor:reversible_conversion}). We then show that this result proves the Marvian--Spekkens conjecture \cite{marvian_asymmetry_2014,marvian_mashhad_symmetry_2012}, which has remained open for over a decade. This also yields reversible conversion rates for specific examples that were calculated in earlier, separate studies \cite{gour_resource_2008,yang_units_2017,marvian_operational_2022}.
\item In Sec.~\ref{sec:irreversiblity}, we study the irreversibility of asymptotic conversion. As an explicit example, we compute the conversion rate for the $SU(2)$ group, obtaining a result consistent with a prior study for a restricted class of pure states \cite{gour_resource_2008}.
\item In Sec.~\ref{sec:finite_groups}, we show that our formula also applies to finite groups. Consequently, we find that the conversion rate is either infinite or zero, consistent with \cite{shitara_iid_2024}.
\end{itemize}

In Sec.~\ref{sec:proof_sketch}, we explain the key ideas used to prove the main theorem. The proof consists of two parts: the converse part, which establishes an upper bound on the conversion rate, and the direct part, which proves the achievability of this bound (i.e., optimality). Specifically,
\begin{itemize}
    \item In the converse part (Sec.~\ref{sec:monotonicity_QGT_sketch}), we first prove the monotonicity of the QGT. To this end, we first establish a connection between the QGT and Petz's monotone metrics \cite{morozova_markov_1989,petz_monotone_1996}. Using the monotonicity of Petz's metrics, we show that the QGT is non-increasing under exact conversion, in the sense of a matrix inequality, i.e., the L\"{o}wner order. This implies that the QGT is a valid asymmetry monotone (Eq.~\eqref{eq:monotonicity_QGT_exact}).

    Since the QGT is additive, it is natural to expect that its linear rate (i.e., the QGT per copy) provides a useful monotone in the asymptotic regime (analogous to densities of extensive quantities in the thermodynamic limit). Proving monotonicity of this QGT rate for asymptotic conversion with vanishing error, however, is technically delicate. The key difficulty is that the QGT can, in general, scale quadratically with system size, so its linear rate can change drastically even under small conversion errors. This phenomenon is known as asymptotic discontinuity~\cite{donald_uniqueness_2002,plenio_introduction_2007,gour_measuring_2009,marvian_coherence_2020,marvian_operational_2022,yamaguchi_smooth_2023}. Nonetheless, by carefully analyzing the asymptotic behavior of Petz's monotone metrics, we establish monotonicity of the QGT rate (Eq.~\eqref{eq:monotonicity_QGT_converse_part}). This monotonicity in turn yields an upper bound on the conversion rate (Eq.~\eqref{eq:converse_part_statement}). The technical details of the converse part are given in Appendix~\ref{app:section_for_converse_part}.
    
    \item In the direct part (Sec.~\ref{sec:optimality_sketch}), we prove the achievability of the above bound (and hence optimality). To this end, we explicitly construct conversion channels. A key observation is that convertibility between states in the RTA is directly related to convertibility between quantum statistical models (Lemma~\ref{lem:gcov_and_cptp}). The asymptotic behavior of these statistical models is analyzed using quantum local asymptotic normality \cite{guta_local_2006,kahn_local_2009,girotti_optimal_2024,lahiry_minimax_2024}, which identifies a limit model that approximates the behavior of i.i.d.\ statistical models in the asymptotic regime. Importantly, we find that this limit model is characterized by the QGT associated with the underlying statistical model. Building on this observation, we construct the conversion channels via what we call an ``estimate-and-convert'' strategy. The technical details of the direct part are given in Appendix~\ref{app:section_for_direct_part}.
\end{itemize}

Having established the pure-state theory of asymptotic conversion, we next extend our analysis to mixed states in Sec.~\ref{sec:mixed_state_asymmetry}. We consider two fundamental scenarios: asymmetry distillation and asymmetry dilution. Although these results do not provide a complete characterization of mixed-state conversion rates, they serve as a foundation for future work. Specifically,
\begin{itemize}
    \item In Sec.~\ref{sec:distillation_of_asymmetry}, we study the conversion rate from i.i.d.\ mixed states to i.i.d.\ pure states in the RTA, namely, the distillable asymmetry. To this end, we introduce an extension of the QGT to mixed states and show that it is a valid asymmetry monotone. Consequently, we derive an upper bound on the distillable asymmetry and obtain a simple condition under which the distillable asymmetry vanishes (Corollary~\ref{cor:projector_distillablle_asymmetry}). When specialized to time-translation symmetry, this condition reduces to the result of \cite{marvian_coherence_2020}.
    \item In Sec.~\ref{sec:dilution_of_asymmetry}, we study the opposite scenario, namely, the conversion rate from i.i.d.\ pure states to i.i.d.\ mixed states in the RTA, i.e., the asymmetry cost. By combining the standard typical-sequence argument~\cite{hayden_asymptotic_2001,marvian_operational_2022} with our estimate-and-convert strategy, we prove an upper bound on the asymmetry cost by explicitly constructing conversion channels (Proposition~\ref{prop:asymmetry_of_formation}). This bound is tight for $U(1)$ symmetry, reproducing the result of \cite{marvian_operational_2022}. While we leave open whether the bound is tight for a general Lie group, we clarify why the $U(1)$ argument does not directly extend to the general case.
\end{itemize}

Building on the results above, in Sec.~\ref{sec:quantum_reference_frame} we apply our analysis to quantum reference frames. We begin with a careful review of the relation between asymmetry and quantum reference frames. 
In general, any physical operation is defined relative to some reference structure, such as a clock or a spatial direction. In the absence of an absolute (i.e., shared) reference frame, asymmetry becomes the resource required to implement such operations.
In this sense, asymmetry can also be regarded as a resource of frameness. Specifically,
\begin{itemize}
    \item We show that our pure-state conversion theory enables us to introduce a standardized reference state that serves as a benchmark for comparing frameness, analogous to the role of an \textit{ebit} in entanglement theory. So far, such a standardized reference state has been proposed only for the $SU(2)$ group~\cite{yang_units_2017}. Our conversion theory provides a systematic way to define standardized reference states for any semisimple compact Lie group. Concretely, we define a standardized reference state in Eq.~\eqref{eq:standardized_state_def}. This state is a universal frameness resource, in the sense that it can be converted into any pure state at a nonzero rate. As a consequence of the conversion-rate formula, we further show that the rate is given by the min-entropy of the QGT (Eq.~\eqref{eq:conversion_from_standardized_state}).
    \item We then clarify the distinction between differential-geometric quantities (such as the QGT and the quantum Fisher information) and commonly used entropic quantities (such as $G$-asymmetry) from the viewpoint of quantum reference frames. For instance, for the rotation group, the former quantifies both the direction and the magnitude of asymmetry, whereas the latter is insensitive to direction. Therefore, if we are concerned not only with the amount of asymmetry but also with the direction in which a state breaks the symmetry, geometric quantities provide more refined information.
\end{itemize}

In addition, we apply our framework to thermodynamics by considering interactions with thermal baths under conservation laws. The central physical setting is described in Sec.~\ref{subsec:qt-operational-setting}: a system couples to heat baths initially prepared in Gibbs states via a global unitary that conserves the total energy. When all baths are at the same fixed temperature, the resulting dynamics reduces to thermal operations. Here, however, we allow the baths to be at different temperatures, so that our framework applies to a broader class of thermodynamic scenarios. We further extend this setting to include additional conserved quantities, such as particle number and angular momentum, associated with a Lie group. 

Using the tools developed in this work, we aim to demonstrate nontrivial constraints on state transformations under such thermal contact and to analyze the resources required to circumvent them. To this end, we relate thermal contact to covariant operations in Sec.~\ref{subsec:qt-resource-perspective}. A key theoretical ingredient is Theorem~\ref{thm:noniid_monotonicity_q}, which establishes monotonicity of the QGT in the asymptotic distillation of i.i.d. pure states. Importantly, this theorem applies to general initial states of the system (including non-i.i.d. and mixed states), enabling analysis of a wide range of scenarios.

In particular, we study the following two classes of scenarios:
\begin{itemize}
    \item In Sec.~\ref{subsec:qt-no-go-iid}, we consider the asymptotic distillation of i.i.d. pure states from general i.i.d. states. Using the monotonicity of the QGT, we establish a no-go theorem for distillation (Corollary~\ref{cor:projector_distillablle_athermality}). In particular, we show that for a typical full-rank state, no asymmetric pure state can be distilled under thermal contact at a linear rate. We illustrate this result with the example of $U(1)$ symmetry.
    \item In Sec.~\ref{subsec:qt-resource-requirement}, we consider a generalized asymptotic distillation setting that allows for an external resource system assisting the conversion. By applying the monotonicity of the QGT (Theorem~\ref{thm:noniid_monotonicity_q}), we show that circumventing the above no-go theorem requires the external resource system to possess a macroscopic amount of asymmetry. In the case of time-translation symmetry, this implies that asymptotic state transformation under thermal contact in general requires macroscopic energy coherence (Eq.~\eqref{eq:variance_rate_bound}). 
\end{itemize}

Finally, we present our conclusions in Sec.~\ref{sec:conclusions}.

\section{Preliminaries}\label{sec:preliminary}

We review the resource theory of asymmetry (RTA) and the quantum geometric tensor (QGT), and introduce notations used in this paper.

\subsection{Resource theory of asymmetry}\label{sec:preliminary_RTA}
In this paper, we follow the standard setup of RTA \cite{bartlett_reference_2007,gour_resource_2008,gour_measuring_2009,marvian_mashhad_symmetry_2012,korzekwa_resource_2013,marvian_asymmetry_2014}. 
We study a symmetry described by a group $G$ that is realized by a \textit{projective unitary representation} $U$, which maps $g\in G$ to a unitary operator $U(g)$ on a quantum system. From the consistency of successive application of symmetry transformations, $U$ must satisfy $U(g_1)U(g_2)=\omega(g_1,g_2)U(g_1g_2)$ for any $g_1,g_2\in G$, where $\omega$ is a complex-valued function such that $|\omega(g_1,g_2)|=1$. In a special case where $\omega(g_1,g_2)=1$ for all $g_1,g_2\in G$, $U$ is called a \textit{non-projective unitary representation} or 
a \textit{unitary representation} for short. 

Just like in other resource theories, including the entanglement theory, RTA is defined by specifying the \textit{free states} and \textit{free operations} that are considered to be freely prepared and implemented. In RTA, free states are symmetric states, which are invariant under any symmetry transformations. That is, a state $\rho$ is $G$-symmetric iff $U(g)\rho U(g)^\dag =\rho$ for all $g\in G$. Free operations are $G$-covariant channels. Here, a channel $\mathcal{E}$ is $G$-covariant iff it satisfies $\mathcal{E}\circ \mathcal{U}_g=\mathcal{U}_g'\circ \mathcal{E}$ for all $g\in G$, where $\mathcal{U}_g$ and $\mathcal{U}_g'$ are defined by $\mathcal{U}_g(\cdot)\coloneqq U(g)(\cdot)U(g)^\dag $ and $\mathcal{U}_g'(\cdot)\coloneqq U'(g)(\cdot)U'(g)^\dag $ using projective unitary representations $U$ and $U'$ of $G$ on the input and output systems, respectively. Throughout this paper, we use the prime symbol to represent a quantity related to the output system. 

The physical implication of $G$-covariance is clarified by the covariant Stinespring dilation theorem~\cite{keyl_optimal_1999,marvian_mashhad_symmetry_2012}, which states that a $G$-covariant channel admits a representation as the dynamics of an open system interacting with an environment under conservation laws. More precisely, a quantum channel $\mathcal{E}$ is $G$-covariant if and only if there exist ancillary systems $E$ and $E'$, carrying projective unitary representations $U_E$ and $U_{E'}$ of $G$, such that
\begin{align}
\mathcal{E}(\cdot)=\mathrm{Tr}_{E'}\left(V\left(\cdot\otimes \eta_{E}\right)V^\dagger\right),
\end{align}
where $\eta_E$ is a $G$-symmetric state on $E$, and $V$ is a unitary operator satisfying 
\begin{align}
V(U(g)\otimes U_E(g))=(U'(g)\otimes U_{E'}(g)) V,\quad \forall g\in G.
\label{eq:Stinespring_unitary_conservation}
\end{align}
Since Eq.~\eqref{eq:Stinespring_unitary_conservation} implies that the joint unitary evolution $V$ respects the symmetry described by $G$, a $G$-covariant channel characterizes the most general open-system dynamics consistent with conservation laws, without supplying asymmetric states from the environment (see also Appendix~\ref{app:Stinespring_general}).

As a simple yet important characterization of convertibility among states, the symmetry subgroup is known, which is the set of group elements that leave the state invariant, i.e., 
\begin{align}
    \mathrm{Sym}_G(\rho)\coloneqq \{g\in G \mid \mathcal{U}_g(\rho) =\rho\}
\end{align}
for the input state $\rho$, and similarly for the output state $\sigma$. 
It is shown in \cite{marvian_mashhad_symmetry_2012} that if a state $\rho$ can be converted to a state $\sigma$ via some $G$-covariant operation without error, the symmetry subgroups must satisfy $\mathrm{Sym}_G(\rho)\subset\mathrm{Sym}_G(\sigma)$.

We analyze the i.i.d. setup in RTA where identical copies of a state are converted to identical copies of another state. 
In this setup, we use the tensor product of a representation, which corresponds to the conservation of global quantities such as total particle number in the $U(1)$ case and total spins in the $SU(2)$ case. 
Concretely, we say that a channel $\mathcal{E}$ from $N$ copies of input state to $M$ copies of output state is $G$-covariant if 
$\mathcal{E}$ satisfies $\mathcal{E}\circ \mathcal{U}_{g}^{\otimes N}=\mathcal{U}_{g}^{\prime \otimes M}\circ \mathcal{E}$ for all $g\in G$.

We finally define the asymptotic conversion rate. 
We say that a state $\rho$ is asymptotically convertible to another state $\sigma$ with a conversion rate $r$ iff there exists a sequence $\{\mathcal{E}_N\}_N$ of $G$-covariant channels such that $\lim_{N\to\infty}T\left(\mathcal{E}_N(\rho^{\otimes N}),\sigma^{\otimes \floor{rN}}\right)=0$. Here $T(\rho,\sigma)\coloneqq \frac{1}{2}\|\rho-\sigma\|_1$ is the trace distance, which has operational significance in state distinguishability \cite{helstrom_quantum_1969,holevo_statistical_1973}. 
We denote this conversion by 
\begin{align}
    \{\rho^{\otimes N}\}_N\gconv\{\sigma^{\otimes \floor{rN}}\}_N, 
\end{align}
and we define the asymptotic conversion rate $\rap(\rho\to \sigma)$ by the supremum of achievable conversion rates $r$. 
In the following, we simply call $\rap(\rho\to \sigma)$ the conversion rate. 
In this paper, we establish a formula for calculating this conversion rate among pure states, which is applicable to any compact Lie group $G$.

\subsection{Quantum geometric tensor}\label{sec:preliminaries_QGT}

\subsubsection{Definition and basic properties}
Here, we introduce the QGT \cite{provost_riemannian_1980,berry_quantum_1989}, which quantifies the sensitivity of a quantum state under an infinitesimal change of parameters. Let $\ket{\xi(\lambda)}$ be a family of pure states, parameterized by $m$ real parameters $\lambda=(\lambda^1,\ldots,\lambda^m)^\top\in\mathbb{R}^m$, where $\top$ denotes the transpose. The QGT evaluated at $\ket{\psi}\coloneqq \ket{\xi(0)}$, is an $m\times m$ Hermitian matrix $\mathcal{Q}^{\psi, \xi}$ whose matrix elements are defined by 
\begin{align}
    \mathcal{Q}_{\mu\nu}^{\psi, \xi}\coloneqq \braket{\partial_\mu\psi|(I-\ket{\psi}\bra{\psi})|\partial_\nu \psi},\label{eq:QGT_definition}
\end{align}
where $\ket{\partial_\mu\psi}\coloneqq \frac{\partial}{\partial\lambda^\mu}\ket{\xi(\lambda)}|_{\lambda =0}$. Here, the superscripts $\psi$ and $\xi$ intend to emphasize that the QGT depends on both the chosen state $\ket{\psi}$ (i.e., the point in the parameterized family of states) and the parameterized family $\ket{\xi(\lambda)}$ itself. 

We remark that the QGT transforms as a $(0,2)$-tensor under smooth reparametrizations of the parameters $\lambda^\mu$. The indices $\mu$ and $\nu$ label tensor components: a superscript index indicates a contravariant component, whereas a subscript index indicates a covariant component under such reparametrizations. In this paper, we use tensor notation to clarify the transformation properties, even though we do not explicitly change coordinates. Hereafter, we employ the Einstein summation convention: whenever an index appears once as an upper index and once as a lower index, summation over that index is implied.

We remark that the QGT is positive semi-definite since $\gamma^{\mu*} \mathcal{Q}_{\mu\nu}^{\psi, \xi}\gamma^\nu=\|(I-\ket{\psi}\bra{\psi})\gamma^\nu\ket{\partial_\nu \psi}\|^2\geq 0$ holds for any complex vector $\gamma\in\mathbb{C}^m$. 
By regarding the QGT as a matrix, we also write this inequality as $\gamma^\dag \mathcal{Q}^{\psi, \xi}\gamma\geq 0$. 

Since the QGT is Hermitian, i.e., $\mathcal{Q}_{\mu\nu}^{\psi,\xi}=(\mathcal{Q}_{\nu\mu}^{\psi,\xi})^*$, its real and imaginary parts define symmetric and antisymmetric tensors, respectively. Specifically, the real part $\mathrm{Re}(\mathcal{Q}_{\mu\nu}^{\psi,\xi})$ is symmetric and corresponds to the quantum Fisher information matrix, while twice the imaginary part, $ 2\, \mathrm{Im}(\mathcal{Q}_{\mu\nu}^{\psi,\xi})$, is anti-symmetric and corresponds to the Berry curvature (for further details, see, e.g., \cite{provost_riemannian_1980,berry_quantum_1989,cheng_quantum_2013,liu_quantum_2019}). 

\subsubsection{QGT in the RTA}
Having introduced the general definitions above, we now proceed to the RTA setup. In RTA for a Lie group, we can define the QGT using a natural parametrized family of pure states defined with a projective unitary representation. Let $G$ be a Lie group, and $\dim G$ denote its dimension as a smooth manifold. Elements in the neighborhood of the identity $e\in G$ can be parametrized by $\lambda\in\mathbb{R}^{\dim G}$ as $g(\lambda)=e^{\ii \lambda^\mu A_\mu}$ by using a basis $\{A_\mu\}_{\mu=1}^{\dim G}$ of the Lie algebra $\mathfrak{g}$. Here, the convention in physics for the definition of Lie algebra is adopted, which differs from that in mathematics by a factor of the imaginary unit. Given a projective unitary representation $U$, we define $\ket{\xi(\lambda)}\coloneqq U(g(\lambda))\ket{\psi}$ for a pure state $\ket{\psi}$. For simplicity, we assume that the map $U$ is differentiable, which follows from continuity in the case of unitary representations. We treat the cases where $U$ is continuous but not differentiable in Appendix~\ref{app:differentiability_of_reprensetation}.
For a given $U$, we introduce Hermitian operators 
\begin{align}
    X_\mu\coloneqq  -\ii \frac{\partial}{\partial \lambda^\mu} U(g(\lambda))\biggl|_{\lambda=0}\quad \label{eq:hermitian_operators_Lie_alg}
\end{align}
for $\mu=1,\cdots,\dim G$. In other words, by introducing the derived representation $L$ of the Lie algebra associated with $U$ by $L(A)\coloneqq -\ii \left.\frac{\dd}{\dd t}U(e^{\ii t A})\right|_{t=0}$, the operator $X_\mu$ is equal to $L(A_\mu)$. 

As mentioned earlier, the QGT $\mathcal{Q}^{\psi, \xi}$ depends both on $\ket{\psi}$ and the parameterized family $\ket{\xi(\lambda)}$ in general. In the rest of this paper, however, we always consider the family $\ket{\xi(\lambda)}=U(g(\lambda))\ket{\psi}$ defined via a projective unitary representation $U$. Accordingly, we simply write the QGT as $\mathcal{Q}^\psi$ without explicitly indicating the dependence on $\ket{\xi(\lambda)}$. Since $\ket{\partial_\mu\psi}=\ii X_\mu\ket{\psi}$ holds with the operator $X_\mu$ defined in Eq.~\eqref{eq:hermitian_operators_Lie_alg}, we find the QGT is equal to the non-symmetrized covariance matrix for $\{X_\mu\}_{\mu=1}^{\dim G}$:
\begin{align}
    \mathcal{Q}_{\mu\nu}^{\psi}&=\braket{\psi|X_\mu(I-\ket{\psi}\bra{\psi})X_\nu|\psi}\nonumber\\
    &=
    \braket{\psi|X_\mu X_\nu|\psi}-\braket{\psi|X_\mu|\psi}\braket{\psi|X_\nu|\psi}\label{eq:qgt_covariance_matrix}.
\end{align} 
Note that in this case, the size of the QGT is determined only by the dimension of the Lie group, $\dim G$, independent of the dimensions of Hilbert spaces or the representations of the group. Since QGTs are Hermitian matrices, we can introduce a partial order $\mathcal{Q}^\psi\geq \mathcal{Q}^\phi$, which means that $\mathcal{Q}^\psi-\mathcal{Q}^\phi$ is positive semi-definite. We remark that this ordering is independent of the parametrization of the Lie group $G$ since both $\mathcal{Q}^\psi$ and $\mathcal{Q}^\phi$ transform as tensors under a coordinate transformation on the group.

So far, we have represented a pure state by a vector $\ket{\psi}$. In what follows, following a standard convention, we also represent the same pure state by the rank-one projector $\psi\coloneqq\ket{\psi}\bra{\psi}$, using the same Greek symbol when no confusion arises. Accordingly, the QGT evaluated at a pure state $\mathcal{U}_g(\psi)$ is denoted by $\mathcal{Q}^{\mathcal{U}_g(\psi)}$. We also denote the set of all the rank-one projectors on a Hilbert space $\mathcal{H}$ by $\mathcal{P}(\mathcal{H})$.

\section{Main results}\label{sec:main_result}

The main theorem of this paper is the following formula for the asymptotic conversion rate:
\begin{thm}
\label{thm:conversion_rate_projective_finite_number}
    Let $U$ and $U'$ be projective unitary representations of a compact Lie group $G$ on finite-dimensional Hilbert spaces $\mathcal{H}$ and $\mathcal{H}'$. The conversion rate from a pure state $\psi\in\mathcal{P}(\mathcal{H})$ to another pure state $\phi\in\mathcal{P}(\mathcal{H}')$ is given by 
    \begin{align}
        &\rap(\psi\to \phi)= \sup\{r\geq0 \mid\mathcal{Q}^{\psi}\geq r\mathcal{Q}^{\phi}\}\label{eq:rate_formula_compact_connected}
    \end{align}
    if $\mathrm{Sym}_{G}(\psi)\subset \mathrm{Sym}_G(\phi)$, and $\rap(\psi\to \phi)=0$ otherwise.
\end{thm}
See Fig.~\ref{fig:standard} for a schematic picture of the setup of Theorem~\ref{thm:conversion_rate_projective_finite_number}.

\begin{figure}[htb]
    \centering
    \includegraphics[width=8.3cm]{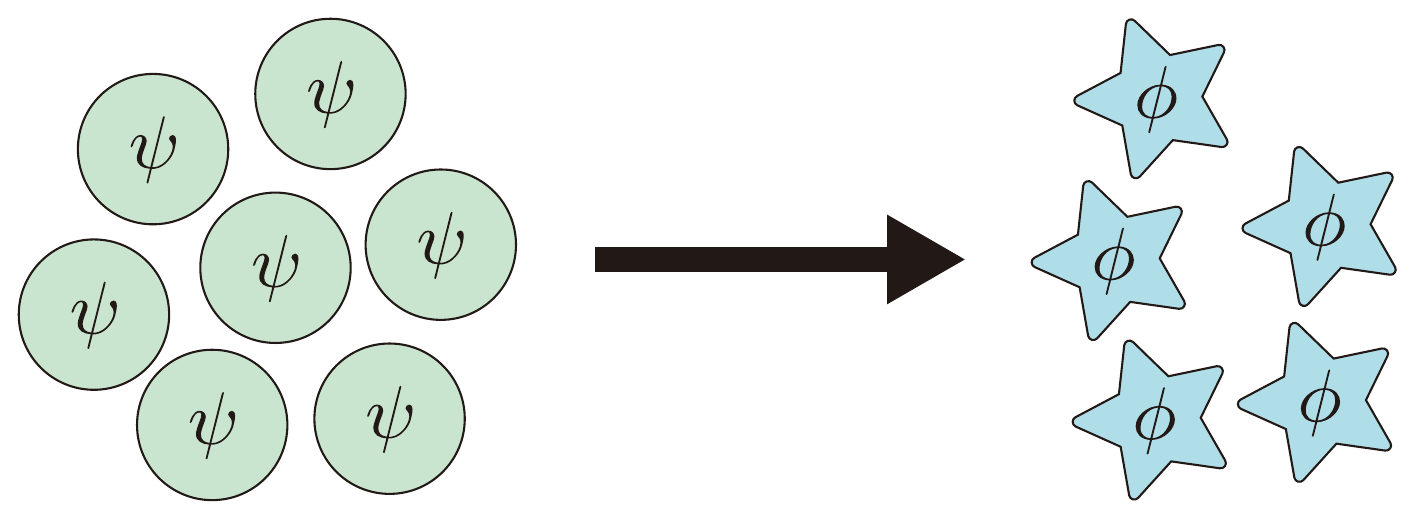}
    \caption{Schematic picture of the setup of Theorem~\ref{thm:conversion_rate_projective_finite_number}, where i.i.d. copies of a pure state $\psi$ are converted into i.i.d. copies of another pure state $\phi$ with an error that vanishes asymptotically. }
    \label{fig:standard}
\end{figure}

Although Theorem~\ref{thm:conversion_rate_projective_finite_number} is of practical significance, the following formula is also theoretically useful, where all group elements are treated equally:
\begin{widetext}
        \begin{numcases}{\rap(\psi \to \phi)=}
            \sup\{r\geq0 \mid\forall g\in G,\,\mathcal{Q}^{\mathcal{U}_g(\psi)}\geq r\mathcal{Q}^{\mathcal{U}'_g(\phi)}\}&(if $\mathrm{Sym}_{G}(\psi)\subset \mathrm{Sym}_G(\phi)$)\label{eq:conversion_rate_formula_forall_g}\\
            0&(otherwise)\label{eq:conversion_rate_formula_subgrop}
        \end{numcases}
\end{widetext}

The equivalence of Eq.~\eqref{eq:rate_formula_compact_connected} in Theorem~\ref{thm:conversion_rate_projective_finite_number} to Eq.~\eqref{eq:conversion_rate_formula_forall_g} follows from the fact that if the matrix inequality holds at a point 
in a compact Lie group, then it also holds at any other point in the group. This is because the QGTs at different group elements are interrelated by a congruence transformation independent of representation, as detailed in Appendix~\ref{app:reduction_to_a_finite_number_of_inequalities}. We note that the same argument for quantum Fisher information matrices instead of QGT can be found in \cite{gao_sufficient_2024}. 

As we show in Sec.~\ref{sec:monotonicity_QGT_sketch}, the QGT is an asymmetry measure: it is non-increasing under any $G$-covariant channel with respect to the partial order defined by matrix inequalities, i.e., the L\"{o}wner order. Equation~\eqref{eq:rate_formula_compact_connected}, therefore, identifies the QGT as the complete measure of symmetry breaking.

This result carries an important conceptual consequence. Resource measures are typically formulated as real-valued scalar functions, while operator-valued resource measures have recently been considered in \cite{kudo_fisher_2023,gao_sufficient_2024}. Our formula shows that an operator, namely the QGT matrix, governs the asymptotic conversion problem in the RTA. This establishes that the operator-valued formulation is not merely a generalization of the scalar formulation, but a necessary extension for resolving this central problem.

Within this operator-valued paradigm, scalar-valued measures are recovered as the special case of $1 \times 1$ matrices. As we shall see explicitly, this reduction occurs, for example, for $U(1)$ symmetry.

Interestingly, the conversion rate formula can also be expressed by using the quantum max-relative entropy. For positive semi-definite operators $\rho$ and $\sigma$, the quantum max-relative entropy~\cite{datta_min-_2009,tomamichel_quantum_2016} of $\rho$ with respect to $\sigma$ is defined by
\begin{align}
    D_{\max}(\rho\|\sigma)\coloneqq \inf \{\lambda\in\mathbb{R} \mid \rho \leq 2^\lambda\sigma\}.\label{eq:D_max_QGT}
\end{align}
Accordingly, the formula for a compact Lie group in Eq.~\eqref{eq:rate_formula_compact_connected} can be rewritten as
\begin{align}
    R(\psi\to\phi)=2^{- D_{\max}(\mathcal{Q}^{\phi}\|\mathcal{Q}^{\psi})}.\label{eq:rate_formula_D_max_QGT}
\end{align}
It is worth emphasizing that, despite the widespread use of the quantum max-relative entropy in various contexts---including entanglement theory~\cite{datta_min-_2009,datta_max-relative_2009}, quantum asymptotic equipartition property~\cite{tomamichel_fully_2009}, quantum cryptography~\cite{konig_operational_2009}, entropic uncertainty relations~\cite{tomamichel_uncertainty_2011} and quantum coherence~\cite{bu_maximum_2017}---it has been used exclusively as a divergence measure between (sub)normalized states in these applications. In contrast, to the best of our knowledge, this study is the first to apply the max-relative entropy between resource measures themselves and to connect it directly to the operationally meaningful conversion rate in the RTA. In this sense, our study extends its significance beyond established uses. 

Note that the optimization in Eq.~\eqref{eq:rate_formula_compact_connected} can be written explicitly as
\begin{align}
    \begin{aligned}
        \text{maximize}\quad & r \\
        \text{subject to}\quad 
        & \mathcal{Q}^\psi - r\,\mathcal{Q}^\phi \geq  0, \\
        & r \geq 0 .
    \end{aligned}
\end{align}
Since the constraint $\mathcal{Q}^\psi - r\,\mathcal{Q}^\phi \geq 0$ is a linear matrix inequality in the scalar variable $r$, this problem is a semidefinite program. Moreover, by using the Moore--Penrose inverse $\mathcal{Q}^{\psi+}$ of the QGT $\mathcal{Q}^\psi$, we obtain the following closed-form expression:
\begin{align}
    &\rap(\psi\to\phi)\nonumber\\
    &\quad =
    \begin{cases}
    +\infty & (\text{if } \mathcal{Q}^\phi=0)\\
    0 & (\text{if } \ker(\mathcal{Q}^\psi)\not\subset \ker(\mathcal{Q}^\phi))\\
     \displaystyle\frac{1}{\bigl\|
    (\mathcal{Q}^{\psi+})^{1/2}
    \mathcal{Q}^\phi
    (\mathcal{Q}^{\psi+})^{1/2}
    \bigr\|_\infty}
    & (\text{otherwise})
    \end{cases}
\end{align}
where $\|A\|_\infty$ denotes the operator norm of $A$. 

Key ideas for proving Eq.~\eqref{eq:conversion_rate_formula_forall_g}, under the assumption that $U$ and $U'$ are (non-projective) unitary representations of a compact Lie group $G$, are explained in Sec.~\ref{sec:proof_sketch}, while a rigorous proof is provided in Appendices~\ref{app:section_for_converse_part} and \ref{app:section_for_direct_part}. The extension of the formula to any projective unitary representations is provided in Appendix~\ref{app:formula_projective_unitary_rep} by using the method in \cite{shitara_iid_2024} that relates the conversion rate for projective unitary representations to (non-projective) unitary representations.

We remark that for any states $\rho$ and $\sigma$, Proposition~5 of \cite{marvian_mashhad_symmetry_2012} shows that if $\mathrm{Sym}_{G}(\rho)\not \subset \mathrm{Sym}_G(\sigma)$, then an exact conversion from $\rho$ to $\sigma$ via a $G$-covariant channel is impossible.
As a generalization of this observation to the asymptotic setting with vanishing error, we prove Eq.~\eqref{eq:conversion_rate_formula_subgrop} in Appendix~\ref{sec:symmetry_subgroup_zero_rate}. In Appendix~\ref{sec:sym_subgroup_conversion_rate}, we propose a way to circumvent this limitation by using additional resource states.

Theorem~\ref{thm:conversion_rate_projective_finite_number} is valid for any compact Lie group. As a demonstration, we here apply our formula for several groups, which provides a unified understanding of prior studies and proves an unsolved conjecture on reversible transformations. 

\subsection{U(1) group}\label{sec:U(1)_same_periods}
In a study \cite{marvian_operational_2022} on RTA for time-translation symmetry, the asymptotic conversion rate among states having the same finite period was calculated, which generalizes prior studies \cite{schuch_nonlocal_2004,schuch_quantum_2004,gour_resource_2008}. When the Hamiltonians of the input and output systems are $H$ and $H'$, the time-translation unitary operators are given by $e^{-\ii Ht}$ and $e^{-\ii H't}$. For pure states $\psi$ and $\phi$, the periods are defined as $\tau\coloneqq \inf \{t>0\mid e^{-\ii Ht}\psi e^{\ii Ht}=\psi\}$ and $\tau'\coloneqq \inf \{t>0\mid e^{-\ii H't}\phi e^{\ii H' t}=\phi\}$. In Theorem~1 in \cite{marvian_operational_2022}, it is proven that if $\tau=\tau'$, the conversion rate is given by the ratio of the variances, i.e., 
\begin{align}
     \rap(\psi\to\phi)=\frac{V(\psi,H)}{V(\phi,H')}\label{eq:rate_U1},
\end{align}
where we defined $V(\psi,H)\coloneqq\braket{\psi|H^2|\psi}-\braket{\psi|H|\psi}^2$. 
We note that, when investigating the conversion among states with the same period, we can assume without loss of generality that the Hamiltonians have integer eigenvalues after appropriately redefining them \cite{marvian_operational_2022,yamaguchi_beyond_2023}. Therefore, this result corresponds to the case of $G=U(1)$. 

In the notation of the present paper, we consider $G=U(1)$ and its representations $U(e^{\ii \theta})=e^{\ii H \theta}$ and $U'(e^{\ii \theta})=e^{\ii H' \theta }$ for $\theta\in [0,2\pi)$, where $H$ and $H'$ are Hermitian operators whose eigenvalues are integers. Since $\dim G=1$ for $G=U(1)$, the QGT is a scalar and given by $\mathcal{Q}^{\psi}=V(\psi,H)$ and $\mathcal{Q}^{\phi}=V(\phi,H')$. When the pure states $\psi$ and $\phi$ has the same period, i.e., $\mathrm{Sym}_{G}(\psi)=\mathrm{Sym}_{G}(\phi)$, we find Eq.~\eqref{eq:rate_U1} immediately follows from Eq.~\eqref{eq:rate_formula_compact_connected}.

\subsection{Reversible asymptotic conversion: Proof of the Marvian-Spekkens conjecture}\label{sec:reversible_conversion}
For pure states $\psi$ and $\phi$, we say that they are asymptotically reversibly convertible if and only if the conversion rates satisfy $\rap(\psi\to\phi)\rap(\phi\to\psi)=1$. The conversion between pure states with equal periods is an example of reversible conversion, as Eq.~\eqref{eq:rate_U1} shows. Prior to the establishment of the conversion theory in RTA for $U(1)$ group \cite{marvian_operational_2022}, Marvian and Spekkens proposed a conjecture for a necessary and sufficient condition for reversible conversion in RTA for connected compact Lie groups \cite{marvian_asymmetry_2014,marvian_mashhad_symmetry_2012}. 
We prove a statement equivalent to the Marvian-Spekkens conjecture here, where the equivalence is shown later in this subsection. 
\begin{cor}\label{cor:reversible_conversion}
    For a compact Lie group $G$, pure states $\psi$ and $\phi$ are asymptotically reversibly convertible if and only if both of the following conditions (A) and (B) are satisfied:
    \begin{enumerate}[(A)]
        \item $\mathrm{Sym}_G(\psi)=\mathrm{Sym}_G(\phi)$.
        \item There exists a unique $r>0$ such that $\mathcal{Q}^{\psi}=r\mathcal{Q}^{\phi}$.
    \end{enumerate}
    Note that the proportional constant $r$ in condition (B) provides the conversion rate $\rap(\psi\to\phi)$. 
\end{cor}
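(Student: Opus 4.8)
The plan is to read both conversion rates off the connected-group formula \eqref{eq:rate_formula_compact_connected} and determine when their product equals $1$. Write $r_1\coloneqq \rap(\psi\to\phi)$ and $r_2\coloneqq \rap(\phi\to\psi)$. By Theorem~\ref{thm:conversion_rate_projective_finite_number}, if $\mathrm{Sym}_G(\psi)\not\subset\mathrm{Sym}_G(\phi)$ then $r_1=0$, and symmetrically $r_2=0$ whenever $\mathrm{Sym}_G(\phi)\not\subset\mathrm{Sym}_G(\psi)$. Hence $r_1 r_2=1$ forces both inclusions, i.e. forces (A); this establishes the necessity of (A) and lets me assume it henceforth, so that $r_1=\sup\{r\ge 0\mid \mathcal{Q}^\psi\ge r\mathcal{Q}^\phi\}$ and $r_2=\sup\{r\ge 0\mid \mathcal{Q}^\phi\ge r\mathcal{Q}^\psi\}$ both apply.

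The heart of the argument is a short semidefinite computation. For fixed $\psi,\phi$ the feasible set $\{r\ge 0\mid \mathcal{Q}^\psi\ge r\mathcal{Q}^\phi\}$ is downward closed, since decreasing $r$ only adds the nonnegative multiple $(r-r')\mathcal{Q}^\phi$ of the positive semi-definite matrix $\mathcal{Q}^\phi$, and it is closed because positive semi-definiteness is a closed condition; it therefore equals $[0,r_1]$ and the supremum is attained, $\mathcal{Q}^\psi\ge r_1\mathcal{Q}^\phi$, and likewise $\mathcal{Q}^\phi\ge r_2\mathcal{Q}^\psi$. Chaining these gives $\mathcal{Q}^\psi\ge r_1\mathcal{Q}^\phi\ge r_1 r_2\,\mathcal{Q}^\psi$, so $(1-r_1 r_2)\mathcal{Q}^\psi\ge 0$; as long as $\mathcal{Q}^\psi\ne 0$ this yields $r_1 r_2\le 1$, with equality exactly when the chain collapses to equalities, i.e. $\mathcal{Q}^\psi=r_1\mathcal{Q}^\phi$. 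That is the proportionality in (B); conversely, if $\mathcal{Q}^\psi=r\mathcal{Q}^\phi$ with $r>0$ then $r_1=r$ and $r_2=1/r$ directly, so $r_1 r_2=1$. Thus, away from the degenerate case, reversibility is equivalent to proportionality.

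It remains to pin down the degeneracies, which is where the ``uniquely exists $r>0$'' clause and the separate hypothesis (A) do their work. I would use the identification $\ker\mathcal{Q}^\psi=\mathfrak{sym}_G(\psi)$, the Lie algebra of the stabilizer $\mathrm{Sym}_G(\psi)$, which follows since $v^\mu\mathcal{Q}^\psi_{\mu\nu}v^\nu=\|(I-\ket{\psi}\bra{\psi})\sum_\mu v^\mu X_\mu\ket{\psi}\|^2$ vanishes precisely when $\sum_\mu v^\mu A_\mu$ generates a one-parameter subgroup fixing $\psi$. For connected $G$ this gives $\mathcal{Q}^\psi=0\iff\mathrm{Sym}_G(\psi)=G\iff\psi$ is free, and under (A) the stabilizers coincide, so $\mathcal{Q}^\psi$ and $\mathcal{Q}^\phi$ are simultaneously zero or simultaneously nonzero with identical support, guaranteeing $0<r_1,r_2<\infty$ in the nondegenerate case. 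The free--free case $\mathcal{Q}^\psi=\mathcal{Q}^\phi=0$ produces infinite rates and is correctly excluded because every $r$ then satisfies $\mathcal{Q}^\psi=r\mathcal{Q}^\phi$, so the uniqueness in (B) fails; conversely uniqueness forces $\mathcal{Q}^\phi\ne 0$, making $r$ the genuine constant $r_1$. Finally, I would emphasize that (B) alone does not imply (A): proportional QGTs only equate the \emph{identity components} of the stabilizers, while the full subgroups may differ by discrete pieces, in which case an inclusion in Theorem~\ref{thm:conversion_rate_projective_finite_number} fails and one rate drops to zero; this is exactly why (A) must be imposed in addition to (B).

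The main obstacle I anticipate is not the clean semidefinite inequality but the careful bookkeeping of the kernel and support structure of the QGT: establishing $\ker\mathcal{Q}^\psi=\mathfrak{sym}_G(\psi)$ rigorously for connected $G$, and then using it to separate the distinct roles of (A) and (B) and to dispose of the free-state degeneracy, so that the equivalence of ``$r_1 r_2=1$'' with ``(A) and (B)'' holds on the nose, including all boundary cases.
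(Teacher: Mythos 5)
Your core argument is correct and is essentially the paper's own proof: reversibility forces both rates to be nonzero, hence condition (A); under (A) the connected-group formula \eqref{eq:rate_formula_compact_connected} applies, the suprema are attained because the feasible sets are closed and downward closed, and the chain $\mathcal{Q}^{\psi}\geq r_1\mathcal{Q}^{\phi}\geq r_1r_2\,\mathcal{Q}^{\psi}$ shows that $r_1r_2=1$ forces both positive semi-definite differences to vanish (their sum is $(1-r_1r_2)\mathcal{Q}^{\psi}=0$), giving exact proportionality; the converse and the uniqueness clause follow as you say. The paper leaves the semidefinite chain implicit, so spelling it out is a genuine improvement in rigor.

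However, one auxiliary claim is false: that condition (A) forces $\mathcal{Q}^{\psi}$ and $\mathcal{Q}^{\phi}$ to have \emph{identical support}, and hence that $0<r_1,r_2<\infty$ whenever (A) holds and the QGTs are nonzero. The kernel of the Hermitian matrix $\mathcal{Q}^{\psi}$ over $\mathbb{C}^{\dim G}$ is in general strictly larger than the (complexified) Lie algebra of $\mathrm{Sym}_G(\psi)$, because the antisymmetric Berry-curvature part also contributes to it; your identity $\ker\mathcal{Q}^{\psi}=\mathfrak{sym}_G(\psi)$ is only valid for real vectors $v$. A concrete counterexample: for $G=SU(2)$ the highest- and lowest-weight states $\ket{J,J}$ and $\ket{J,-J}$ have equal stabilizers and nonzero QGTs, but their QGTs are $\frac{J}{2}\bigl(\begin{smallmatrix}1&\mp\ii\\ \pm\ii&1\end{smallmatrix}\bigr)$ in the $xy$ block with opposite signs of the Berry curvature, so their complex kernels differ and $\rap(\ket{J,J}\to\ket{J,-J})=0$ despite (A) holding. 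This does not break your proof, because positivity and finiteness of $r_1,r_2$ are never needed as an input: in the forward direction they follow from $r_1r_2=1$ itself, and in the converse direction the uniqueness clause of (B) rules out $\mathcal{Q}^{\phi}=0$ and hence pins $r_1=r$, $r_2=1/r$. But the "support bookkeeping" you flag as the main obstacle is both unnecessary and, as stated, incorrect; the only kernel fact actually used is that $\mathcal{Q}^{\psi}=0$ iff $\psi$ is symmetric for connected $G$, which disposes of the free--free degeneracy.
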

Here, in condition~(B), the requirement that $r$ be unique excludes the case where $\mathcal{Q}^{\psi}=\mathcal{Q}^{\phi}=0$, for which the proportionality constant is not unique. In this case, one has $R(\psi\to\phi)=R(\phi\to\psi)=\infty$ since $\psi$ and $\phi$ are symmetric, and hence these states formally fail to satisfy the reversibility condition $\rap(\psi\to\phi)\rap(\phi\to\psi)=1$.
\begin{proof}[Proof of Corollary~\ref{cor:reversible_conversion}]
 Suppose that $\psi$ and $\phi$ are asymptotically reversibly convertible. Then, $\rap(\psi\to\phi)$ and $\rap(\phi\to\psi) $
 must be non-vanishing, which requires that the condition (A) must hold. In this case, Theorem~\ref{thm:conversion_rate_projective_finite_number} implies
    \begin{align}
        \rap(\psi\to\phi)&=\sup\left\{r\geq 0\mid \mathcal{Q}^{\psi}\geq r  \mathcal{Q}^{\phi}\right\},\label{eq:reversible_psi_phi}\\
        \rap(\phi\to\psi)&=\sup\left\{r\geq 0\mid \mathcal{Q}^{\phi}\geq r  \mathcal{Q}^{\psi}\right\}\label{eq:reversible_phi_psi}.
    \end{align}
    These equations imply that $\rap(\psi\to\phi)\rap(\phi\to\psi)=1$ holds only if the condition (B) holds. Conversely, if conditions (A) and (B) are satisfied, then Eqs.~\eqref{eq:reversible_psi_phi} and \eqref{eq:reversible_phi_psi} imply $\rap(\psi\to\phi)\rap(\phi\to\psi)=1$ holds. 
\end{proof}

Let us now review the statement of the Marvian-Spekkens conjecture and prove that it is equivalent to Corollary~\ref{cor:reversible_conversion}. Let $\{X_\mu\}_\mu$ be the derived representation of a basis of the Lie algebra $\mathfrak{g}$. We define the \textit{symmetirized} covariance matrix as
\begin{align}
    \left(C_{\mathfrak{g}}(\psi)\right)_{\mu\nu}&\coloneqq \frac{1}{2}\braket{\psi|(X_\mu X_\nu +X_\nu X_\mu)|\psi}\nonumber\\
    &\quad-\braket{\psi|X_\mu|\psi}\braket{\psi|X_\nu|\psi}.
\end{align}
The commutator subalgebra $\ii[\mathfrak{g},\mathfrak{g}]$ is defined as the subalgebra spanned by $\ii[L_1,L_2]$ for all $L_1,L_2\in\mathfrak{g}$. In general, the input and output Hilbert spaces $\mathcal{H}_{\mathrm{in}}$ and $\mathcal{H}_{\mathrm{out}}$, to which $\psi$ and $\phi$ belong respectively, are different. However, as shown in Appendix~B in \cite{marvian_theory_2013}, by considering a larger Hilbert space $\mathcal{H}\coloneqq \mathcal{H}_{\mathrm{in}}\oplus \mathcal{H}_{\mathrm{out}}$ and a representation on it, it suffices to consider the case where the input and output Hilbert spaces are the same. The Marvian–Spekkens conjecture \cite{marvian_asymmetry_2014,marvian_mashhad_symmetry_2012} states that pure states $\psi$ and $\phi$ are asymptotically reversibly convertible in RTA for a connected compact Lie group $G$ if and only if all the following three conditions are satisfied:
\begin{enumerate}[(i)]
            \item $\mathrm{Sym}_G(\psi)=\mathrm{Sym}_G(\phi)$. 
            \item $C_{\mathfrak{g}}(\psi)=\rap(\psi\to \phi)C_{\mathfrak{g}}(\phi)$.
            \item $\braket{\psi|L|\psi}=\rap(\psi\to \phi)\braket{\phi|L|\phi}$ for any element $L$ in the representation of the commutator subalgebra $\ii[\mathfrak{g},\mathfrak{g}]$. 
\end{enumerate}
Condition (i) is the same as condition (A) in Corollary~\ref{cor:reversible_conversion}. Conditions (ii) and (iii) correspond respectively to the symmetric and anti-symmetric parts of condition (B) in Corollary~\ref{cor:reversible_conversion}.

We remark that other results on reversible conversion rates \cite{gour_resource_2008,yang_units_2017,marvian_operational_2022} also follow since the Marvian–Spekkens conjecture has now been proven. It is also worth emphasizing that, while the Marvian--Spekkens conjecture was originally formulated for connected compact Lie groups, Corollary~\ref{cor:reversible_conversion} holds for non-connected compact Lie groups as well.

\subsection{Irreversibility in asymptotic conversion}\label{sec:irreversiblity}
The result in the previous subsection shows that conversion is asymptotically irreversible if and only if at least one of the conditions (A) and (B) in Corollary~\ref{cor:reversible_conversion} is not satisfied. 

Condition (A) can be violated even when $G=U(1)$. Indeed, if the periods $\tau$ and $\tau'$ of $\psi$ and $\phi$ satisfy $\tau=k\tau'$ for some integer $k>1$, then $\mathrm{Sym}_G(\psi) \subsetneqq \mathrm{Sym}_G(\phi)$ holds, implying that $\rap(\psi\to \phi)=V(\psi,H)/V(\phi,H')$, while $\rap(\phi\to \psi)=0$. 

Condition (B) is easily violated when $\dim G>1$. As a simple illustration, let us analyze $G=SU(2)$ and its unitary representation $e^{\ii\sum_{i=x,y,z}\theta^i J_i}$ with spin operators $J_i$. We adopt the $z$-axis as a quantization axis and denote simultaneous eigenstates of $J^2\coloneqq J_x^2+J_y^2+J_z^2$ and $J_z$ by $\ket{j,m}$. To simplify the argument, we consider pure states $\ket{\psi} $ and $\ket{\phi}$ in a subspace spanned by the highest-weight states $\{\ket{j,j}\}_{j=0,\frac{1}{2},1,\frac{3}{2},\cdots,K}$, where $K$ is an integer or a half-integer introduced to make the Hilbert space finite-dimensional. Following the notation in \cite{gour_resource_2008}, we define
\begin{align}
    \mathcal{J}&\coloneqq \textstyle\sum_{j=0,\frac{1}{2},1,\frac{3}{2},\cdots, K}j\ket{j,j}\bra{j,j},\\
        \mathcal{M}(\psi)&\coloneqq 2\braket{\psi|\mathcal{J}|\psi},\\
        \mathcal{V}(\psi)&\coloneqq 4(\braket{\psi|\mathcal{J}^2|\psi}-\braket{\psi|\mathcal{J}|\psi}^2).
\end{align}
Since the QGT is calculated for $\psi$ as
\begin{align}
    \mathcal{Q}^\psi=\frac{1}{4}
    \begin{pmatrix}
        \mathcal{M}(\psi)&\frac{1}{\ii}\mathcal{M}(\psi)&0\\
        -\frac{1}{\ii}\mathcal{M}(\psi)&\mathcal{M}(\psi)&0\\
         0&0&\mathcal{V}(\psi)
    \end{pmatrix},
\end{align}
and similarly for $\phi$, condition (B) holds only if $\mathcal{M}(\psi)/\mathcal{M}(\phi)= \mathcal{V}(\psi)/\mathcal{V}(\phi)$. This can also be explicitly confirmed from the conversion rate $\rap(\psi\to\phi)=\min\left\{\frac{\mathcal{M}(\psi)}{\mathcal{M}(\phi)},\frac{\mathcal{V}(\psi)}{\mathcal{V}(\phi)}\right\}$ calculated from Eq.~\eqref{eq:rate_formula_compact_connected} for pure states satisfying $\mathrm{Sym}_{G}(\psi)\subset \mathrm{Sym}_G(\phi)$. We remark that this conversion rate is consistent with the prior result on $SU(2)$ in Theorem~24 in \cite{gour_resource_2008}, where the conversion rate is studied for a restricted set of pure states. 

\subsection{Finite groups}\label{sec:finite_groups}
Here, we analyze the asymptotic conversion theory in RTA for a finite group. As proven in \cite{shitara_iid_2024}, the conversion rate diverges for a finite group. We show that this fact can also be derived from our formula in Eq.~\eqref{eq:rate_formula_compact_connected} by lifting a finite group to a compact Lie group by appending a trivial phase, though a finite group itself is not a Lie group \footnote{Formally, a finite group can be regarded as a Lie group since it can be viewed as a zero-dimensional smooth manifold. However, in our terminology, we consider Lie groups to have a dimension greater than zero.}. 

Let $G$ be a finite group given by $G=\{g_i\mid i=0,\cdots, k\}$. Let $U$ and $U'$ be projective unitary representations of $G$ on the input and output Hilbert spaces. Let us introduce unitary representations $\tilde{U}$ and $\tilde{U}'$ of $\tilde{G}\coloneqq G\times U(1)$ such that $\tilde{U}(g,e^{\ii\theta})\coloneqq U(g)e^{\ii \theta}$ and $\tilde{U}'(g,e^{\ii\theta})\coloneqq U'(g)e^{\ii \theta}$ for $g\in G$ and $\theta\in [0,2\pi)$. Since the phase $\theta$ does not affect the state, a channel is $G$-covariant if and only if $\tilde{G}$-covariant. Therefore, appending the trivial phase leaves the conversion rate invariant. Since $\tilde{G}$ is a compact Lie group, we can apply Eq.~\eqref{eq:rate_formula_compact_connected}. The QGT vanishes for any pure state for $\tilde{U}$ and $\tilde{U}'$, implying that $\mathcal{Q}^{\psi}\geq r\mathcal{Q}^{\phi}$ holds for any $r$. In addition,  $\mathrm{Sym}_G(\psi)\subset\mathrm{Sym}_G(\phi)$ holds iff $ \mathrm{Sym}_{\tilde{G}}(\psi)\subset \mathrm{Sym}_{\tilde{G}}(\phi)$. Therefore, we get
\begin{align}
    \rap(\psi\to\phi)=
    \begin{cases}
        \infty &(\text{if }\mathrm{Sym}_{G}(\psi)\subset \mathrm{Sym}_G(\phi))\\
        0  &(\text{otherwise})
    \end{cases},
\end{align}
reproducing the result on the asymptotic conversion rate in \cite{shitara_iid_2024}. We remark that our construction of the conversion channels provides an intuitive explanation of the reason why the conversion rate for finite groups diverges, as will be explained at the end in Sec.~\ref{sec:achievability_proof}.

\section{Proof sketch of Theorem~\ref{thm:conversion_rate_projective_finite_number}}\label{sec:proof_sketch}
In this section, we provide a proof sketch of Theorem~\ref{thm:conversion_rate_projective_finite_number}, and explain the key ideas. A fully detailed, rigorous proof can be found in Appendices~\ref{app:section_for_converse_part} and \ref{app:section_for_direct_part}. 

\subsection{QGT as asymmetry monotone}\label{sec:monotonicity_QGT_sketch}
A key property of the QGT is its monotonicity as an asymmetry measure, in the form of a matrix inequality.
We remark that resource monotones in quantum resource theories are usually given by real-valued functions on states that do not increase under free operations, thereby providing necessary conditions for convertibility. 
Recent studies~\cite{kudo_fisher_2023,gao_sufficient_2024} have extended this viewpoint to matrix-valued monotones, which yield necessary conditions for convertibility formulated via partial orders induced by matrix inequalities. Our study aligns with these works in that the QGT is an asymmetry monotone expressed through a matrix inequality.

To prove the monotonicity of the QGT, we relate it to a family of Petz's monotone metrics~\cite{morozova_markov_1989,petz_monotone_1996}, which are Riemannian metrics on the state space that contract under information processing. To introduce Petz's monotone metrics, we begin with the notion of operator monotone functions. A function $f:[0,\infty)\to[0,\infty)$ is called operator monotone if and only if $0\leq A\leq B$ implies $f(A)\leq f(B)$. Following~\cite{petz_introduction_2011}, Petz's monotone metric associated with $f$ is defined by
\begin{align}
    \braket{A,B}_{f,\rho}\coloneqq
    \sum_{k,l:\, m_f(p_k,p_l)>0}
    \frac{\braket{k|A^\dagger|l}\braket{l|B|k}}{m_f(p_k,p_l)},\label{eq:monotone_metric_definition}
\end{align}
where $m_f(x,y)\coloneqq y f(x/y)$ and $\rho=\sum_k p_k\ket{k}\bra{k}$ is the eigenvalue decomposition. For a linear operator $A$, we define the corresponding norm by
$\|A\|_{f,\rho}\coloneqq \sqrt{\braket{A,A}_{f,\rho}}$. 

We remark that an additional symmetry condition $f(t)=t f(t^{-1})$ is often imposed on operator monotone functions~\cite{petz_monotone_1996, hansen_metric_2008}. However, as discussed in Appendix~\ref{app:symmetric_monotone_function}, the monotone metric associated with a symmetric operator monotone function does not capture the anti-symmetric part of the QGT. We therefore do not impose this symmetry condition.

A central object in our study is a one-parameter family of operator monotone functions $f_q$, defined by
\begin{align}
    f_q(x)\coloneqq (1-q)+ qx,\quad q\in (0,1).\label{eq:definition_of_f_q}
\end{align}
Since $m_{f_q}(p_k,p_l)=(1-q)p_l+qp_k$, for any pure state $\psi$ and any linear operator $O$, we obtain
\begin{align}
    &\|\ii[\psi,O]\|_{f_q,\psi}^2=\frac{1}{1-q}V(\psi,O)+\frac{1}{q}V(\psi,O^\dag)\label{eq:norm_formula_general_r},
\end{align}
where we have defined the generalized variance $V$ by $V(\psi,O)\coloneqq \braket{\psi|O (I-\psi)O^\dag|\psi}$.
Note that when $O$ is Hermitian, $V$ coincides with the ordinary variance. 

To relate the QGT to the above norm, let $O=\gamma^\dag X\coloneqq\gamma^{\mu*} X_\mu$, where $\{X_\mu\}_{\mu=1}^{\dim G}$ are generators defined in Eq.~\eqref{eq:hermitian_operators_Lie_alg} and $\gamma\in \mathbb{C}^{\dim G}$. From Eq.~\eqref{eq:norm_formula_general_r}, we obtain
\begin{align}
    &\|\ii[\psi,O]\|_{f_q,\psi}^2=\gamma^\dag \left(\frac{1}{1-q}\mathcal{Q}^{\psi}+\frac{1}{q}\left(\mathcal{Q}^{\psi}\right)^*\right)\gamma,\label{eq:norm_formula_general_r_QGT}
\end{align}
which implies
\begin{align}
    \lim_{q\to 1^-} f_q(0)\|\ii[\psi,O]\|_{f_q,\psi}^2
    =\gamma^\dag\mathcal{Q}^{\psi}\gamma\label{eq:norm_limit_QGT}.
\end{align}

As detailed in Appendix~\ref{app:monotonicity_fr_proof}, the norm $\|\cdot\|_{f,\rho}$ is monotonic under a quantum channel in the following sense:
\begin{lem}\label{lem:monotonicity_fr}
    For $f_q(x)\coloneqq (1-q)+ qx$ with $q\in (0,1)$, any state $\rho$, any linear operator $O$, and any quantum channel $\mathcal{E}$, it holds
    \begin{align}
        \|\ii[\rho,O]\|_{f_q,\rho}^2\geq \|\mathcal{E}(\ii[\rho,O])\|_{f_q,\mathcal{E}(\rho)}^2
    \end{align}
    if there is an operator $O'$ satisfying $\mathcal{E}(\ii[\rho,O])=\ii [\mathcal{E}(\rho),O']$.  
\end{lem}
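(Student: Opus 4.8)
The plan is to exploit the fact that for the affine function $f_q$ the Petz superoperator degenerates into a simple pencil of left and right multiplications, and then to establish contractivity by a Legendre-duality argument combined with the Kadison--Schwarz inequality, invoking the commutator hypothesis only to control the Moore--Penrose supports. Concretely, since $m_{f_q}(x,y)=(1-q)y+qx$, one has $m_{f_q}(\mathcal{R}_\rho,\mathcal{L}_\rho)(B)=(1-q)\rho B+qB\rho=:\Phi_\rho(B)$, which is a positive self-adjoint superoperator on the Hilbert--Schmidt space, and $\|A\|_{f_q,\rho}^2=\langle A,\Phi_\rho^{+}(A)\rangle_{\mathrm{HS}}$ with $\Phi_\rho^{+}$ the Moore--Penrose inverse.

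First I would record the variational identity $\langle A,\Phi_\rho^{+}(A)\rangle_{\mathrm{HS}}=\sup_B\big[\,2\,\mathrm{Re}\,\langle A,B\rangle_{\mathrm{HS}}-\langle B,\Phi_\rho(B)\rangle_{\mathrm{HS}}\,\big]$, which holds precisely when $A\in\mathrm{ran}\,\Phi_\rho$ (otherwise the supremum diverges). The structural observation is that $\Phi_\rho$ acts on the matrix unit $\ket{k}\bra{l}$ by the scalar $(1-q)p_k+qp_l$, so $\mathrm{ran}\,\Phi_\rho$ is spanned by all $\ket{k}\bra{l}$ with $(1-q)p_k+qp_l>0$, i.e. every pair except those in the mutual kernel $p_k=p_l=0$. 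Any commutator $\ii[\rho,O]$ has matrix elements proportional to $p_k-p_l$ and therefore vanishes on that block, so $A=\ii[\rho,O]\in\mathrm{ran}\,\Phi_\rho$; by the hypothesis its image $\mathcal{E}(A)=\ii[\mathcal{E}(\rho),O']\in\mathrm{ran}\,\Phi_{\mathcal{E}(\rho)}$ as well, and the variational identity is legitimate on both sides.

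Next I would pull back a test operator. For an arbitrary output operator $B'$ set $B=\mathcal{E}^\dagger(B')$, where $\mathcal{E}^\dagger$ is the Hilbert--Schmidt adjoint of $\mathcal{E}$, a unital completely positive (hence Schwarz) map. Duality gives $\langle A,\mathcal{E}^\dagger(B')\rangle_{\mathrm{HS}}=\langle\mathcal{E}(A),B'\rangle_{\mathrm{HS}}$, so the linear term is preserved exactly. For the quadratic term I would write $\langle\mathcal{E}^\dagger(B'),\Phi_\rho\mathcal{E}^\dagger(B')\rangle_{\mathrm{HS}}=(1-q)\,\mathrm{Tr}(\rho\,CC^\dagger)+q\,\mathrm{Tr}(\rho\,C^\dagger C)$ with $C=\mathcal{E}^\dagger(B')$, and bound each piece with the two Kadison--Schwarz inequalities $\mathcal{E}^\dagger(B')\mathcal{E}^\dagger(B')^\dagger\le\mathcal{E}^\dagger(B'B'^\dagger)$ and $\mathcal{E}^\dagger(B')^\dagger\mathcal{E}^\dagger(B')\le\mathcal{E}^\dagger(B'^\dagger B')$; tracing against $\rho\ge0$ and using $\mathrm{Tr}(\mathcal{E}^\dagger(Z)\rho)=\mathrm{Tr}(Z\,\mathcal{E}(\rho))$ converts both into the corresponding terms of $\langle B',\Phi_{\mathcal{E}(\rho)}(B')\rangle_{\mathrm{HS}}$. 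Hence $2\,\mathrm{Re}\,\langle\mathcal{E}(A),B'\rangle-\langle B',\Phi_{\mathcal{E}(\rho)}B'\rangle\le 2\,\mathrm{Re}\,\langle A,B\rangle-\langle B,\Phi_\rho B\rangle\le\|A\|_{f_q,\rho}^2$ for every $B'$; taking the supremum over $B'$ and invoking the variational identity at the output yields the claim $\|\mathcal{E}(A)\|_{f_q,\mathcal{E}(\rho)}^2\le\|A\|_{f_q,\rho}^2$.

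The step I expect to be the main obstacle is the bookkeeping around the Moore--Penrose inverse. Contractivity is genuinely false for a generic operator when $\rho$ is rank-deficient: a component living in the mutual kernel of $\rho$ is invisible to $\|\cdot\|_{f_q,\rho}$ yet can be revived by $\mathcal{E}$ into a visible component of $\mathcal{E}(\rho)$, and since the whole paper concerns pure (rank-one) states this degenerate regime is the generic one. The hypothesis $\mathcal{E}(\ii[\rho,O])=\ii[\mathcal{E}(\rho),O']$ is exactly what removes this pathology, guaranteeing that both $A$ and $\mathcal{E}(A)$ are honest commutators and therefore lie in $\mathrm{ran}\,\Phi_\rho$ and $\mathrm{ran}\,\Phi_{\mathcal{E}(\rho)}$. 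Making this support argument airtight, and verifying that the two Schwarz inequalities feed the $(1-q)$ and $q$ terms in the correct direction, is where the real care is required.
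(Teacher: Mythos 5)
Your proof is correct, and it takes a genuinely different route from the paper. The paper does not prove the contractivity itself: it reduces the claim to Theorem~6.1 of Hayashi's textbook (monotonicity of the norm $\|\cdot\|^{(m)}_{p,\rho}$ induced by $E_{p,\rho}(A)=\int_0^1\rho^\lambda A\rho^{1-\lambda}p(\dd\lambda)$ under CPTP maps, valid whenever the operator and its image lie in the respective ranges), and then its only original content is the domain verification --- the analogue of your range argument --- namely that $\ii[\rho,B]$ always lies in the image of $E_{p_q,\rho}$ because the matrix element $\braket{l|\ii[\rho,B]|k}\propto p_l-p_k$ vanishes precisely on the mutual-kernel block where $(1-q)p_l+qp_k=0$. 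You instead give a self-contained proof of the monotonicity: the Legendre-dual characterization $\braket{A,\Phi_\rho^{+}A}=\sup_B\bigl[2\,\mathrm{Re}\braket{A,B}-\braket{B,\Phi_\rho B}\bigr]$ for $A\in\mathrm{ran}\,\Phi_\rho$, pull-back of test operators via the unital CP adjoint $\mathcal{E}^\dagger$, and the two Kadison--Schwarz inequalities feeding the $(1-q)\,\mathrm{Tr}(\rho\,CC^\dagger)$ and $q\,\mathrm{Tr}(\rho\,C^\dagger C)$ terms separately; your bookkeeping of these two terms against $\Phi_{\mathcal{E}(\rho)}$ checks out. What your version buys is transparency and independence from the external reference (it is essentially the standard proof that would sit inside Hayashi's theorem); what the paper's version buys is brevity and an explicit construction of the Moore--Penrose preimage, which it reuses to identify $\|\cdot\|^{(m)}_{p_q,\rho}$ with the Petz metric $\braket{\cdot,\cdot}_{f_q,\rho}$. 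Both proofs resolve the genuine danger point --- the rank-deficiency of $\rho$ and $\mathcal{E}(\rho)$ --- in the same way, by exploiting that commutators with the state have no component on the kernel block, so your diagnosis of where the real care is required matches the paper's exactly.
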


The monotonicity of the QGT in RTA is proven by using Lemma~\ref{lem:monotonicity_fr}. Suppose that a state $\rho$ is convertible without error to another state $\sigma$ via a $G$-covariant channel $\mathcal{E}$. Since the $G$-covariance of the channel implies $\mathcal{E}(\mathcal{U}_g(\rho))=\mathcal{U}_g'(\sigma)$, its derivative with respect to the parameter $\lambda\in\mathbb{R}^{\dim G}$ for $g\in G$ yields $\mathcal{E}\left(\ii [\rho,X_\mu]\right)=\ii [\sigma,X_\mu']$, where $X_\mu$ is given in Eq.~\eqref{eq:hermitian_operators_Lie_alg}, and $X_\mu'$ is defined similarly by differentiating the projective unitary representation $U'$ of $G$ on the output system instead of $U$. 

By using the linearity of $\mathcal{E}$, we get $\mathcal{E}(\ii[\rho,O])=\ii[\mathcal{E}(\rho),O']$ for $O\coloneqq \gamma^{\dag} X$ and $O'\coloneqq \gamma^\dag X'$ for any $\gamma\in\mathbb{C}^{\dim G}$. The monotonicity in Lemma~\ref{lem:monotonicity_fr} implies
\begin{align}
    \|\ii[\rho,O]\|_{f_q,\rho}^2\geq\|\ii[\sigma,O']\|_{f_q,\sigma}^2.\label{eq:monotonicity_fr_RTA}
\end{align}

Let us now apply this inequality to the case where $\rho$ and $\sigma$ are pure states, which we denote by $\psi$ and $\phi$, respectively. Multiplying Eq.~\eqref{eq:monotonicity_fr_RTA} by $f_q(0)=1-q>0$ and taking the limit of $q\to 1^-$, 
Eq.~\eqref{eq:norm_limit_QGT} implies, $\gamma^\dag \mathcal{Q}^{\psi}\gamma\geq \gamma^\dag \mathcal{Q}^{\phi}\gamma$ for any $\gamma\in\mathbb{C}^{\dim G}$, or equivalently, 
\begin{align}
    \mathcal{Q}^{\psi}\geq \mathcal{Q}^{\phi},\label{eq:monotonicity_QGT_exact}
\end{align}
completing the proof of the monotonicity of QGT under exact conversion among pure states in RTA. 

We now turn to the i.i.d. case. For $N$ i.i.d. copies of a system, the QGT is additive: $\mathcal Q^{\psi^{\otimes N}} = N \mathcal Q^\psi$. When $\{\psi^{\otimes N}\}_N \gconv \{\phi^{\otimes \lfloor rN\rfloor}\}_N$, monotonicity suggests that the asymptotic QGT rates decrease, yielding
\begin{align}
    \mathcal Q^\psi \geq r\,\mathcal Q^\phi .\label{eq:monotonicity_QGT_asymptotic}
\end{align}
However, proving this requires care because the QGT has asymptotic discontinuity~\cite {donald_uniqueness_2002,plenio_introduction_2007,gour_measuring_2009,marvian_coherence_2020,marvian_operational_2022,yamaguchi_smooth_2023}: an asymptotically vanishing conversion error can change the QGT by $O(1)$ per copy.

This is already seen for $G=U(1)$ with representation $e^{iHt}$, where the QGT equals the energy variance~\cite{gour_measuring_2009,marvian_coherence_2020,marvian_operational_2022,yamaguchi_smooth_2023}. For $N$ copies, the operator norm of the Hamiltonian $H_N$ grows linearly with $N$. Since the variance is quadratic in $H_N$, a conversion error $\epsilon_N$ can change the variance by $O(\epsilon_N N^2)$, i.e., $O(\epsilon_N N)$ per copy, which need not vanish even if $\epsilon_N\to 0$. The same issue arises for general Lie groups, hence a more detailed analysis is needed.

We now prove the monotonicity of the QGT under asymptotic conversion. Suppose that $\{\psi^{\otimes N}\}_N\gconv \{\phi^{\otimes \floor{rN}}\}_N$, that is, there exists a sequence of quantum channels $\{\mathcal{E}_N\}_N$ such that $\lim_{N\to\infty}T\left(\mathcal{E}_N(\psi^{\otimes N}),\phi^{\otimes \floor{rN}}\right)=0$. For notational convenience, we define $M(N)\coloneqq \floor{rN}$ and $\sigma_{M(N)}\coloneqq \mathcal{E}_N(\psi^{\otimes N})$.
In the following, we omit the dependence of $M(N)$ on $N$ and simply write $M$. We define $O_N$ and $O_M'$ by $O_N\coloneqq \sum_{n=1}^NI^{\otimes n-1}\otimes O\otimes I^{\otimes N-n}$ and $O'_M\coloneqq \sum_{n=1}^MI^{\otimes n-1}\otimes O'\otimes I^{\otimes M-n}$ with $O =\gamma^\dag X$ and $O'=\gamma^\dag X'$. Applying Eq.~\eqref{eq:monotonicity_fr_RTA} to $ \psi^{\otimes N}$ and $\sigma_M$ instead of $\rho$ and $\sigma$, we find
\begin{align}
    &\|\ii[ \psi^{\otimes N},O_N]\|_{f_q,\psi^{\otimes N}}^2\geq \|\ii[\sigma_M,O_M']\|_{f_q,\sigma_M}^2.\label{eq:monotonicity_fr_RTA_iid}
\end{align}
By additivity of the monotone metric, the left-hand side equals $N\|\ii[ \psi,O]\|_{f_q,\psi}^2$. On the other hand, the right-hand side requires careful treatment due to the asymptotic discontinuity. Nevertheless, assuming that $\lim_{M\to\infty}T\left(\sigma_M,\phi^{\otimes M}\right)=0$, we show that there exists a real-valued function $h$, independent of $M$, with $\lim_{\epsilon\to 0}h(\epsilon)=0$, such that for any sufficiently small $\epsilon>0$,
\begin{align}
   f_q(0) \|\ii[\sigma_M,O_M']\|_{f_q,\sigma_M}^2\geq MV(\phi,O)-Mh(\epsilon)+o(M)\label{eq:asymptotics_norm_asymmetric}
\end{align}
for all sufficiently large $M$. The proof is provided in Appendix~\ref{sec:asymptotics_norm_asymmetric}.

From Eqs.~\eqref{eq:monotonicity_fr_RTA_iid} and \eqref{eq:asymptotics_norm_asymmetric} and $V(\phi, O)=\gamma^\dagger Q^\phi\gamma$, we get
\begin{align}
    &f_q(0)\|\ii[\psi,O]\|_{f_q,\psi}^2\geq  \frac{M}{N}\gamma^\dag \mathcal{Q}^\phi\gamma-\frac{M}{N}h(\epsilon)+\frac{o\left(M\right)}{N},\label{eq:fq_norm_monotonicity}
\end{align}
where we used $f_q(0)>0$. Taking the limit $N\to\infty$ yields
\begin{align}
    f_q(0)\|\ii[\psi,O]\|_{f_q,\psi}^2\geq r\gamma^\dag \mathcal{Q}^\phi\gamma-rh(\epsilon).
\end{align}
Since this inequality holds for any sufficiently small $\epsilon>0$, we get 
$f_q(0)\|\ii[\psi,O]\|_{f_q,\psi}^2\geq r\, \gamma^\dag \mathcal{Q}^\phi\gamma$. Taking the limit of $q\to 1^-$ and using Eq.~\eqref{eq:norm_limit_QGT}, we finally obtain Eq.~\eqref{eq:monotonicity_QGT_asymptotic}. Since $\{\psi^{\otimes N}\}_N\gconv \{\phi^{\otimes \floor{rN}}\}_N$ is equivalent to $\{\mathcal{U}_g(\psi)^{\otimes N}\}_N\gconv \{\mathcal{U}_g'(\phi)^{\otimes \floor{rN}}\}_N$, by repeating the above argument for $\mathcal{U}_g(\psi)$ and $\mathcal{U}_g'(\phi)$ instead of $\psi$ and $\phi$, we get
\begin{align}
    \forall g\in G,\quad \mathcal{Q}^{\mathcal{U}_g(\psi)}\geq r\mathcal{Q}^{\mathcal{U}_g'(\phi)},\label{eq:monotonicity_QGT_converse_part}
\end{align}
which completes the proof of the monotonicity of QGT in asymptotic conversion. Consequently, we obtain 
\begin{align}
        &\rap(\psi \to \phi)\leq \sup\{r\geq0 \mid\forall g\in G,\,\mathcal{Q}^{\mathcal{U}_g(\psi)}\geq r\mathcal{Q}^{\mathcal{U}'_g(\phi)}\}.\label{eq:converse_part_statement}
\end{align}
Equation~\eqref{eq:converse_part_statement} gives the converse part of Eq.~\eqref{eq:conversion_rate_formula_forall_g}, i.e., it rules out any conversion rate exceeding the right-hand side of Eq.~\eqref{eq:conversion_rate_formula_forall_g}. In the next subsection, we show that this bound is achievable and hence optimal. 

Equation~\eqref{eq:converse_part_statement} remains valid for non-compact Lie groups as long as the representation space is finite-dimensional, since the proof relies only on local properties of states. It also applies regardless of whether $\mathrm{Sym}_{G}(\psi)\subset \mathrm{Sym}_G(\phi)$ holds.

\subsection{Optimality}\label{sec:optimality_sketch}
Here we explain key ideas to prove the optimality of the bound in Eq.~\eqref{eq:converse_part_statement}. Essentially, we constructively show that if $r>0$ satisfies $\mathcal{Q}^{\mathcal{U}_g(\psi)}\geq r\,\mathcal{Q}^{\mathcal{U}'_g(\phi)}$ for all $g\in G$, then any conversion rate smaller than $r$ is achievable.

The key step in proving optimality is the following lemma, which provides an alternative yet equivalent characterization of convertibility in RTA in terms of quantum channels that are not assumed to be $G$-covariant:
\begin{lem}\label{lem:gcov_and_cptp}
    Let $U$ and $U'$ be projective unitary representations of a compact Lie group $G$ on finite-dimensional Hilbert spaces $\mathcal{H}$ and $\mathcal{H}'$. Let $\rho$ and $\sigma$ be arbitrary states on $\mathcal{H}$ and $\mathcal{H}'$. For any $\epsilon\geq 0$, the following are equivalent:
    \begin{enumerate}[(i)]
        \item There exists a quantum channel $\mathcal{E}$ such that $T(\mathcal{E}(\mathcal{U}_g(\rho)),\mathcal{U}'_g(\sigma))\leq\epsilon$ for any $g\in G$.
        \item There exists a $G$-covariant channel $\mathcal{E}'$ such that 
        $T(\mathcal{E}'(\rho),\sigma)\leq \epsilon$.
    \end{enumerate}
\end{lem}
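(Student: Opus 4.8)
The plan is to prove the two implications separately; the direction (ii)$\Rightarrow$(i) is immediate, and essentially all the content lies in (i)$\Rightarrow$(ii), which I would obtain by a Haar-averaging (group-twirling) construction. A useful preliminary observation is that, although $U$ and $U'$ are only \emph{projective} representations, the induced adjoint channels $\mathcal{U}_g(\cdot)=U(g)(\cdot)U(g)^\dag$ and $\mathcal{U}'_g(\cdot)=U'(g)(\cdot)U'(g)^\dag$ are insensitive to the $U(1)$ phases $\omega$, so they satisfy $\mathcal{U}_{g_1}\circ\mathcal{U}_{g_2}=\mathcal{U}_{g_1g_2}$ and $\mathcal{U}_{g^{-1}}=\mathcal{U}_g^{-1}$ exactly, i.e.\ they form genuine (non-projective) representations on the level of channels. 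This removes all phase bookkeeping from the argument.

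For (ii)$\Rightarrow$(i): if $\mathcal{E}'$ is $G$-covariant, then $\mathcal{E}'\circ\mathcal{U}_g=\mathcal{U}'_g\circ\mathcal{E}'$, so $\mathcal{E}'(\mathcal{U}_g(\rho))=\mathcal{U}'_g(\mathcal{E}'(\rho))$. Since the trace distance is invariant under the unitary channel $\mathcal{U}'_g$,
\begin{align}
    T\!\left(\mathcal{E}'(\mathcal{U}_g(\rho)),\mathcal{U}'_g(\sigma)\right)
    =T\!\left(\mathcal{U}'_g(\mathcal{E}'(\rho)),\mathcal{U}'_g(\sigma)\right)
    =T\!\left(\mathcal{E}'(\rho),\sigma\right)\leq\epsilon ,
\end{align}
so the choice $\mathcal{E}\coloneqq\mathcal{E}'$ witnesses (i).

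For (i)$\Rightarrow$(ii), given a channel $\mathcal{E}$ satisfying (i) I would define the twirled channel
\begin{align}
    \mathcal{E}'(\cdot)\coloneqq\int_G \mathrm{d}g\;\mathcal{U}'_{g^{-1}}\circ\mathcal{E}\circ\mathcal{U}_g(\cdot),
\end{align}
where $\mathrm{d}g$ is the normalized Haar measure, which exists and is finite because $G$ is compact; as a probabilistic mixture of CPTP maps over a probability measure (with continuous integrand in finite dimension), $\mathcal{E}'$ is itself CPTP. First I would verify $G$-covariance: using the composition law above and the right-invariance $\mathrm{d}(gh)=\mathrm{d}g$ under the substitution $g\mapsto gh$, one finds $\mathcal{E}'\circ\mathcal{U}_h=\int_G \mathrm{d}g\,\mathcal{U}'_{hg^{-1}}\circ\mathcal{E}\circ\mathcal{U}_g=\mathcal{U}'_h\circ\mathcal{E}'$. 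Next I would bound the error: applying the trace-distance-preserving channel $\mathcal{U}'_{g^{-1}}$ to the hypothesis $T(\mathcal{E}(\mathcal{U}_g(\rho)),\mathcal{U}'_g(\sigma))\leq\epsilon$ and using $\mathcal{U}'_{g^{-1}}\circ\mathcal{U}'_g=\mathrm{id}$ gives $T(\mathcal{U}'_{g^{-1}}(\mathcal{E}(\mathcal{U}_g(\rho))),\sigma)\leq\epsilon$ for every $g$. By joint convexity of the trace distance in its first argument and $\int_G\mathrm{d}g=1$,
\begin{align}
    T\!\left(\mathcal{E}'(\rho),\sigma\right)
    \leq\int_G \mathrm{d}g\;T\!\left(\mathcal{U}'_{g^{-1}}(\mathcal{E}(\mathcal{U}_g(\rho))),\sigma\right)\leq\epsilon ,
\end{align}
which establishes (ii).

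The argument is conceptually routine once the twirl is set up, and I do not expect a genuine obstacle. The only points requiring care are (a) confirming that the adjoint representations are non-projective so that the phase factors cancel everywhere, and (b) ensuring the defining integral yields a well-defined CPTP map, both of which are secured by compactness of $G$ and finite-dimensionality of $\mathcal{H},\mathcal{H}'$. The crux of the lemma is simply the observation that compactness supplies a Haar measure with respect to which a non-covariant channel can be symmetrized into a covariant one without increasing the worst-case error.
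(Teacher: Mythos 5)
Your proposal is correct and follows essentially the same route as the paper: the paper also handles (ii)$\Rightarrow$(i) by unitary invariance of the trace distance, and proves (i)$\Rightarrow$(ii) by defining the twirled channel $\mathcal{E}'=\int_G \dd\mu_G(g)\,\mathcal{U}'_{g^{-1}}\circ\mathcal{E}\circ\mathcal{U}_g$, verifying covariance via right-invariance of the Haar measure, and bounding the error by convexity of the trace distance. No substantive differences.
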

The proof is given in Appendix~\ref{app:gcov_and_cptp}. In particular, we show that if $\mathcal{E}$ satisfies condition~(i), then $\mathcal{E}'\coloneqq \int \dd \mu_G(g)\,\mathcal{U}_{g^{-1}}'\circ \mathcal{E}\circ \mathcal{U}_g$ is a $G$-covariant channel and satisfies condition~(ii), where $\mu_G$ denotes the normalized Haar measure on $G$. 

Lemma~\ref{lem:gcov_and_cptp} implies that $\{\psi^{\otimes N}\}_N\gconv \{\phi^{\otimes \floor{rN}}\}_N$ if and only if there exists a sequence of quantum channels $\{\mathcal{E}_N\}_N$ such that
\begin{align}
    \lim_{N\to\infty}\sup_{g\in G}T\left(\mathcal{E}_N\left(\mathcal{U}_g(\psi)^{\otimes N}\right),\mathcal{U}'_g(\phi)^{\otimes \floor{rN}}\right)=0.\label{eq:trace_distance_sup}
\end{align}
Importantly, if we regard $g\in G$ as an unknown parameter, this sequence of quantum channels $\{\mathcal{E}_N\}_N$ achieves a conversion of a statistical model $\{\mathcal{U}_g(\psi)^{\otimes N}\}_{g\in G}$ to another statistical model $\{\mathcal{U}'_g(\phi)^{\otimes \floor{rN}}\}_{g\in G}$ in the asymptotic limit. 

We show that such an asymptotic transformation between statistical models is achieved by a two-step process: estimation and subsequent conversion. In the rest of this subsection, we assume that the unitary representations of $G$ are non-projective for simplicity. 

\noindent\textbf{Step 1 (Estimation):}
In this step, we obtain a rough estimation $\hat{g}$ of $g\in G$ using only a sublinear number of copies. More precisely, we apply an estimation procedure to $n=N^{1-\epsilon}$ copies of $\mathcal{U}_g(\psi)$ (for a fixed small $\epsilon$), thereby producing an estimate $\hat{g}$. The purpose of this step is not to learn $g$ precisely, but to \emph{localize} it: with success probability tending to one as $N\to\infty$, the true state $\mathcal{U}_g(\psi)$ lies in a local neighborhood of $\mathcal{U}_{\hat g}(\psi)$, in the sense that, on the success event, there exists a small parameter
$\theta\in\mathbb{R}^{\dim G}$ with $\|\theta\|=O(N^{-1/2+\epsilon})$ such that $\mathcal{U}_g(\psi)=e^{\ii \theta^\mu X_\mu }\mathcal{U}_{\hat g}(\psi) e^{-\ii \theta^\nu X_\nu }$. 

After localization, we may treat the remaining copies (of order $N$) as a \emph{local i.i.d.\ model} parameterized at the $N^{-1/2}$ scale. This is precisely the regime where quantum local asymptotic normality (QLAN) applies \cite{guta_local_2006,kahn_local_2009,girotti_optimal_2024,lahiry_minimax_2024}, which approximates the properties of local i.i.d.\ models by Gaussian shift models. 
This will be exploited in the conversion step.

Notably, since this estimation step consumes only $n=N^{1-\epsilon}=o(N)$ copies, it incurs a sublinear overhead and hence does not change the achievable conversion rate in the $N\to\infty$ limit.

\noindent\textbf{Step 2 (Conversion):}
In this step, conditioned on the estimate $\hat{g}$ obtained in Step~1, we convert the remaining $N'=N-n$ copies of $\mathcal{U}_g(\psi)$ into copies of $\mathcal{U}'_g(\phi)$. Since the success probability of Step~1 tends to one as $N\to\infty$, it suffices to describe the conversion on the success event, where the problem is reduced to a local i.i.d.\ model.

The key observation is that, in the local regime, QLAN translates pure-state unitary i.i.d.\ models by Gaussian shift models that are essentially characterized by the QGT. Importantly, from the additivity of the QGT, the Gaussian shift model associated with $\{\mathcal{U}_g(\phi)^{\otimes \floor{rN}}\}_{g\in G}$ is governed by the scaled QGT, $r\,\mathcal{Q}^{\mathcal{U}_g(\phi)}$. Since the QGT quantifies the sensitivity of the state under infinitesimal transformations, the inequality
\begin{align}
    \mathcal{Q}^{\mathcal{U}_g(\psi)} \geq r\,\mathcal{Q}^{\mathcal{U}'_g(\phi)} \qquad (\forall g\in G)\label{eq:QGT_inequality_for_all_g}
\end{align}
can be interpreted as the condition that the Gaussian shift model associated with the input state is sufficiently ``sensitive'' to reproduce that for the output state at rate $r$ (uniformly in $g\in G$). This suggests that the conversion at a rate $r$ is possible if Eq.~\eqref{eq:QGT_inequality_for_all_g} is satisfied. This intuition is made rigorous in Appendix~\ref{app:section_for_direct_part} by explicitly constructing conversion channels.

Here, the condition $\mathrm{Sym}_G(\psi)\subset \mathrm{Sym}_G(\phi)$ is essential: estimation can only identify $g$ up to symmetries that leave $\psi$ invariant, and the inclusion guarantees that such residual ambiguity is also irrelevant for $\phi$. Hence, the conversion conditioned on $\hat{g}$ produces the correct target state $\mathcal{U}'_g(\phi)$ whenever the localization in Step~1 succeeds.

Through the above two-step procedure, we construct a sequence of quantum channels $\{\mathcal{E}_N\}_N$ that achieves Eq.~\eqref{eq:trace_distance_sup}. Using Lemma~\ref{lem:gcov_and_cptp}, this establishes the opposite inequality of Eq.~\eqref{eq:converse_part_statement}, i.e., the direct part of Eq.~\eqref{eq:conversion_rate_formula_forall_g}: 
\begin{align}
    \rap(\psi \to \phi)\geq \sup\{r\geq0 \mid\forall g\in G,\,\mathcal{Q}^{\mathcal{U}_g(\psi)}\geq r\mathcal{Q}^{\mathcal{U}'_g(\phi)}\}.\label{eq:direct_part_rates}
\end{align}

As detailed in Appendix~\ref{app:section_for_direct_part}, our proof is constructive: we explicitly construct a sequence of $G$-covariant channels that achieves the asymptotic conversion. Also, we show that the conversion error decays polynomially in $N$.

\section{Mixed-state conversion}\label{sec:mixed_state_asymmetry}

In the i.i.d. setting, the pure-state asymmetry conversion theory has been thoroughly developed for a compact Lie group in the preceding sections. This section partially extends the theory to mixed states. Although the theoretical framework remains incomplete, it is expected to serve as a foundation for further investigations.

Concretely, we investigate the distillable asymmetry and asymmetry cost, defined as the optimal rates for converting a mixed state $\rho$ to and from a pure reference state $\phi$. A similar scenario has already been explored for the time-translation symmetry in \cite{marvian_coherence_2020,marvian_operational_2022}, but not for other symmetries. See Fig.~\ref{fig:mixed} for a schematic picture of the setups.
\begin{figure}[htb]
    \centering
    \includegraphics[width=8.3cm]{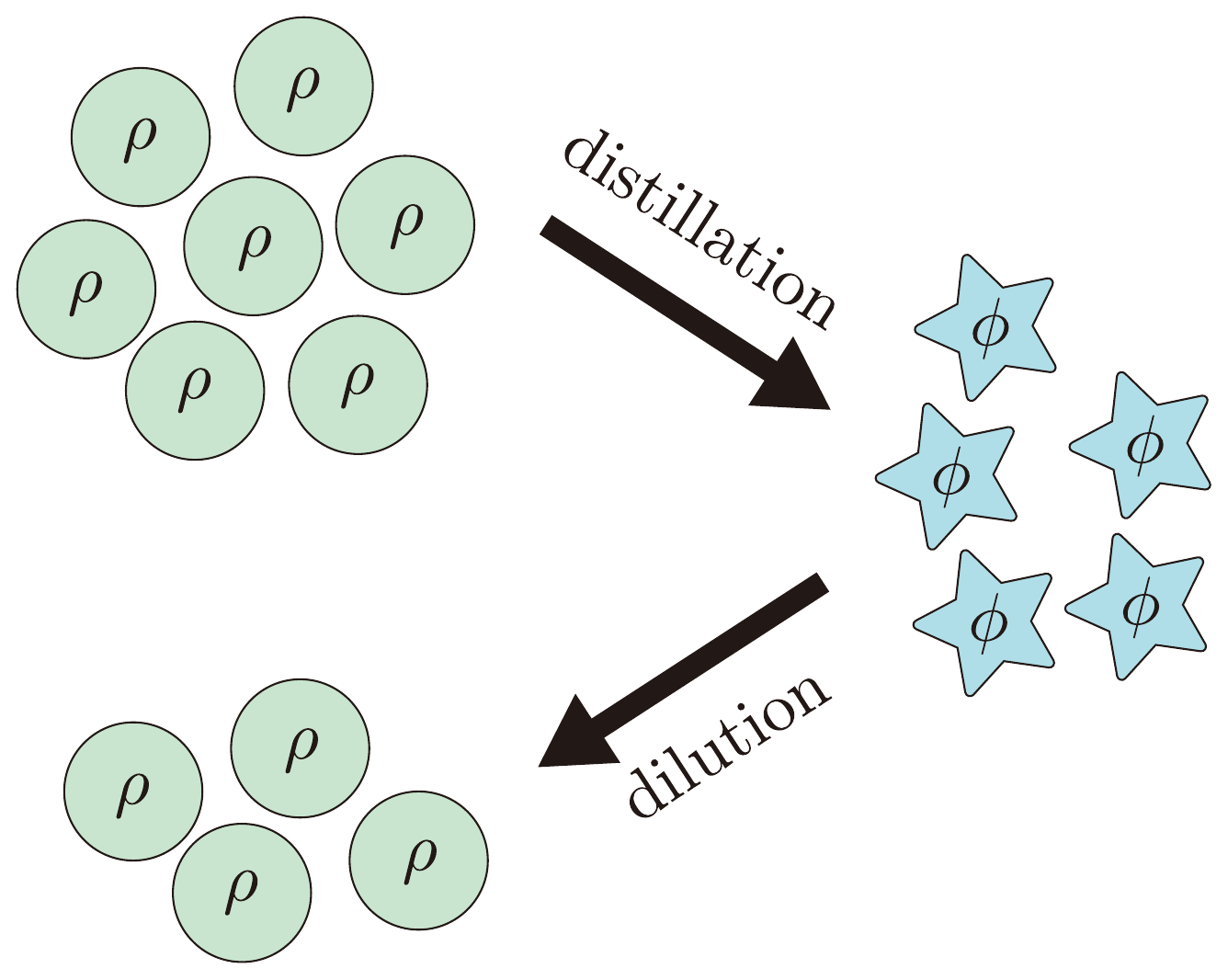}
    \caption{Schematic picture of distillation and dilution setups, in which i.i.d. copies of a mixed state $\rho$ are converted to and from i.i.d. copies of a pure reference state $\phi$.}
    \label{fig:mixed}
\end{figure}

\subsection{Distillation of asymmetry}\label{sec:distillation_of_asymmetry}
Here, we investigate the scenario where we convert a general state $\rho$ to a fixed pure state $\phi$ that serves as a reference. We define the distillable asymmetry of a general state $\rho$ as $\mathcal{A}_{\dist}(\rho)\coloneqq R(\rho\to \phi)$, i.e., 
\begin{align}
    \mathcal{A}_{\dist}(\rho)=\sup\{r\geq 0\mid\{\rho^{\otimes N}\}_N\gconv \{\phi^{\otimes \floor{rN}}\}_N \}\label{eq:definition_distillation_of_asymmetry}.
\end{align}
As a central quantity in our analysis, we introduce a $\dim G\times \dim G$ Hermitian matrix $\mathcal{Q}^\rho$ whose elements are given by
\begin{align}
    \left(\mathcal{Q}^\rho\right)_{\mu\nu}\coloneqq \mathrm{Tr}\left(\rho X_\mu (I-\Pi_\rho)X_\nu\right),\label{eq:definition_S_matrix}
\end{align}
where $\Pi_\rho$ denotes the projector to the support of $\rho$. 
This quantity is an extension of QGT, originally defined for pure states in Eq.~\eqref{eq:qgt_covariance_matrix}, to mixed states. As we shall show below, the monotonicity of $\mathcal{Q}^\rho$ under $G$-covariant channels yields an upper bound of $\mathcal{A}_{\dist}(\rho)$, from which we can derive a sufficient condition for the rate to vanish. 

We start the analysis by relating $\mathcal{Q}^\rho$ with the monotone metric with an operator monotone function $f_q(x)$ given in Eq.~\eqref{eq:definition_of_f_q}. By using the eigenvalue decomposition $\rho=\sum_{k}p_k\ket{k}\bra{k}$ of the state $\rho$, we have
\begin{align}
    &\|\ii [O,\rho]\|_{f_q,\rho}^2
    =\sum_{\underset{(1-q)p_l+qp_k>0}{k,l;}}\frac{(p_l-p_k)^2}{(1-q)p_l+qp_k}|\braket{l|O|k}|^2.
\end{align}
When $q\to 1^-$, the above sum contains divergent terms proportional to $1/(1-q)$ for $(k,l)$ such that $p_l>0$ and $p_k=0$. Therefore, we find $\lim_{q\to 1^-}f_q(0)\|\ii [O,\rho]\|_{f_q,\rho}^2=\sum_{l;p_l>0}\sum_{k;p_k=0}p_l\braket{l|O|k}\braket{k|O^\dag |l}$, i.e., 
\begin{align}
    \lim_{q\to 1^-}f_q(0)\|\ii [O,\rho]\|_{f_q,\rho}^2=\mathrm{Tr}\left(\rho O (I-\Pi_\rho)O^\dag\right). 
\end{align}
In particular, when $O=\gamma^\dag X$ with $\gamma\in\mathbb{C}^{\dim G}$, we get
\begin{align}
    \lim_{q\to 1^-}f_q(0)\|\ii [\gamma^\dag X,\rho]\|_{f_q,\rho}^2=\gamma^\dag \mathcal{Q}^\rho \gamma.\label{eq:relation_between_S_and_f_q}
\end{align}

We now prove the monotonicity of $\mathcal{Q}^{\rho}$ under $G$-covariant operation. Suppose a state $\rho$ is convertible to $\sigma$ via a $G$-covariant channel without error. 
Multiplying Eq.~\eqref{eq:monotonicity_fr_RTA} by $f_q(0)=1-q>0$ yields
\begin{align}
    f_q(0)\|\ii [\gamma^\dag X,\rho]\|_{f_q,\rho}^2\geq f_q(0) \|\ii [\gamma^\dag X',\sigma]\|_{f_q,\sigma}^2.\label{eq:monotonicity_norm_general_fq}
\end{align}
Taking the limit of $q\to 1^-$ in Eq.~\eqref{eq:monotonicity_norm_general_fq}, Eq.~\eqref{eq:relation_between_S_and_f_q} implies $\gamma^\dag \mathcal{Q}^\rho \gamma\geq \gamma^\dag \mathcal{Q}^\sigma \gamma$. Since this inequality holds for any $\gamma\in \mathbb{C}^{\dim G}$, we get
\begin{align}
     \mathcal{Q}^\rho \geq \mathcal{Q}^\sigma, \label{eq:monotonicity_S}
\end{align}
meaning that $\mathcal{Q}^\rho$ is a valid asymmetry monotone for any state $\rho$. 

The asymptotic discontinuity of QGT is inherited by $\mathcal{Q}^{\rho}$, as they coincide for pure states, implying that extending its monotonicity to an asymptotic setting requires careful consideration. Nevertheless, as an extension of Eq.~\eqref{eq:converse_part_statement}, an upper bound of the distillable asymmetry can be proven by using Eq.~\eqref{eq:asymptotics_norm_asymmetric}. 
\begin{thm}[Upper bound on distillable asymmetry]
    Let $U,U'$ be projective unitary representations of a Lie group $G$ on finite-dimensional Hilbert spaces $\mathcal{H}$ and $\mathcal{H}'$. For any state $\rho$ on $\mathcal{H}$ and pure state $\phi$ on $\mathcal{H}'$, it holds
    \begin{align}
        \sup\{r\geq0 \mid\forall g\in G,\,\mathcal{Q}^{\mathcal{U}_g(\rho)}\geq r\mathcal{Q}^{\mathcal{U}'_g(\phi)}\}\geq \mathcal{A}_{\dist}(\rho).\label{extension_of_converse_part}
    \end{align}
\end{thm}
\begin{proof}
    Suppose $\{\rho^{\otimes N}\}_N\gconv \{\phi^{\otimes \floor{rN}}\}_N$ for $r\geq 0$. From Eqs.~\eqref{eq:monotonicity_fr_RTA_iid} and~\eqref{eq:asymptotics_norm_asymmetric}, as an extension of Eq.~\eqref{eq:fq_norm_monotonicity}, we get
\begin{align}
    &f_q(0)\|\ii [\gamma^\dag X,\rho]\|_{f_q,\rho}^2\geq \frac{M}{N}\gamma^\dag \mathcal{Q}^\phi\gamma-\frac{M}{N}h(\epsilon)+\frac{1}{N}o(M),\label{eq:monotonicity_iid_asymptotic}
\end{align}
where $M\coloneqq \floor{rN}$. 
In the limit of $N\to\infty$, we have
\begin{align}
    f_q(0)\|\ii [\gamma^\dag X,\rho]\|_{f_q,\rho}^2\geq r\gamma^\dag \mathcal{Q}^\phi\gamma-r h(\epsilon). 
\end{align}
Since this inequality holds for all sufficiently small $\epsilon>0$, we get $f_q(0)\|\ii [\gamma^\dag X,\rho]\|_{f_q,\rho}\geq r\gamma^\dag \mathcal{Q}^\phi\gamma$. In the limit of $q\to 1^-$, we obtain $\mathcal{Q}^\rho\geq r\mathcal{Q}^\phi$. Repeating this argument for $\mathcal{U}_g(\rho)$ and $\mathcal{U}'_g(\phi)$ instead of $\rho$ and $\phi$, we get
\begin{align}
    \forall g\in G,\quad \mathcal{Q}^{\mathcal{U}_g(\rho)}\geq r\mathcal{Q}^{\mathcal{U}'_g(\phi)},
\end{align}
which concludes the proof of Eq.~\eqref{extension_of_converse_part}. 
\end{proof}
This result first establishes an upper bound on the distillable asymmetry by using a geometric quantity, $\mathcal{Q}$, i.e., the QGT and its extension to mixed states, which applies to any state $\rho$ and any Lie group $G$. From Theorem~\ref{thm:conversion_rate_projective_finite_number}, the inequality~\eqref{extension_of_converse_part} becomes tight when $\rho$ is pure, and $G$ is compact.

We remark that in a special case where $[\Pi_\rho,\gamma^\dag X]=0$ holds for some $\gamma\in\mathbb{C}^{\dim G}$, we have $\gamma^\dag \mathcal{Q}^\rho\gamma=0$. As an immediate corollary of the above theorem, we therefore get the following:
\begin{cor}[Sufficient condition for vanishing distillable asymmetry]\label{cor:projector_distillablle_asymmetry}
    If there exists a vector $\gamma\in\mathbb{C}^{\dim G}$ satisfying $[\Pi_\rho,\gamma^\dag X]=0$ and $\gamma^\dag \mathcal{Q}^\phi\gamma\neq 0$, then $\mathcal{A}_{\dist}(\rho)=0$.
\end{cor}
This corollary implies that for a typical state with full rank, where $\Pi_\rho= I$, $\mathcal{A}_{\dist}(\rho)=0$ holds for any asymmetric pure state $\phi$. This result is consistent with a result in prior research \cite{marvian_coherence_2020} on time-translation asymmetry, where the distillable coherence, defined as the optimal conversion rate from a state $\rho$ to a pure state $\phi$ given by a superposition of the ground and excited state of a Hamiltonian $H$, is shown to be zero when $[\Pi_\rho, H]=0$. 
We remark that although a state $\rho$ satisfying the condition in Corollary~\ref{cor:projector_distillablle_asymmetry} has zero distillable asymmetry, it does not necessarily implies that $\rho$ is useless in asymptotic conversion. For example, while a full-rank state has zero distillable asymmetry, a sublinear number of such states may be used to enhance the conversion rate by removing the restriction imposed by symmetry subgroups in asymptotic conversion among pure states, as shown in Appendix~\ref{sec:sym_subgroup_conversion_rate}.

\subsection{Dilution of asymmetry}\label{sec:dilution_of_asymmetry}
Next, we investigate the opposite scenario, where we create a general state $\rho$ by consuming the reference pure state $\phi$. The asymmetry cost is defined as
\begin{align}
    \mathcal{A}_{\cost}(\rho)\coloneqq \frac{1}{R(\phi\to\rho)},
\end{align}
where $1/\infty$ and $1/0$ are formally regarded as $0$ and $\infty$, respectively. From the definition of conversion rate $R(\phi\to\rho)$, we also obtain an equivalent expression
\begin{align}
    \mathcal{A}_{\cost}(\rho)=\inf \{r\geq 0\mid \{\phi^{\otimes \ceil{rN}}\}_N \gconv  \{\rho^{\otimes N}\}_N\}.
\end{align}
We remark that Eq.~\eqref{eq:conversion_rate_formula_subgrop} implies $\mathcal{A}_{\cost}(\rho)<\infty$ only if $\mathrm{Sym}_G(\phi)\subset\mathrm{Sym}_G(\rho)$. Thus, we only investigate $\mathcal{A}_{\cost}(\rho)$ for $\rho$ satisfying $\mathrm{Sym}_G(\phi)\subset\mathrm{Sym}_G(\rho)$.

Following the standard argument using typical sequence \cite{hayden_asymptotic_2001,marvian_operational_2022}, it is proven that
\begin{align}
    \sum_{i=1}^kp_i\mathcal{A}_{\cost}(\rho_i)\geq \mathcal{A}_{\cost}(\rho),\label{eq:cost_ensemble}
\end{align}
where $\{\rho_i\}_{i=1}^k$ is a set of states and $\{p_i\}_{i=1}^k$ is a probability distribution such that $p_i>0$, $\sum_{i=1}^kp_i=1$, and $\sum_{i=1}^kp_i \rho_i=\rho$. An intuition behind this inequality is that, instead of preparing $\rho$ directly, we can adopt a strategy where $\{\rho_i\}_{i=1}^k$ is probabilistically generated according to the probability distribution $\{p_i\}_{i=1}^k$, although this strategy is not necessarily optimal. See Appendix~\ref{app:cost_ensemble} for the proof of Eq.~\eqref{eq:cost_ensemble}. We remark that the left-hand side of Eq.~\eqref{eq:cost_ensemble} is finite only if $\mathrm{Sym}_G(\phi)\subset\mathrm{Sym}_G(\rho_i)$ for all $i=1,\cdots,k$. 

In light of the pure-state conversion theory established in Theorem~\ref{thm:conversion_rate_projective_finite_number}, a promising approach to obtaining a computable upper bound of $\mathcal{A}_{\cost}(\rho)$ is to decompose a state $\rho$ into an ensemble of pure states. Suppose that a state $\rho$ can be decomposed as $\rho=\sum_{i=1}^{k-1}p_i\psi_i+p_{\mathrm{s}}\rho_{\mathrm{s}}$, where $\rho_{\mathrm{s}}$ is a symmetric (possibly mixed \footnote{We remark that not all symmetric mixed states can be decomposed as an ensemble of symmetric pure states. Indeed, when a unitary representation of a group is decomposed into irreducible components, Schur’s lemma implies that only a one-dimensional irreducible representation contains a symmetric pure state. In other words, no symmetric pure state exists in any higher-dimensional irreducible representation.}) state, and $\{\psi_i\}_{i=1}^{k-1}$ are pure states satisfying $\mathrm{Sym}_G(\phi)\subset \mathrm{Sym}_G(\psi_i)$. Then, Eq.~\eqref{eq:conversion_rate_formula_forall_g} implies
\begin{align}
    \sum_{i=1}^{k-1}p_ir_i\geq \mathcal{A}_{\cost}(\rho),\label{eq:cost_upper_bound}
\end{align}
where
\begin{align}
    r_i\coloneqq \inf\left\{r\geq 0\,\middle|\, \forall g\in G,\, r\mathcal{Q}^{\mathcal{U}_g(\phi)} \geq  \mathcal{Q}^{\mathcal{U}_g'(\psi_i)}\right\}.\label{eq:rate_each_pure_state}
\end{align} 
Similarly to Eq.~\eqref{eq:rate_formula_D_max_QGT}, this quantity can be expressed by using the quantum max-relative entropy as
\begin{align}
    r_i=2^{D_{\max}\left(\mathcal{Q}^{\psi_i}\middle\|\mathcal{Q}^{\phi}\right)}.
\end{align}

When a state $\rho$ can be obtained as an ensemble of pure states and symmetric state, minimizing the left-hand side of Eq.~\eqref{eq:cost_upper_bound} over such an ensemble leads to a tighter bound on the asymmetry cost $\mathcal{A}_{\cost}(\rho)$. In the case of $G=U(1)$, it is known that such a minimization yields a tight bound~\cite{marvian_operational_2022}. Let $U(\theta)=e^{\ii H\theta}$ and $U'(\theta)=e^{\ii H'\theta}$ be unitary representations of $G=U(1)$ on finite-dimensional Hilbert spaces with Hamiltonians $H=\sum_{n=0}^{d-1}n\ket{n}\bra{n}$ and $H'=\sum_{n=0}^{d'-1}n\ket{n}\bra{n}$. We fix the reference state $\phi$ to be the so-called coherence bit, given by $\ket{\phi}=(\ket{0}+\ket{1})/\sqrt{2}$ and assume that $\mathrm{Sym}_G(\rho)=\mathrm{Sym}_G(\phi)$. Since $r_i=V(\psi_i,H')/V(\phi,H)=4V(\psi_i,H')$, Eq.~\eqref{eq:cost_upper_bound} implies $\min_{\{p_i,\psi_i\}_i} 4\sum_{i}p_iV(\psi_i,H')\geq \mathcal{A}_{\cost}(\rho)$, where the minimization is taken over the set of all pure-state ensembles $\{p_i,\psi_i\}_i$ satisfying $\mathrm{Sym}_G(\rho)\subset \mathrm{Sym}_G(\psi_i)$ and $\rho=\sum_ip_i\psi_i$. Note that it suffices to consider pure-state ensembles for $G=U(1)$ since any symmetric state is given as a probabilistic mixture of symmetric pure states. It was conjectured in \cite{toth_extremal_2013_rev} and later proven in \cite{yu_quantum_2013} that $\min_{\{p_i,\psi_i\}_i} 4\sum_{i}p_iV(\psi_i,H')=\mathcal{F}_{H'}(\rho)$, where $\mathcal{F}_{H'}(\rho)$ denotes the SLD quantum Fisher information, given by $\mathcal{F}_{H'}(\rho)\coloneqq2\sum_{i,j}\frac{(\lambda_i-\lambda_j)^2}{\lambda_i+\lambda_j}|\braket{i|H|j}|^2$ using the eigenvalue decomposition $\rho=\sum_i\lambda_i\ket{i}\bra{i}$. Furthermore, in \cite{marvian_operational_2022}, it was shown that the minimum is attained by a set of pure states $\{\psi_i\}_i$ such that $\mathrm{Sym}_G(\rho)\subset \mathrm{Sym}_G(\psi_i)$, which implies $\mathcal{F}_{H'}(\rho)\geq \mathcal{A}_{\cost}(\rho)$. In \cite{marvian_operational_2022}, the opposite inequality was also shown by proving the monotonicity of the SLD quantum Fisher information rate under asymptotic conversion, which completed the proof of $\mathcal{F}_{H'}(\rho)=\mathcal{A}_{\cost}(\rho)$ for $G=U(1)$.

In contrast to the $U(1)$ case, a decomposition into $\rho=\sum_{i=1}^{k-1}p_i\psi_i+p_{\mathrm{s}}\rho_{\mathrm{s}}$ with symmetric $\rho_{\mathrm{s}}$ and pure states $\{\psi_i\}_i$ satisfying $\mathrm{Sym}_G(\phi)\subset \mathrm{Sym}_G(\psi_i)$ does not always exist for a general Lie group $G$. See Appendix~\ref{app:cost_ensemble} for a concrete example for $G=U(4)$. In such cases, Eq.~\eqref{eq:cost_upper_bound} does not provide a meaningful upper bound on the asymmetry cost. However, by carefully refining the estimate-and-convert approach introduced in Sec.~\ref{sec:optimality_sketch}, we find that this limitation due to the symmetry subgroups can be circumvented. Specifically, in Appendix~\ref{app:asymmetry_of_formation}, we show the following:
\begin{prop}[Upper bound on the asymmetry cost]\label{prop:asymmetry_of_formation}
    For a reference pure state $\phi$, consider a state $\rho$ satisfying $\mathrm{Sym}_G(\phi)\subset \mathrm{Sym}_G(\rho)$.
    The asymmetry cost of a state $\rho$ is upper bounded as follows:
    \begin{align}
        &\min \sum_i p_i r_i\geq \mathcal{A}_{\mathrm{c}}(\rho),
    \end{align}
    where $r_i$ is defined in Eq.~\eqref{eq:rate_each_pure_state}, and the minimization is taken over all the decompositions of $\rho$ into ensemble $\rho=\sum_i p_i \psi_i +p_{\mathrm{s}}\rho_{\mathrm{s}}$, such that $\{\psi_i\}_i$ are pure states, $\rho_{\mathrm{s}}$ is a symmetric state, and $\{p_i\}_i, p_{\mathrm{s}}$ denote a probability distribution satisfying $p_i,p_{\mathrm{s}}\geq 0$ and $\sum_ip_i +p_{\mathrm{s}}=1$. 
\end{prop}

Given the success in the case of $G=U(1)$, where the property of the SLD quantum Fisher information $\mathcal{F}_H(\rho)=\min_{\{p_i,\psi_i\}_i} 4\sum_{i}p_iV(\psi_i,H)$ is utilized, a natural extension is to consider the convex roof of QGT, namely $\min_{\{p_i,\psi_i\}_i} \sum_{i}p_i\mathcal{Q}^{\psi_i}$, as a possible asymmetry quantifier in RTA for a general compact Lie group. However, as we demonstrate in Appendix~\ref{app:convex_roof_QGT}, $\min_{\{p_i,\psi_i\}_i} \sum_{i}p_i\mathcal{Q}^{\psi_i}$ does not always exist. Analogous to another property of the SLD quantum Fisher information, $\mathcal{F}_H(\rho)=\min_{\Psi_\rho,H_A} 4V(\Psi_\rho,H+H_A)$, where the minimization is taken over any purification $\Psi_\rho$ of $\rho$ and Hermitian operators $H_A$ on the ancillary system \cite{marvian_operational_2022}, we also examine the minimization of the QGT of purifications of a state $\rho$. We show that such a minimum does not also exist in general in Appendix~\ref{app:QGT_of_purification}. The non-existence of these extensions of the SLD quantum Fisher information stems from the fact that a partial order on matrices is not necessarily a total order, highlighting the exceptional case of the $U(1)$ group.

\section{Application to quantum reference frames}\label{sec:quantum_reference_frame}
We now examine a broader perspective on the physical significance of asymmetry, focusing on its role as a quantum reference frame. In physics, operations are always defined relative to a reference structure, such as a clock, a phase origin, or a spatial direction. While such reference frames are often taken for granted in standard quantum protocols, their availability is neither universal nor guaranteed. 
From a foundational viewpoint, the reference structures should themselves be treated as quantum systems, i.e., quantum reference frames. This perspective has attracted attention across diverse contexts, including superselection rules and their relation to reference frames~\cite{aharonov_ChargeSuperselectionRule_1967, bartlett_reference_2007,dowling_ObservingCoherentSuperpositionatommolecule_2006}, quantum information processing~\cite{bartlett_reference_2007,gour_resource_2008,gour_measuring_2009,marvian_no-broadcasting_2019}, and relational and covariant formulations motivated by relativistic and quantum gravitational considerations~\cite{giacomini_quantum_2019_rev,vanrietvelde_change_2020,hohn_trinity_2021,chandrasekaran_AlgebraObservablesSitterspace_2023, fewster_QuantumReferenceFramesMeasurementSchemes_2024,devuyst_CrossedProductsQuantumreferenceframes_2025}. 

In this context, a quantum system in a symmetry-breaking state serves as a resource by encoding information about the corresponding reference frame, thereby enabling operations that would otherwise be impossible. For this reason, the RTA was historically also referred to as the resource theory of quantum reference frames in earlier studies~\cite{bartlett_reference_2007,gour_resource_2008,gour_measuring_2009}.
Here, we revisit this viewpoint and carefully review how the (a)symmetry of quantum states and the $G$-covariance of operations relate to physical scenarios where access to a shared reference frame is entirely absent~\cite{bartlett_reference_2007,marvian_mashhad_symmetry_2012}.

We begin by exploring a communication scenario~\cite{marvian_mashhad_symmetry_2012} in which two parties, Alice and Bob, do not share a common reference frame associated with a group $G$. 
For example, one can imagine the case in which they are located in separate spaceships without a shared Cartesian reference frame~\cite{bartlett_reference_2007}, implying that the relative orientation between them is unknown to Alice and Bob. In this case, the misalignment is characterized by a group element $g\in G$, where $G = SO(3)$. 
Suppose Alice wishes to delegate a quantum operation $\mathcal{F}$ on her qubit to Bob. She transmits the qubit to Bob, he applies an operation $\mathcal{E}$ in his own frame, and sends it back. Since their frames are misaligned by $g \in G$, the effective operation in Alice's frame
is $\mathcal{U}_{g^{-1}} \circ \mathcal{E} \circ \mathcal{U}_g$, which in general depends on the parameter $g\in G$, unknown to her. Therefore, the intended operation $\mathcal{F}$ is only correctly implemented if it can be expressed as $\mathcal{F} =\mathcal{U}_{g^{-1}} \circ \mathcal{E} \circ \mathcal{U}_g$ for all $g\in G$. Averaging over the Haar measure $\mu_G$, we obtain 
\begin{align}
    \mathcal{F}=\int_{g\in G}\dd \mu_G(g)\, \mathcal{U}_{g^{-1}}\circ \mathcal{E}\circ \mathcal{U}_g,
\end{align}
which is equivalent to the condition that $\mathcal{F}$ is $G$-covariant. In essence, only $G$-covariant operations can be correctly implemented when Alice delegates them to Bob in the absence of a shared reference frame associated with the group $G$.

The above observation raises a critical question: \emph{What additional information must be provided in order to implement quantum operations that are not $G$-covariant?} To gain further insight, let us revisit the example in which Alice and Bob have no shared Cartesian frames. In this case, performing non-$SO(3)$-covariant operations requires knowledge of a group element $g \in SO(3)$ that characterizes the relative orientation between Alice's and Bob's Cartesian frames. Operationally, this means that Alice must convey information about her local reference frame, i.e., directional information, to Bob. However, this cannot be accomplished using classical bits alone, as such information is defined relative to Alice's frame, which is inaccessible to Bob. To convey this information, Alice must transmit a physical system that encodes features of her reference frame, specifically one that transforms nontrivially under spatial rotations. In the quantum setting, this corresponds to preparing and sending a quantum state that breaks the $SO(3)$ symmetry.

This observation clarifies the need for a resource-theoretic perspective on symmetries and their breaking~\cite{bartlett_reference_2007,gour_resource_2008,gour_measuring_2009,marvian_how_2016} in the context of quantum reference frame. While the above discussion focused on spatial orientation and $SO(3)$ symmetry, the same reasoning applies more broadly. Certain information---such as time, phase, and spatial direction---requires a shared reference frame in order to be meaningfully communicated using classical bits~\cite{peres_unspeakable_2002,bartlett_reference_2007,marvian_how_2016}. In the absence of such a frame, this information must instead be conveyed through physical systems whose states break the corresponding symmetries: a quantum clock breaks time-translation symmetry, a phase reference breaks $U(1)$ symmetry, and a quantum gyroscope breaks $SO(3)$ symmetry.

This resource-theoretic perspective also extends beyond communication scenarios, finding broader applications in quantum technologies. For instance, quantum modules within a quantum computer, or quantum computers within a quantum network, must communicate asymmetries to calibrate their reference frames in order to coordinate and perform quantum tasks with high precision. Importantly, efficient transmission of such resources requires their systematic transformation. For example, communication through a quantum channel is generally constrained by cost and capacity, making it desirable to maximize the amount of asymmetry per channel use via asymmetry distillation. In contrast, when the goal is to distribute a reference frame to multiple locations, asymmetry must be divided into many resources, each containing a smaller amount of asymmetry, through asymmetry dilution. These considerations highlight the importance of developing a general systematic theory of converting asymmetry resources.

The main result of this study, Theorem~\ref{thm:conversion_rate_projective_finite_number}, establishes a comprehensive asymptotic theory for the manipulation of pure-state asymmetry in RTA. In particular, it provides an explicit formula for the optimal conversion rate, given in Eq.~\eqref{eq:rate_formula_compact_connected}, which is expressed in terms of the QGTs. 
Of particular importance is the fact that the optimal conversion protocol has been explicitly constructed in Sec.~\ref{sec:optimality_sketch}, offering a concrete procedure for practical implementations of asymmetry conversion.

\subsection{Standardized reference state}
An equally important aspect of the present theory is that it enables the introduction of a \emph{standardized reference state} that serves as a benchmark for quantifying and comparing frameness. Analogous to the role of the \emph{ebit} in entanglement theory and the \emph{coherence bit} in the resource theory of coherence, such states provide an operationally meaningful foundation for measuring the degree of asymmetry present in a quantum state.

In the context of the RTA, a standardized reference state has been proposed for the rotational group (more precisely, for its universal covering group $G = SU(2)$) in \cite{yang_units_2017}, where the conversion of a restricted class of pure states is analyzed. In the present work, we develop a comprehensive construction of standardized reference states and provide a detailed interpretation of their role for arbitrary semisimple compact Lie groups $G$.

Let $G$ be a compact Lie group with a unitary representation $U$ acting on a $d$-dimensional Hilbert space $\mathcal{H}$ such that $U(g) \propto \mathbb{I}$ only if $g = e$. We also introduce a $d$-dimensional auxiliary Hilbert space $\mathcal{H}'$, equipped with the trivial representation $U'$ of $G$ satisfying $U'(g)=\mathbb{I}$ for all $g\in G$. Let $\{\ket{\psi_i}\}_{i=1}^d$ and $ \{\ket{\phi_i}\}_{i=1}^d$ be orthonormal bases of $\mathcal{H}$ and $\mathcal{H}'$, respectively. We consider a maximally entangled state
\begin{align}
\ket{\Psi}\coloneqq \frac{1}{\sqrt{d}} \sum_{i=1}^d \ket{\psi_i} \otimes \ket{\phi_i} \in \mathcal{H} \otimes \mathcal{H}'.\label{eq:standardized_state_def}
\end{align}
Since $|\braket{\Psi|U(g)\otimes U'(g)|\Psi}|=1$ holds if and only if $g=e$, the symmetry subgroup of $\Psi\coloneqq \ket{\Psi}\bra{\Psi}$ is trivial:
\begin{align}
    \mathrm{Sym}_G(\Psi)= \{e\}. \label{eq:sym_G_for_reference}
\end{align}

Another notable feature of $\ket{\Psi}$ is that its QGT admits a particularly simple form. To make this explicit, we fix a local coordinate system $\{\lambda^\mu\}_{\mu=1}^{\dim G}$ in a neighborhood of the identity element $e \in G$, such that the generators $X_\mu\coloneqq -\ii \frac{\partial}{\partial \lambda^\mu} U(g(\lambda))\big|_{\lambda=0}$ satisfy $\mathrm{Tr}(X_\mu )=0$ and $\mathrm{Tr}(X_\mu X_\nu) = d \delta_{\mu\nu}$, where $\delta_{\mu\nu}$ denotes the Kronecker delta. Such a coordinate system always exists if $G$ is a compact semisimple Lie group~\cite{georgi_lie_2018}. In this coordinate system, the QGT of $\ket{\Psi}$ is given by
\begin{align}
    \mathcal{Q}^{\Psi}= \mathbb{I}. \label{eq:QGT_for_reference}
\end{align}
Intuitively, this implies that $\ket{\Psi}$ exhibits \emph{isotropic} asymmetry. 

These properties of $\ket{\Psi}$, namely, the triviality of its symmetry subgroup as expressed in Eq.~\eqref{eq:sym_G_for_reference} and its isotropic asymmetry in Eq.~\eqref{eq:QGT_for_reference}, guarantee that $\ket{\Psi}$ serves as a \emph{universal asymmetry resource} associated with the symmetry group $G$ in the sense that it can be asymptotically converted into any pure state at a non-zero rate.

A particularly intriguing expression arises from the conversion rate formula given in Eq.~\eqref{eq:rate_formula_D_max_QGT}, which involves the quantum max-relative entropy. The optimal conversion rate is given by
\begin{align}
    R(\Psi \to \phi) = 2^{- D_{\max}(\mathcal{Q}^{\phi} \| \mathcal{Q}^{\Psi})}. \label{eq:conversion_rate_reference_state_Dmax}
\end{align}
Substituting Eq.~\eqref{eq:QGT_for_reference}, we obtain a simple expression for the optimal conversion rate to an arbitrary pure state $\phi$, given by
\begin{align}
    R(\Psi \to \phi)= 2^{- D_{\max}(\mathcal{Q}^{\phi} \| \mathbb{I})}= 2^{H_{\min}(\mathcal{Q}^{\phi})},\label{eq:conversion_from_standardized_state}
\end{align}
where $H_{\min}$ denotes the min-entropy, defined as $H_{\min}(\rho) \coloneqq -\log_2 \lambda_{\max}(\rho)$, with $\lambda_{\max}(\rho) $ the largest eigenvalue of the matrix $ \rho $. Although min-entropy is typically applied to quantum states, it appears here as a function of the QGT, suggesting that entropy-based measures may have broader applicability beyond conventional formulations in quantum resource theories.

As an illustrative example of a standardized reference state, let us analyze the rotational group $G=SO(3)$ and its spin-$J$ unitary representation $U_J$, where $J$ is a non-negative integer. We adopt a coordinate system $ g(\theta) = e^{\ii\theta^\mu A_\mu} \in SO(3) $ such that $ J_\mu\coloneqq -\ii \frac{\partial}{\partial \theta^\mu} U_J(g(\theta))|_{\theta=0} $ satisfy the standard commutation relations $[J_\mu, J_{\mu'}]=\ii \sum_{\mu''}\epsilon_{\mu\mu'\mu''}J_{\mu''}$, where $\epsilon_{\mu\mu'\mu''}$ is the Levi-Civita symbol.

The representation space $\mathcal{H}^{(J)}$ is spanned by the orthonormal basis $\{\ket{J, m}\}_{m=-J}^J$, where $\ket{J, m}$ denotes a simultaneous eigenvector of the total angular momentum $ J^2 $ and its $ z $-component $ J_z $, with eigenvalues $J(J+1)$ and $ m $, respectively. We restrict attention to the case $J>0$, since the representation is trivial for $J=0$. We define the maximally entangled state
\begin{align}
    \ket{\Psi_J} \coloneqq \frac{1}{\sqrt{2J+1}} \sum_{m=-J}^J \ket{J, m} \otimes \ket{J, m},
\end{align}
which serves as a standardized reference state for quantum Cartesian frameness.

To align with the above arguments in a general setting, we rescale the conventional coordinates as $\lambda^\mu\coloneqq\sqrt{\frac{J(J+1)}{3}}\theta^\mu$. 
The corresponding generators
\begin{align}
    X_\mu\coloneqq -\ii \frac{\partial}{\partial \lambda^\mu} U_J(g(\lambda))\big|_{\lambda=0}= \sqrt{\frac{3}{J(J+1)}} J_\mu
\end{align}
satisfy $\mathrm{Tr}(X_\mu)=0$ and $\mathrm{Tr}(X_\mu X_\nu)=(2J+1)\delta_{\mu\nu}$ for $\mu,\nu=x,y,z$. Consequently, the QGT of $\ket{\Psi_J}$ is $\mathcal{Q}^{\Psi_J}=\mathbb{I}$, and hence the conversion rate is
\begin{align}
    R(\Psi_J \to \phi) = 2^{H_{\min}(\mathcal{Q}^{\phi})},
\end{align}
where the QGT $\mathcal{Q}^{\phi}$ is calculated in the coordinates $\{\lambda^\mu\}_{\mu=x,y,z}$.

It is important to note that not all asymmetric states can be converted into $\ket{\Psi_J}$. For example, consider the highest-weight state $\ket{\xi_J}\coloneqq\ket{J, J}\in\mathcal{H}^{(J)}$, whose QGT is given by
\begin{align}
    \mathcal{Q}^{\xi_J} =\frac{3}{J(J+1)} \frac{J}{2}
    \begin{pmatrix}
        1 & -\ii & 0 \\
        \ii & 1 & 0 \\
        0 & 0 & 0
    \end{pmatrix}
\end{align}
in the coordinates $\{\lambda^\mu\}_{\mu=x,y,z}$. Since $\mathcal{Q}^{\xi_J}$ has a zero eigenvalue, the inequality $\mathcal{Q}^{\xi_J} \geq r \mathcal{Q}^{\Psi_J}$ holds only for $r=0$, implying that $R(\xi_J \to \Psi_J)=0$. This aligns with the fact that $ \ket{\xi_J} $ is invariant under rotations about the $z$-axis and hence cannot be converted to the isotropic state $\ket{\Psi_J}$ via $G$-covariant operations.

Finally, we remark that a widely employed asymmetry measure, the relative entropy of $G$-asymmetry, fails to capture this fundamental difference between $\Psi_J$ and $\xi_J$. The relative entropy of $G$-asymmetry is defined by~\cite{gour_measuring_2009}:
\begin{align}
    A_G(\rho)\coloneqq \min_{\sigma \text{: symmetric state}}D(\rho\|\sigma),\label{eq:def_A_G}
\end{align}
where $D(\rho\|\sigma)\coloneqq \mathrm{Tr}(\rho (\log_2\rho -\log_2\sigma))$ denotes the quantum relative entropy. While two states, $\Psi_J$ and $\xi_J$, exhibit significantly different symmetry properties---most notably, the isotropy of $\Psi_J$ versus the directional invariance of $\xi_J$---they are assigned the same value of the relative entropy of $G$-asymmetry. 
Specifically, for all pure states $\phi\in\mathcal{P}(\mathcal{H}^{(J)})$, since the only symmetric state is $\frac{1}{2J+1}I$, the relative entropy of asymmetry is given by $A_G(\phi)=\log_2(2J+1)$. 

In the following subsection, we provide more discussion on the distinctions between the relative entropy of $G$-asymmetry and the QGT.

\subsection{Discussion: differential geometric and entropic perspectives}
The present study identifies the QGT as a central measure of symmetry breaking and quantum reference frameness within RTA. Our result aligns with prior studies in RTA that employ differential geometric quantities---such as the quantum Fisher information~\cite{marvian_coherence_2020,marvian_operational_2022}, the quantum Fisher information matrix~\cite{kudo_fisher_2023,gao_sufficient_2024}, the metric-adjusted skew information~\cite{zhang_detecting_2017,takagi_skew_2019}, and their extensions~\cite{yamaguchi_beyond_2023,yamaguchi_smooth_2023}---yet goes further by establishing the QGT as a complete monotone.

A complementary line of research within RTA focuses on entropic measures, such as the $G$-asymmetry~\cite{vaccaro_tradeoff_2008} and the relative entropy of $G$-asymmetry~\cite{gour_measuring_2009}, which, though defined differently, are known to yield the same value~\cite{gour_measuring_2009}. Under the name \emph{entanglement asymmetry}~\cite{ares_entanglement_2023}, this quantity has recently been studied extensively as a probe of symmetry breaking across a wide range of contexts~\cite{capizzi_entanglement_2023,yamashika_entanglement_2024,fossati_entanglement_2024,chen_renyi_2024,capizzi_universal_2024,ferro_non-equilibrium_2024,benini_entanglement_2025,kusuki_entanglement_2025}, including both condensed matter and high-energy physics.

These two approaches each capture essential features of symmetry breaking, as they are both valid measures within the framework of RTA. However, as illustrated by the example of the $SO(3)$ symmetry in the previous subsection, they appear to reflect different aspects of asymmetry. In what follows, we demonstrate that a distinction between these two measures emerges particularly in the case of non-Abelian symmetry.

A key observation is that the relative entropy of $G$-asymmetry satisfies $A_G(\rho) = A_G(\sigma)$ for any pair of states $\rho$ and $\sigma$ such that $\rho = \mathcal{U}_g(\sigma)$ for some $g \in G$. In other words, $A_G$ is blind to differences arising solely from group transformations. For example, for $G = SU(2)$, the relative entropy of $G$-asymmetry does not distinguish among the eigenvectors of Pauli operators in different directions. Intuitively, one may say that the relative entropy of $G$-asymmetry quantifies the \emph{magnitude} of symmetry breaking, while being agnostic to its \emph{direction}.

In contrast, the QGT and quantum Fisher information matrix encode information about both the magnitude and the direction of symmetry breaking. For example, for $G = SU(2)$, let $\ket{0}, \ket{1}$ denote the eigenvectors of $\sigma_z$, and define $\phi_0 \coloneqq \ket{0}\bra{0}$ and $\phi_+ \coloneqq \ket{+}\bra{+}$ with $\ket{+} \coloneqq (\ket{0} + \ket{1})/\sqrt{2}$. The QGTs of these states are given by
\begin{align}
    \mathcal{Q}^{\phi_{0}} = \frac{1}{4}
    \begin{pmatrix}
        1 & -\ii & 0 \\
        \ii & 1 & 0 \\
        0 & 0 & 0
    \end{pmatrix}, \quad
    \mathcal{Q}^{\phi_{+}} = \frac{1}{4}
    \begin{pmatrix}
        0 & 0 & 0 \\
        0 & 1 & -\ii \\
        0 & \ii & 1
    \end{pmatrix},
\end{align}
which clearly reflect the directional distinctions between the eigenstates of $\sigma_z$ and $\sigma_x$, capturing the difference induced by an $SU(2)$ transformation.

We remark that for the $U(1)$ group, the asymptotic behaviors of the relative entropy of $G$-asymmetry and the QGT (which equals the variance in this case) encode essentially the same information~\cite{gour_measuring_2009}. This agreement is consistent with the fact that the QGT is also invariant under $U(1)$ transformations. However, for a general non-Abelian group, while the relative entropy of $G$-asymmetry remains invariant under group transformations, the QGT transforms non-trivially (see Appendix~\ref{app:reduction_to_a_finite_number_of_inequalities} for the explicit transformation). This highlights that the relative entropy of $G$-asymmetry and the QGT capture fundamentally different aspects of symmetry breaking when the underlying group is non-Abelian. 

When a quantum reference frame is concerned, not only the magnitude but also the direction of asymmetry becomes important. For example, in the case of the $SO(3)$ symmetry discussed in the previous section, the state $\psi_J$ possesses asymmetry in all directions, while $\xi_J$ is invariant under rotations about the $z$-axis and therefore carries no information about rotation angles around the $z$-axis. As a consequence, it is impossible to convert $\xi_J$ into $\psi_J$. This provides an intuitive reason why the QGT serves as a complete measure of asymmetry, whereas the relative entropy of $G$-asymmetry does not.

These observations underscore that the appropriate choice of an asymmetry measure depends crucially on the physical context in which it is applied. While entropic measures such as the relative entropy of $G$-asymmetry have already been proven useful in a variety of settings---including recent applications in condensed matter and high-energy physics---there are situations where not only the amount but also the direction of symmetry breaking is of central relevance. For instance, when the physical interest lies in identifying in which direction a symmetry is broken, geometric quantities like the QGT or the quantum Fisher information, which are sensitive to directional features of asymmetry, would offer a more appropriate and informative characterization.

\section{Application to quantum thermodynamics}\label{sec:quantum_thermodynamics}
In this section, we explore the implications of our results for quantum thermodynamics. We focus on a basic and widely studied setting: thermodynamic processes induced by interactions with thermal baths under conservation laws. Using the tools developed to quantify asymmetry, we derive symmetry-induced constraints on state transformations in the thermodynamic limit. Moreover, we show that there are cases in which overcoming these constraints requires supplying a macroscopic amount of asymmetry, including energy coherence.

\subsection{Operational setting: thermal contact under conservation laws}
\label{subsec:qt-operational-setting}
We begin by modeling thermal contact by coupling a system to heat baths initially in Gibbs states and letting the joint system evolve under a global unitary that conserves total energy. Here, we consider a general scenario involving $k$ thermal baths. For $i=1,\ldots,k$, let $B_i$ be a bath with Hamiltonian $H_{B_i}$, initially prepared in the Gibbs state at inverse temperature $\beta_i$: 
\begin{align}
    \tau_{B_i, \beta_i}\coloneqq \frac{1}{Z_{i}} e^{-\beta_i H_{B_i}},\quad Z_{i}\coloneqq \mathrm{Tr}\left(e^{-\beta_i H_{B_i}}\right).
\end{align}
To model an evolution under thermal contact, let $S$ be the system of interest with Hamiltonian $H_{S}$. Imposing energy conservation, we assume that the joint evolution is described by a global unitary $V$ acting on $SB_1\cdots B_k$ such that
\begin{align}
    V\left( H_{S} + \sum_{i=1}^k H_{B_i}\right)V^\dag = H_{S'}+H_{B'}.\label{eq:energy_conservation}
\end{align}
where $S'$ and $B'$ are subsystems such that $SB_1\ldots B_k=S'B'$, and $H_{S'}$ and $H_{B'}$ denote the Hamiltonians associated with these subsystems, respectively. 

More generally, thermodynamic interactions may be constrained not only by energy conservation but also by additional conserved quantities, such as particle number or angular momentum. These additional conservation laws can be conveniently expressed in terms of a symmetry group $G$. Let $U_{S}$, $U_{B_i}$, $U_{S'}$, and $U_{B'}$ denote unitary representations of $G$ on the corresponding subsystems. The joint interaction unitary $V$ is assumed to respect the symmetry in the sense that it satisfies, for all $g\in G$,
\begin{align}
    V\left(U_{S}(g) \otimes \bigotimes_{i=1}^k U_{B_i}(g)\right)V^\dag =U_{S'}(g) \otimes U_{B'}(g).
\end{align}
Introducing generators $\{X_\mu\}_{\mu=1}^{\dim G}$ by differentiating the representations (cf. Eq.~\eqref{eq:hermitian_operators_Lie_alg}), this implies the conservation of each generator:
\begin{align}
    V \left(X_{S,\mu} + \sum_{i=1}^kX_{B_i,\mu}\right)V^\dag = X_{S',\mu}+X_{B',\mu}\label{eq:conserved_quantities_bath_system}
\end{align}
for all $\mu=1,\ldots,\dim G $. For particle number and angular momentum conservation, these generators correspond to the particle number and angular momentum operators, respectively. To ensure that the free evolution of each subsystem respects the symmetry, we impose that the generators commute with the Hamiltonian in each system. 
We note that energy conservation (i.e., Eq.~\eqref{eq:energy_conservation}) is recovered as the special case where $G=\mathbb{R}$ corresponds to time translations, with the unitary representation $U(t)=e^{-\ii Ht}$ generated by the Hamiltonian $H$ of the total system. 
Although $\mathbb{R}$ is not compact, the results in this section rely on the monotonicity of the QGT, which remains valid even when $G$ is noncompact. Below, we include the time translational symmetry in the symmetry $G$ as needed.

Having modeled the thermal contact above, the central task of characterizing thermodynamic processes thus reduces to studying the transformation of system states under the resulting dynamics.

\subsection{Resource-theoretic formalism: covariance from conservation laws}\label{subsec:qt-resource-perspective}
To investigate the setup introduced in the previous subsection, we adopt the resource-theoretic viewpoint developed in the present paper. To this end, we define a channel
\begin{align}
    \mathcal{E}(\cdot) \coloneqq \mathrm{Tr}_{B'}\left[ V\left( \cdot \otimes \bigotimes_{i=1}^k\tau_{B_i,\beta_i} \right) V^\dag\right]\label{eq:definition_evolution_thermal_contact},
\end{align}
which describes the evolution of a system interacting with thermal baths under conservation laws. 

We note that, in the resource theory of athermality, it is common to restrict to a single fixed bath temperature. This is because this framework is typically designed to quantify athermality (i.e., deviation from thermal equilibrium) relative to a fixed equilibrium background, thereby revealing constraints on work extraction; access to multiple temperatures can, in principle, enable work extraction. Indeed, in the special case where all the baths are at the same temperature ($\beta_1=\cdots=\beta_k$) with time-translation symmetry ($G=\mathbb{R}$), Eq.~\eqref{eq:definition_evolution_thermal_contact} reduces to thermal operations, which constitute a central tool in the resource theory of athermality. 

In contrast, however, many practical thermodynamic setups, such as heat engines, naturally involve contact with multiple baths at different temperatures. Therefore, we allow the baths to be possibly at different temperatures and formulate our results in a general multi-bath setting. 

Our focus is on exploring the role of asymmetry (and, in particular, coherence) in quantum thermodynamics. Given that $V$ respects the symmetry $G$, the channel described in Eq.~\eqref{eq:definition_evolution_thermal_contact} is $G$-covariant whenever the Gibbs state $\bigotimes_{i=1}^k\tau_{B_i,\beta_i}$ is symmetric.
If $G$ is connected, this condition is automatically satisfied in the common setting where the free evolutions of the baths respect the symmetry, i.e., $[H_{B_i},X_{B_i,\mu}]=0$ for all $\mu$ and $i$. In this case, any asymmetry monotone is non-increasing under thermal contact described by Eq.~\eqref{eq:definition_evolution_thermal_contact}.

To investigate thermal processes based on this observation, we introduce a one-parameter family of monotones $\mathcal{Q}_q$ for $q\in(0,1)$, whose matrix elements are defined by
\begin{align}
    &\left(\mathcal{Q}_q^\rho\right)_{\mu\nu}\coloneqq\sum_{\underset{(1-q)p_l+qp_k>0}{k,l}}\frac{f_q(0)(p_l-p_k)^2}{(1-q)p_l+qp_k}\braket{l|X_\mu|k}\braket{k|X_\nu|l},\label{eq:definition_S_q}
\end{align}
which satisfies $\gamma^\dag \mathcal{Q}_q^\rho\gamma=f_q(0)\|\ii [\gamma^\dag X,\rho]\|_{f_q,\rho}^2$ for any $\gamma\in\mathbb{C}^{\dim G}$. Due to Eqs.~\eqref{eq:monotonicity_norm_general_fq} and~\eqref{eq:relation_between_S_and_f_q}, this matrix is a valid asymmetry monotone satisfying $\lim_{q\to 0}\mathcal{Q}_q^\rho=\mathcal{Q}^\rho$. Note that Eq.~\eqref{eq:norm_formula_general_r_QGT} provides a direct relation of $\mathcal{Q}_q$ to the QGT for pure states. The following theorem is a key tool characterizing thermodynamic processes in the thermodynamic limit:
\begin{thm}\label{thm:noniid_monotonicity_q}
    Let $G$ be an arbitrary Lie group. Consider a sequence of arbitrary states $\{\rho_N\}_N$ and a sequence of i.i.d. pure states $\{\phi^{\otimes \floor{rN}}\}_N$ at rate $r>0$. If $\{\rho_N\}_N\gconv \{\phi^{\otimes \floor{rN}}\}_N$, then for each $q\in(0,1)$, the following inequality holds for all $g\in G$:
    \begin{align}
        \forall \gamma\in \mathbb{C}^{\dim G},\,  \liminf_{N\to\infty}\frac{1}{N}\gamma^\dag\mathcal{Q}_q^{\mathcal{U}_g^{(N)}(\rho_N)}\gamma\geq r\, \gamma^\dag \mathcal{Q}^{\mathcal{U}_g'(\phi)}\gamma\label{eq:noniid_monotonicity_theorem_q},
    \end{align}
    where $\mathcal{U}_g^{(N)}(\cdot)\coloneqq U^{(N)}(g)(\cdot)U^{(N)}(g)^\dag$ with unitary representations $U^{(N)}$ of $G$ on the Hilbert space of $\rho_N$.
\end{thm}
\begin{proof}
    Analogous to Eq.~\eqref{eq:monotonicity_iid_asymptotic}, the monotonicity of $\mathcal{Q}_q$ and Eq.~\eqref{eq:asymptotics_norm_asymmetric} imply that for any $\gamma\in\mathbb{C}^{\dim G}$, 
    \begin{align}
    &\frac{1}{N}\gamma^\dag\mathcal{Q}_q^{\rho_N}\gamma\geq \frac{M}{N}\gamma^\dag \mathcal{Q}^\phi\gamma-\frac{M}{N}h(\epsilon)+\frac{1}{N}o(M)
    \end{align}
    holds for all sufficiently large $N$, where $M\coloneqq \floor{rN}$. Therefore, we get
    \begin{align}
        \liminf_{N\to\infty}\frac{1}{N}\gamma^\dag\mathcal{Q}_q^{\rho_N}\gamma\geq r\gamma^\dag \mathcal{Q}^\phi\gamma-rh(\epsilon).
    \end{align}
    Since this inequality holds for any sufficiently small $\epsilon>0$ and $\lim_{\epsilon\to 0^+}h(\epsilon)=0$, we get Eq.~\eqref{eq:noniid_monotonicity_theorem_q} for $g=e$. Repeating these arguments for $\mathcal{U}_g^{(N)}(\rho_N)$ and $\mathcal{U}_g'(\phi)$, we obtain Eq.~\eqref{eq:noniid_monotonicity_theorem_q} for all $g\in G$. 
\end{proof}

Several remarks are in order: 
First, this theorem is valid for an arbitrary Lie group $G$, and thus, it can also be applied to the time-translation symmetry with $G=\mathbb{R}$. 
Second, the monotonicity of $\mathcal{Q}_q$ alone is insufficient to establish the theorem due to the asymptotic discontinuity discussed in Sec.~\ref{sec:monotonicity_QGT_sketch}.
Third, Eq.~\eqref{eq:noniid_monotonicity_theorem_q} holds for any sequence of states $\{\rho_N\}_N$, without the assumption that the states are i.i.d. This generality is relevant in quantum thermodynamics, where external systems supplying work and coherence are typically not assumed to be in an i.i.d. state. 
Fourth, the QGT $\mathcal{Q}^\phi$ and its extensions $\mathcal{Q}_q^\rho$ are extensive quantities (see Appendix~\ref{app:properties_S} for further properties), which makes them particularly natural measures in the thermodynamic limit. 
Finally, for $G = \mathbb{R}$, a similar theorem can also be established for the metric adjusted skew information rates by employing Lemma~3 of Ref.~\cite{yamaguchi_smooth_2023} in place of Eq.~\eqref{eq:asymptotics_norm_asymmetric}.

\subsection{No-go theorem for distillation}\label{subsec:qt-no-go-iid}

Let us first consider a simple scenario where we asymptotically convert i.i.d. states via a thermal process described by Eq.~\eqref{eq:definition_evolution_thermal_contact}. As noted in the previous section, in the single-temperature case with time-translation symmetry, this evolution reduces to thermal operation. For \emph{quasiclassical} states $\rho$ and $\sigma$, i.e., states commuting with the Hamiltonian (equivalently, symmetric states under time translation), the optimal asymptotic conversion rate via thermal operations, which we denote by $R_{\mathrm{TO}}(\rho\to\sigma)$, is known to be calculated via the following formula~\cite{brandao_resource_2013}:
\begin{align}
    R_{\mathrm{TO}}(\rho\to\sigma)=\frac{D(\rho\|\tau)}{D(\sigma\|\tau)},\label{eq:free_energy_ratio}
\end{align}
where $\tau$ denotes the Gibbs state and $D(\rho\|\sigma)\coloneqq \mathrm{Tr}(\rho\ln\rho)-\mathrm{Tr}(\rho\ln\sigma)$ 
denotes the quantum relative entropy. Here, we adopt the natural logarithm for convenience. Since $D(\rho\|\tau)=\beta (F_\beta(\rho)-F_\beta(\tau))$ with $F_\beta(\rho)\coloneqq \braket{H}_\rho-\beta^{-1} S_{\mathrm{vN}}(\rho)$, where $S_{\mathrm{vN}}$ is the von Neumann entropy given by $S_{\mathrm{vN}}(\rho):=-\mathrm{Tr}\left(\rho \ln\rho\right)$, Eq.~\eqref{eq:free_energy_ratio} implies that the (non-equilibrium) free energy completely characterizes the optimal conversion rate. 

However, for general, non-quasiclassical states, the necessary and sufficient conditions for state conversion under thermal operations remain unknown~\cite{faist_macroscopic_2019,sagawa_asymptotic_2021}. This is because thermal operations cannot generate a coherent superposition of energy eigenstates; therefore, such energy coherence, i.e., asymmetry under time translations, must be regarded as a separate resource. 

Building on the establishment of the QGT as the complete measure of asymmetry, we here derive a no-go theorem on the distillation of a state expressed in terms of the QGT and its extension. We denote by $R_{\mathrm{TP}}^G(\rho\to\phi)$ the asymptotic conversion rate from a general state $\rho$ to a pure state $\phi$ via thermal processes defined in Eq.~\eqref{eq:definition_evolution_thermal_contact}. Note that when only the time-translation symmetry is involved ($G=\mathbb{R}$), $R_{\mathrm{TP}}^{\mathbb{R}}(\rho\to\phi)$ provides an upper bound on $R_{\mathrm{TO}}(\rho\to\phi)$, but need not coincide with $R_{\mathrm{TO}}(\rho\to\phi)$ since the definition of $R_{\mathrm{TP}}^{\mathbb{R}}(\rho\to\phi)$ allows access to baths at different temperatures.

From the additivity of $\mathcal{Q}_q$, the left-hand side of Eq.~\eqref{eq:noniid_monotonicity_theorem_q} in Theorem~\ref{thm:noniid_monotonicity_q} for i.i.d. states can be evaluated as
\begin{align}
    \liminf_{N\to\infty}\frac{1}{N}\gamma^\dag\mathcal{Q}_q^{\mathcal{U}_g(\rho)^{\otimes N}}\gamma=\gamma\mathcal{Q}^{\mathcal{U}_g(\rho)}_q\gamma
\end{align}
for any $\gamma\in\mathbb{C}^{\dim G}$. Therefore, Theorem~\ref{thm:noniid_monotonicity_q} yields the following upper bound:
\begin{align}
    &\sup\{r\geq0 \mid\forall g\in G,\,\mathcal{Q}_q^{\mathcal{U}_g(\rho)}\geq r\mathcal{Q}^{\mathcal{U}'_g(\phi)}\}\geq R^G_{\mathrm{TP}}(\rho\to\phi)
\end{align}
for each $q\in (0,1)$. Moreover, in the limit $q\to 1^-$, we also obtain
\begin{align}
    \sup\{r\geq0 \mid\forall g\in G,\,\mathcal{Q}^{\mathcal{U}_g(\rho)}\geq r\mathcal{Q}^{\mathcal{U}'_g(\phi)}\}\geq R^G_{\mathrm{TP}}(\rho\to\phi)\label{eq:upper_bound_distillable_athermality}.
\end{align}
Consequently, we obtain a no-go theorem in quantum thermodynamics analogous to  Corollary~\ref{cor:projector_distillablle_asymmetry}.
\begin{cor}[No-go theorem for distillation via thermal operation]\label{cor:projector_distillablle_athermality}
    Let $G$ be a connected Lie group describing symmetry. For a general state $\rho$ and a pure state $\phi$, if there exists a vector $\gamma\in\mathbb{C}^{\dim G}$ satisfying $[\Pi_\rho,\gamma^\dag X]=0$ and $\gamma^\dag \mathcal{Q}^\phi\gamma\neq 0$, then $R_{\mathrm{TP}}^G(\rho\to\phi)=0$.
\end{cor}
In particular, this result implies that for a typical state with full rank, it is impossible to distill a pure state through thermal contact at a linear rate.
We remark that these arguments, when particularized to time-translation symmetry with $G=\mathbb{R}$, are consistent with Eq.~\eqref{eq:free_energy_ratio}. Indeed, the QGT vanishes for quasiclassical states, and therefore the above arguments do not impose nontrivial constraints when $\phi$ is quasiclassical. In contrast, Eq.~\eqref{eq:upper_bound_distillable_athermality} provides a quantitative bound on the distillable asymmetry, which merits attention and is expected to play an important role in extending the conversion theory via thermal operations to general states.

An important point is that Corollary~\ref{cor:projector_distillablle_athermality} holds for continuous symmetries beyond the case of energy conservation. As an example, we now focus on the convertibility of i.i.d. states when the system couples to a thermal bath through an interaction that preserves the $U(1)$ symmetry, which appears in the Heisenberg XXZ model. 
Specifically, we investigate the distillation scenario, i.e., the conversion from an i.i.d. copy of a mixed state to an i.i.d. copy of a pure state, under the conservation law for the $z$-component of spin. 
To make this scenario concrete, we consider an asymmetric pure state defined as $\psi_a\coloneqq \ket{\psi_a}\bra{\psi_a}$, where
\begin{align}
    \ket{\psi_a}\coloneqq \sqrt{a}\ket{\uparrow}+\sqrt{1-a}\ket{\downarrow},\quad a\in (0,1),
\end{align}
and $\ket{\uparrow},\ket{\downarrow}$ denote the spin-up and spin-down states. We also introduce a corresponding mixed state
\begin{align}
    \rho_{a,\epsilon}\coloneqq (1-\epsilon)\psi_a+\epsilon \frac{\mathbb{I}}{2},
\end{align}
where $\epsilon\in(0,1]$ controls its purity. 

Before discussing the implications of Corollary~\ref{cor:projector_distillablle_athermality}, we first examine the behavior of a measure of asymmetry that has been widely employed in prior research~\cite{gour_measuring_2009}, namely relative entropy of $G$-asymmetry $A_G(\rho)$. 
Using the formula in~\cite{gour_measuring_2009}, we obtain $A_G(\rho_{a,\epsilon})=H\left(a(1-\epsilon)+\frac{\epsilon}{2}\right)-H(\epsilon)$ and $A_G(\psi_{a})=H(a)$, where $H(p)$ denotes the binary entropy given by $H(p)\coloneqq -p\ln p-(1-p)\ln (1-p)$. Therefore, for $a_1>a_2>\frac{1}{2}$ and a sufficiently small $\epsilon$, we find that
\begin{align}
    A_G(\rho_{a_2,\epsilon})>A_G(\psi_{a_1}),
\end{align}
which demonstrates that $\rho_{a_2,\epsilon}$ is more asymmetric than $\psi_{a_1}$ when quantified by $A_G$. This is consistent with the geometric intuition on the Bloch sphere, where the $U(1)$ transformation corresponds to the rotation around the $z$-axis (see Fig.~\ref{fig:bloch}).

\begin{figure}[htb]
    \centering
    \includegraphics[width=0.5\linewidth]{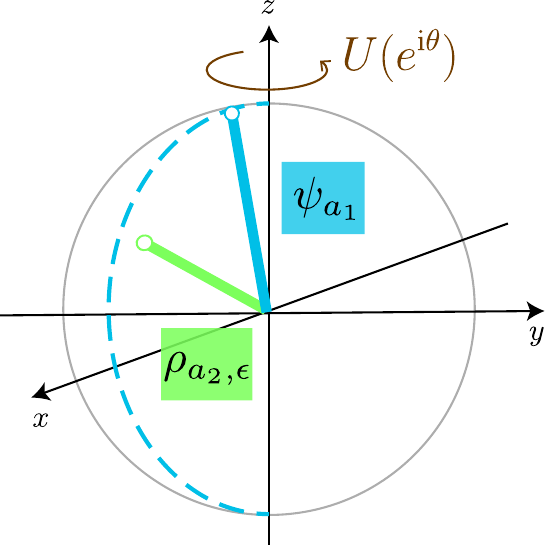}
    \caption{Geometric description of states on the Bloch sphere. 
    Since the group transformation $U(e^{\ii\theta})$ corresponds to a rotation about the $z$-axis, intuitively, a quantum state positioned farther from the $z$-axis exhibits greater asymmetry than one located closer to it. This behavior is illustrated by the relation $A_G(\rho_{a_2,\epsilon}) >A_G(\psi_{a_1})$ for $a_1>a_2>\frac{1}{2}$ and sufficiently small $\epsilon$.
}
    \label{fig:bloch}
\end{figure}

At first glance, this observation might suggest that a non-zero conversion rate from $\rho_{a_2,\epsilon}$ to $\psi_{a_1}$ is feasible. However, the newly introduced monotone $\mathcal{Q}$ reveals that this is not true. Specifically, since $\mathcal{Q}^{\rho_{a_2,\epsilon}} = 0$ while $\mathcal{Q}^{\psi_{a_1}} \neq 0$ for any $a_1, a_2 \in (0,1)$, Corollary~\ref{cor:projector_distillablle_athermality} implies that it is impossible to distill $\psi_{a_1}$ from $\rho_{a_2,\epsilon}$ at a non-zero rate via thermal operations. In other words, $\mathcal{Q}$ demonstrates that pure asymmetric states possess inherent value and cannot, in general, be distilled from mixed asymmetric states---an insight not necessarily captured by the conventional measure $A_G$. This result underscores the effectiveness of $\mathcal{Q}$ as a monotone in the context of quantum thermodynamics. Although Corollary~\ref{cor:projector_distillablle_athermality} is applied here in the specific case of $U(1)$ symmetry for illustrative purposes, the result holds for any connected Lie group.

We conclude by remarking on a more general scenario, in which a generalized bath, denoted by $B$, is prepared in a generalized Gibbs ensemble (GGE), given by
\begin{align}
    \tau_{B,\mathrm{GGE}}&\coloneqq \frac{1}{Z_{\mathrm{GGE}}}e^{-\beta (H_B-\alpha^\mu X_{B,\mu})},
\end{align}
where $\beta>0$ denotes the inverse temperature, $\{\alpha^\mu\}_{\mu=1}^{\dim G}$ are generalized ``chemical potentials", $\{X_{B,\mu}\}_{\mu=1}^{\dim G}$ are conserved charges (which corresponds to the operators in Eq.~\eqref{eq:conserved_quantities_bath_system}) and $Z_{\mathrm{GGE}}\coloneqq \mathrm{Tr}(e^{-\beta (H_B-\alpha^\mu X_{B,\mu}}))$ is the generalized partition function~\cite{guryanova_thermodynamics_2016,guryanova_thermodynamics_2016,lostaglio_thermodynamic_2017,mitsuhashi_characterizing_2022}.
For example, when the particle number is conserved under $G=U(1)$ symmetry, $\alpha_1$ and $X_{B,1}$ are the standard chemical potential and the number operator, which yields the grand canonical ensemble.
In this case, a (generalized) thermal operation realized by symmetry-respecting interaction with such a bath is not necessarily a $G$-covariant operation because $\tau_{B,\mathrm{GGE}}$ is not generally symmetric. This implies that not all asymmetry monotones qualify as athermality monotones. Nevertheless, $\mathcal{Q}^\rho$ remains non-increasing under any thermal process described in Eq.~\eqref{eq:definition_evolution_thermal_contact}, as $\tau_{B,\mathrm{GGE}}$ is full rank and hence $\mathcal{Q}^{\tau_{B,\mathrm{GGE}}}=0$. Consequently, the upper bound on distillable athermality given in Eq.~\eqref{eq:upper_bound_distillable_athermality} and Corollary~\ref{cor:projector_distillablle_athermality} remain valid. 

\subsection{Macroscopic coherence required in quantum thermodynamics}\label{subsec:qt-resource-requirement}

In the previous subsection, we identified a constraint on distillation via thermodynamic processes that originates from a state’s nontrivial asymmetry. A physically important question is whether this constraint is robust, or whether it can be circumvented once additional resources are made available. If only a subextensive (i.e., thermodynamically negligible) amount of asymmetry suffices to overcome the restriction, then the constraint would have limited physical significance; if an extensive resource is unavoidable, it remains essential in the thermodynamic limit. 

In this subsection, we show that there are infinitely many instances in which a macroscopic amount of coherence (and, more generally, asymmetry) must be supplied by an external system to enable distillation via interactions with thermal baths. In particular, this also implies the same conclusion within the resource theory of athermality, where thermal operations are free. At first glance, this may seem in tension with a common view in the literature that only a small coherence resource is needed in quantum thermodynamics~\cite{brandao_resource_2013,faist_macroscopic_2019,sagawa_asymptotic_2021}. We resolve this apparent discrepancy by noting that it stems from the choice of asymmetry quantifier: here we adopt the QGT and its extension, motivated by our result establishing the QGT as a complete asymmetry measure, whereas the literature often employs coarser, state-independent measures.

To this end, we consider the scenario in which the initial state is $\rho_N=\rho^{\otimes N}\otimes \xi_N$, where $\rho$ is the state to be distilled and $\xi_N$ is a state of the external resource system. Suppose that it is possible to asymptotically convert $\rho$ to $\phi$ at rate $r>0$ using thermal processes described by Eq.~\eqref{eq:definition_evolution_thermal_contact}, when assisted by $\xi_N$. Then, since the additivity of $\mathcal{Q}_q$ implies
\begin{align}
    \frac{1}{N}\mathcal{Q}_q^{\rho_N}=\mathcal{Q}_q^{\rho}+\frac{1}{N}\mathcal{Q}^{\xi_N}_q,
\end{align}
Eq.~\eqref{eq:noniid_monotonicity_theorem_q} in Theorem~\ref{thm:noniid_monotonicity_q} yields
\begin{align}
     \liminf_{N\to\infty}\frac{1}{N}\gamma^\dag\mathcal{Q}_q^{\xi_N}\gamma\geq  \gamma^\dag \left(r\,\mathcal{Q}^{\phi}-\mathcal{Q}_q^{\rho}\right)\gamma.\label{eq:required_asymmetry_general}
\end{align}
To avoid unnecessary complications, we have here applied Eq.~\eqref{eq:noniid_monotonicity_theorem_q} only for $g=e$, but we remark that the arguments can be straightforwardly extended to each $g\in G$.  
Thus, if there are $\gamma\in\mathbb{C}^{\dim G}$ and $q\in (0,1)$ such that the right-hand side of Eq.~\eqref{eq:required_asymmetry_general} is positive, then $\gamma^\dag\mathcal{Q}_q^{\xi_N}\gamma$ must be at least $O(N)$. In other words, distillation necessarily requires a macroscopic amount of asymmetry supplied by the external resource system.

To make the discussion more concrete, we examine the particular case in which $\xi_N$ are pure states. 
In this case, Eq.~\eqref{eq:norm_formula_general_r_QGT} yields
\begin{align}
        &\gamma^\dag\mathcal{Q}_q^{\xi_N}\gamma=\gamma^\dag \mathcal{Q}^{\xi_N}\gamma +\frac{1-q}{q}\gamma^\dag (\mathcal{Q}^{\xi_N})^*\gamma.
\end{align}
If the QGT rate $\lim_{N\to\infty}\frac{1}{N}\mathcal{Q}^{\xi_N}$ converges, we denote it by
\begin{align}
    \mathcal{Q}^{\infty,\xi}\coloneqq \lim_{N\to\infty}\frac{1}{N}\mathcal{Q}^{\xi_N}.
\end{align}
Under this assumption, Eq.~\eqref{eq:required_asymmetry_general} implies
\begin{align}
    \mathcal{Q}^{\infty,\xi} +\frac{1-q}{q} (\mathcal{Q}^{\infty,\xi})^*\geq  r\,\mathcal{Q}^{\phi}-\mathcal{Q}_q^{\rho}.
\end{align}
Taking the limit $q\to 1^-$, we obtain
\begin{align}
    \mathcal{Q}^{\infty,\xi}\geq r\,\mathcal{Q}^{\phi}-\mathcal{Q}^{\rho},\label{eq:monotonicity_required_QGT_general}
\end{align}
where we have used $\lim_{q\to 1^-}\mathcal{Q}_q^{\rho}=\mathcal{Q}^{\rho}$. For example, suppose that $\rho$ is full rank and $\phi$ is asymmetric. Then, $\mathcal{Q}^\rho=0$ and there exists a vector $\gamma\in\mathbb{C}^{\dim G}$ such that $\gamma^\dag \mathcal{Q}^{\phi}\gamma>0$. Therefore, $\gamma^\dag\mathcal{Q}^{\infty,\xi}\gamma>0$, or equivalently, $\gamma^\dag \mathcal{Q}^{\xi_N}\gamma$ must be at least $O(N)$. 

The result so far is applicable to any continuous symmetry as long as the Gibbs state is symmetric. However, the study of quantum thermodynamics under multiple conservation laws is still underdeveloped. Thus, we return to the standard setup involving only energy conservation. This allows us to compare with prior studies and to clarify the implications of the above arguments. Since the QGT is equal to the energy variance for pure states for time translation symmetry, Eq.~\eqref{eq:monotonicity_required_QGT_general} yields
\begin{align}
    &\lim_{N\to\infty}\frac{1}{N}V(\xi_N,H_N^{\mathrm{(ext)}})\geq rV(\phi,H')-\mathcal{Q}^\rho,\label{eq:variance_rate_S_0}
\end{align}
where $H_N^{\mathrm{(ext)}}$ denotes the Hamiltonian for the external resource system and $\mathcal{Q}^\rho=\mathrm{Tr}\left(\rho H (I-\Pi_\rho)H\right)$. We note that the SLD quantum Fisher information, widely used in RTA for time translation symmetry, equals four times the variance for pure states. We remark that if $\mathcal{Q}^\rho \neq 0$, then Eq.~\eqref{eq:variance_rate_S_0} yields only a trivial bound whenever $r \leq \mathcal{Q}^\rho/V(\phi,H')$.

If $\mathcal{Q}^\rho=0$, the above inequality yields 
\begin{align}
    &\lim_{N\to\infty}\frac{1}{N}V(\xi_N,H_N^{\mathrm{(ext)}})\geq rV(\phi,H').\label{eq:variance_rate_bound}
\end{align}
For example, when $\rho$ is a typical state with full rank, this inequality holds. 
Equation~\eqref{eq:variance_rate_bound} implies that for a conversion at nonvanishing rate $r>0$, the energy coherence of the external resource system, quantified by $V(\xi_N,H_N^{\mathrm{(ext)}})$, must scale at least $O(N)$, as long as $\phi$ is a non-quasiclassical state. 
In particular, if energy coherence is supplied via coherence bits---qubit systems in the state $\tfrac{1}{\sqrt{2}}(\ket{0}+\ket{1})$ with Hamiltonian $H=h\ket{1}\bra{1}$ for $h>0$---then the number of required coherence bits scales as $O(N)$. Note that Eq.~\eqref{eq:variance_rate_bound} becomes trivial when $\phi$ is a symmetric state (i.e., $[\phi,H']=0$), since $V(\phi,H')=0$ in that case. In particular, Eq.~\eqref{eq:variance_rate_bound} does not imply that macroscopic coherence is required for dilution into the ground state (or, more generally, into an energy eigenstate).

We note an advantage of adopting the QGT and its extension over the metric adjusted skew information. As mentioned earlier, Theorem~\ref{thm:noniid_monotonicity_q} can be straightforwardly extended to skew information rates by employing Lemma~3 of Ref.~\cite{yamaguchi_smooth_2023}. When $\xi_N$ are pure states, we obtain the following inequality in place of Eq.~\eqref{eq:variance_rate_S_0}:
\begin{align}
    &\lim_{N\to\infty}\frac{1}{N}V(\xi_N,H_N^{\mathrm{(ext)}})\geq rV(\phi,H')-I^{f^{(\mathrm{s})}}(\rho,H),\label{eq:variance_rate_masi}
\end{align}
where the metric adjusted skew information is defined by $I^{f^{(\mathrm{s})}}(\rho,H)\coloneqq \frac{f^{(\mathrm{s})}(0)}{2}\|\ii[\rho,H]\|_{f^{(\mathrm{s})},\rho}^2$ for a monotone function $f^{(\mathrm{s})}$ satisfying $f^{(\mathrm{s})}(t)=tf^{(\mathrm{s})}(1/t)$ and $f^{(\mathrm{s})}(0)>0$. This inequality reduces to the bound in Eq.~\eqref{eq:variance_rate_bound} only when $[\rho,H]=0$ since the metric adjusted skew information is a faithful measure of coherence. In contrast, $\mathcal{Q}^\rho=\mathrm{Tr}\left(\rho H (I-\Pi_\rho)H\right)=0$ holds either if $[\rho,H]=0$ or if $[\Pi_\rho,H]=0$. Thus, it is precisely the non-faithful property of $\mathcal{Q}^\rho$ that makes it possible to conclude that a macroscopic amount of energy coherence must be prepared in the external resource system for distillation of a wide class of states, including in particular all full-rank states $\rho$.

While our analysis does not rule out the existence of catalysts, it indicates that, in general, such catalysts should exhibit $O(N)$ asymmetry.
Indeed, our analysis remains valid even if the external state $\xi_N$ contains catalytic degrees of freedom. Moreover, any residual correlations between the system of interest and the catalyst do not affect the conclusion, since the proof does not rely on any explicit assumption about such correlations.
Therefore, even if a catalyst is employed, the total amount of asymmetry that must be supplied from an external system, including the catalyst, is $O(N)$ in general.
We also remark that coherence amplification protocols~\cite{takagi_correlation_2022,shiraishi_arbitrary_2024} do not trivialize the preparation of coherence in the external system: while these protocols can increase the sum of local coherences, they do not increase the total amount of coherence.

Finally, we explain the relation between our result and earlier analyses in the resource theory of athermality~\cite{brandao_resource_2013,faist_macroscopic_2019,sagawa_asymptotic_2021}. For time-translation symmetry ($G=\mathbb{R}$), the bound in Eq.~\eqref{eq:variance_rate_bound} applies to distillation under thermal operations, since thermal operations arise as a special case of Eq.~\eqref{eq:definition_evolution_thermal_contact} in the single-temperature setting. Consequently, there exist infinitely many instances in which a macroscopic amount of coherence must be supplied in order to distill a state into a pure state with nonzero energy coherence (see Appendix~\ref{app:quantum_thermo_setup_review_and_relation} for more details, including a brief review of the setup in Refs.~\cite{faist_macroscopic_2019,sagawa_asymptotic_2021}).

On the other hand, in the literature on the resource theory of athermality, the coherence cost of state conversion is often quantified via the energy range of an external system supplying the coherence~\cite{faist_macroscopic_2019,sagawa_asymptotic_2021}, defined by $\|H_N^{\mathrm{(ext)}}\|_\infty$. In the thermodynamic limit, the required energy range is shown to be negligible compared to the total work cost, i.e., $\|H_N^{\mathrm{(ext)}}\|_\infty=o(N)$, which prevents any non-negligible embezzlement of work from the external system~\cite{brandao_resource_2013,faist_macroscopic_2019,sagawa_asymptotic_2021}. 
This has contributed to the view that the required coherence overhead can be small when it is assessed solely in terms of the external system's energy range.

This apparent discrepancy originates from the quantifiers adopted in the analysis. As Ref.~\cite{faist_macroscopic_2019} describes, $\|H_N^{\mathrm{(ext)}}\|_\infty$ is ``a very rudimentary way'' to quantify energy coherence: it depends only on the Hamiltonian and is independent of the state of the external system. By contrast, the energy variance, which serves as a natural measure of coherence, is state-dependent. Since the variance involves second moments of the Hamiltonian, it scales quadratically with the Hamiltonian and thus can be macroscopic even when $\|H_N^{\mathrm{(ext)}}\|_\infty=o(N)$. In Ref.~\cite{brandao_resource_2013}, an example is given in which $\|H_N^{\mathrm{(ext)}}\|_\infty$ is estimated to scale as $\sqrt{N}$. In such a case, achieving a variance of order $O(N)$ requires the external system to be in a state with a large superposition of distinct energy levels. 

Taken together, these considerations underscore the critical importance of the choice of asymmetry measure; it can fundamentally alter our understanding of resource requirements in thermodynamic processes.

\section{Conclusion}\label{sec:conclusions}
In this paper, we established the asymptotic conversion theory between i.i.d. pure states in RTA for a compact Lie group. Our main result, namely the conversion rate formula, significantly extends a previous study on the $U(1)$ group to arbitrary compact Lie groups, thereby providing a unified framework for quantifying symmetry breaking across quantum systems with diverse symmetries. For example, it now covers cases involving multiple non-commutative conserved quantities.  Notably, the derived conversion rate formula has led to a resolution of the long-standing Marvian–Spekkens conjecture, which had remained open for over a decade. 
We remark that the asymptotic conversion rate in RTA cannot be derived from the approach in general resource theory for reversible conversion \cite{brandao_reversible_2015} using the generalized Stein's lemma \cite{brandao_generalization_2010,berta_gap_2023,hayashi_GeneralizedQuantumSteinslemmasecond_2025,lami_solution_2024}. This is because this framework does not apply to RTA since the regularized relative entropy vanishes in RTA \cite{gour_measuring_2009}, as emphasized in \cite{brandao_reversible_2015, hayashi_GeneralizedQuantumSteinslemmasecond_2025}.
We also explored the applicability of our approach to mixed-state conversion and derived upper bounds on asymmetry cost and distillable asymmetry. 
While our analysis in the mixed-state case is not comprehensive, it would pave the way for future investigations.

Our result establishes the QGT as the complete asymmetry measure within RTA, analogous to the role of entanglement entropy in entanglement theory. This finding provides an operational interpretation of the QGT, originally introduced as a metric on quantum state space and commonly employed as a topological indicator. Thus, it reveals a fundamental connection among symmetry breaking, the geometry of quantum states, and condensed matter physics from the viewpoint of quantum resource theories. Given the broad success of entanglement entropy as a theoretical tool for investigating quantum correlations \cite{vidal_entanglement_2003,calabrese_entanglement_2004,ryu_holographic_2006}, adopting the QGT for quantifying asymmetry breaking will offer a powerful approach to exploring deeper structures governed by symmetry and its breaking.

\begin{acknowledgments}
The authors would like to thank Yui Kuramochi for valuable comments on an early draft of this manuscript, and also Iman Marvian for the valuable discussion. 
The authors would also like to thank Mark M. Wilde and Takahiro Sagawa for separately pointing out that the conversion formula can also be rewritten using the max-relative entropy. 
K.Y. acknowledges support from JSPS KAKENHI Grant No. JP24KJ0085.
Y.M. is supported by JSPS KAKENHI Grant No. JP23KJ0421.
T.S. acknowledges support from JST Moonshot R\&D Grant No. JPMJMS2061, JST CREST Grant No. JPMJCR23I4, and MEXT Q-LEAP Grant No. JPMXS0120319794, Japan.
H.T. was supported by MEXT KAKENHI Grant-in-Aid for Transformative
Research Areas B ``Quantum Energy Innovation” Grant Numbers 24H00830 and 24H00831, and JST PRESTO No. JPMJPR2014, JST MOONSHOT No. JPMJMS2061.
\end{acknowledgments}

\widetext

\appendix

\clearpage
\section*{Appendix Contents}

\makeatletter
\let\orig@addcontentsline\addcontentsline
\def\addcontentsline#1#2#3{%
  \def\temp{#1}\def\tocname{toc}%
  \ifx\temp\tocname
    \orig@addcontentsline{apx}{#2}{#3}%
  \else
    \orig@addcontentsline{#1}{#2}{#3}%
  \fi
}
\makeatother

\makeatletter
\@starttoc{apx} 
\makeatother

\clearpage

\section{$G$-covariant channel and conservation laws}\label{app:Stinespring_general}
In this section, we review the connection between $G$-covariance of a channel and conservation laws, established by the covariant Stinespring theorem~\cite{keyl_optimal_1999,marvian_mashhad_symmetry_2012}. 
We denote by $\mathcal{L}(\mathcal{H})$ the set of all linear operators on a Hilbert space $\mathcal{H}$.
\begin{thm}[Theorem~25 in \cite{marvian_mashhad_symmetry_2012}]\label{thm:Stinespring_marvian}
    Let $\mathcal{H}$ be a finite-dimensional Hilbert space equipped with a projective unitary representation $U$ of a compact group $G$, and let $\mathcal{E}:\mathcal{L}(\mathcal{H})\to \mathcal{L}(\mathcal{H})$ be a $G$-covariant channel.
    Then there exists a finite-dimensional Hilbert space $\mathcal{H}_{E}$ with a (non-projective) unitary representation $U_{E}$ of $G$, a $G$-symmetric pure state $\eta_E$ on the system $E$, and a unitary operator $V$ on $\mathcal{H}\otimes \mathcal{H}_{E}$ satisfying 
    \begin{align}
        V\left(U(g)\otimes U_{E} (g)\right)=\left(U(g)\otimes U_{E} (g)\right)V,\quad \forall g\in G,
    \end{align}
    such that
    \begin{align}
        \mathcal{E}(\cdot)=\mathrm{Tr}_{E}\left(V\left(\cdot\otimes \eta_E\right)V^\dag\right).
    \end{align}
\end{thm}

This theorem extends to the setting in which the input and output systems may differ. Although the proof is straightforward, we include it here for completeness:
\begin{cor}
    Let $\mathcal{H}_A,\mathcal{H}_{B}$ be finite-dimensional Hilbert spaces equipped with projective unitary representation $U_A,U_B$ of a compact group $G$, and let $\mathcal{E}:\mathcal{L}(\mathcal{H}_A)\to \mathcal{L}(\mathcal{H}_B)$ be a $G$-covariant channel.
    Then there exists finite-dimensional Hilbert spaces $\mathcal{H}_{E}$ and $\mathcal{H}_{E'}$ with projective unitary representations $U_{E}$ and $U_{E'}$ of $G$ such that $\mathcal{H}_A\otimes\mathcal{H}_E=\mathcal{H}_B\otimes\mathcal{H}_{E'}$, a $G$-symmetric state $\eta_{E}$ on the system $E$, and a unitary operator $V$ on $\mathcal{H}_A\otimes \mathcal{H}_{E}$ satisfying 
    \begin{align}
        V\left(U_A(g)\otimes U_{E}(g) \right)=\left(U_B(g)\otimes U_{E'}(g) \right)V,\quad \forall g\in G,\label{eq:Stinespring_dilation_intertw_rel}
    \end{align}
    such that
    \begin{align}
        \mathcal{E}(\cdot)=\mathrm{Tr}_{E'}\left(V\left(\cdot\otimes \eta_{E}\right)V^\dag\right).\label{eq:Stinespring_dilation_unitary_rep}
    \end{align}
\end{cor}
\begin{proof}
    It is straightforward to check the channel in the form of Eq.~\eqref{eq:Stinespring_dilation_unitary_rep} is $G$-covariant. 
    Suppose that $\mathcal{E}:\mathcal{L}(\mathcal{H}_A)\to \mathcal{L}(\mathcal{H}_B)$ is $G$-covariant. Define a channel $\mathcal{F}:\mathcal{L}(\mathcal{H}_A\otimes\mathcal{H}_B )\to \mathcal{L}(\mathcal{H}_A\otimes\mathcal{H}_B)$ by $\mathcal{F}(\cdot)\coloneqq \eta_A\otimes \left(\mathcal{E}\circ\mathrm{Tr}_B(\cdot)\right)$, where $\eta_A$ is a $G$-symmetric state. Since $\mathcal{F}$ is $G$-symmetric, from Theorem~\ref{thm:Stinespring_marvian}, there exist a finite-dimensional Hilbert space $\mathcal{H}_F$ carrying a unitary representation $U_{F}$ of $G$, a $G$-symmetric state $\eta_F$, and a unitary operator $V$ on $\mathcal{H}_A\otimes\mathcal{H}_B\otimes\mathcal{H}_F$ satisfying
    \begin{align}
        V\left(U(g)_A\otimes U_B(g)\otimes U_F (g)\right)=\left(U(g)_A\otimes U_B(g)\otimes U_F (g)\right)V,\quad \forall g\in G,\label{eq:intw_rel_ABF}
    \end{align}
    such that
    \begin{align}
        \mathcal{F}(\cdot)=\mathrm{Tr}_{F}\left(V(\cdot\otimes \eta_F)V^\dag \right).
    \end{align}
    Fix any $G$-symmetric state $\eta_B$ on the system $B$. Since $\mathcal{E}(\cdot)=\mathrm{Tr}_A\circ\mathcal{F}(\cdot\otimes \eta_B)$, we obtain
    \begin{align}
        \mathcal{E}(\cdot)=\mathrm{Tr}_{E'}\left(V(\cdot\otimes \eta_E)V^\dag\right)
    \end{align}
    where $E\coloneqq BF$, $E'\coloneqq AF$, and $\eta_E\coloneqq \eta_B\otimes \eta_F$. The systems $E$ and $E'$ carry projective unitary representations $U_E=U_B\otimes U_F$ and $U_{E'}=U_A\otimes U_F$, which satisfies Eq.~\eqref{eq:Stinespring_dilation_intertw_rel} due to Eq.~\eqref{eq:intw_rel_ABF}.
\end{proof}

The condition in Eq.~\eqref{eq:Stinespring_dilation_intertw_rel} means that the unitary evolution operator $V$ respects the symmetry described by $G$. When $G$ is a Lie group, by introducing the generators through differentiation of the projective representations (cf. Eq.~\eqref{eq:hermitian_operators_Lie_alg}), this condition implies
\begin{align}
    V\left(X_{A,\mu}+X_{E,\mu}\right)V^\dag=X_{B,\mu}+X_{E',\mu},\quad \forall \mu=1,\cdots,\dim G, 
\end{align}
i.e., additive conservation laws for the generators. In particular, if $G$ is connected, the additive conservation laws are equivalent to the intertwining condition Eq.~\eqref{eq:Stinespring_dilation_intertw_rel}. In this sense, a $G$-covariant channel characterizes the most general evolution of an open system subject to additive conservation laws, provided that no asymmetry is supplied from external systems.

\clearpage
\section{Proof for the main theorem for any projective unitary representations}

In the main text, we provided the sketch of proof of Eq.~\eqref{eq:conversion_rate_formula_forall_g} for (non-projective) unitary representations of a compact Lie group, while the technical proofs are provided in later sections, in Appendices~\ref{app:section_for_converse_part} and~\ref{app:section_for_direct_part}.

In this section, we show Theorem~\ref{thm:conversion_rate_projective_finite_number} from Eq.~\eqref{eq:conversion_rate_formula_forall_g}. 
In Appendix~\ref{app:differentiability_of_reprensetation}, we first review the result in \cite{shitara_iid_2024} that relates the conversion rates for projective unitary representations to that for (non-projective) unitary representations and explain a formula for the conversion rate for projective unitary representations that are continuous but not differentiable. In Appendix~\ref{app:reduction_to_a_finite_number_of_inequalities}, we show that the QGTs at different points of a Lie group are interrelated by a congruence transformation, which plays an essential role in proving Theorem~\ref{thm:conversion_rate_projective_finite_number} from Eq.~\eqref{eq:conversion_rate_formula_forall_g}. 
In Appendix~\ref{app:formula_projective_unitary_rep}, we complete the proof for Eq.~\eqref{eq:rate_formula_compact_connected} in Theorem~\ref{thm:conversion_rate_projective_finite_number}
and Eq.~\eqref{eq:conversion_rate_formula_forall_g} for any projective unitary representations that are differentiable. Furthermore, Eq.~\eqref{eq:conversion_rate_formula_subgrop} is proven Sec.~\ref{sec:symmetry_subgroup_zero_rate}.

\subsection{Differentiability of representation}\label{app:differentiability_of_reprensetation}
A (non-projective) unitary representation is differentiable as long as it is continuous \cite{hall_lie_2015}, whereas this is not always the case for a projective unitary representation. However, by adopting the following result in \cite{shitara_iid_2024}, the conversion rate can be calculated even when projective unitary representation is not differentiable. 
\begin{lem}\label{lem:converison_rate_reduction}
    Let $U,U'$ be projective unitary representations of $G$ on finite-dimensional Hilbert spaces $\mathcal{H}$ and $\mathcal{H}'$. Define maps $\tilde{U}$ and $\tilde{U}'$ by
    \begin{align}
        \tilde{U}(g)\coloneqq \frac{U(g)^{\otimes d}}{\det (U(g))},\quad \tilde{U}'(g)\coloneqq \frac{U(g)^{\prime \otimes d'}}{\det (U(g)')},\label{eq:reduction_projective_to_nonprojective}
    \end{align}
    where $d\coloneqq \dim \mathcal{H}$ and $d'\coloneqq \dim \mathcal{H}'$. Then, $\tilde{U}$ and $\tilde{U}'$ are a (non-projective) unitary representations of $G$ on $\mathcal{H}^{\otimes d}$ and $\mathcal{H}'^{\otimes d'}$, respectively. Furthermore, it holds
    \begin{align}
        \rap(\rho\to \sigma)=\frac{d'}{d}\rap(\rho^{\otimes d}\to \sigma^{\otimes d'}),\label{eq:reduction_to_nonproj_rate}
    \end{align}
    where on the right-hand side, the convertibility is defined with respect to the unitary representations $\tilde{U}$ and $\tilde{U}'$. 
\end{lem}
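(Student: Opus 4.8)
The plan is to establish the two assertions of Lemma~\ref{lem:converison_rate_reduction} in turn: first that $\tilde U$ and $\tilde U'$ are genuine (non-projective) unitary representations, and then that the conversion rate rescales by the factor $d'/d$. For the first assertion I would start from the cocycle relation $U(g_1)U(g_2)=\omega(g_1,g_2)U(g_1g_2)$ with $|\omega(g_1,g_2)|=1$ and take determinants of both sides. Since $U(g)$ is a $d\times d$ matrix, $\det\!\big(\omega(g_1,g_2)U(g_1g_2)\big)=\omega(g_1,g_2)^d\det U(g_1g_2)$, so $\det U(g_1)\det U(g_2)=\omega(g_1,g_2)^d\det U(g_1g_2)$. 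Substituting this into $\tilde U(g_1)\tilde U(g_2)=(U(g_1)U(g_2))^{\otimes d}/(\det U(g_1)\det U(g_2))$ makes the phases $\omega(g_1,g_2)^d$ cancel, leaving $\tilde U(g_1)\tilde U(g_2)=\tilde U(g_1g_2)$; thus $\tilde U$ is a homomorphism. Unitarity of $\tilde U(g)$ is immediate because $U(g)^{\otimes d}$ is unitary and $\det U(g)$ is a phase. The identical computation, with $d'$ in place of $d$, applies to $\tilde U'$.

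The structural observation underlying the second assertion is that $\tilde U(g)$ and $U(g)^{\otimes d}$ differ only by the scalar phase $1/\det U(g)$, and conjugation by a scalar phase acts trivially on density operators. Hence, for every $n$, the adjoint actions coincide, $\tilde{\mathcal U}_g^{\otimes n}=\mathcal U_g^{\otimes dn}$, and likewise $\tilde{\mathcal U}_g^{\prime\otimes m}=\mathcal U_g^{\prime\otimes d'm}$; in particular $\tilde{\mathcal U}_g^{\otimes n}\big((\rho^{\otimes d})^{\otimes n}\big)=(\mathcal U_g(\rho))^{\otimes dn}$. Consequently, a channel on $n$ copies of the enlarged input system is $G$-covariant with respect to $(\tilde U,\tilde U')$ \emph{exactly} when it is $G$-covariant with respect to the original $(U,U')$ acting on $N=dn$ input copies and $M=d'm$ output copies, since both covariance conditions are the single identity $\mathcal E\circ\mathcal U_g^{\otimes dn}=\mathcal U_g^{\prime\otimes d'm}\circ\mathcal E$. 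The two conversion problems therefore differ only by a relabeling of the number of i.i.d.\ copies.

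With this identification I would prove the rate equality via both inequalities under the substitution $N=dn$. Given an achievable rate $s$ in the enlarged problem, the very same channels convert $\rho^{\otimes dn}$ into approximately $\sigma^{\otimes d'\floor{sn}}$, so the output-to-input ratio tends to $(d'/d)s$, giving $\rap(\rho\to\sigma)\ge (d'/d)\,\rap(\rho^{\otimes d}\to\sigma^{\otimes d'})$; the converse is symmetric, consuming $N=dn$ original copies and regrouping the $\floor{sN}$ produced $\sigma$'s into blocks of $d'$. I expect the main obstacle to be the bookkeeping of floor functions and divisibility, namely that $N$ need not be a multiple of $d$ and the number of produced $\sigma$'s need not be a multiple of $d'$. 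I would dispose of both by discarding a bounded number of copies (at most $d-1$ on the input, at most $d'-1$ on the output) before or after applying the channel; since discarding i.i.d.\ copies is itself $G$-covariant and the discarded fraction vanishes as $n\to\infty$, neither the $G$-covariance nor the asymptotic rate is affected. Combining the two inequalities yields $\rap(\rho\to\sigma)=(d'/d)\,\rap(\rho^{\otimes d}\to\sigma^{\otimes d'})$, as claimed.
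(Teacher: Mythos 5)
Your proof is correct. Note that the paper does not prove Lemma~\ref{lem:converison_rate_reduction} itself but defers to the cited reference \cite{shitara_iid_2024}; your argument---cancelling the cocycle phase via $\det U(g_1)\det U(g_2)=\omega(g_1,g_2)^d\det U(g_1g_2)$ to get the homomorphism property, observing that $\tilde{\mathcal U}_g^{\otimes n}=\mathcal U_g^{\otimes dn}$ as superoperators so that the two covariance conditions coincide under the relabeling $N=dn$, $M=d'm$, and then handling divisibility by discarding a bounded number of copies (a $G$-covariant operation that does not affect the asymptotic rate)---is the natural one and is complete as stated.
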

The proof can be found in \cite{shitara_iid_2024}. Note that this lemma is valid for any groups, including both finite groups and Lie groups, as long as the dimensions of the representation spaces are finite. 

For projective unitary representations $U$ and $U'$ that are continuous, (non-projective) unitary representations $\tilde{U}$ and $\tilde{U}'$ in Eq.~\eqref{eq:reduction_projective_to_nonprojective} are continuous and hence differentiable. Therefore, the conversion rate between pure states is obtained from Eq.~\eqref{eq:reduction_to_nonproj_rate} since its right-hand side can be calculated from Theorem~\ref{thm:conversion_rate_projective_finite_number}. 

Except for this subsection, we only consider representations that are differentiable. In the following subsections, by using Eq.~\eqref{eq:reduction_to_nonproj_rate}, we show that the formula for conversion rate described by QGTs in Theorem~\ref{thm:conversion_rate_projective_finite_number} is valid for projective unitary representations $U$ and $U'$ that are differentiable.

\subsection{Relation among matrix inequalities between QGTs at different group elements for unitary representations}\label{app:reduction_to_a_finite_number_of_inequalities}

Let us prove the relation of the QGT at different points of a compact Lie group. 
\begin{lem}\label{lem:QGT_congruence_tfm}
    Let  $U$ be a unitary representation of a compact Lie group $G$. For each $g\in G$, there exists an invertible real matrix $V(g)$ independent of the representation $U$ such that $\mathcal{Q}^{\mathcal{U}_g(\psi)}=V(g)^\top \mathcal{Q}^{\psi} V(g)$.
\end{lem}

We remark the same argument for the congruence transformation of quantum Fisher information matrices can be found in \cite{gao_sufficient_2024} for connected compact Lie groups.
\begin{proof}[Proof of Lemma~\ref{lem:QGT_congruence_tfm}]
    Let us first review several useful facts on a compact Lie group $G$ and its Lie algebra $\mathfrak{g}$. Let $GL(n,\mathbb{C})$ denote the general linear group. A closed subgroup of $GL(n,\mathbb{C})$ is referred to as a closed linear group. Any compact Lie group $G$ is isomorphic to a closed linear group (see, e.g., Corollary~4.22 in \cite{knapp_lie_2002}). For $g\in G$, we define a linear map $\mathrm{Ad}_g:\mathfrak{g} \to \mathfrak{g}$ by $\mathrm{Ad}_g(X)\coloneqq g X g^{-1}$. The map $\mathrm{Ad}_g$ is an invertible linear transformation of $\mathfrak{g}$ (see, e.g., Proposition~3.33 in~\cite{hall_lie_2015}).

    Now, let us fix an arbitrary basis $\{B_\mu\}_{\mu=1}^{\dim G}$ of $\mathfrak{g}$. For any $g\in G$, the set $\{g^{-1} B_\mu g\}_{\mu=1}^{\dim G}$ is also a basis of $\mathfrak{g}$. Therefore, we can introduce an invertible real matrix $V(g)$ by 
    \begin{align}
     \mathrm{Ad}_{g^{-1}}(B_\mu) \eqqcolon\sum_{\nu=1}^{\dim G}V(g)_{\nu\mu }B_\nu. \label{eq:vector_tfm}
    \end{align}
    Since $\mathrm{Ad}_g$ is a linear invertible transformation, the matrix $V(g)$ is invertible, is determined by the Lie algebraic structure, and is independent of the representation of $G$. 

Let us fix a local coordinate of $G$ in a neighborhood of the identity $e \in G$, which parametrizes the elements as $ g(\lambda)$ such that $g(0)=e$. In this coordinate system, the QGT for a pure state $\psi=\ket{\psi}\bra{\psi}$ is given by
\begin{align}
    &\mathcal{Q}^{\psi}_{\mu\nu}=\braket{\psi|X_\mu X_\nu |\psi}-\braket{\psi|X_\mu|\psi}\braket{\psi|X_\nu|\psi},
\end{align}
where $X_\mu \coloneqq \left.-\ii\frac{\partial}{\partial \lambda^\mu}U(g(\lambda))\right|_{\lambda=0}$. Similarly, the QGT for $\mathcal{U}_g(\psi)=\ket{\psi(g)}\bra{\psi(g)}$, where $\ket{\psi(g)}\coloneqq U(g)\ket{\psi}$, is given by
\begin{align}
    \mathcal{Q}^{\mathcal{U}_g(\psi)}_{\mu\nu}&=\braket{\psi(g)|X_\mu(I-\ket{\psi(g)}\bra{\psi(g)})X_\nu |\psi(g)}\\
    &=\braket{\psi| X_\mu (g)X_\nu (g)|\psi }-\braket{\psi|X_\mu (g)|\psi }\braket{\psi|X_\nu (g)|\psi },
\end{align}
where $X_\mu(g)\coloneqq U(g)^\dag X_\mu U(g)$.
By using the matrix defined in Eq.~\eqref{eq:vector_tfm}, we get
\begin{align}
    U(g)^\dag X_\mu U(g)=\sum_{\nu=1}^{\dim G}V(g)_{\nu\mu }X_\nu.\label{eq:basis_change_Lie_alg}
\end{align}
Therefore, $\mathcal{Q}^{\mathcal{U}_g(\psi)}=V(g)^\top \mathcal{Q}^{\psi} V(g)$.
\end{proof}

As an immediate corollary, we find that a matrix inequality between QGTs holds for any point of a compact Lie group $G$ if it holds at a point in $G$. 
\begin{cor}\label{cor:matric_inequality_connected}
    Let $U$ and $U'$ be a unitary representation of a compact Lie group $G$. 
    If a matrix inequality $\mathcal{Q}^{\mathcal{U}_g(\psi)}\geq r\mathcal{Q}^{\mathcal{U}'_g(\phi)}$ holds for a point $g\in G$, then it also holds for any $g\in G$. 
\end{cor}
\begin{proof}
    Since the reversible matrix $V(g)$ in Lemma~\ref{lem:QGT_congruence_tfm} is independent of unitary representation, we immediately get
    \begin{align}
    &\forall g\in G,\quad  \mathcal{Q}^{\mathcal{U}_g(\psi)}\geq r\mathcal{Q}^{\mathcal{U}'_g(\phi)}\nonumber\\
    &\Longleftrightarrow \exists g\in G, \quad  \mathcal{Q}^{\mathcal{U}_g(\psi)}\geq r\mathcal{Q}^{\mathcal{U}'_g(\phi)}.\label{eq:equivalence_matrix_ineq_different_point}
    \end{align}
\end{proof}
In particular, in Eq.~\eqref{eq:equivalence_matrix_ineq_different_point}, we can always take $e\in G$ as a representative point. Thus, we get
\begin{align}
    \forall g\in G,\quad  \mathcal{Q}^{\mathcal{U}_g(\psi)}\geq r\mathcal{Q}^{\mathcal{U}'_g(\phi)}
    &\Longleftrightarrow  \mathcal{Q}^{\psi}\geq r\mathcal{Q}^{\phi}. \label{eq:equivalence_matrix_ineq_different_point_identity_as_representative}
\end{align}

\subsection{Conversion rates for projective unitary representations and (non-projective) unitary representations}\label{app:formula_projective_unitary_rep}

Using a projective unitary representation that is differentiable, the QGT is obtained as Eq.~\eqref{eq:qgt_covariance_matrix}. This QGT is proportional to the QGT defined with the unitary representation in Eq.~\eqref{eq:reduction_projective_to_nonprojective}: 
\begin{lem}\label{lem:QGT_relation}
    Let $U$ be a projective unitary representation of a Lie group $G$ on a $d$-dimensional Hilbert space $\mathcal{H}$. Define a (non-projective) unitary representation $\tilde{U}$ by Eq.~\eqref{eq:reduction_projective_to_nonprojective}. Then it holds
    \begin{align}
        \forall g\in G,\quad \mathcal{Q}^{\mathcal{U}_g(\psi)}=\frac{1}{d}\mathcal{Q}^{\tilde{\mathcal{U}}_g(\psi)^{\otimes d}}\label{eq:QGT_projective_nonproj}
    \end{align}
\end{lem}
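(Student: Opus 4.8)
The plan is to compute the QGT associated with the projective representation $U$ directly from its definition in Eq.~\eqref{eq:qgt_covariance_matrix} and compare it term-by-term with the QGT associated with the non-projective representation $\tilde U$ given by Eq.~\eqref{eq:reduction_projective_to_nonprojective}. The key observation is that the generators transform in a controlled way under the construction $\tilde U(g)=U(g)^{\otimes d}/\det(U(g))$. First I would differentiate $\tilde U(g(\bm\lambda))$ at the identity to obtain the generators $\tilde X_\mu$ of $\tilde U$ in terms of the generators $X_\mu$ of $U$. Writing $U(g(\bm\lambda))=e^{\ii\sum_\mu\lambda^\mu X_\mu}$ locally, the tensor-power factor $U(g)^{\otimes d}$ contributes the i.i.d.\ extension $(X_\mu)_d\coloneqq\sum_{n=1}^{d}I^{\otimes n-1}\otimes X_\mu\otimes I^{\otimes d-n}$, while the scalar normalization $\det(U(g))^{-1}$ contributes a $c$-number shift. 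Since $\det(U(g(\bm\lambda)))=\exp(\ii\sum_\mu\lambda^\mu\operatorname{Tr}X_\mu+\cdots)$, differentiating the inverse determinant yields the generator $-(\operatorname{Tr}X_\mu)\,I^{\otimes d}$. Hence $\tilde X_\mu=(X_\mu)_d-(\operatorname{Tr}X_\mu)\,I^{\otimes d}$.

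Next I would feed this into the covariance-matrix formula for the QGT. Because the QGT in Eq.~\eqref{eq:qgt_covariance_matrix} is the non-symmetrized covariance $\braket{\psi|X_\mu X_\nu|\psi}-\braket{\psi|X_\mu|\psi}\braket{\psi|X_\nu|\psi}$, the scalar term $-(\operatorname{Tr}X_\mu)\,I^{\otimes d}$ is a multiple of the identity and therefore drops out entirely: adding any $c$-number to an operator leaves its covariance unchanged. Thus $\mathcal{Q}^{\tilde{\mathcal{U}}_g(\psi)^{\otimes d}}$ computed with $\tilde X_\mu$ equals the QGT computed with the i.i.d.\ extensions $(X_\mu)_d$ on the state $\psi^{\otimes d}$. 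By the additivity of the QGT under tensor products already noted in the main text, namely $\mathcal{Q}^{\psi^{\otimes d}}=d\,\mathcal{Q}^{\psi}$ for i.i.d.\ copies, this gives $\mathcal{Q}^{\tilde{\mathcal{U}}_g(\psi)^{\otimes d}}=d\,\mathcal{Q}^{\mathcal{U}_g(\psi)}$, which is exactly Eq.~\eqref{eq:QGT_projective_nonproj} after dividing by $d$.

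To handle general $g\in G$ rather than only $g=e$, I would simply apply the same computation to the rotated state $\mathcal{U}_g(\psi)=U(g)\psi U(g)^\dag$ and its counterpart $\tilde{\mathcal{U}}_g(\psi^{\otimes d})$, noting that conjugation by $U(g)$ (projective) and by $\tilde U(g)$ (non-projective) are compatible because the extra phase and determinant factors act trivially under conjugation. The generators at $g$ are related by the congruence transformation of Lemma~\ref{lem:QGT_congruence_tfm}, which is representation-independent, so the proportionality factor $1/d$ is preserved for every $g$.

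The main obstacle I anticipate is the bookkeeping around the determinant factor: one must verify carefully that the local expansion $\det(U(g(\bm\lambda)))=e^{\ii\sum_\mu\lambda^\mu\operatorname{Tr}X_\mu}$ is valid to first order and that the resulting generator shift is genuinely proportional to the identity (so that it is annihilated by the covariance), rather than introducing a state-dependent contribution. Since the determinant is a scalar and differentiation of $\log\det$ gives a trace, this is ultimately a routine Jacobi-formula computation, but it is the step where a sign error or a misplaced factor of the dimension $d$ could creep in, so I would treat it with care. Everything else follows from the additivity of the QGT and the elementary fact that the non-symmetrized covariance is invariant under shifts by multiples of the identity.
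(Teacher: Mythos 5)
Your proposal is correct and follows essentially the same route as the paper: differentiate $\tilde U$ at the identity, observe that the determinant factor contributes only a term proportional to $I^{\otimes d}$ which drops out of the non-symmetrized covariance, use additivity of the QGT over the $d$ tensor factors to get the factor of $d$, and extend to all $g$ by running the argument on the rotated state. The only cosmetic difference is that the paper leaves the determinant-derivative coefficient as an unspecified scalar $c_\mu$ (which is safer when $U(e)=e^{\ii\theta}I\neq I$ for a projective representation) rather than identifying it with $-\operatorname{Tr}X_\mu$, but since only its proportionality to the identity matters, this does not affect the validity of your argument.
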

\begin{proof}
    Let us fix a local coordinate $g(\lambda)$ in the neighborhood of the identity $e\in G$ such that $g(0)=e$. From the definition of $\tilde{U}$, we get
    \begin{align}
        &\partial_\mu \tilde{U}(g(\lambda))\biggl|_{\lambda=0}\\
        &=\frac{1}{\det(U(g(\lambda))}\sum_{i=1}^dU(g(\lambda))^{\otimes i-1}\otimes \partial_\mu U(g(\lambda))\otimes U(g(\lambda))^{\otimes d-i}\biggl|_{\lambda=0}+U(g(\lambda))^{\otimes d}\partial_\mu \frac{1}{\det(U(g(\lambda))}\biggl|_{\lambda=0}\\
        &=e^{-\ii \theta}\sum_{i=1}^dI^{\otimes i-1}\otimes \partial_\mu U(g(\lambda))\biggl|_{\lambda=0}\otimes I^{\otimes d-i}+c_\mu e^{\ii \theta d}I^{\otimes d}\qquad \left(U(e)\eqqcolon e^{\ii \theta}I,\, c_\mu\coloneqq \partial_\mu \frac{1}{\det(U(g(\lambda))}\biggl|_{\lambda=0}\right).
    \end{align}
    Note that the second term, which is proportional to the identity matrix, does not contribute to the QGT. Thus, we get
    \begin{align}
        \frac{1}{d}\mathcal{Q}_{\mu\nu}^{\psi^{\otimes d}}
        &=\frac{1}{d}\sum_{i=1}^d \left(\braket{\partial_\mu  U(g(\lambda))\psi|\partial_\nu U(g(\lambda))\psi}-\braket{\partial_\mu U(g(\lambda))\psi|U(g(\lambda))\psi}\braket{U(g(\lambda))\psi|\partial_\nu U(g(\lambda))\psi}\right)\biggl|_{\lambda=0}\nonumber\\
        &=\mathcal{Q}_{\mu\nu}^{\psi}.
    \end{align}
    Since this equality holds for an arbitrary pure state $\psi$, we get
    \begin{align}
        \forall g\in G,\quad \mathcal{Q}^{\mathcal{U}_g(\psi)}=\frac{1}{d}\mathcal{Q}^{\tilde{\mathcal{U}}_g(\psi^{\otimes d})}.
    \end{align}
\end{proof}

Now we are ready to prove Eq.~\eqref{eq:conversion_rate_formula_forall_g} for any projective unitary representations. 
\begin{proof}[Proof of Eq.~\eqref{eq:conversion_rate_formula_forall_g}]
    We consider the case where $\mathrm{Sym}_{G}(\psi) \subset \mathrm{Sym}_G(\phi)$. Note that the case where $\mathrm{Sym}_{G}(\psi)\not \subset \mathrm{Sym}_G(\phi)$ is separately treated in Appendix~\ref{sec:symmetry_subgroup_zero_rate}. 
    
    We now extend Eq.~\eqref{eq:conversion_rate_formula_forall_g} to any unitary representations, given that Eq.~\eqref{eq:conversion_rate_formula_forall_g} is proven for any (non-projective) unitary representations.
    For any projective unitary representations $U$ and $U'$ on $\mathcal{H}$ and $\mathcal{H}'$, we define unitary representations $\tilde{U}$ and $\tilde{U}'$ of $G$ on $\mathcal{H}^{\otimes d}$ and $\mathcal{H}^{\prime \otimes d'}$ by Eq.~\eqref{eq:reduction_projective_to_nonprojective}. From Eq.~\eqref{eq:reduction_to_nonproj_rate}, we have
    \begin{align}
        \rap(\psi\to \phi)=\frac{d'}{d}\rap(\psi^{\otimes d}\to \phi^{\otimes d'}).
    \end{align}
    Since Eq.~\eqref{eq:conversion_rate_formula_forall_g} is already proven for any (non-projective) unitary representations, we get
    \begin{align}
        \rap(\psi^{\otimes d}\to \phi^{\otimes d'})= \sup\{r\geq0 \mid\forall g\in G,\,\mathcal{Q}^{\tilde{\mathcal{U}}_g(\psi^{\otimes d})}\geq r\mathcal{Q}^{\tilde{\mathcal{U}}'_g(\phi^{\otimes d'})}\}.
    \end{align}
    By using Eq.~\eqref{eq:QGT_projective_nonproj}, we get
    \begin{align}
        \sup\{r\geq0 \mid\forall g\in G,\,\mathcal{Q}^{\tilde{\mathcal{U}}_g(\psi^{\otimes d})}\geq r\mathcal{Q}^{\tilde{\mathcal{U}}'_g(\phi^{\otimes d'})}\}=\frac{d}{d'}\sup\{r\geq0 \mid\forall g\in G,\,\mathcal{Q}^{\mathcal{U}_g(\psi)}\geq r\mathcal{Q}^{\mathcal{U}'_g(\phi)}\}.
    \end{align}
    Therefore,
    \begin{align}
        \rap(\psi\to \phi)=\sup\{r\geq0 \mid\forall g\in G,\,\mathcal{Q}^{\mathcal{U}_g(\psi)}\geq r\mathcal{Q}^{\mathcal{U}'_g(\phi)}\}.
    \end{align}
\end{proof}

By using the result in Appendix~\ref{app:reduction_to_a_finite_number_of_inequalities}, we prove Eq.~\eqref{eq:rate_formula_compact_connected} in Theorem~\ref{thm:conversion_rate_projective_finite_number}. 
\begin{proof}[Proof of Eq.~\eqref{eq:rate_formula_compact_connected} in Theorem~\ref{thm:conversion_rate_projective_finite_number}]
    For given projective unitary representations $U$ and $U'$ of $G$, we introduce unitary representations by Eq.~\eqref{eq:reduction_projective_to_nonprojective}. Then it holds, 
    \begin{align}
        \sup\{r\geq0 \mid \forall g\in G,\, \mathcal{Q}^{\mathcal{U}_g(\psi)}\geq r\mathcal{Q}^{\mathcal{U}_g(\phi)}\}
        &\overset{\mathrm{Eq.}\eqref{eq:QGT_projective_nonproj}}{=}\frac{d'}{d}\sup\{r\geq0 \mid \forall g\in G,\, \mathcal{Q}^{\tilde{\mathcal{U}}_g(\psi^{\otimes d})}\geq r\mathcal{Q}^{\tilde{\mathcal{U}}'_g(\phi^{\otimes d'})}\}\\
        &\overset{\mathrm{Eq.}\eqref{eq:equivalence_matrix_ineq_different_point_identity_as_representative}}{=}\frac{d'}{d}\sup\{r\geq0 \mid \mathcal{Q}^{\tilde{\psi^{\otimes d}}}\geq r\mathcal{Q}^{\phi^{\otimes d'}}\}\\
        &\overset{\mathrm{Eq.}\eqref{eq:QGT_projective_nonproj}}{=}\sup\{r\geq0 \mid  \mathcal{Q}^{\psi}\geq r\mathcal{Q}^{\phi}\}.
    \end{align}
    Applying Eq.~\eqref{eq:conversion_rate_formula_forall_g}, we complete the proof of Eq.~\eqref{eq:rate_formula_compact_connected} in Theorem~\ref{thm:conversion_rate_projective_finite_number}.
\end{proof}
\subsection{Symmetry subgroup and the conversion rate}\label{sec:symmetry_subgroup_zero_rate}
We here prove that $\rap(\psi\to\phi)=0$ unless $\mathrm{Sym}_{G}(\psi)\subset \mathrm{Sym}_G(\phi)$. In fact, we can prove this fact not only for pure states but also for any states:
\begin{prop}
   If $\mathrm{Sym}_{G}(\rho)\not \subset \mathrm{Sym}_G(\sigma)$, then $\rap(\rho\to \sigma)=0$. 
\end{prop}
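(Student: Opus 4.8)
The plan is to argue by contradiction, exploiting the elementary fact that two distinct states become perfectly distinguishable in the many-copy limit, so that a positive conversion rate is incompatible with the $G$-covariance constraint. Fix an element $h\in\mathrm{Sym}_G(\rho)\setminus\mathrm{Sym}_G(\sigma)$, guaranteed by the hypothesis, so that $\mathcal{U}_h(\rho)=\rho$ while $\mathcal{U}_h'(\sigma)\neq\sigma$. Suppose, toward a contradiction, that some rate $r>0$ is achievable: there is a sequence $\{\mathcal{E}_N\}_N$ of $G$-covariant channels with $\epsilon_N\coloneqq T(\mathcal{E}_N(\rho^{\otimes N}),\sigma^{\otimes M})\to 0$, where $M\coloneqq\floor{rN}$. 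Writing $\tau_N\coloneqq\mathcal{E}_N(\rho^{\otimes N})$, the first step is to evaluate the $G$-covariance relation $\mathcal{E}_N\circ\mathcal{U}_h^{\otimes N}=\mathcal{U}_h^{\prime\otimes M}\circ\mathcal{E}_N$ on $\rho^{\otimes N}$. Since $\mathcal{U}_h^{\otimes N}(\rho^{\otimes N})=\rho^{\otimes N}$ by the choice of $h$, this immediately yields that $\tau_N$ is a fixed point of the unitary channel $\mathcal{U}_h^{\prime\otimes M}$, i.e.\ $\mathcal{U}_h^{\prime\otimes M}(\tau_N)=\tau_N$.

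Next I would convert this fixed-point property into a statement about $\sigma$ alone. Because $\mathcal{U}_h^{\prime\otimes M}$ is unitary it preserves the trace distance, so $T(\tau_N,\mathcal{U}_h^{\prime\otimes M}(\sigma^{\otimes M}))=T(\mathcal{U}_h^{\prime\otimes M}(\tau_N),\mathcal{U}_h^{\prime\otimes M}(\sigma^{\otimes M}))=T(\tau_N,\sigma^{\otimes M})=\epsilon_N$. Combining this with the triangle inequality gives
\[
T\!\left(\sigma^{\otimes M},(\mathcal{U}_h'(\sigma))^{\otimes M}\right)\le T\!\left(\sigma^{\otimes M},\tau_N\right)+T\!\left(\tau_N,\mathcal{U}_h^{\prime\otimes M}(\sigma^{\otimes M})\right)=2\epsilon_N,
\]
where I have used $(\mathcal{U}_h'(\sigma))^{\otimes M}=\mathcal{U}_h^{\prime\otimes M}(\sigma^{\otimes M})$. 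Thus the $M$-copy states of $\sigma$ and of the shifted state $\sigma'\coloneqq\mathcal{U}_h'(\sigma)$ are forced to become asymptotically indistinguishable.

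The decisive step is to show this is impossible. Since $h\notin\mathrm{Sym}_G(\sigma)$ we have $\sigma'\neq\sigma$, hence $c\coloneqq\mathrm{Fid}(\sigma,\sigma')<1$; multiplicativity of fidelity gives $\mathrm{Fid}(\sigma^{\otimes M},\sigma'^{\otimes M})=c^{M}$, and the Fuchs–van de Graaf inequality yields $T(\sigma^{\otimes M},\sigma'^{\otimes M})\ge 1-c^{M/2}$. Because $r>0$ forces $M=\floor{rN}\to\infty$, the bound $1-c^{M/2}\le 2\epsilon_N$ has a left-hand side tending to $1$ and a right-hand side tending to $0$, a contradiction. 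Therefore no $r>0$ is achievable; as $r=0$ is trivially achievable, $\rap(\rho\to\sigma)=0$. The only point requiring care is the bookkeeping that $r>0$ genuinely drives $M\to\infty$, so that the exponentially decaying overlap $c^{M}$ makes the two tensor-power ensembles perfectly distinguishable; everything else is a direct combination of $G$-covariance, the unitary invariance of the trace distance, and the multiplicativity of fidelity, and the argument applies verbatim to mixed $\rho$ and $\sigma$, generalizing Proposition~5 of \cite{marvian_mashhad_symmetry_2012} from exact to approximate asymptotic conversion.
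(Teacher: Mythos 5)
Your proof is correct and follows essentially the same route as the paper's: pick $h\in\mathrm{Sym}_G(\rho)\setminus\mathrm{Sym}_G(\sigma)$, use $G$-covariance together with $\mathcal{U}_h(\rho)=\rho$ to force the output state to be $\epsilon_N$-close to both $\sigma^{\otimes M}$ and $(\mathcal{U}_h'(\sigma))^{\otimes M}$, and contradict the multiplicativity-of-fidelity/Fuchs--van de Graaf lower bound on their trace distance. The only cosmetic differences are that the paper phrases the covariance step as a bound on $T(\mathcal{E}_N(\mathcal{U}_{g_*}(\rho)^{\otimes N}),\mathcal{U}_{g_*}'(\sigma)^{\otimes M})$ rather than via the fixed-point property of $\tau_N$, and it uses the uniform bound $1-\sqrt{\Delta}$ valid for all $M\ge 1$ instead of sending $M\to\infty$; both are equally valid.
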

\begin{proof}
     Suppose that $\mathrm{Sym}_{G}(\rho)\not\subset \mathrm{Sym}_G(\sigma)$. Then there exists an element $g_*\in G$ such that $g_*\in \mathrm{Sym}_{G}(\rho)$ and $g_*\notin \mathrm{Sym}_G(\sigma)$. Define $\Delta\coloneqq \mathrm{Fid}(\mathcal{U}_{g_*}(\sigma),\sigma)$. Since $g_*\notin \mathrm{Sym}_G(\sigma)$, we have $\mathrm{Fid}(\mathcal{U}_{g_*}(\sigma),\sigma)<1$ and hence $\mathrm{Fid}(\mathcal{U}_{g_*}(\sigma)^{\otimes N},\sigma^{\otimes N})=\mathrm{Fid}(\mathcal{U}_{g_*}(\sigma),\sigma)^N=\Delta^N\leq \Delta$, implying that $T(\mathcal{U}_{g_*}(\sigma)^{\otimes N},\sigma^{\otimes N})\geq 1-\sqrt{\mathrm{Fid}(\mathcal{U}_{g_*}(\sigma),\sigma)^N}\geq  1-\sqrt{\Delta}$ for any positive integer $N$. 

    Assume that $\{\rho^{\otimes N}\}_N\gconv \{\sigma^{\otimes \floor{rN}}\}_N$. Then for any $\epsilon>0$, there exists a sequence of $G$-covariant channels $\{\mathcal{E}_N\}_N$ such that
    \begin{align}
         T\left(\mathcal{E}_N\left(\rho^{\otimes N}\right),\sigma^{\otimes \floor{rN}}\right)\leq \epsilon
    \end{align}
    for all sufficiently large $N$. Note that this also implies
    \begin{align}
        \forall g\in G,\quad T\left(\mathcal{E}_N\left( \mathcal{U}_g(\rho)^{\otimes N}\right),\mathcal{U}_g'(\sigma)^{\otimes \floor{rN}}\right)\leq \epsilon
    \end{align}
    since $\mathcal{E}_N$ is a $G$-covariant channel. 
If $r>0$, we have $\floor{rN}\geq 1$ for $N\geq 1/r$. However, from the triangle inequality of trace distance, 
\begin{align}
    &T\left(\mathcal{U}_{g_*}(\sigma)^{\otimes \floor{rN}},\sigma^{\otimes \floor{rN}}\right)\\
    &\leq T\left(\mathcal{U}_{g_*}(\sigma)^{\otimes \floor{rN}},\mathcal{E}_N\left( \mathcal{U}_{g_*}(\rho)^{\otimes N}\right)\right)+T\left(\mathcal{E}_N\left( \mathcal{U}_{g_*}(\rho)^{\otimes N}\right),\mathcal{E}_N\left( \rho^{\otimes N}\right)\right)+T\left(\mathcal{E}_N\left( \rho^{\otimes N}\right),\sigma^{\otimes \floor{rN}}\right)\\
    &=T\left(\mathcal{U}_{g_*}(\sigma)^{\otimes \floor{rN}},\mathcal{E}_N\left( \mathcal{U}_{g_*}(\rho)^{\otimes N}\right)\right) +T\left(\mathcal{E}_N\left( \rho^{\otimes N}\right),\sigma^{\otimes \floor{rN}}\right)\\
    &\leq 2\epsilon
\end{align}
holds for all sufficiently large $N$, which contradicts $T(\mathcal{U}_{g_*}(\sigma)^{\otimes N},\sigma^{\otimes N})\geq  1-\sqrt{\Delta}$. Therefore, $r=0$. 
\end{proof}

\newpage
\section{Proof of the converse part}\label{app:section_for_converse_part}
In this section, we provide the detailed proof of the converse part, i.e., Eq.~\eqref{eq:converse_part_statement}. Section~\ref{app:symmetric_monotone_function} provides the discussion of the relation between the metric adjusted skew information and the QGT, and their asymptotic discontinuities, which provides complementary information to Sec.~\ref{sec:monotonicity_QGT_sketch}. Technical details of the proofs of the lemmas are provided in Sec.~\ref{app:sec_proof_all_lemmas_converse}.

\subsection{Metric adjusted skew information and the QGT}\label{app:symmetric_monotone_function}

We here first explain the reason why we adopted operator monotone functions that does not satisfy the symmetry condition in the main text. In the discussion of Petz's monotone function $\braket{\cdot,\cdot}_{f,\rho}$, a symmetry condition $f(t)=t f(t^{-1})$ is often imposed to the monotone function~\cite{petz_monotone_1996}, which is equivalent to $\braket{A,B}_{f,\rho}=\braket{B^\dag ,A^\dag}_{f,\rho}$ for any linear operators $A$ and $B$. However, as we shall see below, such a symmetric monotone metric does not contain the information of the anti-symmetric part of the QGT, which motivated us to introduce a family of operator monotone functions in $f_q$ in Eq.~\eqref{eq:definition_of_f_q}, which does not satisfy the symmetry condition if $q\neq 1/2$. Note that the special case for $q=1/2$ corresponds to $f_{\mathrm{SLD}}(x)= (1+x)/2$, which is associated with the symmetric logarithmic derivative (SLD) metric, playing a central role in quantum estimation theory.

Let $f^{(\mathrm{s})}$ be an operator monotone function satisfying $f^{(\mathrm{s})}(0)>0$ and the symmetry condition $f^{(\mathrm{s})}(t)=tf^{(\mathrm{s})}(t^{-1})$. From the definition of the monotone metric in Eq.~\eqref{eq:monotone_metric_definition}, we find
\begin{align}
    &\frac{f^{(\mathrm{s})}(0)}{2}\|\ii[\psi,O]\|_{f^{(\mathrm{s})},\psi}^2=\gamma^\dag\left(\frac{1}{2}\left(\mathcal{Q}^\psi+\left(\mathcal{Q}^\psi\right)^{*}\right)\right)\gamma,\label{eq:masi_generalization}
\end{align}
for $O\coloneqq \gamma^\dag X$. This means that it captures only the symmetric part of $\mathcal{Q}^\psi$. 

Note that the quantity $\frac{f^{(\mathrm{s})}(0)}{2}\|\ii[\rho,H]\|_{f^{(\mathrm{s})},\rho}^2$ for a Hermitian operator $H$ is called the metric adjusted skew information \cite{hansen_metric_2008}. It is known that the metric adjusted skew information for a pure state $\psi$ is equal to the variance, i.e.,
\begin{align}
    &\frac{f^{(\mathrm{s})}(0)}{2}\|\ii[\psi,H]\|_{f^{(\mathrm{s})},\psi}^2
    =V(\psi,H)\label{eq:masi_pure_states}.
\end{align}
This equation can be viewed as a special case of Eq.~\eqref{eq:masi_generalization}, where $\gamma$ is a real vector. 

Since the metric adjusted skew information is quadratic in $H$, for the same reason as the QGT explained in the main text, it has an asymptotic discontinuity~\cite{gour_measuring_2009,marvian_coherence_2020,marvian_operational_2022}, which requires careful treatment when analyzing its asymptotic rates. In Ref.~\cite{yamaguchi_smooth_2023}, the asymptotic behavior of metric adjusted skew information is studied, which enables to define valid asymmetry monotones obtained from asymptotic rates of the metric adjusted skew information. We here review the results in~\cite{yamaguchi_smooth_2023} since the proof of monotonicity of the QGT under asymptotic conversion are their variants.

For an arbitrary Hermitian operator $H$ we denote its i.i.d. extension by $H_N$, defined as $H_N\coloneqq \sum_{n=1}^NI^{\otimes n-1}\otimes H\otimes I^{\otimes N-n}$. For a sequence of states $\{\sigma_N\}_N$ such that $\limsup_{N\to\infty}T(\sigma_N,\phi^{\otimes N})\leq \epsilon$ for $\epsilon>0$, and an operator monotone function $ f^{(\mathrm{s})}$ satisfying $f^{(\mathrm{s})}(0)>0$ and the symmetry condition $f^{(\mathrm{s})}(t)=t f^{(\mathrm{s})}(t^{-1})$, Lemma~3 in \cite{yamaguchi_smooth_2023} shows that under a certain regularity condition, if $\epsilon>0$ is sufficiently small, it holds
\begin{align}
    &\frac{f^{(\mathrm{s})}(0)}{2}\|\ii [\sigma_N,H_N]\|_{f^{(\mathrm{s})},\sigma_N}^2\geq NV(\phi,H)- Nh(\epsilon)+o(N)\label{eq:asymptotics_norm_symmetric}
\end{align}
for all sufficiently large $N$, where $h$ is a real-valued function independent of $N$, satisfying $\lim_{\epsilon\to 0}h(\epsilon)=0$. The central limit theorem provides an intuitive explanation of this lemma. In the i.i.d. setting, the probability distribution of $\phi^{\otimes N}$ with respect to the eigenbasis of $H_N$ approaches to the normal distribution after properly redefining the random variable. It turns out that the left-hand side of Eq.~\eqref{eq:asymptotics_norm_symmetric} is approximated from below by the variance for a probability distribution modified due the the error between $\sigma_N$ and $\phi^{\otimes N}$. In general, modifying the distribution at values far from the mean can result in a considerable change in the variance. However, since the normal distribution has an exponentially small tail, it is impossible to significantly reduce the variance. In other words, the metric adjusted skew information approximately takes a local minimum around the i.i.d. pure states \cite{yamaguchi_smooth_2023}, expressed by Eq.~\eqref{eq:asymptotics_norm_symmetric}. Note that although the definition of convergence in the state conversion in \cite{yamaguchi_smooth_2023} is slightly different from ours, both yield the equivalent convertibility condition, as shown in Appendix~\ref{app:equivalent_definition_convertibility}.

By modifying Eq.~\eqref{eq:asymptotics_norm_symmetric} to make it applicable to an arbitrary linear operator $O$ instead of Hermitian operator $H$ without assuming $f$ is symmetric, we prove the following lemma, which yields a lower bound on the asymptotic behavior of the right-hand side of Eq.~\eqref{eq:monotonicity_fr_RTA_iid}:
\begin{lem}[Restatement of Eq.~\eqref{eq:asymptotics_norm_asymmetric}]\label{lem:asymptotics_norm_asymmetric}
    Let $f$ be an arbitrary operator monotone function such that $f(0)>0$ and $f(0)= \lim_{\epsilon\to 0^+}f(\epsilon)$. For a linear operator $O$ on a finite-dimensional Hilbert space, we denote its i.i.d. extension by $O_N$. Let $\{\sigma_N\}_N$ be an arbitrary sequence of states such that $\lim_{N\to\infty} T(\sigma_N,\phi^{\otimes N})=0$ for a pure state $\phi$. Then there exists a real-valued function $h$ independent of $N$ satisfying $\lim_{\epsilon\to 0}h(\epsilon)=0$ such that for any sufficiently small parameter $\epsilon>0$, it holds
    \begin{align}
        &f(0)\|\ii [\sigma_N,O_N]\|_{f,\sigma_N}^2\geq NV(\phi,O)-Nh(\epsilon)+o(N)\label{eq:asymptotics_norm_asymmetric_app}
    \end{align}
    for all sufficiently large $N$.
\end{lem}
Notice that the normalization of the left-hand sides of Eqs.~\eqref{eq:asymptotics_norm_symmetric} and \eqref{eq:asymptotics_norm_asymmetric_app} differs factor by $2$. This difference also appeared in Eqs.~\eqref{eq:masi_generalization} and \eqref{eq:norm_limit_QGT}, where the right-hand side of the former is symmetrized while that of the latter is not symmetrized. 

Lemma~\ref{lem:asymptotics_norm_asymmetric} is proven by carefully modifying the proof of Lemma~3 in \cite{yamaguchi_smooth_2023}. Since the proof is involved, the details are presented in Appendix~\ref{sec:asymptotics_norm_asymmetric}.

\subsection{Proof of Lemmas in the converse part}\label{app:sec_proof_all_lemmas_converse}

\subsubsection{Proof of Lemma~\ref{lem:monotonicity_fr}}\label{app:monotonicity_fr_proof}
In the literature, the monotonicity of a monotone metric is often proven under the assumption that $\rho$ and $\mathcal{E}(\rho)$ are invertible. We here explicitly prove the monotonicity of a monotone metric for an operator monotone function $f_q(x)=(1-q)+q x$ for $q\in(0,1)$ without assuming states are invertible for completeness. For this purpose, we here introduce several notations in \cite{hayashi_quantum_2017}. 
Let $\mathcal{M}(\mathcal{H})$ be the set of all linear operators on a Hilbert space $\mathcal{H}$. For an arbitrary probability distribution $p$ on $[0,1]$, we define a linear map $E_{p,\rho}:\mathcal{M}(\mathcal{H})\to \mathcal{M}(\mathcal{H})$ by
\begin{align}
     E_{p,\rho}(A)&\coloneqq\int_0^1\rho^\lambda A \rho^{1-\lambda} p(\dd\lambda).
\end{align}

Note that when $\rho$ is not invertible, the map $E_{p,\rho}$ has a non-trivial kernel. We denote the image of $E_{p,\rho}$ by $\mathcal{M}_{p,\rho}^{(m)}(\mathcal{H})$. For operators $A,B\in\mathcal{M}_{p,\rho}^{(m)}(\mathcal{H})$, we define $\braket{A,B}_{p,\rho}^{(m)}=\mathrm{Tr}\left(C^\dag B\right)$, where $C\in \mathcal{M}(\mathcal{H})$ is an arbitrary operator satisfying $E_{p,\rho}(C)=A$. This inner product is well defined since for operators $C,C'\in \mathcal{M}(\mathcal{H})$ satisfying $E_{p,\rho}(C)=E_{p,\rho}(C')=A$, we get $\mathrm{Tr}(C^\dag B)=\mathrm{Tr}(C^\dag E_{p,\rho}(D))=\mathrm{Tr}((E_{p,\rho}(C))^\dag D)=\mathrm{Tr}((E_{p,\rho}(C'))^\dag D)=\mathrm{Tr}(C^{\prime\dag }B)$, where $D\in \mathcal{M}(\mathcal{H})$ is an operator satisfying $E_{p,\rho}(D)=B$.

As shown in Theorem~6.1 in \cite{hayashi_quantum_2017}, the induced norm $\|A\|^{(m)}_{p,\rho}\coloneqq \sqrt{\braket{A,A}_{p,\rho}^{(m)}}$ is monotonically non-increasing under any CPTP map $\kappa$. That is, 
\begin{align}
   \|A\|_{p,\rho}^{(m)}\geq \|\kappa(A)\|_{p,\kappa(\rho)}^{(m)}\label{eq:monotonicity_m_inner_product}
\end{align}
holds for any operator $A$ such that $A\in \mathcal{M}_{p,\rho}^{(m)}(\mathcal{H})$ and $\kappa(A)\in \mathcal{M}_{p,\kappa(\rho)}^{(m)}(\mathcal{H})$.

In order to relate the above notation with ours, let us now consider a particular probability distribution $p_q\coloneqq (1-q)\delta_1+q \delta_0$ characterized by $q\in (0,1)$, where $\delta_0$ and $\delta_1$ denote the Dirac measure concentrated on $0$ and $1$, respectively. In this case, we have
\begin{align}
    E_{p_{q},\rho}(A)=(1-q)\rho A+q A\rho=\sum_{k,l}m_{f_q}(p_k,p_l)\ket{l}\braket{l|A|k}\bra{k}
\end{align}
where $\rho=\sum_{i=1}^dp_i\ket{i}\bra{i}$ is the eigenvalue decomposition of $\rho$, $m_f(x,y)\coloneqq yf(x/y)$ and $f_q$ is an operator monotone function defined in Eq.~\eqref{eq:definition_of_f_q}.
Then, we have the following:
\begin{lem}\label{lem:domain_of_inner_product}
    For any state $\rho$ and linear operator $B$, $\ii[\rho,B]\in \mathcal{M}_{p_{q},\rho}^{(m)}(\mathcal{H})$. 
\end{lem}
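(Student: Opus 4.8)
The plan is to diagonalize the map $E_{p_q,\rho}$ in the eigenbasis of $\rho$ and to identify its kernel explicitly. Writing the eigenvalue decomposition $\rho=\sum_k p_k\ket{k}\bra{k}$, I would first observe that $E_{p_q,\rho}$ acts diagonally on the operator basis $\{\ket{k}\bra{l}\}_{k,l}$: a direct computation gives
\begin{align}
    E_{p_q,\rho}(\ket{k}\bra{l})=\bigl((1-q)p_k+qp_l\bigr)\ket{k}\bra{l},
\end{align}
so each matrix element $\braket{k|\cdot|l}$ is merely rescaled by the factor $c_{kl}\coloneqq (1-q)p_k+qp_l$. Because $q\in(0,1)$, both coefficients $1-q$ and $q$ are strictly positive, and since $p_k,p_l\geq 0$, the factor $c_{kl}$ vanishes if and only if $p_k=p_l=0$. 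Consequently, the image $\mathcal{M}_{p_q,\rho}^{(m)}(\mathcal{H})$ is exactly the span of those $\ket{k}\bra{l}$ with $p_k>0$ or $p_l>0$; equivalently, an operator $A$ belongs to $\mathcal{M}_{p_q,\rho}^{(m)}(\mathcal{H})$ if and only if $\braket{k|A|l}=0$ for every pair $(k,l)$ with $p_k=p_l=0$.

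With this characterization the claim reduces to checking the matrix elements of the commutator. Computing $\braket{k|\ii[\rho,B]|l}=\ii(p_k-p_l)\braket{k|B|l}$, I would note that on precisely the pairs generating the kernel, namely $p_k=p_l=0$, the prefactor $(p_k-p_l)$ also vanishes. Hence $\ii[\rho,B]$ has no component along any kernel basis element, so it lies in the image of $E_{p_q,\rho}$, which is the assertion $\ii[\rho,B]\in\mathcal{M}_{p_q,\rho}^{(m)}(\mathcal{H})$. To make the membership fully explicit I would exhibit a preimage $C$ with
\begin{align}
    \braket{k|C|l}\coloneqq \begin{cases}\dfrac{\ii(p_k-p_l)}{(1-q)p_k+qp_l}\braket{k|B|l}&(c_{kl}>0)\\ 0&(c_{kl}=0)\end{cases},
\end{align}
and verify $E_{p_q,\rho}(C)=\ii[\rho,B]$ element by element, using that the right-hand side already vanishes whenever $c_{kl}=0$.

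I do not anticipate a genuine obstacle here: the entire content is the alignment of the two zero sets, i.e. that the commutator structure forces $\braket{k|\ii[\rho,B]|l}$ to vanish exactly where $c_{kl}$ does. The only point requiring care is the non-invertible case, which is precisely where the kernel of $E_{p_q,\rho}$ is nontrivial; the argument above handles it uniformly by working through the spectral decomposition of $\rho$ rather than assuming invertibility, which is exactly the generality needed to invoke the monotonicity inequality~\eqref{eq:monotonicity_m_inner_product} in the proof of Lemma~\ref{lem:monotonicity_fr}.
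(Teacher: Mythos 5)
Your proof is correct and follows essentially the same route as the paper: both arguments reduce to the observation that $E_{p_q,\rho}$ acts diagonally on $\{\ket{k}\bra{l}\}$ with eigenvalue $(1-q)p_k+qp_l$, which vanishes only when $p_k=p_l=0$, exactly where $\braket{k|\ii[\rho,B]|l}=\ii(p_k-p_l)\braket{k|B|l}$ also vanishes. Your explicit preimage $C$ coincides (after relabeling indices) with the operator $A$ the paper constructs and verifies directly.
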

\begin{proof}
    Let $\rho=\sum_{i=1}^dp_i\ket{i}\bra{i}$ be the eigenvalue decomposition of $\rho$. For an operator $A$ defined by
    \begin{align}
        A\coloneqq \sum_{k,l;m_{f_q}(p_k,p_l)>0}\frac{1}{m_{f_q}(p_k,p_l)}\ket{l}\braket{l|\ii [\rho,B]|k }\bra{k},
    \end{align}
    we get
    \begin{align}
        E_{p_{q},\rho}(A)&=\sum_{k,l;(1-q)p_l+qp_k>0}\frac{(1-q)p_l}{(1-q)p_l+qp_k}\ket{l}\braket{l|\ii [\rho,B]|k }\bra{k}+\sum_{k,l;(1-q)p_l+qp_k>0}\frac{qp_k}{(1-q)p_l+qp_k}\ket{l}\braket{l|\ii [\rho,B]|k }\bra{k}\\
        &=\sum_{k,l;(1-q)p_l+qp_k>0}\ket{l}\braket{l|\ii [\rho,B]|k }\bra{k}\\
        &=\ii [\rho,B]-\sum_{k,l;p_l=p_k=0}\ket{l}\braket{l|\ii [\rho,B]|k }\bra{k}\\
        &=\ii [\rho,B],
    \end{align}
    where we have used the fact that $(1-q)p_l+qp_k>0$ unless $p_k=p_l=0$.
\end{proof}

Therefore, for any state $\rho$ and linear operator $B$, we get
\begin{align}
    \left(\|\ii [\rho,B]\|_{p_{q},\rho}^{(m)}\right)^2&=\mathrm{Tr}\left(\left(\sum_{k,l;m_{f_q}(p_k,p_l)>0}\frac{1}{m_{f_q}(p_k,p_l)}\ket{l}\braket{l|\ii [\rho,B]|k }\bra{k}\right)^\dag(\ii [\rho,B])\right)\\
    &=\braket{\ii [\rho,B],\ii[\rho,B]}_{f_q,\rho},
\end{align}
where we used Eq.~\eqref{eq:monotone_metric_definition} in the last equality.
Using this relation, the monotonicity in Eq.~\eqref{eq:monotonicity_m_inner_product} completes the proof of Lemma~\ref{lem:monotonicity_fr}.

\subsubsection{Proof of the equivalence between two alternative definitions of asymptotic convertibility}\label{app:equivalent_definition_convertibility}
\begin{prop}\label{prop:equivalent_definition_convertibility}
    For any sequences of states $\{\rho_N\}_N$ and $\{\sigma_N\}_N$, the following two conditions are equivalent:
    \begin{enumerate}[(i)]
        \item For any $\epsilon>0$, there exists $N_0$ such that for any positive integer $N\geq N_0$, there exists a $G$-covariant channel $\mathcal{E}_N$ satisfying $T(\mathcal{E}_N(\rho_N),\sigma_N)\leq \epsilon$.
        \item There exists a sequence of $G$-covariant channels $\{\mathcal{E}_N\}_N$ such that $\lim_{N\rightarrow \infty}T({\mathcal E}_N(\rho_N),\sigma_{N})=0 $.
    \end{enumerate}
\end{prop}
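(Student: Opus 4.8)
The plan is to prove the two implications separately, with the bulk of the work in one direction. The implication (ii)$\Rightarrow$(i) is immediate: if a single sequence $\{\mathcal{E}_N\}_N$ of $G$-covariant channels satisfies $\lim_{N\to\infty}T(\mathcal{E}_N(\rho_N),\sigma_N)=0$, then by the definition of the limit, for every $\epsilon>0$ there is an $N_0$ such that $T(\mathcal{E}_N(\rho_N),\sigma_N)\leq\epsilon$ for all $N\geq N_0$, which is exactly statement (i) realized by the very same channels. The substantive content is therefore the converse.

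For (i)$\Rightarrow$(ii), I would use a diagonalization argument. Applying (i) to the decreasing sequence $\epsilon_k\coloneqq 1/k$ for $k=1,2,\dots$, I obtain thresholds $N_k$ and, for every $N\geq N_k$, a $G$-covariant channel $\mathcal{E}_N^{(k)}$ with $T(\mathcal{E}_N^{(k)}(\rho_N),\sigma_N)\leq 1/k$. Without loss of generality the thresholds may be taken strictly increasing, $N_1<N_2<\cdots$, by replacing $N_k$ with $\max\{N_k,N_{k-1}+1\}$ if necessary. I then define a single sequence $\{\mathcal{E}_N\}_N$ by setting $\mathcal{E}_N\coloneqq\mathcal{E}_N^{(k)}$ whenever $N_k\leq N<N_{k+1}$, and, for the finitely many indices $N<N_1$, letting $\mathcal{E}_N$ be any fixed $G$-covariant channel.

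To finish I would verify two points. First, a $G$-covariant channel filling the initial segment always exists: the channel that discards the input and prepares the maximally mixed output state is $G$-covariant, since $U'(g)(\mathbb{I}/d')U'(g)^\dag=\mathbb{I}/d'$ for every $g\in G$. Second, the constructed sequence achieves the limit: given $\epsilon>0$, choose $k$ with $1/k<\epsilon$; for every $N\geq N_k$ there is a unique $j\geq k$ with $N_j\leq N<N_{j+1}$, whence $T(\mathcal{E}_N(\rho_N),\sigma_N)=T(\mathcal{E}_N^{(j)}(\rho_N),\sigma_N)\leq 1/j\leq 1/k<\epsilon$, so $\lim_{N\to\infty}T(\mathcal{E}_N(\rho_N),\sigma_N)=0$, establishing (ii). The only mild obstacle is bookkeeping, namely arranging the thresholds to be strictly increasing so that the stitching is well defined; the conceptual point is that the channels supplied by (i) may depend on $\epsilon$, and the diagonalization is precisely what removes this dependence to yield one universal sequence.
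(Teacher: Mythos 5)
Your proposal is correct and follows essentially the same route as the paper's proof: the trivial direction is handled identically, and the substantive direction is the same stitching/diagonalization over a vanishing sequence $\epsilon_k$ with increasing thresholds. The only (welcome) addition is that you explicitly exhibit a $G$-covariant channel for the initial segment, which the paper leaves as "arbitrarily fix."
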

\begin{proof}
    (ii)$\implies$(i): Any sequence of $G$-covariant channels $\{\mathcal{E}_N\}_N$ such that $ \lim_{N\to\infty}T({\mathcal E}_N(\rho_N),\sigma_{N})=0$ provides an example of the channels satisfying condition (i). 

    (i)$\implies$(ii): Fix a sequence of positive numbers $\{\epsilon_k\}_k$ such that $\lim_{k\to\infty}\epsilon_k=0$. Assume the condition (i) is true. Then for each $\epsilon_k>0$, $\exists N_0(\epsilon_k)>0$ such that $\forall N>N_0(\epsilon_k)$, $\exists {\mathcal E}_{N,\epsilon_k}:G\text{-cov.} \ {\rm s.t.}\  T({\mathcal E}_N(\rho_N),\sigma_{N})\leq \epsilon_k$. Without loss of generality, we can assume $N_0(\epsilon_{k+1})>N_{0}(\epsilon_k)$. Let us define $G$-covariant channels $\{\mathcal{E}_N\}_N$ by
    \begin{align}
        \mathcal{E}_N\coloneqq \mathcal{E}_{N,\epsilon_k}\text{ if }N_0(\epsilon_k)\leq N<N_0(\epsilon_{k+1})
    \end{align}
    for $N\geq N(\epsilon_1)$. For $N<N(\epsilon_1)$, we arbitrarily fix a $G$-covariant channel $\mathcal{E}_N$. Then, for any $N\geq N_0(\epsilon_k)$, we have $T(\mathcal{E}_N(\rho_N),\sigma_N)\leq \epsilon_k$. Thus, we get $\lim_{N\to\infty}T(\mathcal{E}_N(\rho_N),\sigma_N)\leq \epsilon_k$. In the limit of $k\to\infty$, we find $\lim_{N\to\infty}T(\mathcal{E}_N(\rho_N),\sigma_N)=0$ since $\{\epsilon_k\}_k$ is assumed to satisfy $\lim_{k\to\infty}\epsilon_k=0$.
\end{proof}

For example, in studies on convertibility in RTA, condition~(i) is adopted in \cite{yamaguchi_beyond_2023,yamaguchi_smooth_2023}, while condition~(ii) is employed in \cite{marvian_coherence_2020,marvian_operational_2022,shitara_iid_2024} and the present paper. Note that such an equivalence is not specific to the convertibility via $G$-covariant channels, and a similar argument can be found, e.g., in \cite{ogawa_strong_2000}.

\subsubsection{Proof of Lemma~\ref{lem:asymptotics_norm_asymmetric}}\label{sec:asymptotics_norm_asymmetric}

Let us first prove a lemma relating an inner product on a (mixed) state to a covariance matrix of its eigenstate with the largest eigenvalue when the state is close to a pure state.  
\begin{lem}[Modification of Corollary~1 in \cite{marvian_coherence_2020} and Lemma~9 in \cite{yamaguchi_smooth_2023}]\label{lem:largest_ev}
    Let $\sigma$ be an arbitrary state and $\phi$ be a pure state. Suppose that their infidelity by $\delta\coloneqq 1-\braket{\phi|\sigma|\phi}$ satisfies $\delta<1/2$. For the eigenvector $\ket{\Phi}$ with the largest eigenvalue of $\sigma$, it holds
    \begin{align}
        |\braket{\phi|\Phi}|^2\geq 1-2\delta.
    \end{align}
    Let $f$ be an operator monotone function such that $f(0)>0$. Then for any linear operator $O$, it holds
    \begin{align}
    &\braket{\ii [\sigma ,O],\ii [\sigma,O]}_{f,\sigma}\geq \frac{1}{f\left(\frac{\delta}{1-\delta}\right)}(1-2\delta)^2\mathrm{Tr}\left(\Phi O (I-\Phi)O^\dag \right),\label{eq:O_inner_prod_lower_bound}
    \end{align}
    where $\Phi=\ket{\Phi}\bra{\Phi}$.
\end{lem}
The first half is proven in Corollary~1 in \cite{marvian_coherence_2020}. 
The second half is a modification of Corollary~1 in \cite{marvian_coherence_2020} and Lemma~9 in \cite{yamaguchi_smooth_2023}. In Corollary~1 in \cite{marvian_coherence_2020} and Lemma~9 in \cite{yamaguchi_smooth_2023}, a similar inequality was proven for Hermitian operators $O$ and monotone functions $f$ satisfying the symmetry condition $f(t)=tf(t^{-1})$. Equation~\eqref{eq:O_inner_prod_lower_bound} differs by a factor of 2 from those in \cite{marvian_coherence_2020,yamaguchi_smooth_2023}, and it is applicable to monotone functions without assuming the symmetry condition. 
\begin{proof}[Proof of Lemma~\ref{lem:largest_ev}]
    Let $p$ be the largest eigenvalue of $\sigma$. The eigenvalue decomposition of $\sigma$ is written as
    \begin{align}
        \sigma=\sum_{k=0}^{d-1}\lambda_k\ket{k}\bra{k}=p\ket{\Phi}\bra{\Phi}+\sum_{k=1}^{d-1}\lambda_k\ket{k}\bra{k},
    \end{align}
    where $\lambda_0\coloneqq p$ and $\ket{0}\coloneqq \ket{\Phi}$. Since $p$ is assumed to be the largest eigenvalue, we have
    \begin{align}
        1-\delta=p|\braket{\phi|\Phi}|^2+\sum_{k=1}^{d-1}\lambda_k|\braket{\phi|k}|^2\leq p\sum_{k=0}^{d-1}|\braket{\phi|k}|^2=p.
    \end{align}
    Therefore, we get
    \begin{align}
        1-\delta&=\braket{\phi|\sigma|\phi}\\
        &=p |\braket{\phi|\Phi}|^2+\sum_{k=1}^{d-1}\lambda_k|\braket{\phi|k}|^2\\
        &\leq p |\braket{\phi|\Phi}|^2+(1-p) \sum_{k=1}^{d-1}|\braket{\phi|k}|^2\\
        &\leq p |\braket{\phi|\Phi}|^2  +\delta (1-|\braket{\phi|\Phi}|^2),
    \end{align}
    meaning that 
    \begin{align}
        |\braket{\phi|\Phi}|^2\geq \frac{1-2\delta}{p-\delta}\geq 1-2\delta,
    \end{align}
    where we have used $0<1-2\delta\leq p-\delta\leq 1$. 
    
    The second half of this lemma is a modification of Corollary~1 in \cite{marvian_coherence_2020} and Lemma~9 in \cite{yamaguchi_smooth_2023}. For any operator monotone function $f$, we have
    \begin{align}
        \braket{\ii [\sigma,O],\ii [\sigma,O]}_{f,\sigma}
        &=\sum_{k,l;\lambda_lf(\lambda_k/\lambda_l)>0}\frac{1}{\lambda_lf(\lambda_k/\lambda_l)}\braket{k|[\sigma,O^\dag]|l}\braket{l|[O,\sigma]|k}\\
        &=\sum_{k,l;\lambda_lf(\lambda_k/\lambda_l)>0}\frac{(\lambda_l-\lambda_k)^2}{\lambda_lf(\lambda_k/\lambda_l)}|\braket{l|O|k}|^2+\sum_{k,l;\lambda_lf(\lambda_k/\lambda_l)>0,l\neq 0}\frac{(\lambda_l-\lambda_k)^2}{\lambda_lf(\lambda_k/\lambda_l)}|\braket{l|O|k}|^2\\
        &\geq \sum_{k;pf(\lambda_k/p)>0}\frac{(p-\lambda_k)^2}{pf(\lambda_k/p)}|\braket{\Phi|O|k}|^2\\
        &=
        \sum_{k=1}^{d-1}\frac{(p-\lambda_k)^2}{pf(\lambda_k/p)}|\braket{\Phi|O|k}|^2,
    \end{align}
    where we have used $pf(\lambda_k/p)\geq pf(0)>0$ for all $k=0,\cdots,d-1$ in the last equality.
    Since $p-\lambda_k\geq 1-2\delta$, we have $(p-\lambda_k)^2\geq (1-2\delta)^2$ if $\delta\leq 1/2$. Moreover, since $\lambda_k/p\leq \delta/(1-\delta)$ and $p\leq 1$, we get $p f(\lambda_k/p)\leq f(\lambda_k/p)\leq f(\delta/(1-\delta))$ for any monotonic function $f$. Therefore,
    \begin{align}
        \braket{\ii [\sigma,O],\ii [\sigma,O]}_{f,\sigma}
        &\geq  \sum_{k=1}^{d-1}\frac{(p-\lambda_k)^2}{pf(\lambda_k/p)}|\braket{\Phi|O|k}|^2\\
        &\geq \frac{1}{f\left(\frac{\delta}{1-\delta}\right)}(1-2\delta)^2\sum_{k=1}^{d-1}|\braket{\Phi|O|k}|^2\\
        &=\frac{1}{f\left(\frac{\delta}{1-\delta}\right)}(1-2\delta)^2\braket{\Phi|O(I-\Phi)O^\dag |\Phi}.
    \end{align}
\end{proof}

In the asymptotic conversion scenario, the initial i.i.d. pure state is transformed into a state close to i.i.d. copies of the target pure state. In order to further bound the right hand side of Eq.~\eqref{eq:O_inner_prod_lower_bound}, let us consider any linear operator $O$ on $\mathcal{H}$ and its i.i.d. extension $O_N$ defined by
\begin{align}
        O_N&\coloneqq \sum_{n=1}^N O^{(n)},\quad O^{(n)}\coloneqq I^{\otimes n-1}\otimes O\otimes I^{\otimes N-n}.\label{eq:iid_extension_O}
\end{align}
By decomposing $O$ into Hermitian and anti-Hermitian parts as
\begin{align}
        O=A+\ii B,\quad A\coloneqq \frac{O+O^\dag}{2},\quad B\coloneqq \frac{O-O^\dag}{2\ii},
\end{align}
we get
\begin{align}
    O_N&=A_N+\ii B_N,\\
    A_N&\coloneqq \sum_{n=1}^NA^{(n)},\quad A^{(n)}\coloneqq I^{\otimes n-1}\otimes A\otimes I^{\otimes N-n},\label{eq:iid_extension_A}\\
    B_N&\coloneqq \sum_{n=1}^NB^{(n)},\quad B^{(n)}\coloneqq I^{\otimes n-1}\otimes B\otimes I^{\otimes N-n}.\label{eq:iid_extension_B}
\end{align}
For any pure state $\Phi_N$, we find
\begin{align}
    \Braket{\Phi_N|O_N(I-\Phi_N)O_N^\dag|\Phi_N}
    &=V(\Phi_N,A_N)+V(\Phi_N,B_N)-\ii \braket{\Phi_N|[A_N,B_N]|\Phi_N},\label{eq:var_var_commutator}
\end{align}
where $V$ denotes the variance $V(\Phi_N,A_N)\coloneqq \braket{\Phi_N|A_N^2|\Phi_N}-\braket{\Phi_N|A_N|\Phi_N}^2$.
We derive a lower bound for each contribution in Eq.~\eqref{eq:var_var_commutator} under the assumption that $\Phi_N$ is sufficiently close to an i.i.d. copies of a pure state $\phi^{\otimes N}$.

A bound for the last term in Eq.~\eqref{eq:var_var_commutator} can be easily obtained as follows:
\begin{lem}\label{lem:antisym_lower_bound}
    Let $A,B$ be arbitrary Hermitian operators. We denote their i.i.d. extensions by $A_N$ and $B_N$ defined in Eqs.~\eqref{eq:iid_extension_A} and \eqref{eq:iid_extension_B}. For arbitrary state $\rho_N$ and $\sigma_N$, it holds 
    \begin{align}
        \left|\mathrm{Tr}\left(\ii [A_N,B_N]\rho_N\right)-\mathrm{Tr}\left(\ii [A_N,B_N]\sigma_N\right)\right|
        &\leq N\|\ii [A,B]\|_\infty\|\rho_N-\sigma_N\|_1,
    \end{align}
    where $\|\cdot\|_\infty$ denotes the operator norm.
\end{lem}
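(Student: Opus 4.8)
The plan is to reduce the two-body commutator $\ii[A_N,B_N]$ to the i.i.d. extension of a single-copy Hermitian operator by exploiting the tensor-product structure, after which the estimate follows from Schatten-norm duality together with the triangle inequality.

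First I would expand both i.i.d. extensions and compute the commutator directly. Writing $A_N=\sum_{n=1}^N A^{(n)}$ and $B_N=\sum_{m=1}^N B^{(m)}$, one has $[A_N,B_N]=\sum_{n,m=1}^N [A^{(n)},B^{(m)}]$. The key observation is that $A^{(n)}$ and $B^{(m)}$ act nontrivially only on the $n$th and $m$th tensor factors, respectively; hence whenever $n\neq m$ they are supported on disjoint factors and commute, so $[A^{(n)},B^{(m)}]=0$. Only the diagonal terms $n=m$ survive, and each equals $[A,B]^{(n)}$, i.e.\ the operator acting as $[A,B]$ on the $n$th factor and trivially elsewhere. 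Consequently,
\begin{align}
    \ii[A_N,B_N]=\sum_{n=1}^N \ii[A,B]^{(n)}=\left(\ii[A,B]\right)_N,
\end{align}
the i.i.d. extension of the single-copy operator $\ii[A,B]$, which is Hermitian because $A$ and $B$ are Hermitian.

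Next I would invoke trace-norm duality. For any bounded operator $C$ and any two states $\rho_N,\sigma_N$, the H\"older inequality for Schatten norms gives $\left|\mathrm{Tr}\left(C(\rho_N-\sigma_N)\right)\right|\leq \|C\|_\infty\,\|\rho_N-\sigma_N\|_1$. Applying this with $C=\left(\ii[A,B]\right)_N$ yields
\begin{align}
    \left|\mathrm{Tr}\left(\ii[A_N,B_N]\rho_N\right)-\mathrm{Tr}\left(\ii[A_N,B_N]\sigma_N\right)\right|\leq \left\|\left(\ii[A,B]\right)_N\right\|_\infty\,\|\rho_N-\sigma_N\|_1.
\end{align}

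Finally I would bound the operator norm of the extension. By the triangle inequality, together with the fact that tensoring with the identity leaves the operator norm unchanged, one obtains $\left\|\left(\ii[A,B]\right)_N\right\|_\infty\leq \sum_{n=1}^N\left\|\ii[A,B]^{(n)}\right\|_\infty=N\|\ii[A,B]\|_\infty$. Combining the three steps gives the claimed inequality. I do not expect any serious obstacle here: the only point requiring care is the commutator collapse in the first step, which hinges on the locality of the single-site operators $A^{(n)}$ and $B^{(m)}$; once this identity is in hand, the remaining estimates are routine consequences of Schatten-norm duality and the triangle inequality.
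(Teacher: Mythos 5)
Your proof is correct and follows essentially the same route as the paper: collapse the double sum in $[A_N,B_N]$ to the diagonal terms $\sum_n [A^{(n)},B^{(n)}]$, apply the trace-norm/operator-norm H\"older inequality to $\mathrm{Tr}\left(\ii[A_N,B_N](\rho_N-\sigma_N)\right)$, and bound the operator norm of the sum by $N\|\ii[A,B]\|_\infty$ via the triangle inequality. No gaps.
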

\begin{proof}
    Since
    \begin{align}
    \ii [A_N,B_N]=\ii \sum_{n,n'=1}^N[A^{(n)},B^{(n')}]
    &=\ii \sum_{n=1}^N[A^{(n)},B^{(n)}]
    \end{align}
    we get
    \begin{align}
    \| \ii [A_N,B_N]\|_\infty\leq \sum_{n=1}^N\|\ii [A^{(n)},B^{(n)}]\|_\infty=N\|\ii[A,B]\|_\infty.
    \end{align}
    Therefore,
    \begin{align}
    \left|\mathrm{Tr}\left(\ii [A_N,B_N]\rho_N\right)-\mathrm{Tr}\left(\ii [A_N,B_N]\sigma_N\right)\right|
    &=\left|\mathrm{Tr}\left(\ii [A_N,B_N](\rho_N-\sigma_N)\right)\right|\\
    &\leq \| \ii [A_N,B_N]\|_\infty\|\rho_N-\sigma_N\|_1\\
    &\leq N\|\ii[A,B]\|_\infty\|\rho_N-\sigma_N\|_1.
\end{align}
\end{proof}
Since $\ii[A_N,B_N]$ is Hermitian, its expectation value is a real number. Applying this lemma for pure states $\Phi_N$ and $\phi^{\otimes N}$, we get
\begin{align}
    -\ii \braket{\Phi_N|[A_N,B_N]|\Phi_N}
    &\geq -\ii \braket{\phi^{\otimes N}|[A_N,B_N]|\phi^{\otimes N}}-N\|\ii[A,B]\|_\infty\|\Phi_N-\phi^{\otimes N}\|_1.\label{eq:comm_l_b}
\end{align}

On the other hand, the variances can change more drastically in general. Nevertheless, for a pure state close to an i.i.d. pure state, we can prove the following bound:
\begin{lem}[Extension of Lemma~10 in \cite{yamaguchi_smooth_2023}]\label{lem:variance_lower_bound}
    Let $A$ be an arbitrary Hermitian operator on a finite-dimensional Hilbert space $\mathcal{H}$. We denote their i.i.d. extensions by $A_N$ defined in Eq.~\eqref{eq:iid_extension_A}. Fix a pure state $\phi\in\mathcal{P}(\mathcal{H})$ and a sufficiently small parameter $\epsilon$. 
    There exists a function $h_1(\epsilon)$ independent of $N$ satisfying $\lim_{\epsilon\to 0}h_1(\epsilon)=0$ such that
    for any sequence of pure states $\{\Phi_N\}_N$ satisfying $\limsup_{N\to\infty}T(\Phi_N,\phi^{\otimes N})<\epsilon$, it holds
    \begin{align}
        V(\Phi_N,A_N)\geq (1- h_1(\epsilon))V(\phi^{\otimes N},A_N)\label{eq:variance_lower_bound}
    \end{align}
    for all sufficiently large $N$. 
\end{lem}

As we have mentioned in Sec.~\ref{sec:monotonicity_QGT_sketch}, the variance per copy can generally change of the order of $O(N\epsilon_N)$ for a state $\Phi_N$ that is $\epsilon_N$-close to $\phi^{\otimes N}$ in trace distance. Even when $\lim_{N\to\infty}\epsilon_N=0$, the variance per copy may change significantly in the limit of $N\to\infty$, and therefore, the variance has an asymptotic discontinuity. Nevertheless, Lemma~\ref{lem:variance_lower_bound} shows that when the state deviates slightly from $\phi^{\otimes N}$, the variance per copy can decrease at most $h_1(\epsilon)V(\phi,A)$, which is independent of $N$.
The central limit theorem provides insight into this mechanism. The probability distribution of $\phi^{\otimes N}$ with respect to the observable $A_N$ is approximated by the normal distribution as $N\to\infty$ after properly redefining the random variable. When this probability distribution is modified due to the change in state, the variance cannot be reduced radically because the tail of the normal distribution is exponentially small. 
\begin{proof}[Proof of Lemma~\ref{lem:variance_lower_bound}]
    When all the eigenvalues of $A$ are degenerate, the statement trivially holds since $V(\Phi_N,A_N)=V(\phi^{\otimes N},A_N)=0$. Below, we assume the difference between the largest and smallest eigenvalues of $A$ is non-vanishing, which we denote $\Delta a>0$. Similarly, if the variance $\sigma^2\coloneqq \mathrm{Tr}(\phi A^2)-(\mathrm{Tr}(\phi A))^2$ vanishes for $\phi$, then the inequality is trivial since $V(\phi^{\otimes N},A_N)=N\sigma^2=0$. Therefore, we also assume that $\sigma>0$. 
    
    Define
    \begin{align}
        \Delta\mu_N&\coloneqq \mathrm{Tr}\left(\phi^{\otimes N}A_N\right)-\mathrm{Tr}\left(\Phi_NA_N\right),\\
        Q_N&\coloneqq \left(A_N-\mathrm{Tr}\left(\Phi_NA_N\right)I\right)^2=(A_N-\mathrm{Tr}\left(\phi^{\otimes N}A_N\right)I+\Delta\mu_NI)^2.
    \end{align}
    Since
    \begin{align}
    &\mathrm{Tr}\left(\phi^{\otimes N}(A_N-\mathrm{Tr}\left(\Phi_NA_N\right)I)^2\right)=\mathrm{Tr}\left(\phi^{\otimes N}(A_N-\mathrm{Tr}\left(\phi^{\otimes N}A_N\right)I)^2\right)+|\Delta\mu_N|^2,
\end{align}
    we have
    \begin{align}
        &V(\Phi_N,A_N)-V(\phi^{\otimes N},A_N)=|\Delta \mu_N|^2-\mathrm{Tr}\left(Q_N(\phi^{\otimes N}-\Phi_N)\right).
    \end{align}

    Here, we derive an upper bound of $\mathrm{Tr}\left(Q_N(\phi^{\otimes N}-\Phi_N)\right)$. Let us denote the eigenvalue decomposition of $A$ by $A=\sum_{j=1}^da_j\ket{j}\bra{j}$. 
    Its i.i.d. extension is written as
    \begin{align}
        A_N=\sum_{j_1=1}^{d-1}\cdots\sum_{j_N=1}^{d-1}s(\bm{j})\ket{\bm{j}}\bra{\bm{j}},\quad \ket{\bm{j}}\coloneqq \ket{j_1\otimes j_2\otimes\cdots\otimes j_N},\quad s(\bm{j})\coloneqq \sum_{n=1}^Na_{j_n},
    \end{align}
    where $\bm{j}\coloneqq (j_1,\cdots,j_N)$. In the following, we abbreviate $\sum_{j_1=1}^{d-1}\cdots\sum_{j_N=1}^{d-1}$ as $\sum_{\bm{j}}$.
    
    We decompose labels for eigenvectors of $A_N$ into two disjoint sets:
    \begin{align}
        \mathcal{A}^{(\mathrm{core})}(x)&\coloneqq \left\{\bm{j}\ \middle|\ \left|\frac{\sqrt{N}}{\sigma}\left( \frac{s(\bm{j})}{N}-\mu\right)\right|\leq x\right\}, \quad \mathcal{A}^{(\mathrm{tail})}(x)\coloneqq \left\{\bm{j}\ \middle|\  \left|\frac{\sqrt{N}}{\sigma}\left( \frac{s(\bm{j})}{N}-\mu\right)\right|> x\right\},
    \end{align}
    where $\mu\coloneqq \mathrm{Tr}(\phi A)$, $\sigma^2\coloneqq \mathrm{Tr}(\phi A^2)-(\mathrm{Tr}(\phi A))^2$, and $x>0$ is a positive parameter that will be fixed later. 
    
    Since $Q_N$ is diagonal in $\{\ket{\bm{j}}\}$ basis, we can also decompose $Q_N$ as follows:
    \begin{align}
        Q_N&=Q_N^{\mathrm{(core)}}(x)+Q_N^{\mathrm{(tail)}}(x),\\
        Q_N^{\mathrm{(core)}}(x)&\coloneqq \sum_{\bm{j}\in\mathcal{A}^{(\mathrm{core})}(x)}(s(\bm{j})-N\mu+\Delta\mu_N)^2\ket{\bm{j}}\bra{\bm{j}},\\
        Q_N^{\mathrm{(tail)}}(x)&\coloneqq \sum_{\bm{j}\in\mathcal{A}^{(\mathrm{tail})}(x)}(s(\bm{j})-N\mu+\Delta\mu_N)^2\ket{\bm{j}}\bra{\bm{j}}.
    \end{align}
    Thus, we get
    \begin{align}
        \mathrm{Tr}\left(Q_N(\phi^{\otimes N}-\Phi_N)\right)&=\mathrm{Tr}\left(Q_N^{(\mathrm{core})}(x)(\phi^{\otimes N}-\Phi_N)\right)+\mathrm{Tr}\left(Q_N^{(\mathrm{tail})}(x)(\phi^{\otimes N}-\Phi_N)\right).
    \end{align}
    Since the largest eigenvalue of $Q_N^{(\mathrm{core})}(x)$ is less than $(x\sqrt{N}\sigma+|\Delta\mu_N|)^2$, the first term is bounded as 
    \begin{align}
        \mathrm{Tr}\left(Q_N^{(\mathrm{core})}(x)(\phi^{\otimes N}-\Phi_N)\right) 
        &\leq\left|\mathrm{Tr}\left(Q_N^{(\mathrm{core})}(x)(\phi^{\otimes N}-\Phi_N)\right)\right|\\
        &\leq (x\sqrt{N}\sigma+|\Delta\mu_N|)^2\|\phi^{\otimes N}-\Phi_N\|_1\\
        &\leq 2(x\sqrt{N\sigma^2 }+|\Delta\mu_N|)^2\epsilon\label{eq:core_bound}
    \end{align}
    for all sufficiently large $N$.

    Since $Q_N^{(\mathrm{tail})}(x)$ is a positive operator, we have
    \begin{align}
        &\mathrm{Tr}\left(Q_N^{(\mathrm{tail})}(x)(\phi^{\otimes N}-\Phi_N)\right)
        \leq \mathrm{Tr}\left(Q_N^{(\mathrm{tail})}(x)\phi^{\otimes N}\right).
    \end{align}
    Let $p$ denote a probability distribution defined by $p(j)\coloneqq |\braket{j|\phi}|^2$. We denote its i.i.d. extension by
    \begin{align}
        p^{\otimes N}(\bm{j})\coloneqq p(j_1)p(j_2)\cdots p(j_N).
    \end{align}
    Then we have
    \begin{align}
        \mathrm{Tr}\left(Q_N^{(\mathrm{tail})}(x)\phi^{\otimes N}\right)&=\sum_{\bm{j}\in\mathcal{A}^{(\mathrm{tail})}(x)}(s(\bm{j})-N\mu+\Delta\mu_N)^2p^{\otimes N}(\bm{j})\\
        &=\sum_{\bm{j}\in\mathcal{A}^{(\mathrm{tail})}(x)}\left((s(\bm{j})-N\mu)^2+2\Delta\mu_N(s(\bm{j})-N\mu)+|\Delta\mu_N|^2\right)p^{\otimes N}(\bm{j})\\
        &\leq \sum_{\bm{j}\in\mathcal{A}^{(\mathrm{tail})}(x)}\left((s(\bm{j})-N\mu)^2+2|\Delta\mu_N||s(\bm{j})-N\mu|+|\Delta\mu_N|^2\right)p^{\otimes N}(\bm{j}).
    \end{align}
    From the Cauchy-Schwarz inequality, the second term is bounded as
    \begin{align}
        \sum_{\bm{j}\in\mathcal{A}^{(\mathrm{tail})}(x)}|s(\bm{j})-N\mu|p^{\otimes N}(\bm{j})\leq \sqrt{\sum_{\bm{j}\in\mathcal{A}^{(\mathrm{tail})}(x)}|s(\bm{j})-N\mu|^2p^{\otimes N}(\bm{j})}\sqrt{\sum_{\bm{j}\in\mathcal{A}^{(\mathrm{tail})}(x)}p^{\otimes N}(\bm{j})}.
    \end{align}
    Therefore, we get
    \begin{align}
        \mathrm{Tr}\left(Q_N^{(\mathrm{tail})}(x)\phi^{\otimes N}\right)\leq \left(\sqrt{\sum_{\bm{j}\in\mathcal{A}^{(\mathrm{tail})}(x)}(s(\bm{j})-N\mu)^2p^{\otimes N}(\bm{j})}+|\Delta\mu_N|\sqrt{\sum_{\bm{j}\in\mathcal{A}^{(\mathrm{tail})}(x)}p^{\otimes N}(\bm{j})}\right)^2.
    \end{align}
    
    For i.i.d. random variables $Y_1,\cdots,Y_N\sim p$, we denote
    \begin{align}
        Z_N\coloneqq \frac{\sqrt{N}}{\sigma}\left(\frac{1}{N}\sum_{i=1}^NY_i -\mu\right)=\frac{1}{\sqrt{N}\sigma}\left(\sum_{i=1}^NY_i -N\mu\right).
    \end{align}
   The tail contribution can be recast into
    \begin{align}
        \sum_{\bm{j}\in\mathcal{A}^{(\mathrm{tail})}(x)}(s(\bm{j})-N\mu)^2p^{\otimes N}(\bm{j})&=N\sigma^2-\sum_{\bm{j}\in\mathcal{A}^{(\mathrm{core})}(x)}(s(\bm{j})-N\mu)^2p^{\otimes N}(\bm{j})\\
        &=N\sigma^2(1-\mathbb{E}_N[f(Z_N)]),
    \end{align}
    where we defined
    \begin{align}
        f(z)\coloneqq 
        \begin{cases}
            z^2 \quad &(\text{if } -x\leq  z\leq x)\\
            0\quad &(\text{otherwise})
        \end{cases}.
    \end{align}
    
     Since the Hilbert space is assumed to be finite-dimensional, the variance $\sigma$ is finite. Therefore, we can apply the central limit theorem to $Z_N$ and find that $Z_N$ converges in distribution to the standard normal $\mathcal{N}(0,1)$ as $N\to\infty$. Since $f(z)$ is bounded and its points of discontinuity $z=\pm x$ have measure zero in $\mathcal{N}(0,1)$, we have
     \begin{align}
         \lim_{N\to\infty }\mathbb{E}[f(Z_N)]=\int\dd z \frac{1}{\sqrt{2\pi}}e^{-\frac{z^2}{2}}f(z)=\int_{-x}^{x}\dd z \frac{1}{\sqrt{2\pi}}e^{-\frac{z^2}{2}}z^2.
     \end{align}
     Therefore, for any $\epsilon>0$, $\mathbb{E}[f(Z_N)]\geq (1-\epsilon)\int_{-x}^{x}\dd z \frac{1}{\sqrt{2\pi}}e^{-\frac{z^2}{2}}z^2 $ holds for all sufficiently large $N$. 
    In this case, it holds 
    \begin{align}
        \sum_{\bm{j}\in\mathcal{A}^{(\mathrm{tail})}(x)}(s(\bm{j})-N\mu)^2p^{\otimes N}(\bm{j})\leq N\sigma^2 \left(1-(1-\epsilon)\int_{-x}^x \dd z \frac{1}{\sqrt{2\pi}}e^{-\frac{z^2}{2}}z^2\right)\eqqcolon N\sigma^2 \tilde{g}_2(\epsilon,x).
    \end{align}
    On the other hand, from the Hoeffding inequality we have
    \begin{align}
        \sum_{\bm{j}\in\mathcal{A}^{(\mathrm{tail})}(x)}p^{\otimes N}(\bm{j})\leq 2 \exp\left(-\frac{2 \sigma^2 x^2}{(\Delta a)^2}\right)\eqqcolon \tilde{g}_0(x),
    \end{align}
    where $\Delta a>0$ denotes the difference between the largest and smallest eigenvalue of $A$. 
    Therefore, we get
    \begin{align}
        \mathrm{Tr}\left(Q_N^{(\mathrm{tail})}(x)\phi^{\otimes N}\right)\leq \left(\sqrt{N\sigma ^2\tilde{g}_2(\epsilon,x)}+|\Delta\mu_N|\sqrt{\tilde{g}_0(x)}\right)^2\label{eq:tail_bound}.
    \end{align}

    From Eqs.~\eqref{eq:core_bound} and \eqref{eq:tail_bound}, we have proven
    \begin{align}
        V(\Phi_N,A_N)-V(\phi^{\otimes N},A_N)&=|\Delta\mu_N|^2-\mathrm{Tr}\left(Q_N(\phi^{\otimes N}-\Phi_N)\right)\\
        &\geq |\Delta\mu_N|^2-\left(2(x\sqrt{N\sigma^2}+|\Delta\mu_N|)^2\epsilon+\left(\sqrt{N\sigma ^2\tilde{g}_2(\epsilon,x)}+|\Delta\mu_N|\sqrt{\tilde{g}_0(x)}\right)^2\right)\label{eq:variance_difference}
    \end{align}
    holds for all sufficiently large $N$.

    Now, let us take the parameter to be $x\coloneqq \epsilon^{-1/3}$ and define
    \begin{align}
        g_2(\epsilon)\coloneqq \tilde{g}_2(\epsilon,\epsilon^{-1/3}),\quad g_0(\epsilon)\coloneqq \tilde{g}_0(\epsilon^{-1/3}),
    \end{align}
    which satisfy $\lim_{\epsilon\to 0} g_2(\epsilon)=\lim_{\epsilon\to 0} g_0(\epsilon)=0$. 

    For sufficiently small $\epsilon>0$ satisfying $1-2\epsilon -g_0(\epsilon)>0$, we have
    \begin{align}
        &|\Delta\mu_N|^2-\left(2(\epsilon^{-1/3}\sqrt{N\sigma^2}+|\Delta\mu_N|)^2\epsilon+\left(\sqrt{N\sigma ^2\tilde{g}_2(\epsilon,\epsilon^{-1/3})}+|\Delta\mu_N|\sqrt{\tilde{g}_0(\epsilon^{-1/3})}\right)^2\right)\\
        &=\left(1-2\epsilon -g_0(\epsilon)\right)\left(|\Delta\mu_N|-\frac{2\epsilon^{2/3} \sqrt{N\sigma^2}+\sqrt{N\sigma^2g_0(\epsilon)g_2(\epsilon)}}{1-2\epsilon -g_0(\epsilon)}\right)^2\\
        &\quad -\left(\frac{(2\epsilon^{2/3} \sqrt{N\sigma^2}+\sqrt{N\sigma^2g_0(\epsilon)g_2(\epsilon)})^2}{1-2\epsilon -g_0(\epsilon)}+2\epsilon^{1/3}N\sigma^2+N\sigma^2 g_2(\epsilon)\right).
    \end{align}
    Therefore, the right-hand side of Eq.~\eqref{eq:variance_difference} is bounded as
    \begin{align}
        |\Delta\mu_N|^2-\left(2(\epsilon^{-1/3}\sqrt{N\sigma^2}+|\Delta\mu_N|)^2\epsilon+\left(\sqrt{N\sigma ^2\tilde{g}_2(\epsilon,\epsilon^{-1/3})}+|\Delta\mu_N|\sqrt{\tilde{g}_0(\epsilon^{-1/3})}\right)^2\right)\geq -N\sigma^2h_1(\epsilon),
    \end{align}
    where
    \begin{align}
        h_1(\epsilon)\coloneqq \frac{(2\epsilon^{2/3} +\sqrt{g_0(\epsilon)g_2(\epsilon)})^2}{1-2\epsilon -g_0(\epsilon)}+2\epsilon^{1/3}+ g_2(\epsilon).
    \end{align}
    We remark that this function satisfies $\lim_{\epsilon\to 0}h_1(\epsilon)=0$ because $\lim_{\epsilon\to0}g_0(\epsilon)=\lim_{\epsilon\to0}g_2(\epsilon)=0$. Since $N\sigma^2=V(\phi^{\otimes N},A_N)$, we find 
    \begin{align}
        V(\Phi_N,A_N)\geq(1-h_1(\epsilon))V(\phi^{\otimes N},A_N)
    \end{align}
    holds for all sufficiently large $N$, as long as $\epsilon>0$ is sufficiently small so that $1-2\epsilon -g_0(\epsilon)>0$.     
\end{proof}

From Lemmas~\ref{lem:antisym_lower_bound} and \ref{lem:variance_lower_bound}, we obtain the following corollary:
\begin{lem}\label{lem:smoothing}
    Let $O$ be an arbitrary linear operator on a finite-dimensional Hilbert space $\mathcal{H}$. We denote by $O_N$ its i.i.d. extension defined in Eq.~\eqref{eq:iid_extension_O}. Fix a pure state $\phi\in \mathcal{P}(\mathcal{H})$ and a sufficiently small parameter $\epsilon$. Then there exists a function $h_2(\epsilon)$ independent of $N$ satisfying $\lim_{\epsilon\to 0}h_2(\epsilon)=0$ such that
    for any sequence of pure states $\{\Phi_N\}_N$ satisfying $\limsup_{N\to\infty}T(\Phi_N,\phi^{\otimes N})<\epsilon$, it holds
    \begin{align}
        &\mathrm{Tr}\left(\Phi_N O_N (I-\Phi_N)O_N^\dag\right)\geq \mathrm{Tr}\left(\phi^{\otimes N} O_N (I-\phi^{\otimes N})O_N^\dag\right)-Nh_2(\epsilon)
    \end{align}
   for all sufficiently large $N$.
\end{lem}
\begin{proof}
    From Lemma~\ref{lem:variance_lower_bound}, we get
    \begin{align}
        V(\Phi_N,A_N)+V(\Phi_N,B_N)&\geq (1- h_{1,A}(\epsilon))V(\phi^{\otimes N},A_N)+(1- h_{1,B}(\epsilon))V(\phi^{\otimes N},B_N))\label{eq:variance_l_b},
    \end{align}
    where $h_{1,A}$ and $h_{1,B}$ are real functions ensured to exist in Lemma~\ref{lem:variance_lower_bound}, each of which in general depends on the difference between the largest and smallest eigenvalues and the variance of $A$ and $B$, respectively. 
    Combining Eqs.~\eqref{eq:comm_l_b} and \eqref{eq:variance_l_b}, we obtain the following bound:
    \begin{align}
        &\mathrm{Tr}\left(\Phi_N O_N (I-\Phi_N)O_N^\dag\right)\geq \mathrm{Tr}\left(\phi^{\otimes N} O_N (I-\phi^{\otimes N})O_N^\dag\right)-Nh_2(\epsilon)
    \end{align}
    for all sufficiently large $N$, where the function
    \begin{align}
        h_2(\epsilon)\coloneqq h_{1,A}(\epsilon)V(\phi,A)+ h_{1,B}(\epsilon)V(\phi,B)+2\|\ii[A,B]\|_\infty\epsilon
    \end{align}
    is independent of $N$ and satisfies $\lim_{\epsilon\to0}h_2(\epsilon)=0$. 
\end{proof}

From Lemma~\ref{lem:largest_ev} and Lemma~\ref{lem:smoothing}, we now prove Lemma~\ref{lem:asymptotics_norm_asymmetric}. 
\begin{lem*}[Restatement of Lemma~\ref{lem:asymptotics_norm_asymmetric}]
    Let $f$ be an arbitrary operator monotone function such that $f(0)>0$ and $f(0)= \lim_{\epsilon\to 0^+}f(\epsilon)$. For a linear operator $O$ on a finite-dimensional Hilbert space, we denote its i.i.d. extension by $O_N$. Let $\{\sigma_N\}_N$ be an arbitrary sequence of states such that $\lim_{N\to\infty} T(\sigma_N,\phi^{\otimes N})=0$ for a pure state $\phi$. Then there exists a real-valued function $h$ independent of $N$ satisfying $\lim_{\epsilon\to 0}h(\epsilon)=0$ such that for any sufficiently small parameter $\epsilon>0$, it holds
    \begin{align}
        &f(0)\|\ii [\sigma_N,O_N]\|_{f,\sigma_N}^2\geq NV(\phi,O)-Nh(\epsilon)+o(N)
    \end{align}
    for all sufficiently large $N$.
\end{lem*}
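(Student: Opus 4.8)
The plan is to chain the two already-established results: Lemma~\ref{lem:largest_ev}, which lower-bounds the monotone-metric norm $\|\ii[\sigma_N,O_N]\|_{f,\sigma_N}^2$ by the generalized variance of the dominant eigenvector $\Phi_N$ of $\sigma_N$, and Lemma~\ref{lem:smoothing}, which lower-bounds that generalized variance by its i.i.d. value at $\phi^{\otimes N}$.

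First I would fix a sufficiently small $\epsilon>0$ and set $\delta_N\coloneqq 1-\braket{\phi^{\otimes N}|\sigma_N|\phi^{\otimes N}}$. Since $\phi^{\otimes N}$ is pure, the Fuchs--van de Graaf inequality gives $\delta_N\leq 2T(\sigma_N,\phi^{\otimes N})$, so the hypothesis $\lim_{N}T(\sigma_N,\phi^{\otimes N})=0$ forces $\delta_N\to 0$; in particular $\delta_N<1/2$ for all large $N$. Letting $\Phi_N$ be the eigenvector of $\sigma_N$ with the largest eigenvalue, the first half of Lemma~\ref{lem:largest_ev} yields $|\braket{\phi^{\otimes N}|\Phi_N}|^2\geq 1-2\delta_N$, whence $T(\Phi_N,\phi^{\otimes N})=\sqrt{1-|\braket{\phi^{\otimes N}|\Phi_N}|^2}\leq\sqrt{2\delta_N}\to 0$. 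Thus $\limsup_N T(\Phi_N,\phi^{\otimes N})=0<\epsilon$, so the regularity hypothesis of Lemma~\ref{lem:smoothing} is met.

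Next I would apply the second half of Lemma~\ref{lem:largest_ev} with the operator taken to be $O_N$, then Lemma~\ref{lem:smoothing}, and finally the additivity of the generalized variance for i.i.d. pure states, $\braket{\phi^{\otimes N}|O_N(I-\phi^{\otimes N})O_N^\dag|\phi^{\otimes N}}=NV(\phi,O)$ (which follows by writing $O=A+\ii B$ and using additivity of $V(\phi^{\otimes N},A_N)$, $V(\phi^{\otimes N},B_N)$ and of the commutator expectation). Combining these gives, for $c_N\coloneqq f(0)(1-2\delta_N)^2/f(\delta_N/(1-\delta_N))$ and the function $h_2$ of Lemma~\ref{lem:smoothing},
\[
f(0)\|\ii[\sigma_N,O_N]\|_{f,\sigma_N}^2\geq c_N\bigl(NV(\phi,O)-Nh_2(\epsilon)\bigr).
\]
Because $\delta_N\to 0$ and $f(0)=\lim_{\epsilon\to 0^+}f(\epsilon)$, the prefactor satisfies $c_N\to 1$; writing $c_N=1+(c_N-1)$, the residual term $(c_N-1)(NV(\phi,O)-Nh_2(\epsilon))$ is $o(N)$, since dividing by $N$ leaves $(c_N-1)(V(\phi,O)-h_2(\epsilon))\to 0$ for the fixed $\epsilon$. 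Setting $h\coloneqq h_2$ then gives exactly $f(0)\|\ii[\sigma_N,O_N]\|_{f,\sigma_N}^2\geq NV(\phi,O)-Nh(\epsilon)+o(N)$, with $\lim_{\epsilon\to 0}h(\epsilon)=0$ inherited from Lemma~\ref{lem:smoothing}.

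The genuinely hard work is not this assembly but is already packaged in Lemma~\ref{lem:smoothing} via Lemma~\ref{lem:variance_lower_bound}: the asymptotic discontinuity of the variance means a vanishing trace-distance error can still move the per-copy variance by $O(1)$, and taming it requires the central limit theorem to identify the limiting Gaussian and the Hoeffding bound to suppress its tail exponentially. The only subtlety in the present step is an order-of-limits issue: one must keep $\epsilon$ fixed while sending $N\to\infty$, since the threshold for ``sufficiently large $N$'' in Lemma~\ref{lem:smoothing} may diverge as $\epsilon\to 0$. This is precisely why the bound retains the $\epsilon$-dependent defect $h(\epsilon)$ rather than collapsing to a clean $NV(\phi,O)+o(N)$.
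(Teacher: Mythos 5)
Your proposal is correct and follows essentially the same route as the paper's proof: chain the second half of Lemma~\ref{lem:largest_ev} (applied to $\sigma_N$, $O_N$, and the dominant eigenvector $\Phi_N$) with Lemma~\ref{lem:smoothing}, after verifying via the first half of Lemma~\ref{lem:largest_ev} and Fuchs--van de Graaf that $\Phi_N$ satisfies the closeness hypothesis, and then absorb the deviation of the prefactor $c_N=f(0)(1-2\delta_N)^2/f(\delta_N/(1-\delta_N))$ from $1$ into the $o(N)$ term using $f(0)=\lim_{\epsilon\to 0^+}f(\epsilon)$. The only cosmetic difference is that you take $h=h_2$ and push the prefactor correction on the $h_2$ term into $o(N)$ as well, whereas the paper rescales $h_2$ by an $\epsilon$-dependent constant; both yield the stated bound.
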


\begin{proof}
    From Lemma~\ref{lem:largest_ev}, we get
    \begin{align}
        \braket{\ii [\sigma_N ,O],\ii [\sigma_N,O]}_{f,\sigma_N}\geq \frac{1}{f\left(\frac{\delta_N}{1-\delta_N}\right)}(1-2\delta_N)^2\mathrm{Tr}\left(\Phi_NO_N (I-\Phi_N)O_N^\dag \right),
    \end{align}
    where $\delta_N\coloneqq 1-\mathrm{Tr}(\phi^{\otimes N}\sigma_N)$ denotes the infidelity. For a sufficiently small $\epsilon>0$, from Lemma~\ref{lem:smoothing}, we further find
    \begin{align}
        \braket{\ii [\sigma_N ,O],\ii [\sigma_N,O]}_{f,\sigma_N}\geq \frac{1}{f\left(\frac{\delta_N}{1-\delta_N}\right)}(1-2\delta_N)^2\left( N\mathrm{Tr}(\phi O(I-\phi)O^\dag)-Nh_2(\epsilon)\right)
    \end{align}
    Note that from the Fuchs–van de Graaf inequalities, the infidelity satisfies $1-\sqrt{1-\delta_N}\leq \epsilon_N$, where $\epsilon_N\coloneqq T(\sigma_N,\phi^{\otimes N})$. Since $\lim_{N\to\infty}\epsilon_N =0$, $\delta_N\leq 1- (1-\epsilon)^2$ for all sufficiently large $N$.
    Since $c(\delta)\coloneqq  \frac{1}{f\left(\frac{\delta}{1-\delta}\right)}(1-2\delta)^2$ is a monotonically decreasing function of $\delta$ if $\delta\in[0,1/2]$, we get
    \begin{align}
        f(0)\braket{\ii [\sigma_N ,O],\ii [\sigma_N,O]}_{f,\sigma_N}\geq \frac{f(0)}{f\left(\frac{\delta_N}{1-\delta_N}\right)}(1-2\delta_N)^2N\mathrm{Tr}(\phi O(I-\phi)O^\dag)-Nh(\epsilon),
    \end{align}
    where
    \begin{align}
        h(\epsilon)\coloneqq f(0)c(1-(1-\epsilon)^2)h_2(\epsilon)
    \end{align}
    is a real-valued function satisfying $\lim_{\epsilon\to 0}h(\epsilon)=0$. 
    Therefore, we get
    \begin{align}
        f(0)\braket{\ii [\sigma_N ,O],\ii [\sigma_N,O]}_{f,\sigma_N}\geq N\mathrm{Tr}(\phi O(I-\phi)O^\dag)-Nh(\epsilon)+N\mathrm{Tr}(\phi O(I-\phi)O^\dag)\left(\frac{f(0)}{f\left(\frac{\delta_N}{1-\delta_N}\right)}(1-2\delta_N)^2-1\right).
    \end{align}
    The last term is $o(N)$ since
    \begin{align}
        \lim_{N\to\infty}\frac{f(0)}{f\left(\frac{\delta_N}{1-\delta_N}\right)}(1-2\delta_N)^2-1=\frac{f(0)}{f(0)}-1=0,
    \end{align}
    where we have used $\lim_{N\to\infty}\delta_N=0$ and $f(0)=\lim_{\epsilon\to 0^+}f(\epsilon)$. 
\end{proof}

\newpage
\section{Proofs of the direct part}\label{app:section_for_direct_part}
In this section, we present the proof of the direct part. In Sec.~\ref{app:direct_part}, we provide a refined and detailed version of the arguments sketched in Sec.~\ref{sec:optimality_sketch} of the main text. The proofs of the lemmas used in the direct part are given in Sec.~\ref{app:sec_proof_all_lemmas_direct}. 

\subsection{QGT determines asymptotic conversion rate}\label{app:direct_part}
In the previous section, we have proven an upper bound on the conversion rate, given in Eq.~\eqref{eq:converse_part_statement}, which follows from the monotonicity of QGT in asymptotic conversion. In this section, we show that this bound is optimal by proving that if $r>0$ satisfies $\mathcal{Q}^{\mathcal{U}_g(\psi)}\geq r\,\mathcal{Q}^{\mathcal{U}'_g(\phi)}$ for all $g\in G$, then $\{\psi^{\otimes N}\}_N\gconv \{\phi^{\otimes \floor{(r-\delta)N}}\}_N$ holds for arbitrary $\delta\in (0,r)$, which establishes the opposite inequality of Eq.~\eqref{eq:converse_part_statement}:
\begin{prop}[Direct part]\label{prop:direct_part}
    Let $U,U'$ be (non-projective) unitary representations of a compact Lie group $G$ on finite-dimensional Hilbert spaces $\mathcal{H}$ and $\mathcal{H}'$. For any pure states $\psi\in\mathcal{P}(\mathcal{H})$ and $\phi\in\mathcal{P}(\mathcal{H}')$ such that $\mathrm{Sym}_G(\psi)\subset \mathrm{Sym}_G(\phi)$, it holds
    \begin{align}
        &\rap(\psi \to \phi)\geq \sup\{r\geq0 \mid\forall g\in G,\,\mathcal{Q}^{\mathcal{U}_g(\psi)}\geq r\mathcal{Q}^{\mathcal{U}'_g(\phi)}\}.
    \end{align}
\end{prop}
Combining Proposition~\ref{prop:direct_part} with Eq.~\eqref{eq:converse_part_statement}, our main result in Eq.~\eqref{eq:conversion_rate_formula_forall_g} is proven for (non-projective) unitary representations, meaning that QGT determines the asymptotic conversion rate. As mentioned in Sec.~\ref{sec:main_result}, this result can be extended to projective unitary representations by using the method in \cite{shitara_iid_2024}, which is detailed in Appendix~\ref{app:formula_projective_unitary_rep}. 

In the proof of Proposition~\ref{prop:direct_part}, the results in QLAN on pure-state unitary models \cite{girotti_optimal_2024,lahiry_minimax_2024} play a crucial role. This can be intuitively understood through Lemma~\ref{lem:gcov_and_cptp}, which provides an alternative yet equivalent characterization of the convertibility in RTA using quantum channels that are not assumed to be $G$-covariant.

\subsubsection{Quantum local asymptotic normality and QGT}
Here, let us briefly review studies in QLAN \cite{guta_local_2006,kahn_local_2009,girotti_optimal_2024,lahiry_minimax_2024}, which investigate the asymptotic properties of statistical models by associating them with their limit models. This subsection discusses general statistical models, while the next subsection specializes them to the RTA setting.

We review particularly for QLAN on pure-state models \cite{girotti_optimal_2024,lahiry_minimax_2024}. For a pure state $\psi$ and a set of Hermitian operators $\{X_\mu\}_{\mu=1}^m$, we define a pure-state statistical model $\mathcal{U}_{\theta}(\psi)$ with $\mathcal{U}_{\theta}(\cdot)=e^{\ii\theta^\mu X_\mu}(\cdot)e^{-\ii\theta^\nu X_\nu}$ for $\theta\in\mathbb{R}^m$. 
To study its asymptotic property, we consider its i.i.d. copies of the state $(\mathcal{U}_{u/\sqrt{N}}(\psi))^{\otimes N}$, where we introduced a local parameter $u\coloneqq \sqrt{N}\theta$. In the limit of $N\to\infty$, it is shown \cite{girotti_optimal_2024,lahiry_minimax_2024} that this model is interconvertible to the so-called Gaussian shift model with asymptotically vanishing error.

For a more detailed explanation, let us introduce several notations in \cite{girotti_optimal_2024}. For a pure state $\psi$ on a qudit system, we take a unit vector $\ket{\psi}$ such that $\psi=\ket{\psi}\bra{\psi}$. We define a matrix $C:\mathbb{C}^{m}\to \mathbb{C}^{d-1}$ by $(C)_{k\mu}\coloneqq -\ii\braket{k|X_\mu|\psi}$, where $\{\ket{k}\}_{k=1}^{d-1}$ denotes an orthonormal basis for the subspace orthogonal to $\ket{\psi}$. For $z=(z_1,\cdots,z_{d-1})^\top\in\mathbb{C}^{d-1}$, we denote a $(d-1)$-mode coherent state with coherence amplitude $z$ by $\ket{z}\coloneqq \left(\bigotimes_{i=1}^{d-1}e^{z_ia_i^\dag -z_i^*a_i}\ket{0}_i\right)$, where $a_i^\dag,a_i$ and $\ket{0}_i$ denote the creation and annihilation operators and the vacuum state for the $i$th mode. It is proven \cite{lahiry_minimax_2024} that there exist sequences of quantum channels $\{\mathcal{T}_N\}_N$ and $\{\mathcal{S}_N\}_N$ such that 
\begin{align}
    &\lim_{N\to\infty}\sup_{\|u\|<N^\epsilon}T\left(\mathcal{T}_{N}\left(\mathcal{U}_{\frac{u}{\sqrt{N}}}(\psi)^{\otimes N}\right),\ket{Cu}\bra{Cu}\right)\nonumber\\
    &=\lim_{N\to\infty}\sup_{\|u\|<N^\epsilon}T\left(\mathcal{U}_{\frac{u}{\sqrt{N}}}(\psi)^{\otimes N},\mathcal{S}_{N}\left(\ket{Cu}\bra{Cu}\right)\right)\nonumber\\
    &=0\label{eq:QLAN_convertibility}
\end{align}
for $\epsilon\in(0,1/9)$.

In the studies on QLAN, the main purpose is to approximate the i.i.d. statistical models using the Gaussian shift model in the asymptotic limit. Our focus is slightly different, and we aim to investigate the convertibility among i.i.d. statistical models to study the conversion rate in RTA. Our finding here is that the QGT plays a central role in characterizing asymptotic convertibility. 

As shown in Eq.~\eqref{eq:QLAN_convertibility}, $\left(\mathcal{U}_{\frac{u}{\sqrt{N}}}(\psi)\right)^{\otimes N}$ is reversibly asymptotically convertible to the Gaussian-shift model $\ket{Cu}$, implying that the matrix $C$ characterizes the asymptotic properties of the i.i.d. statistical model $\left(\mathcal{U}_{\frac{u}{\sqrt{N}}}(\psi)\right)^{\otimes N}$. Such a matrix $C$ is not uniquely determined since there remains the freedom in the choice of an orthonormal basis $\{\ket{k}\}_{k=1}^{d-1}$ for the subspace orthogonal to $\ket{\psi}$. Adopting another orthonormal basis, the matrix $C$ is transformed to $C\mapsto \tilde{C}\coloneqq VC$, where $V$ denotes a $\mathbb{C}^{d-1}\times \mathbb{C}^{d-1}$ unitary matrix. By using the polar decomposition theorem, for matrices $C$ and $C'$ of the same size, we find that $\tilde{C}=VC$ holds for some unitary matrix $V$ if and only if $C^\dag C=\tilde{C}^\dag \tilde{C}$. This fact implies that the essential characteristic of a statistical model is $C^\dag C$, rather than the matrix $C$ itself. 

Importantly, the $(\mu,\nu)$ element of $C^\dag C$ is given by
\begin{align}
    \left(C^\dag C\right)_{\mu\nu}&=\sum_{k=1}^{d-1}\braket{\psi|X_\mu|k}\braket{k|X_\nu|\psi}\nonumber\\
    &=\braket{\psi|X_\mu(I-\psi)X_\nu|\psi}, 
\end{align}
meaning that $C^\dag C=\mathcal{Q}^{\psi}$, where $\mathcal{Q}^{\psi}$ is the QGT for $\mathcal{U}_{\theta}(\psi)$. Therefore, if two pure-state statistical models have the same QGT, they are asymptotically interconvertible to each other since both of them can be reversibly converted to the same Gaussian shift model, shown in Eq.~\eqref{eq:QLAN_convertibility}. 

As a generalization of the above observation, we prove the following:
\begin{lem}\label{lem:QLAN_asymptotic_conversion}
    For sets of Hermitian operators $ \{X_\mu\}_{\mu=1}^m$ and $\{X_\mu'\}_{\mu=1}^m$, we define $\mathcal{U}_{\theta}(\cdot)\coloneqq e^{\ii \theta^\mu X_\mu}(\cdot)e^{-\ii \theta^\nu X_\nu}$ and $\mathcal{U}'_{\theta}(\cdot)\coloneqq e^{\ii \theta^\mu X_\mu'}(\cdot)e^{-\ii\theta^\nu X_\nu'}$ for $\theta\in\mathbb{R}^m$. For given two pure states $\psi$ and $\phi$, we define pure-state statistical models by $\mathcal{U}_{\theta}(\psi)$ and $\mathcal{U}'_{\theta}(\phi)$ and denote their QGTs by $\mathcal{Q}^\psi$ and $\mathcal{Q}^\phi$, respectively. If $r>0$ satisfies $\mathcal{Q}^\psi\geq r\mathcal{Q}^\phi$, then there exists a sequence of quantum channels $\{\mathcal{E}_N\}_N$ such that 
    \begin{align}
        &\lim_{N\to\infty}\sup_{\|u\|<N^\epsilon}T\left(\mathcal{E}_{N}\left(\mathcal{U}_{\frac{u}{\sqrt{N}}}(\psi)^{\otimes N}\right),\mathcal{U}'_{\frac{u}{\sqrt{N}}}(\phi)^{\otimes \floor{rN}}\right)=0
    \end{align}
    for $\epsilon\in(0,1/9)$.
\end{lem}
In the proof of Lemma~\ref{lem:QLAN_asymptotic_conversion}, the argument is divided into two parts: one focuses on the case where the QGT is conserved, while the other addresses the case where the QGT is reduced. We first prove that the growth rate of i.i.d. copies can be set to one by adjusting the scaling of the parameters. Concretely, $\mathcal{U}'_{\frac{u}{\sqrt{N}}}(\phi)^{\otimes \floor{rN}}$ is shown to be asymptotically interconvertible to $\mathcal{U}'_{\frac{\sqrt{r}u}{\sqrt{N}}}(\phi)^{\otimes N}$. This asymptotic interconversion is achieved by combining the reversible conversion channels in QLAN, where the QGT is conserved. Then, we also prove that $\mathcal{U}_{\frac{u}{\sqrt{N}}}(\psi)^{\otimes N}$ can be asymptotically converted into $\mathcal{U}'_{\frac{\sqrt{r}u}{\sqrt{N}}}(\phi)^{\otimes N}$ if the QGT with respect to $u$ decreases in the sense of matrix inequality. The details of these lemmas and their proofs are provided in Appendix~\ref{sec:QLAN_asymptotic_conversion}.

\subsubsection{Proof of achievability of the conversion rate}\label{sec:achievability_proof}
The process of transforming $\mathcal{U}_g(\psi)^{\otimes N}$ to $\mathcal{U}_g'(\phi)^{\otimes \floor {rN}}$ with asymptotically vanishing error consists of two steps, which we call the estimation and conversion steps. In the estimation step, we obtain a rough estimate $\hat{g}$ of the true value $g$ by consuming a sublinear number of copies of $\mathcal{U}_g(\psi)$ in $N$. In the conversion step, based on the estimate $\hat{g}$, we convert the remaining copies of $\mathcal{U}_g(\psi)$ to i.i.d. copies of $\mathcal{U}_g'(\phi)$ with quantum channels constructed by using Lemma~\ref{lem:QLAN_asymptotic_conversion}. 

For a detailed explanation, let us first introduce notations. Let $U$ and $U'$ be (non-projective) unitary representations of a Lie group $G$ on the input and output systems. Elements in the neighborhood of the identity $e\in G$ can be parametrized as $g(\lambda)=e^{\ii \lambda^\mu A_\mu}$ by using a basis $\{A_\mu\}_{\mu=1}^{\dim G}$ of the Lie algebra $\mathfrak{g}$. Define operators $X_\mu\coloneqq -\ii \frac{\partial}{\partial \lambda^\mu}U(g(\lambda))|_{\lambda=0}$ and $X'_\mu\coloneqq -\ii \frac{\partial}{\partial \lambda^\mu}U'(g(\lambda))|_{\lambda=0}$, which correspond to the images of $A_\mu$ by the Lie algebra representations induced from $U$ and $U'$.
At least locally, these operators satisfy $U(g(\lambda))=e^{\ii\lambda^\mu X_\mu }$ and $U'(g(\lambda))=e^{\ii\lambda^\mu X_\mu' }$. In this case, by slightly abusing our notation of $\mathcal{U}_g$ and $\mathcal{U}'_g$, we also denote $\mathcal{U}_{\theta}(\cdot)\coloneqq e^{\ii \theta^\mu X_\mu}(\cdot)e^{-\ii\theta^\nu X_\nu}$ and $\mathcal{U}_{\theta}'(\cdot)\coloneqq e^{\ii \theta^\mu X_\mu'}(\cdot)e^{-\ii\theta^\nu X_\nu'}$ for $\theta\in\mathbb{R}^{\dim G}$, implicitly indicating $\|\theta\|$ is sufficiently small.

In the estimation step, we obtain a rough estimate $\hat{g}\in G$ of $g \in G$ by performing a measurement on $n$ copies of the system in a state $\mathcal{U}_g(\psi)^{\otimes n}$, where $n\coloneqq N^{1-\epsilon}$ for a fixed $\epsilon\in(0,1/2)$. Let $p(\hat{g}|\mathcal{U}_g(\psi)^{\otimes n})$ denote the probability of obtaining an estimate $\hat{g}$ when the system is in a state $\mathcal{U}_g(\psi)^{\otimes n}$. We say that the estimation is successful within an acceptable error of $\delta>0$ if and only if the estimate $\hat{g}\in G$ is an element of the following set: 
\begin{align}
    G_{\mathrm{succ.}}^{(g,\delta)}\coloneqq \{\hat{g}\in G\mid \exists \theta,\, \mathcal{U}_{\theta}\circ \mathcal{U}_{\hat{g}}(\psi)=\mathcal{U}_g(\psi),\,\|\theta\|<\delta\}.\label{eq:successful_estimation_set}
\end{align}
We quantify the success probability in the worst case by
\begin{align}
    p^{\mathrm{succ.}}(\delta)\coloneqq 
    &\inf_{g\in G}\int_G\dd \mu_G(\hat{g})\, \,p(\hat{g}|\mathcal{U}_g(\psi)^{\otimes n})\chi_{ G_{\mathrm{succ.}}^{(g,\delta)}}(\hat{g}),
 \end{align}
where $\mu_G$ is the Haar measure on the Lie group $G$, and $\chi_{A}$ denotes the indicator function of a set $A$. Note that the above success probability is well-defined since $G_{\mathrm{succ.}}^{(g,\delta)}$ is a measurable set, as shown in Appendix~\ref{sec:measurability}.

We show the following lemma, which guarantees the existence of an estimator that is successful within an acceptable error of $\delta=N^{-1/2+\epsilon}$ whose failure probability asymptotically vanishes:
\begin{lem}\label{lem:reasonable_estimator}
    Let $G$ be a compact Lie group and $\rho$ be an arbitrary state. Fix $\epsilon\in (0,1/2)$. Then there exists an estimator of $g\in G$, which consumes $\mathcal{U}_g(\rho)^{\otimes n}$ with $n=\ceil{N^{1-\epsilon}}$, such that its worst-case success probability satisfies
    \begin{align}
        \lim_{N\to\infty }p^{\mathrm{succ.}}(N^{-1/2+\epsilon})=1. \label{eq:reasonable_estimator}
    \end{align}
\end{lem}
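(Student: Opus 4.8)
The plan is to build a two–stage estimator whose local error scales as $n^{-1/2}$ and then to exploit the fact that the tolerated error $\delta=N^{-1/2+\epsilon}$ is parametrically \emph{larger} than this noise floor. Indeed, with $n=\ceil{N^{1-\epsilon}}$ one has $n^{-1/2}\sim N^{-1/2+\epsilon/2}$, so $\delta/n^{-1/2}\sim N^{\epsilon/2}\to\infty$; this gap is exactly what forces the failure probability to vanish. Before constructing the estimator I would remove the infimum over $g$ in the definition of $p^{\mathrm{succ.}}(\delta)$. Since $G$ is compact it carries a bi-invariant Riemannian metric, with respect to which the norm $\|\bm{\theta}\|$ on the Lie algebra is invariant under the adjoint action. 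Choosing this norm in Eq.~\eqref{eq:successful_estimation_set}, the success set transforms covariantly, $G^{(hg,\delta)}_{\mathrm{succ.}}=h\,G^{(g,\delta)}_{\mathrm{succ.}}$. Either one covariantizes the measurement, building it from a $U^{\otimes n}$-covariant POVM so that $p(\hat g\mid\mathcal{U}_g(\rho)^{\otimes n})=p(g^{-1}\hat g\mid\mathcal{U}_e(\rho)^{\otimes n})$ and $p^{\mathrm{succ.}}(\delta)$ becomes literally independent of $g$; or one keeps the explicit estimator below and lets compactness of $G$ make all constants uniform in the base point. Either way the worst case reduces to the reference point $g=e$.

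Next I would construct the estimator in two stages. The orbit $\mathcal{O}=\{\mathcal{U}_g(\rho)\}_{g\in G}$ is the image of the smooth injective map $G/\mathrm{Sym}_G(\rho)\to\mathcal{O}$, hence a compact manifold, and distinct orbit points give distinct (tomographically faithful) states. In the \emph{coarse} stage I consume a sublinear number $n_1=o(n)$ of copies to identify, among a fixed finite $\eta$-net of $\mathcal{O}$, the net point closest to the true state; distinct states are discriminated with error decaying exponentially in $n_1$ by the quantum Chernoff bound, so a union bound localizes $g$ to a fixed small neighborhood with probability tending to one (net points that coincide as states are equally acceptable and cause no harm). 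In the \emph{refinement} stage, fixing a tomographically complete family of observables $\{O_\mu\}$, I measure each $O_\mu$ on a separate block of the remaining $\Theta(n)$ copies and form the empirical means $\bar O_\mu$. The map $g\mapsto(\mathrm{Tr}[O_\mu\,\mathcal{U}_g(\rho)])_\mu$ is an immersion of $\mathcal{O}$ whose differential has constant rank $\dim\mathcal{O}$; by compactness it admits a uniformly Lipschitz local inverse, so inverting $(\bar O_\mu)_\mu$ yields an estimate $\hat g$ whose coordinate displacement from $g$ is controlled by the fluctuations of the $\bar O_\mu$.

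Concentration then finishes the argument. Each $\bar O_\mu$ averages $\Theta(n)$ i.i.d.\ bounded outcomes, so $\mathrm{Var}(\bar O_\mu)=O(1/n)$, and the uniform Lipschitz inverse transfers this to a mean-square coordinate error $O(1/n)$ in the identifiable directions; the non-identifiable directions—the Lie algebra of $\mathrm{Sym}_G(\rho)$, on which the quantum Fisher information, i.e.\ the symmetric part of $\mathcal{Q}^\rho$, degenerates—need not be estimated, since moving along them leaves the state and hence the success condition untouched. Converting the coordinate error into the bi-invariant distance and applying Chebyshev's inequality gives
\begin{align}
\Pr\!\left[\hat g\notin G^{(e,\delta)}_{\mathrm{succ.}}\right]=O\!\left(\frac{1}{n\,\delta^{2}}\right)=O\!\left(N^{-\epsilon}\right)\longrightarrow 0,
\end{align}
because $n\,\delta^{2}\sim N^{1-\epsilon}\cdot N^{-1+2\epsilon}=N^{\epsilon}$. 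Adding the exponentially small coarse-stage failure probability and invoking the reduction to $g=e$ yields $\lim_{N\to\infty}p^{\mathrm{succ.}}(N^{-1/2+\epsilon})=1$.

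I expect the main obstacle to be the global and topological bookkeeping rather than the concentration estimate, which is routine. Concretely, one must patch the local inversion across charts of the compact homogeneous space $G/\mathrm{Sym}_G(\rho)$, verify that the immersion has uniformly full rank so that the inverse's Lipschitz constant is bounded independently of the base point, and confirm that the coarse stage reliably delivers a point inside a single chart. Handling a possibly nontrivial stabilizer $\mathrm{Sym}_G(\rho)$ and the disconnectedness of $G$—so that the error $\bm{\theta}$ is measured only transverse to the orbit and within the correct connected component—is the delicate point; the covariance reduction together with the bi-invariant metric is precisely what keeps all of this uniform in $g$.
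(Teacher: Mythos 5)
Your proposal is correct and follows the same skeleton as the paper's proof: estimate the state $\mathcal{U}_g(\rho)$ from the $\lceil N^{1-\epsilon}\rceil$ copies to trace-distance accuracy $o(N^{-1/2+\epsilon})$, then convert that trace-distance error into a parameter error $\|\bm{\theta}\|<N^{-1/2+\epsilon}$ via a uniform inverse-Lipschitz bound for the orbit map, with compactness supplying uniformity in $g$. The differences are in execution rather than structure. The paper imports the state-tomography bound of Lemma~\ref{lem:state_tomography}, whose exponential concentration in $n\delta^2\sim N^{\epsilon}$ makes the failure probability exponentially small and renders both your two-stage construction (net localization plus Chernoff, then moment inversion) and your covariant-POVM reduction to $g=e$ unnecessary; your Chebyshev route yields only $O(N^{-\epsilon})$, which still suffices since the lemma only requires the success probability to tend to one. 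The one step you defer entirely to ``compactness''---the uniformly Lipschitz local inverse, including the treatment of the stabilizer $\mathrm{Sym}_G(\rho)$ and of distinct group elements with coinciding orbit points---is exactly where the paper invests its effort: its Lemma~\ref{lem:trace_distance_and_parameter} proves, via matrix-exponential and logarithm estimates and a decomposition of the Lie algebra into the commutant of $\rho$ and a complement, that $\|\mathcal{U}_g(\rho)-\mathcal{U}_{\hat g}(\rho)\|_1<\delta_*$ implies the existence of $\bm{\theta}$ with $\mathcal{U}_{\bm{\theta}}\circ\mathcal{U}_{\hat g}(\rho)=\mathcal{U}_g(\rho)$ and $\|\bm{\theta}\|\le c\|\mathcal{U}_g(\rho)-\mathcal{U}_{\hat g}(\rho)\|_1$. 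Your identification of the delicate points (stabilizer, disconnected components, uniformity in the base point) matches precisely what that lemma handles, so the remaining gap is one of detail rather than of ideas; note also that replacing the fixed coordinate norm in Eq.~\eqref{eq:successful_estimation_set} by an $\mathrm{Ad}$-invariant one only rescales $\delta$ by a bounded factor, which is harmless because your achieved error is $O(N^{-1/2+\epsilon/2})=o(N^{-1/2+\epsilon})$.
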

We remark that $\|\theta\|\sim N^{-1/2}$ for the error is the standard scaling in parameter estimation for i.i.d. models \cite{gill_state_2000}, including QLAN studies \cite{guta_local_2006,kahn_local_2009,girotti_optimal_2024,lahiry_minimax_2024}. 
The proof is detailed in Appendix~\ref{sec:reasonable_estimator}.

In the conversion step, we convert states using quantum channels guaranteed by the following lemma:
\begin{lem}\label{lem:Lie_asymptotic_conversion}
    Let $U,U'$ be (non-projective) unitary representations of a compact Lie group $G$ on finite-dimensional Hilbert spaces $\mathcal{H}$ and $\mathcal{H}'$. Let $\psi\in\mathcal{P}(\mathcal{H})$ and $\phi\in\mathcal{P}(\mathcal{H}')$ be pure states. 
    If $r>0$ satisfies $\mathcal{Q}^{\mathcal{U}_g(\psi)}\geq r\mathcal{Q}^{\mathcal{U}_g'(\phi)}$ for all $g\in G$, then, there exists a sequence of quantum channels $\{\mathcal{E}^{(g)}_N\}_N$ such that the conversion error
    \begin{align}
        \delta_N(g,u)\coloneqq T\left(\mathcal{E}^{(g)}_N\left(\mathcal{U}_{\frac{u}{\sqrt{N}}}\left(\mathcal{U}_g(\psi)\right)^{\otimes N}\right),\mathcal{U}'_{\frac{u}{\sqrt{N}}}\left(\mathcal{U}_g'(\phi)\right)^{\otimes \floor{rN}}\right)\label{eq:trace_distance_g_u}
    \end{align}
    satisfies $\lim_{N\to\infty}\sup_{g\in G}\sup_{\|u\|<N^\epsilon}\delta_N(g,u)=0$ for $\epsilon\in(0,1/9)$.
\end{lem}
This lemma is proven from Lemma~\ref{lem:QLAN_asymptotic_conversion} by using properties of a compact Lie group. The proof is detailed in Appendix~\ref{sec:Lie_asymptotic_conversion}.

Proposition~\ref{prop:direct_part} is proven by combining Lemmas~\ref{lem:reasonable_estimator} and \ref{lem:Lie_asymptotic_conversion}. 
\begin{proof}[Proof of Proposition~\ref{prop:direct_part}]
    From the assumption, $r>0$ satisfies $\mathcal{Q}^{\mathcal{U}_g(\psi)} \geq r \mathcal{Q}^{\mathcal{U}_g'(\phi)}$ for all $g\in G$. Let $\hat{g}\in G$ be the estimator ensured to exist in Lemma~\ref{lem:reasonable_estimator} for $\epsilon\in (0,1/9)$. Given the estimate $\hat{g}$, we apply Lemma~\ref{lem:Lie_asymptotic_conversion} to pure states $(\mathcal{U}_{\hat{g}}(\psi))^{\otimes N'}$ and $(\mathcal{U}'_{\hat{g}}(\phi))^{\otimes \floor{rN'}}$ for $N'\coloneqq N-\ceil{N^{1-\epsilon}}$ and denote by $\{\mathcal{E}_{N'}^{(\hat{g})}\}_{N'}$ the sequence of conversion channels. Applying $\mathcal{E}_{N'}^{(\hat{g})}$ after the estimation step, the resulting state is given by
    \begin{align}
        &\mathcal{E}_N(\mathcal{U}_g(\psi)^{\otimes N})\coloneqq \int \dd\mu_G( \hat{g})\, \,p(\hat{g}|\mathcal{U}_g(\psi)^{\otimes \ceil{N^{1-\epsilon}}})\mathcal{E}_{N'}^{(\hat{g})}\left(\mathcal{U}_g(\psi)^{\otimes N'}\right).\label{eq:definition_full_channel}
    \end{align}

    From the assumption $\mathrm{Sym}_G(\psi)\subset \mathrm{Sym}_G(\phi)$, if parameters $\theta\in\mathbb{R}^{\dim G}$ satisfy $\mathcal{U}_{\theta}\circ \mathcal{U}_{\hat{g}}(\psi)=\mathcal{U}_g(\psi)$ for some $g,\hat{g}\in G$, then it also holds $\mathcal{U}'_{\theta}\circ \mathcal{U}'_{\hat{g}}(\phi)=\mathcal{U}_{g}'(\phi)$. Therefore, for $\hat{g}\in G_{\mathrm{succ.}}^{(g,\delta)}$ with $\delta>0$, we have
    \begin{align}
        &\mathrm{Fid}\left(\mathcal{E}_{N'}^{(\hat{g})}\left(\mathcal{U}_g(\psi)^{\otimes N'}\right),\mathcal{U}_{g}'(\phi)^{\otimes \floor{rN'}}\right)=\mathrm{Fid}\left(\mathcal{E}_{N'}^{(\hat{g})}\left(\mathcal{U}_{\theta}(\mathcal{U}_{\hat{g}}(\psi))^{\otimes N'}\right),\mathcal{U}_{\theta}'(\mathcal{U}_{\hat{g}}'(\phi))^{\otimes \floor{rN'}}\right),\label{eq:estimator_and_fidelity}
    \end{align}
    where $\mathrm{Fid}$ denotes the fidelity defined by $\mathrm{Fid}(\rho,\sigma)\coloneqq \left(\mathrm{Tr}\left(\sqrt{\sqrt{\rho}\sigma\sqrt{\rho}}\right)\right)^2$ for states $\rho$ and $\sigma$. 
    Since $N^{-1/2+\epsilon}<(N')^{-1/2+\epsilon}$, we have
    \begin{align}
        f_{N}
        &\coloneqq \inf_{g\in G}\inf_{\hat{g}\in G_{\mathrm{succ.}}^{(g,N^{-1/2+\epsilon})}}\mathrm{Fid}\left(\mathcal{E}_{N'}^{(\hat{g})}\left(\mathcal{U}_g(\psi)^{\otimes N'}\right),\mathcal{U}_{g}'(\phi)^{\otimes \floor{rN'}}\right)\label{eq:convergence_fidelity_succ_set_definition}\\
        &\geq \inf_{g\in G}\inf_{\hat{g}\in G_{\mathrm{succ.}}^{(g,(N')^{-1/2+\epsilon})}}\mathrm{Fid}\left(\mathcal{E}_{N'}^{(\hat{g})}\left(\mathcal{U}_g(\psi)^{\otimes N'}\right),\mathcal{U}_{g}'(\phi)^{\otimes \floor{rN'}}\right)\nonumber\\
        &=\inf_{\hat{g}\in G}\inf_{\|u\|<(N')^\epsilon}\mathrm{Fid}\left(\mathcal{E}_{N'}^{(\hat{g})}\left(\mathcal{U}_{\frac{u}{\sqrt{N'}}}(\mathcal{U}_{\hat{g}}(\psi))^{\otimes N'}\right),\mathcal{U}_{\frac{u}{\sqrt{N'}}}'(\mathcal{U}'_{\hat{g}}(\phi))^{\otimes \floor{rN'}}\right)\nonumber\\
        &\geq \left(1-\sup_{\hat{g}\in G}\sup_{\|u\|<(N')^\epsilon} \delta_{N'}(\hat{g},u)\right)^2 \xrightarrow{N \to \infty} 1,\label{eq:convergence_fidelity_succ_set}
    \end{align}
    where in the last line, we used the Fuchs–van de Graaf inequalities to relate the fidelity to the trance distance $\delta_N(g,u)$ defined in Eq.~\eqref{eq:trace_distance_g_u} in Lemma~\ref{lem:Lie_asymptotic_conversion}.

    From Eqs.~\eqref{eq:reasonable_estimator}, \eqref{eq:definition_full_channel}, and \eqref{eq:convergence_fidelity_succ_set}, we get
    \begin{align}
        \inf_{g\in G}\mathrm{Fid}\left(\mathcal{E}_N(\mathcal{U}_g(\psi)^{\otimes N}),\mathcal{U}_g'(\phi)^{\otimes \floor{rN'}}\right)
        &= \inf_{g\in G}\int_{ G} \dd\mu_G( \hat{g})\, \,p(\hat{g}|\mathcal{U}_g(\psi)^{\otimes \ceil{N^{1-\epsilon}}})\nonumber\\
        &\quad\times \mathrm{Fid}\left(\mathcal{E}_{N'}^{(\hat{g})}\left(\mathcal{U}_g(\psi)^{\otimes N'}\right),\mathcal{U}_g'(\phi)^{\otimes \floor{rN'}}\right)\nonumber\\
        &\geq\inf_{g\in G} \int\dd \mu_G(\hat{g})\, \,p(\hat{g}|\mathcal{U}_g(\psi)^{\otimes \ceil{N^{1-\epsilon}}})\chi_{ G_{\mathrm{succ.}}^{(g,N^{-1/2+\epsilon})}}(\hat{g})\nonumber\\
        &\quad \times \mathrm{Fid}\left(\mathcal{E}_{N'}^{(\hat{g})}\left(\mathcal{U}_g(\psi)^{\otimes N'}\right),\mathcal{U}_g'(\phi)^{\otimes \floor{rN'}}\right)\nonumber\\
        &\geq p^{\mathrm{succ.}}(N^{-1/2+\epsilon})\times f_{N}\label{eq:estimate_and_conversion_error_fidelity}\\
        &\xrightarrow{N \to \infty} 1,
    \end{align}
    implying that 
    \begin{align}
        \lim_{N\to\infty}\sup_{g\in G}T\left(\mathcal{E}_N(\mathcal{U}_g(\psi)^{\otimes N}),\mathcal{U}'_g(\phi)^{\otimes \floor{rN'}}\right)=0.
    \end{align}

    To complete the proof, we introduce $\delta\in(0,r)$. 
    Then for all sufficiently large $N$, it holds $\floor{rN'}>\floor{(r-\delta)N}$. Denoting by $\{\Lambda_N\}_N$ the channel discarding $\floor{rN'}-\floor{(r-\delta)N}$ copies of the output system, we get 
    \begin{align}
        \lim_{N\to\infty}\sup_{g\in G}T\left(\Lambda_N\circ\mathcal{E}_N(\mathcal{U}_g(\psi)^{\otimes N}),\mathcal{U}'_g(\phi)^{\otimes \floor{(r-\delta)N}}\right)=0.\label{eq:direct_part_discard}
    \end{align}
    From Lemma~\ref{lem:gcov_and_cptp}, this is equivalent to $\{\psi^{\otimes N}\}_N\gconv \{\phi^{\otimes \floor{(r-\delta)N}}\}_N$, which completes the proof of Proposition~\ref{prop:direct_part}.
\end{proof}

The above proof of the direct part provides a clear understanding of the reason why the asymptotic conversion rate diverges for finite groups. Since there is only a finite number of different elements in $\{\mathcal{U}_g(\psi)\mid g\in G\}$, it is possible to identify the state $\mathcal{U}_g(\psi)$ with an exponentially small failure probability using a state tomography protocol on $\mathcal{U}_g(\psi)^{\otimes N}$. Let $\hat{g}\in G$ denote the estimate of the true value $g$ obtained by this tomography protocol. Depending on the estimate $\hat{g}$, we can prepare an arbitrary number of copies of $\mathcal{U}'_{\hat{g}}(\phi)$. The condition $\mathrm{Sym}_G(\psi)\subset \mathrm{Sym}_G(\phi)$ implies that if $\mathcal{U}_{\hat{g}}(\psi)=\mathcal{U}_g(\psi)$ is satisfied, then $\mathcal{U}'_{\hat{g}}(\phi)=\mathcal{U}'_g(\phi)$ also holds. Therefore, the asymptotic conversion rate diverges since it is possible to create an arbitrary number of copies of $\mathcal{U}'_{g}(\phi)$ from $\mathcal{U}_g(\psi)^{\otimes N}$ with asymptotically vanishing failure probability. We emphasize that the exact asymptotic conversion rate in \cite{shitara_iid_2024} is \textit{not} covered by this analysis because there always remains an error in the tomography process.

\subsection{Proof of Lemmas in the direct part}\label{app:sec_proof_all_lemmas_direct}

\subsubsection{Proof of Lemma~\ref{lem:gcov_and_cptp}}\label{app:gcov_and_cptp}

For $\epsilon=0$, i.e., for the conversion without error, Lemma~\ref{lem:gcov_and_cptp} is shown as Lemma~7 in \cite{marvian_mashhad_symmetry_2012}. Although its generalization to the case of $\epsilon>0$ is straightforward, we here provide the proof for completeness.
\begin{proof}[Proof of Lemma~\ref{lem:gcov_and_cptp}]
    (ii) $\implies$ (i): This is a direct consequence of the definition of the $G$-covariance of a channel and the fact that the trace distance is invariant under unitary transformations. \par
    (i) $\implies $ (ii): 
    Define $\mathcal{E}'\coloneqq \int_{g\in G}\dd \mu_G(g)\, \mathcal{U}_{g^{-1}}'\circ \mathcal{E}\circ \mathcal{U}_g$, where $\mu_G$ denotes the normalized right-invariant Haar measure on $G$. 
    This map $\mathcal{E}'$ is a $G$-covariant channel since
    \begin{align}
        \forall h\in G,\quad \mathcal{E}'\circ \mathcal{U}_h&=\int_{g\in G}\dd \mu_G(g)\, \mathcal{U}'_g\circ \mathcal{E}\circ\mathcal{U}_{g^{-1}h}\\
        &=\mathcal{U}_h'\circ \int_{g\in G}\dd \mu_G(g) \, \mathcal{U}_{h^{-1}g}'\circ \mathcal{E}\circ \mathcal{U}_{(h^{-1}g)^{-1}}\\
        &=\mathcal{U}_h'\circ \mathcal{E}'
    \end{align}
    holds. 
    From the convexity of the trace distance, we get
    \begin{align}
        T\left(\mathcal{E}'(\rho),\sigma\right)&\leq \int_{g\in G} \dd \mu_G(g) \, T\left(\mathcal{U}_{g^{-1}}'\circ \mathcal{E}\circ \mathcal{U}_g(\rho),\sigma\right)\\
        &=\int_{g\in G} \dd \mu_G(g) \, T\left(\mathcal{E}\circ \mathcal{U}_g(\rho), \mathcal{U}_g'(\sigma)\right)\leq \epsilon.
    \end{align}
\end{proof}

\subsubsection{Proof of Lemma~\ref{lem:QLAN_asymptotic_conversion}}\label{sec:QLAN_asymptotic_conversion}

To prove Lemma~\ref{lem:QLAN_asymptotic_conversion}, we divide the argument into two lemmas, Lemma~\ref{lem:rate_changing_part} and Lemma~\ref{lem:covariance_decreasing_part}, which are detailed later. See Fig.~\ref{fig:QGT_relations} for the relations among these lemmas. On the one hand, Lemma~\ref{lem:rate_changing_part} shows that the change in the asymptotic rate $r$ is effectively equivalent to the change in the scale of the parameters, which is proven by using results on QLAN for pure-state models \cite{lahiry_minimax_2024,girotti_optimal_2024}. The asymptotics of QGTs are conserved in the interconversion among models in Lemma~\ref{lem:rate_changing_part}. On the other hand, Lemma~\ref{lem:covariance_decreasing_part} shows that the asymptotic conversion between two pure-state statistical models with rate one is possible when QGTs satisfy a matrix inequality. As a direct consequence of these lemmas, we complete the proof of Lemma~\ref{lem:QLAN_asymptotic_conversion} at the end of this subsection. 

\begin{figure}[htbp]
    \centering
    \includegraphics[width=11cm]{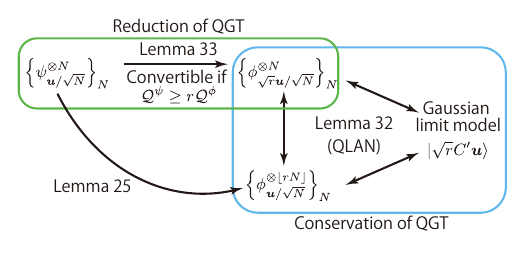}
    \caption{A schematic figure of the relations among  Lemma~\ref{lem:QLAN_asymptotic_conversion}, Lemma~\ref{lem:rate_changing_part} and Lemma~\ref{lem:covariance_decreasing_part}.
    The direction of the arrow indicates the convertibility with vanishing error. 
    In Lemma~\ref{lem:rate_changing_part}, we find the growth rate of i.i.d. copies can be set to 1 by adjusting the scaling of the parameters $u$ when analyzing the asymptotic convertibility. Lemma~\ref{lem:covariance_decreasing_part} shows that the asymptotic conversion between two pure-state models are possible if QGT is reduced during the conversion in the sense of matrix inequality. Combining these lemmas, Lemma~\ref{lem:QLAN_asymptotic_conversion} is proven.}
    \label{fig:QGT_relations}
\end{figure}

To proceed further, we here prove several lemmas. In the analysis below, we expand unitary operators to extract the asymptotic behaviors of pure-state models. The following two lemmas serve as a foundation for our analysis.

\begin{lem}\label{lem:taylor_based_bound}
    Let $\mathcal{H}$ be a finite-dimensional Hilbert space and $f$ be a function from $\mathbb{R}$ to linear operators on $\mathcal{H}$ and of class $C^3$. 
    Then, 
    \begin{align}
        \left\|f(1)-\left(f(0)+f'(0)+\frac{1}{2}f''(0)\right)\right\|_1
        \leq \frac{1}{6}\max_{\theta\in [0, 1]} \{\|f^{(3)}(\theta)\|_1\}. 
    \end{align}
\end{lem}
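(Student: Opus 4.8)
The plan is to reduce the statement to the standard scalar Taylor theorem with integral remainder, applied entrywise, and then pass to the trace norm. First I would fix orthonormal bases so that $f(\theta)$ is represented by a matrix whose entries $f_{jk}(\theta)=\braket{j|f(\theta)|k}$ are scalar functions of class $C^3$ on $[0,1]$. Applying the integral (Lagrange-type) form of Taylor's theorem to each entry and reassembling gives the operator identity
\begin{align}
    f(1)-\left(f(0)+f'(0)+\tfrac{1}{2}f''(0)\right)=\int_0^1 \frac{(1-\theta)^2}{2}\,f^{(3)}(\theta)\,\dd\theta,\label{eq:operator_taylor}
\end{align}
where the integral of an operator-valued function is understood entrywise. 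Since $\dim\mathcal{H}<\infty$ and $f^{(3)}$ is continuous, this integral is well defined. Equivalently, one can obtain \eqref{eq:operator_taylor} by setting $R(\theta)\coloneqq f(\theta)-f(0)-\theta f'(0)-\tfrac{\theta^2}{2}f''(0)$, noting $R(0)=R'(0)=R''(0)=0$ and $R'''=f^{(3)}$, and integrating three times (then exchanging the order of integration), so that no component-level bookkeeping is strictly necessary.

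Next I would take the trace norm of both sides of \eqref{eq:operator_taylor} and use the triangle inequality for the integral, namely $\left\|\int_0^1 g(\theta)\,\dd\theta\right\|_1\leq \int_0^1 \|g(\theta)\|_1\,\dd\theta$ for a continuous operator-valued $g$. This follows from the convexity of $\|\cdot\|_1$ together with a Riemann-sum approximation: each Riemann sum obeys the finite triangle inequality, and both sides converge as the mesh tends to zero because $\|\cdot\|_1$ is continuous and $g$ is continuous on the compact interval $[0,1]$. Applying this to $g(\theta)=\tfrac{(1-\theta)^2}{2}f^{(3)}(\theta)$ yields
\begin{align}
    \left\|f(1)-\left(f(0)+f'(0)+\tfrac{1}{2}f''(0)\right)\right\|_1\leq \int_0^1 \frac{(1-\theta)^2}{2}\,\|f^{(3)}(\theta)\|_1\,\dd\theta.
\end{align}

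Finally I would bound $\|f^{(3)}(\theta)\|_1$ by its maximum over the compact set $[0,1]$ (attained by continuity) and pull it out of the integral, leaving the elementary scalar integral $\int_0^1 \tfrac{(1-\theta)^2}{2}\,\dd\theta=\tfrac16$, which gives exactly the claimed constant. The only step requiring any care is the interchange of norm and integral in the middle step; but as just noted this is immediate in finite dimension, so I expect \emph{no} substantive obstacle here—the content of the lemma is simply the operator-valued incarnation of Taylor's theorem with explicit remainder, packaged for later use in controlling the expansion of unitary representations.
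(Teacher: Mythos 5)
Your proposal is correct and follows essentially the same route as the paper's proof: Taylor's theorem with integral remainder, the triangle inequality for the operator-valued integral, and bounding $\|f^{(3)}(\theta)\|_1$ by its maximum to evaluate $\int_0^1\frac{(1-\theta)^2}{2}\,\dd\theta=\frac{1}{6}$. The additional remarks on entrywise justification and the norm--integral interchange are fine but not needed beyond what the paper already states.
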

\begin{proof}
    By Taylor's theorem, we have 
    \begin{align}
        f(1)-\left(f(0)+f'(0)+\frac{1}{2}f''(0)\right)
        =\int_0^1 \frac{(1-\theta)^2}{2}f^{(3)}(\theta)\dd\theta. 
    \end{align}
    By the triangle inequality, we get 
    \begin{align}
        \left\|f(1)-\left(f(0)+f'(0)+\frac{1}{2}f''(0)\right)\right\|_1 
        \leq\int_0^1 \left\|\frac{(1-\theta)^2}{2}f^{(3)}(\theta)\right\|_1 \dd\theta 
        \leq\int_0^1 \frac{(1-\theta)^2}{2}M \dd\theta 
        =\frac{M}{6}, 
    \end{align}
    where $M:=\max_{\theta\in [0, 1]} \{\|f^{(3)}(\theta)\|_1\}$. 
\end{proof}

\begin{lem}\label{lem:bound_of_each_order}
    Let $l\in\mathbb{N}$, $\mathcal{H}$ and $\mathcal{H}'$ be finite-dimensional Hilbert spaces, $\rho$ be a linear operator on $\mathcal{H}$, $O$ be a Hermitian operator on $\mathcal{H}$, $O'$ be a Hermitian operator on $\mathcal{H}'$, $\mathcal{E}$ be a linear map from linear operators on $\mathcal{H}$ to those on $\mathcal{H}'$, and $F_{\rho, \mathcal{E}}$ be a function from $\mathbb{R}$ to linear operators on $\mathcal{H}'$ defined by 
    \begin{align}
        F_{\rho, \mathcal{E}}(\theta):=e^{-\ii \theta O'}\mathcal{E}(e^{\ii \theta O}\rho e^{-\ii \theta O})e^{\ii \theta O'}\ \forall\theta\in\mathbb{R}. 
    \end{align}
    Then, for any $\theta\in\mathbb{R}$, 
    \begin{align}
        F_{\rho, \mathcal{E}}^{(l)}(\theta)
        =\sum_{j=0}^l \binom{l}{j}e^{-\ii\theta O'}\underbrace{[-\ii O', ..., [-\ii O',}_{j}\mathcal{E}(e^{\ii\theta O}\underbrace{[\ii O, ..., [\ii O,}_{l-j} \rho]...]e^{-\ii\theta O})]...]e^{\ii\theta O'}, 
    \end{align}
    and 
    \begin{align}
        \left\|F_{\rho, \mathcal{E}}^{(l)}(\theta)\right\|_1
        \leq \|\rho\|_1\|\mathcal{E}\|\cdot 2^l(\|O\|_\infty+\|O'\|_\infty)^l, 
    \end{align}
    where $\|\mathcal{E}\|$ is defined by 
    \begin{align}
        \|\mathcal{E}\|:=\max_{\|L\|_1=1}\{\|\mathcal{E}(L)\|_1\}. 
    \end{align}
\end{lem}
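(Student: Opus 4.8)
The plan is to differentiate $F_{\rho,\mathcal{E}}$ by separating its two sources of $\theta$-dependence — the outer conjugation by $e^{\mp\ii\theta O'}$ and the inner family $A(\theta):=e^{\ii\theta O}\rho e^{-\ii\theta O}$ fed through $\mathcal{E}$ — and then showing that the associated differentiation operations commute, so that a Leibniz-type binomial expansion reproduces exactly the claimed formula. First I would record the elementary differentiation rule for the conjugation map $C_\theta(X):=e^{-\ii\theta O'}Xe^{\ii\theta O'}$. For any $C^1$ operator-valued $X(\theta)$ one computes $\frac{\dd}{\dd\theta}C_\theta(X(\theta))=C_\theta\big([-\ii O',X(\theta)]+X'(\theta)\big)$, so setting $D[X]:=[-\ii O',X]+X'$ an immediate induction gives $F_{\rho,\mathcal{E}}^{(l)}(\theta)=C_\theta\big(D^l[X](\theta)\big)$ with $X(\theta)=\mathcal{E}(A(\theta))$.

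Next, using linearity of $\mathcal{E}$ together with $A'(\theta)=[\ii O,A(\theta)]$, the ``derivative'' part of $D$ acts on $X$ as $X'(\theta)=\mathcal{E}([\ii O,A(\theta)])$, that is, \emph{inside} $\mathcal{E}$, whereas the ``algebraic'' part $[-\ii O',\,\cdot\,]$ acts \emph{outside} $\mathcal{E}$. Since one operation is applied before and the other after the linear map $\mathcal{E}$, the two commute, and hence $D^l$ expands by the binomial theorem into a sum over $j=0,\dots,l$ of $j$ outer commutators with $-\ii O'$ and $l-j$ inner commutators with $\ii O$, weighted by $\binom{l}{j}$. Using in addition that $\ii O$ commutes with $e^{\ii\theta O}$, I can pull the inner conjugation out of the nested commutators, i.e. $\underbrace{[\ii O,\dots[\ii O,}_{l-j}A(\theta)]\dots]=e^{\ii\theta O}\big(\underbrace{[\ii O,\dots[\ii O,}_{l-j}\rho]\dots]\big)e^{-\ii\theta O}$, which puts the expression in precisely the stated form after reapplying $C_\theta$.

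Finally, for the norm bound I would estimate the resulting sum termwise. Unitary invariance of the trace norm removes the factors $e^{\pm\ii\theta O'}$ and $e^{\pm\ii\theta O}$; the elementary inequality $\|[-\ii O',Y]\|_1\le 2\|O'\|_\infty\|Y\|_1$ (and its analogue for $\ii O$) controls each nested commutator, contributing $(2\|O'\|_\infty)^j$ and $(2\|O\|_\infty)^{l-j}$ respectively; and the definition $\|\mathcal{E}\|=\max_{\|L\|_1=1}\|\mathcal{E}(L)\|_1$ handles the single application of $\mathcal{E}$. This bounds the $j$-th term by $\|\rho\|_1\|\mathcal{E}\|\,2^l\,\|O'\|_\infty^{j}\|O\|_\infty^{l-j}$, and resumming via $\sum_{j=0}^l\binom{l}{j}\|O'\|_\infty^{j}\|O\|_\infty^{l-j}=(\|O\|_\infty+\|O'\|_\infty)^l$ yields the asserted estimate $\|F_{\rho,\mathcal{E}}^{(l)}(\theta)\|_1\le\|\rho\|_1\|\mathcal{E}\|\,2^l(\|O\|_\infty+\|O'\|_\infty)^l$.

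The main obstacle I anticipate is not any single estimate but the bookkeeping in the induction: one must verify cleanly that the inner differentiation channel $\mathcal{E}\circ[\ii O,\cdot]$ and the outer algebraic operation $[-\ii O',\cdot]$ genuinely commute when acting on $\mathcal{E}(A(\theta))$, since only then is the binomial collapse of $D^l$ legitimate. Everything else is routine submultiplicativity and unitary invariance, and this lemma is evidently the building block feeding the third-order Taylor estimate of Lemma~\ref{lem:taylor_based_bound}.
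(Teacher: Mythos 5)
Your proposal is correct and follows essentially the same route as the paper: both derive the formula for $F_{\rho,\mathcal{E}}^{(l)}(\theta)$ from the differentiation rule $\frac{\dd}{\dd\theta}e^{\ii\theta A}Be^{-\ii\theta A}=e^{\ii\theta A}[\ii A,B]e^{-\ii\theta A}$ together with a Leibniz/binomial expansion (your explicit operator $D=\mathrm{ad}_{-\ii O'}+\partial_\theta$ with commuting summands is just a slightly more formalized version of the paper's direct appeal to the Leibniz rule), and the norm bound is obtained identically via unitary invariance of the trace norm, $\|[A,B]\|_1\le 2\|A\|_\infty\|B\|_1$, the definition of $\|\mathcal{E}\|$, and the binomial theorem.
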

\begin{proof}
    We note that $\frac{\dd}{\dd\theta}e^{\ii\theta A}B e^{-\ii\theta A}=e^{\ii\theta A}[\ii A, B] e^{-\ii\theta A}$ for all linear operators $A$ and $B$. 
    By using the Leibniz rule, we get the expression for $F_{\rho, \mathcal{E}}^{(l)}(\theta)$. 
    By applying the triangle inequality to the expression, we get 
    \begin{align}
        \left\|F_{\rho, \mathcal{E}}^{(l)}(\theta)\right\|_1 
        \leq\sum_{j=0}^l \binom{l}{j}\|e^{-\ii\theta O'}\underbrace{[-\ii O', ..., [-\ii O',}_{j}\mathcal{E}(e^{\ii\theta O}\underbrace{[\ii O, ..., [\ii O,}_{l-j} \rho]...]e^{-\ii\theta O})]...]e^{\ii\theta O'}\|_1. 
    \end{align}
    We note that by the triangle inequality and H\"{o}lder's inequality, we have 
    \begin{align}
        \|[A, B]\|_1
        =\|AB-BA\|_1
        \leq\|AB\|_1+\|BA\|_1
        \leq\|A\|_\infty\|B\|_1+\|B\|_1\|A\|_\infty
        =2\|A\|_\infty\|B\|_1
    \end{align}
    for all linear operators $A$ and $B$. 
    By using this relation, the definition of $\|\mathcal{E}\|$, and the unitary invariance of the trace norm, we get 
    \begin{align}
        \left\|F_{\rho, \mathcal{E}}^{(l)}(\theta)\right\|_1 
        \leq&\sum_{j=0}^l \binom{l}{j}\|\underbrace{[-\ii O', ..., [-\ii O',}_{j}\mathcal{E}(e^{\ii\theta O}\underbrace{[\ii O, ..., [\ii O,}_{l-j} \rho]...]e^{-\ii\theta O})]...]\|_1 \\
        \leq&\sum_{j=0}^l \binom{l}{j} (2\|O'\|_\infty)^j\|\mathcal{E}(e^{\ii\theta O}\underbrace{[\ii O, ..., [\ii O,}_{l-j} \rho]...]e^{-\ii\theta O})\|_1 \\
        \leq&\sum_{j=0}^l \binom{l}{j} (2\|O'\|_\infty)^j\|\mathcal{E}\|\|e^{\ii\theta O}\underbrace{[\ii O, ..., [\ii O,}_{l-j} \rho]...]e^{-\ii\theta O}\|_1 \\
        =&\sum_{j=0}^l \binom{l}{j} (2\|O'\|_\infty)^j\|\mathcal{E}\|\|\underbrace{[\ii O, ..., [\ii O,}_{l-j} \rho]...]\|_1 \\
        \leq&\sum_{j=0}^l \binom{l}{j} (2\|O'\|_\infty)^j\|\mathcal{E}\|(2\|O\|_\infty)^{l-j}\|\rho\|_1 \\
        =&\|\rho\|_1\|\mathcal{E}\|\cdot 2^l(\|O\|_\infty+\|O'\|_\infty)^l. 
    \end{align}
\end{proof}

Using Lemma~\ref{lem:taylor_based_bound} and Lemma~\ref{lem:bound_of_each_order}, we prove a lemma that relates the setup in QLAN and ours:
\begin{lem}\label{lem:interconversion_projected_generators}
    Let $\phi$ be an arbitrary pure state in a finite-dimensional Hilbert space. Fix a set of Hermitian operators $\{X_\mu'\}_{\mu=1}^m$. 
    We define another set of Hermitian operator $ \{\tilde{X}_\mu'\}_{\mu=1}^m$ by $\tilde{X}_\mu'\coloneqq (I-\phi) X_\mu' \phi +\phi X_\mu' (I-\phi)$. 
    For $\epsilon\in(0,1/6)$, it holds
    \begin{align}
        \lim_{N\to\infty }\sup_{\|u\|<N^\epsilon }T\left(\phi_{u/\sqrt{N}}^{\otimes N},\tilde{\phi}_{u/\sqrt{N}}^{\otimes N} \right)=0,
    \end{align}
    where $\phi_{\theta}\coloneqq e^{\ii\theta^\mu X_\mu'}\phi e^{-\ii\theta^\mu X_\mu'}$ and $\tilde{\phi}_{\theta}\coloneqq e^{\ii\theta^\mu \tilde{X}_\mu'}\phi e^{-\ii\theta^\mu \tilde{X}_\mu'}$ for $\theta\in\mathbb{R}^m$.
\end{lem}
\begin{proof}
    Let us define $O\coloneqq \frac{1}{\sqrt{N}}u^\mu X_\mu' $ and $\tilde{O}\coloneqq \frac{1}{\sqrt{N}}u^\mu \tilde{X}_\mu'=(I-\phi) O\phi+\phi O (I-\phi)$ for $u\in\mathbb{R}^m$ such that $\|u\|<N^\epsilon$. 
    Introducing a finite non-negative number $L$ by 
    \begin{align}
        L\coloneqq \max\left\{\max_{\|\theta\|=1}\|\theta^\mu X_\mu'\|_\infty,\max_{\|\theta\|=1}\|\theta^\mu \tilde{X}_\mu'\|_\infty\right\} ,
    \end{align}
    we have $\|O\|_\infty\leq L\frac{\|u\|}{\sqrt{N}} $ and $\|\tilde{O}\|_\infty\leq  L\frac{\|u\|}{\sqrt{N}}$. 
    Applying Lemma~\ref{lem:taylor_based_bound} and Lemma~\ref{lem:bound_of_each_order} for $f(\theta)\coloneqq e^{-\ii \theta \tilde{O}}e^{\ii \theta O}\phi e^{-\ii \theta O}e^{\ii \theta \tilde{O}}$, we find
    \begin{align}
        \left|\Braket{\phi|f(1)-\left(f(0)+f'(0)+\frac{1}{2}f''(0)\right)|\phi}\right|&\leq \left\|f(1)-\left(f(0)+f'(0)+\frac{1}{2}f''(0)\right)\right\|_1\\
        &\leq \frac{2^3}{6}(\|O\|_\infty+\|\tilde{O}\|_\infty)^3\leq \frac{2^6}{6}\left(L\frac{\|u\|}{\sqrt{N}}\right)^3.\label{eq:fid_expansion_projection}
    \end{align}

    Since
    \begin{align}
        \braket{\phi|\tilde{O}^{n}O^m|\phi}=
        \begin{cases}
            1&\quad (n,m)=(0,0)\\
            \braket{\phi|O|\phi}&\quad (n,m)=(0,1)\\
            0&\quad (n,m)=(1,0)
        \end{cases},
    \end{align}
    we get
    \begin{align}
        \braket{\phi|f(0)|\phi}&=1,\\
        \braket{\phi|f'(0)|\phi}&=\ii \braket{\phi|O|\phi}-\ii\braket{\phi|O|\phi}=0 .
    \end{align}
    Similarly, since
    \begin{align}
        \braket{\phi|\tilde{O}^{n}O^m|\phi}=
        \begin{cases}
            \braket{\phi|O^2|\phi}&\quad (n,m)=(0,2)\\
            \braket{\phi|O(I-\phi)O|\phi}&\quad (n,m)=(1,1),\,(2,0)
        \end{cases}
    \end{align}
    and hence 
    \begin{align}
        \sum_{\substack{n,m=0,1,2\\ n+m= 2}}\frac{(-\ii)^n}{n!}\frac{\ii^m}{m!}\braket{\phi|\tilde{O}^{ n}O^m|\phi}=-\frac{1}{2}\braket{\phi|O^2|\phi}+\frac{1}{2}\braket{\phi|O(I-\phi)O|\phi}=-\frac{1}{2}\braket{\phi|O|\phi}^2
    \end{align}
    we get
    \begin{align}
        \braket{\phi|f''(0)|\phi}&=\sum_{\substack{n,m,n',m'=0,1,2\\ n+m+n'+m'= 2}}\frac{(-\ii)^n}{n!}\frac{\ii^m}{m!}\frac{(-\ii)^{m'}}{m'!}\frac{\ii^{n'}}{n'!}\braket{\phi|\tilde{O}^{ n}O^m|\phi}\braket{\phi| O^{m'}\tilde{O}^{ n'}|\phi}\\
        &=-\frac{1}{2}\braket{\phi|O|\phi}^2+\braket{\phi|O|\phi}^2-\frac{1}{2}\braket{\phi|O|\phi}^2=0.
    \end{align}
    Therefore, by using $\braket{\phi|f(1)|\phi}= \mathrm{Fid}\left(\phi_{u/\sqrt{N}},\tilde{\phi}_{u/\sqrt{N}} \right)$, Eq.~\eqref{eq:fid_expansion_projection} implies
    \begin{align}
        \left|\mathrm{Fid}\left(\phi_{u/\sqrt{N}},\tilde{\phi}_{u/\sqrt{N}} \right)-1\right|\leq \frac{2^3}{6}(\|O\|_\infty+\|\tilde{O}\|_\infty)^3\leq \frac{2^6}{6}\left(L\frac{\|u\|}{\sqrt{N}}\right)^3\leq \frac{2^6}{6}(LN^{-1/2+\epsilon})^{3}\label{eq:projected_generators_approximation_error}
    \end{align}
    for any $u$ such that $\|u\|<N^\epsilon$. 

    Since $(N^{-1/2+\epsilon})^3=o(1/N)$ if $\epsilon \in (0,1/6)$, we get
    \begin{align}
        \inf_{\|u\|<N^\epsilon}\mathrm{Fid}\left(\phi_{u/\sqrt{N}}^{\otimes N},\tilde{\phi}_{u/\sqrt{N}}^{\otimes N} \right)=\inf_{\|u\|<N^\epsilon}\left(\mathrm{Fid}\left(\phi_{u/\sqrt{N}},\tilde{\phi}_{u/\sqrt{N}} \right)\right)^N\geq \left(1+o\left(\frac{1}{N}\right)\right)^N,
    \end{align}
    implying that 
    \begin{align}
        \lim_{N\to\infty}\inf_{\|u\|<N^\epsilon}\mathrm{Fid}\left(\phi_{u/\sqrt{N}}^{\otimes N},\tilde{\phi}_{u/\sqrt{N}}^{\otimes N} \right)
        &=\lim_{N\to\infty}\left(1+o\left(\frac{1}{N}\right)\right)^N=1.
    \end{align}
    From Fuchs-van de Graaf's inequalities, we also get
    \begin{align}
        \lim_{N\to\infty }\sup_{\|u\|<N^\epsilon }T\left(\phi_{u/\sqrt{N}}^{\otimes N},\tilde{\phi}_{u/\sqrt{N}}^{\otimes N} \right)=0.
    \end{align}
\end{proof}

Therefore, as long as the asymptotic properties of an i.i.d. copies of $\phi_{u/\sqrt{N}}=e^{\ii u^\mu X_\mu'/\sqrt{N}}\phi e^{-\ii u^\nu X_\nu'/\sqrt{N}}$ are concerned, it suffice to analyze $\tilde{\phi}_{u/\sqrt{N}}= e^{\ii u^\mu \tilde{X}_\mu' /\sqrt{N}}\phi e^{-\ii u^\nu\tilde{X}_\nu'/\sqrt{N}}$.

The asymptotic behavior of $\tilde{\phi}_{u/\sqrt{N}}^{\otimes N}$ can be understood by studies on QLAN. The result in \cite{lahiry_minimax_2024} is complicated since it aims to prove a general statement applicable to a wide class of statistical models. In this paper, we only focus on pure-state models. For pure-state unitary models, the result in \cite{lahiry_minimax_2024} can be stated as follows:
\begin{lem}[Theorem~3.3 in \cite{lahiry_minimax_2024}, particularized to pure-state model. See also \cite{girotti_optimal_2024}.]\label{lem:QLAN}
    Let $\phi$ be an arbitrary state in a $d$-dimensional Hilbert space. For any fixed set of Hermitian operators $\{X_\mu'\}_{\mu=1}^m$, we define a pure-state qudit model $\tilde{\phi}_{\theta}\coloneqq e^{\ii\theta^\mu \tilde{X}_\mu'}\phi e^{-\ii\theta^\nu \tilde{X}_\nu'}$ for $\theta\in\mathbb{R}^m$ with $\tilde{X}_\mu'\coloneqq (I-\phi) X_\mu '\phi +\phi X_\mu '(I-\phi)$.

    Let $\{\ket{k}\}_{k=1}^{d-1}$ denote an orthonormal basis of the subspace orthogonal to $\ket{\phi}$, where $\ket{\phi}$ is a unit vector satisfying $\phi=\ket{\phi}\bra{\phi}$.
    We introduce a $(d-1)\times m$ matrix defined by
    \begin{align}
        \ii C_{k\mu }'\coloneqq \braket{k|\tilde{X}_\mu'|\phi}\label{eq:matrix_Gaussian_shift_model_definition}
    \end{align}
    for $k=1,\ldots, d-1$ and $\mu=1,\ldots, m$.

    Let $\mathcal{H}_{d-1}$ denote the $(d-1)$-mode Fock space. For $z=(z_1,\cdots,z_{d-1})^\top\in\mathbb{C}^{d-1}$, we denote the coherent state by
    \begin{align}
        \ket{z}\coloneqq \left(\bigotimes_{i=1}^{d-1}e^{z_ia_i^\dag -z_i^*a_i}\ket{0}_i\right)\in \mathcal{H}_{d-1},\label{eq:Gaussian_shift_model_definition}
    \end{align}
    where $a_i^\dag,a_i$ and $\ket{0}_i$ denote the creation and annihilation operator and the vacuum state for the $i$th mode. 
    
    For $\epsilon\in(0,1/9)$, there exists a sequence of quantum channels $\{\mathcal{T}_N\}_{N}$ and $\{\mathcal{S}_N\}_N$ such that
    \begin{align}
        \lim_{N\to\infty}\sup_{\|u\|<N^{\epsilon}}T\left(\mathcal{T}_N(\tilde{\phi}^{\otimes N}_{u/\sqrt{N}}),\ket{C'u}\bra{C'u}\right)&=0,\\
        \lim_{N\to\infty}\sup_{\|u\|<N^{\epsilon}}T\left(\tilde{\phi}^{\otimes N}_{u/\sqrt{N}},\mathcal{S}_N(\ket{C'u}\bra{C'u})\right)&=0.
    \end{align}
\end{lem}

By using Lemma~\ref{lem:interconversion_projected_generators} and Lemma~\ref{lem:QLAN}, we prove the relation between the asymptotic rate and the scale in the parameters:  
\begin{lem}\label{lem:rate_changing_part}
    Let $\phi$ be an arbitrary pure state in a finite-dimensional Hilbert space. Fix any sets of Hermitian operators $\{X_\mu'\}_{\mu=1}^m$ and introduce a pure-state statistical model $\phi_{\theta}\coloneqq e^{\ii \theta^\mu X_\mu'}\phi e^{-\ii \theta^\nu X_\nu'}$ for $\theta\in\mathbb{R}^m$. For any $r>0$, there exist sequences of quantum channels $\{\mathcal{E}_N\}_N$ and $\{\mathcal{E}'_N\}_N$ such that
    \begin{align}
        \lim_{N\to\infty}\sup_{\|u\|<N^\epsilon}T\left(\mathcal{E}_N\left(\phi_{\sqrt{r}u/\sqrt{N}}^{\otimes N}\right),\phi_{u/\sqrt{N}}^{\otimes \floor{rN}}\right)&=0,\\
        \lim_{N\to\infty}\sup_{\|u\|<N^\epsilon}T\left(\phi_{\sqrt{r}u/\sqrt{N}}^{\otimes N},\mathcal{E}_N'\left(\phi_{u/\sqrt{N}}^{\otimes \floor{rN}}\right)\right)&=0
    \end{align}
    for $\epsilon\in(0,1/9)$.
\end{lem}
\begin{proof}
    For $\tilde{X}_\mu' \coloneqq (I-\phi)X_\mu'  \phi +\phi X_\mu'(I-\phi)$, consider a sequence of statistical models $\{(\tilde{\phi}_{\theta})^{\otimes \floor{rN}}\}_N$, where $r>0$ and $\tilde{\phi}_{\theta}\coloneqq e^{\ii \theta^\mu\tilde{X}_\mu '}\phi e^{-\ii\theta^\nu \tilde{X}_\nu'}$. From Lemma~\ref{lem:QLAN}, there exist sequences of quantum channels $\{\mathcal{T}_N\}_N$ and $\{\mathcal{S}_N\}_N$ such that
    \begin{align}
        \lim_{N\to\infty}\sup_{\|u\|<\floor{rN}^{\epsilon}}T\left(\mathcal{T}_{\floor{rN}}\left(\tilde{\phi}^{\otimes \floor{rN}}_{u/\sqrt{\floor{rN}}}\right),\ket{C'u}\bra{C'u}\right)&=0,\\
        \lim_{N\to\infty}\sup_{\|u\|<{\floor{rN}}^{\epsilon}}T\left(\tilde{\phi}^{\otimes \floor{rN}}_{u/\sqrt{\floor{rN}}},\mathcal{S}_{\floor{rN}}(\ket{C'u}\bra{C'u})\right)&=0.
    \end{align}
    for $\epsilon\in(0,1/9)$. Introducing a new parameter $u'/\sqrt{N}\coloneqq u/\sqrt{\floor{rN}}$, we get
    \begin{align}
        \lim_{N\to\infty}\sup_{\|u'\|<\floor{rN}^{\epsilon}\sqrt{\frac{\floor{rN}}{N}}}T\left(\mathcal{T}_{\floor{rN}}\left(\tilde{\phi}^{\otimes \floor{rN}}_{u'/\sqrt{N}}\right),\Ket{\sqrt{\frac{\floor{rN}}{N}}C'u'}\Bra{\sqrt{\frac{\floor{rN}}{N}}C'u'}\right)&=0,\\
        \lim_{N\to\infty}\sup_{\|u'\|<{\floor{rN}}^{\epsilon}\sqrt{\frac{\floor{rN}}{N}}}T\left(\tilde{\phi}^{\otimes \floor{rN}}_{u'/\sqrt{N}},\mathcal{S}_{\floor{rN}}\left(\Ket{\sqrt{\frac{\floor{rN}}{N}}C'u'}\Bra{\sqrt{\frac{\floor{rN}}{N}}C'u'}\right)\right)&=0.
    \end{align}
    For any $\epsilon'\in (0,\epsilon)$,  
    \begin{align}
        \lim_{N\to\infty }\frac{ N^{\epsilon'}}{{\floor{rN}}^{\epsilon}\sqrt{\frac{\floor{rN}}{N}}}=0,
    \end{align}
    meaning that ${\floor{rN}}^{\epsilon}\sqrt{\frac{\floor{rN}}{N}}\geq{ N^{\epsilon'}}$ holds for all sufficiently large $N$. In addition, it holds
    \begin{align}
        \lim_{N\to\infty}\left|\Braket{\sqrt{\frac{\floor{rN}}{N}}C'u'|\sqrt{r}C'u'}\right|=\lim_{N\to\infty}e^{-\|C'u'\|^2\left(\sqrt{r}-\sqrt{\frac{\floor{rN}}{N}}\right)^2}=1\label{eq:coherence_overlap}
    \end{align}
    for any $u'\in\mathbb{R}^m$. 
    Therefore, we get
    \begin{align}
        \lim_{N\to\infty}\sup_{\|u\|<N^{\epsilon'}}T\left(\mathcal{T}_{\floor{rN}}\left(\tilde{\phi}^{\otimes \floor{rN}}_{u/\sqrt{N}}\right),\ket{\sqrt{r}C'u}\bra{\sqrt{r}C'u}\right)&=0,\\
        \lim_{N\to\infty}\sup_{\|u\|<N^{\epsilon'}}T\left(\tilde{\phi}^{\otimes \floor{rN}}_{u/\sqrt{N}},\mathcal{S}_{\floor{rN}}\left(\ket{\sqrt{r}C'u}\bra{\sqrt{r}C'u}\right)\right)&=0.
    \end{align}
    Note that we can take an arbitrary value in $(0,1/9)$ as $\epsilon'$ by appropriately taking $\epsilon\in(0,1/9)$.

    On the other hand, from Lemma~\ref{lem:QLAN}, we have
    \begin{align}
        \lim_{N\to\infty}\sup_{\|u\|<N^{\epsilon}}T\left(\mathcal{T}_N\left(\tilde{\phi}_{\sqrt{r}u/\sqrt{N}}^{\otimes N}\right),\ket{\sqrt{r}C'u}\bra{\sqrt{r}C'u}\right)&=0,\label{eq:QLAN_error_rate_Tn}\\
        \lim_{N\to\infty}\sup_{\|u\|<N^{\epsilon}}T\left(\tilde{\phi}_{\sqrt{r}u/\sqrt{N}}^{\otimes N},\mathcal{S}_N(\ket{\sqrt{r}C'u}\bra{\sqrt{r}C'u})\right)&=0.\label{eq:QLAN_error_rate_Sn}
    \end{align}
    Therefore, we get
    \begin{align}
        \lim_{N\to\infty}\sup_{\|u\|<N^{\epsilon}}T\left(\mathcal{S}_{\floor{rN}}\circ\mathcal{T}_N\left(\tilde{\phi}_{\sqrt{r}u/\sqrt{N}}^{\otimes N}\right),\tilde{\phi}^{\otimes \floor{rN}}_{u/\sqrt{N}}\right)&=0,\label{eq:QGT_conserving_transf_channel}\\
        \lim_{N\to\infty}\sup_{\|u\|<N^{\epsilon}}T\left(\tilde{\phi}_{\sqrt{r}u/\sqrt{N}}^{\otimes N},\mathcal{S}_{N}\circ\mathcal{T}_{\floor{rN}}\left(\tilde{\phi}^{\otimes \floor{rN}}_{u/\sqrt{N}}\right)\right)&=0
    \end{align}
    for $\epsilon\in(0,1/9)$. From Lemma~\ref{lem:interconversion_projected_generators}, we also get
    \begin{align}
        \lim_{N\to\infty}\sup_{\|u\|<N^{\epsilon}}T\left(\mathcal{S}_{\floor{rN}}\circ\mathcal{T}_N\left(\phi_{\sqrt{r}u/\sqrt{N}}^{\otimes N}\right),\phi^{\otimes \floor{rN}}_{u/\sqrt{N}}\right)&=0,\\
        \lim_{N\to\infty}\sup_{\|u\|<N^{\epsilon}}T\left(\phi_{\sqrt{r}u/\sqrt{N}}^{\otimes N},\mathcal{S}_{N}\circ\mathcal{T}_{\floor{rN}}\left(\phi^{\otimes \floor{rN}}_{u/\sqrt{N}}\right)\right)&=0
    \end{align}
    for $\epsilon\in(0,1/9)$. 
\end{proof}
Importantly, for both $\phi_{\sqrt{r}u/\sqrt{N}}^{\otimes N}$ and $\phi^{\otimes \floor{rN}}_{u/\sqrt{N}}$, the QGT with respect to the parameters $u$ is equal to $r\, \cov(\phi,\{X_\mu'\})$ in the limit of $N\to\infty$, where the elements of covariance matrix are defined by $\left(\cov(\phi,\{X_\mu'\})\right)_{\mu\nu }\coloneqq \mathrm{Tr}(\phi X_\mu'(I-\phi)X_\nu')$. In other words, the QGTs are preserved under reversible asymptotic conversion between $\{\phi_{\sqrt{r}u/\sqrt{N}}^{\otimes N}\}_N$ and $\{\phi^{\otimes \floor{rN}}_{u/\sqrt{N}}\}_N$. 

Using the above lemma, the asymptotic rate can be effectively set to $r=1$ by adjusting the scale of the parameters. The following lemma claims that the asymptotic conversion with rate $r=1$ is possible only if the covariance matrix of the original state is larger than that of the target state in the sense of matrix inequality:
\begin{lem}\label{lem:covariance_decreasing_part}
    Let $\psi\in\mathcal{P}(\mathcal{H})$ and $\phi\in\mathcal{P}(\mathcal{H}')$ be arbitrary pure states in finite-dimensional Hilbert spaces $\mathcal{H},\mathcal{H}'$. Fix any sets of Hermitian operators $\{X_\mu\}_{\mu=1}^m$ and $\{X_\mu'\}_{\mu=1}^m$. If $\cov(\psi,\{X_\mu\}_{\mu=1}^m)\geq \cov(\phi,\{X_\mu'\}_{\mu=1}^m)$ holds, then there exists a quantum channel $\mathcal{E}$ such that
    \begin{align}
        \lim_{N\to\infty}\sup_{\|u\|<N^\epsilon}T\left(\mathcal{E}^{\otimes N}\left(\psi_{u/\sqrt{N}}^{\otimes N}\right),\phi_{u/\sqrt{N}}^{\otimes N}\right)=0\label{eq:covariance_decreasing_part}
    \end{align}
    for $\epsilon\in(0,1/6)$, where $\psi_{\theta}\coloneqq e^{\ii \theta^\mu X_\mu}\psi e^{-\ii \theta^\nu X_\nu}$ and $\phi_{\theta}\coloneqq e^{\ii \theta^\mu X_\mu'}\phi e^{-\ii \theta^\nu X_\nu'}$ for $\theta\in\mathbb{R}^m$. 
\end{lem}
\begin{proof}
    Following Lemma~\ref{lem:interconversion_projected_generators}, we introduce $\tilde{X}_\mu\coloneqq (I-\psi)X_\mu \psi+\psi X_\mu (I-\psi)$ and $\tilde{X}_\mu'\coloneqq (I-\phi)X_\mu'\phi+\phi X_\mu' (I-\phi)$ and denote
    \begin{align}
        \tilde{\psi}_{u/\sqrt{N}}\coloneqq e^{\ii\frac{1}{\sqrt{N}}u^\mu \tilde{X}_\mu }\psi e^{-\ii\frac{1}{\sqrt{N}}u^\nu\tilde{X}_\nu },\quad \tilde{\phi}_{u/\sqrt{N}}\coloneqq e^{\ii\frac{1}{\sqrt{N}}u^\mu\tilde{X}_\mu'}\phi e^{-\ii\frac{1}{
    \sqrt{N}}u^\nu\tilde{X}_\nu' }.
    \end{align}

    Let us fix orthonormal bases $\{\ket{k}\}_{k=1}^{d-1}$ and $\{\ket{k'}\}_{k'=1}^{d'-1}$ for the orthogonal complements to $\ket{\psi}$ and $\ket{\phi}$, respectively, where $d\coloneqq \dim\mathcal{H}$, $d'\coloneqq\dim\mathcal{H}'$, and $\ket{\psi}\in\mathcal{H},\,\ket{\phi}\in\mathcal{H}'$ are unit vectors such that $\psi=\ket{\psi}\bra{\psi}$ and $\phi=\ket{\phi}\bra{\phi}$. For matrices $C:\mathbb{C}^m\to\mathbb{C}^{d-1}$, $C':\mathbb{C}^m\to\mathbb{C}^{d'-1}$ whose elements are given by
    \begin{align}
        (C)_{ki}&\coloneqq -\ii\braket{k|X_\mu|\psi} ,\quad (C')_{l'i}\coloneqq -\ii\braket{l'|X_\mu'|\phi},
    \end{align}
    we define matrices $Z:\mathbb{C}^{d-1}\to \mathbb{C}^{d'-1}$ and $\Gamma\coloneqq \mathbb{C}^{d-1}\to \mathbb{C}^{d-1}$ by
    \begin{align}
        Z\coloneqq C'C^{+},\label{eq:definition_z}
    \end{align}
    and 
    \begin{align}
        \Gamma\coloneqq I-Z^\dag Z,\label{eq:definition_gamma}
    \end{align}
    where $C^{+}$ denotes the Moore–Penrose inverse of $C$. Note that the assumption that $\cov(\psi,\{X_\mu\}_{\mu=1}^m)\geq \cov(\phi,\{X_\mu'\}_{\mu=1}^m)$ is equivalent to
    \begin{align}
        C^\dag C\geq C^{\prime\dag}C'\label{eq:c_dag_c}.
    \end{align}
    Let $P$ be the projector to the range of $C$, given by $P =CC^{+}$. By using Eq.~\eqref{eq:c_dag_c}, it holds
    \begin{align}
        \Gamma=I-Z^\dag Z\geq P-Z^\dag Z=P^2-Z^\dag Z=(C^{+})^\dag  (C^\dag C-C^{\prime \dag}C')C^{+}\geq  0,
    \end{align}
    i.e., $\Gamma$ is positive-semidefinite. 
    By using the square root $\sqrt{\Gamma}$ of $\Gamma$, the operators $\{K_k\}_{k=0}^{d-1}$ defined by
    \begin{align}
        K_0&\coloneqq \ket{\phi}\bra{\psi}+\sum_{k'=1}^{d'-1} \sum_{k=1}^{d-1}z_{k'k}\ket{k'}\bra{k},\label{eq:K_0_definition}\\
        K_{k}&\coloneqq \sum_{l=1}^{d-1}\left(\sqrt{\Gamma}\right)_{kl}\ket{\phi}\bra{l}\qquad (k=1,2,\cdots,d-1),\label{eq:K_k_definition}
    \end{align}
    satisfy the normalization condition:
    \begin{align}
        \sum_{k=0}^{d-1}K_{k}^\dag K_{k}
     &=\ket{\psi}\bra{\psi}+\sum_{k,l=1}^{d-1}\sum_{k',l'=1}^{d'-1}z_{k'k}^*z_{l'l}\delta_{k'l'}\ket{k}\bra{l}+\sum_{k,l=1}^{d-1}\sum_{m=1}^{d-1}\left(\sqrt{\Gamma}\right)_{mk}^*\left(\sqrt{\Gamma}\right)_{ml}\ket{k}\bra{l}\\
        &=\ket{\psi}\bra{\psi}+\sum_{k,l=1}^{d-1}\ket{k}\bra{l}\left(Z^\dag Z+\Gamma\right)_{kl}=\ket{\psi}\bra{\psi}+\sum_{k=1}^{d-1}\ket{k}\bra{k}=I.
    \end{align}
    Therefore, the linear map $\mathcal{E}\left(\cdot\right)\coloneqq \sum_{k=0}^{d-1}K_{k}\left(\cdot\right)K_{k}^\dag$ is a completely positive trace-preserving map, i.e., a quantum channel. In the following, we prove that this quantum channel satisfies Eq.~\eqref{eq:covariance_decreasing_part}.

    Let us define Hermitian operators $O,O'$ by $O\coloneqq u^\mu \tilde{X}_\mu/\sqrt{N}$ and $O'\coloneqq u^\mu \tilde{X}_\mu'/\sqrt{N}$. Introducing a finite non-negative number $L$ by 
    \begin{align}
        L\coloneqq \max\left\{\max_{\|\theta\|=1}\|\theta^\mu\tilde{X}_\mu\|_\infty,\max_{\|\theta\|=1}\|\theta^\mu \tilde{X}_\mu'\|_\infty\right\},
    \end{align}
    we have $\|O\|_\infty\leq L\frac{\|u\|}{\sqrt{N}} $ and $\|O'\|_\infty\leq  L\frac{\|u\|}{\sqrt{N}}$. Applying Lemma~\ref{lem:taylor_based_bound} and Lemma~\ref{lem:bound_of_each_order} for $f(\theta)\coloneqq e^{-\ii \theta O'}\mathcal{E}(e^{\ii \theta O}\psi e^{-\ii \theta O})e^{\ii \theta O'}$, we get
    \begin{align}
        \left|\Braket{\phi|f(1)-\left(f(0)+f'(0)+\frac{1}{2}f''(0)\right)|\phi}\right|&\leq \left\|f(1)-\left(f(0)+f'(0)+\frac{1}{2}f''(0)\right)\right\|_1\\
        &\leq \frac{2^3}{6}(\|O\|_\infty+\|O'\|_\infty)^3\leq \frac{2^6}{6}\left(L\frac{\|u\|}{\sqrt{N}}\right)^3,\label{eq:fidelity_expansion}
    \end{align}
    where we have used the fact that $\|\mathcal{E}\|\leq 1$ holds for any positive and trace-preserving map $\mathcal{E}$ \cite{perez-garcia_contractivity_2006}. 

    Let us evaluate each term of $\braket{\phi|\left(f(0)+f'(0)+\frac{1}{2}f''(0)\right)|\phi}$. From the definition of the channel $\mathcal{E}$, we get $\mathcal{E}(\psi)=\phi$, implying that
    \begin{align}
        \braket{\phi|f(0)|\phi}=1.\label{eq:zeroth}
    \end{align}
    Since $\braket{\phi|O'|\phi}=0$, we have $\braket{\phi|O'\mathcal{E}(\psi)|\phi}=0$ and $\braket{\phi|\mathcal{E}(\psi)O'|\phi}=0$. In addition, since
    \begin{align}
        \mathcal{E}(O\psi)&=\frac{-\ii}{\sqrt{N}}\sum_{k'=1}^{d'-1}\left(ZCu\right)_{k'}\ket{k'}\bra{\phi},\label{eq:E_O_psi}
    \end{align}
    we also get $\braket{\phi|\mathcal{E}(O\psi)|\phi}=0$ and $\braket{\phi|\mathcal{E}(\psi O)|\phi}=0$. Therefore, 
    \begin{align}
        \braket{\phi|f'(0)|\phi}=0.\label{eq:first}
    \end{align}

    From $\mathcal{E}(\psi)=\phi$ and $\braket{\phi|O'|\phi}=0$, we get
    \begin{align}
        \frac{1}{2}\braket{\phi|(O^{\prime 2}\mathcal{E}(\psi )+\mathcal{E}(\psi)O^{\prime 2})|\phi}&=\braket{\phi|O^{\prime 2}|\phi}=\frac{1}{N}u^\mu u^\nu \braket{\phi|\tilde{X}'_\mu \tilde{X}'_\nu|\phi}=\frac{1}{N}(C^{\prime \dag}u)^\dag C'u,\\
        \braket{\phi|O'\mathcal{E}(\psi)O'|\phi}&=0.
    \end{align}
    From Eq.~\eqref{eq:E_O_psi}, we find $\bra{\phi}\mathcal{E}(O\psi)=0$ and $\mathcal{E}(\psi O)\ket{\phi}=0$. Therefore,
    \begin{align}
        \braket{\phi|\mathcal{E}(O\psi)O'|\phi}=0,\quad \braket{\phi|O'\mathcal{E}(\psi O)|\phi}=0.
    \end{align}
    From Eq.~\eqref{eq:E_O_psi}, we also get 
    \begin{align}
        O'\mathcal{E}(O\psi)=\frac{1}{N} (C'u)^\dag Z Cu\ket{\phi}\bra{\phi},\quad \mathcal{E}(\psi O)O'=\frac{1}{N} (ZCu)^\dag C'u\ket{\phi}\bra{\phi},
    \end{align}
    implying that
    \begin{align}
        \braket{\phi|O'\mathcal{E}(O\psi)|\phi}&=\frac{1}{N} (C'u)^\dag Z Cu,\quad 
         \braket{\phi|\mathcal{E}(\psi O)O'|\phi}=\frac{1}{N}(ZCu)^\dag C'u.
    \end{align}
    From the definition of the quantum channel $\mathcal{E}$, we get
    \begin{align}
        \mathcal{E}(O^2\psi)&=\frac{1}{N}\mathcal{E}((Cu)^\dag (Cu) \ket{\psi}\bra{\psi})=\frac{1}{N}(Cu)^\dag (Cu)\ket{\phi}\bra{\phi},\\
        \mathcal{E}(O\psi O)&=\frac{1}{N}\sum_{k,l=1}^{d-1}(Cu)_k(Cu)_l^*\mathcal{E}\left(\ket{k}\bra{l}\right)=\frac{1}{N}\left(\sum_{k',l'=1}^{d'-1}(ZCu)_{k'}(ZCu)_{l'}^*\ket{k'}\bra{l'}+(Cu)^\dag \Gamma Cu\ket{\phi}\bra{\phi}\right),
    \end{align}
    and hence
    \begin{align}
        \frac{1}{2}\braket{\phi|\mathcal{E}(O^2\psi+\psi O^2)|\phi}&=\frac{1}{N}(Cu)^\dag C u,\quad 
         \braket{\phi|\mathcal{E}(O\psi O)|\phi}=\frac{1}{N}(Cu)^\dag \Gamma(Cu).
    \end{align}
    By combining these results, we get
    \begin{align}
        \Braket{\phi|\frac{1}{2}f''(0)|\phi}&=-\frac{1}{2}\braket{\phi|(O^{\prime 2}\mathcal{E}(\psi )-2O'\mathcal{E}(\psi)O'+\mathcal{E}(\psi)O^{\prime 2})|\phi}+ \braket{\phi|(O'\mathcal{E}(O\psi-\psi O)-\mathcal{E}(O\psi-\psi O)O')|\phi}\nonumber \\
        &\quad -\frac{1}{2}\braket{\phi|\mathcal{E}(O^2\psi-2O\psi O+\psi O^2)|\phi}\\
        &=\frac{1}{N}\left(-(C^{\prime \dag}u)^\dag C'u+(C'u)^\dag Z Cu+(ZCu)^\dag C'u-(Cu)^\dag C u+(Cu)^\dag \Gamma(Cu)\right)\\
        &=-\frac{1}{N}u^\dag (C'-ZC)^\dag  (C'-ZC)u.
    \end{align}
    Let $Q$ be the projector $Q$ to the support of $C$, given by $Q=C^+C$. From $C^\dag C\geq C^{\prime \dag }C'$, we get
    \begin{align}
        (C'-ZC)^\dag(C'-ZC) 
        =(I-Q)(C^{\prime\dag}C')(I-Q) 
        \leq& (I-Q)C^\dag C(I-Q) 
        =(I-Q)C^\dag (C-CC^+C) 
        =0.
    \end{align}
    We thus have $C'-ZC=0$, which implies 
    \begin{align}
       \Braket{\phi|\frac{1}{2}f''(0)|\phi}=0.\label{eq:second}
    \end{align}

    Since $\mathrm{Fid}\left(\mathcal{E}(\tilde{\psi}_{u/\sqrt{N}}),\tilde{\phi}_{u/\sqrt{N}}\right)=\braket{\phi|f(1)|\phi}$, 
    from Eqs.~\eqref{eq:fidelity_expansion}, \eqref{eq:zeroth}, \eqref{eq:first}, and \eqref{eq:second}, we have proven
    \begin{align}
        \left|\mathrm{Fid}\left(\mathcal{E}(\tilde{\psi}_{u/\sqrt{N}}),\tilde{\phi}_{u/\sqrt{N}}\right)-1\right|\leq \frac{2^6}{6}\left(L\frac{\|u\|}{\sqrt{N}}\right)^3\leq \frac{2^6}{6}(LN^{-1/2+\epsilon})^{3}\label{eq:conversion_reduction_QGT_error_rate}
    \end{align}
    for any $u$ such that $\|u\|<N^\epsilon$. 

    Since $(N^{-1/2+\epsilon})^3=o(1/N)$ if $\epsilon \in (0,1/6)$, we get
    \begin{align}
        \inf_{\|u\|<N^\epsilon }\mathrm{Fid}\left(\mathcal{E}^{\otimes N}(\tilde{\psi}_{u/\sqrt{N}}^{\otimes N}),\tilde{\phi}_{u/\sqrt{N}}^{\otimes N}\right)=\inf_{\|u\|<N^\epsilon }\mathrm{Fid}\left(\mathcal{E}(\tilde{\psi}_{u/\sqrt{N}}),\tilde{\phi}_{u/\sqrt{N}}\right)^N\geq \left(1+o\left(\frac{1}{N}\right)\right)^N,\label{eq:conversion_reduction_QGT_error_rate_powerN}
    \end{align}
    and hence
    \begin{align}
        \lim_{N\to\infty }\inf_{\|u\|<N^\epsilon }\mathrm{Fid}\left(\mathcal{E}^{\otimes N}(\tilde{\psi}_{u/\sqrt{N}}^{\otimes N}),\tilde{\phi}_{u/\sqrt{N}}^{\otimes N}\right)=1
    \end{align}
    for $\epsilon\in(0,1/6)$.
    From the Fuchs-van de Graaf inequalities, we also have
    \begin{align}
        \lim_{N\to\infty }\sup_{\|u\|<N^\epsilon }T\left(\mathcal{E}^{\otimes N}(\tilde{\psi}_{u/\sqrt{N}}^{\otimes N}),\tilde{\phi}_{u/\sqrt{N}}^{\otimes N}\right)=0.\label{eq:QGT_reducing_transf_channels}
    \end{align}
    Therefore, by using Lemma~\ref{lem:interconversion_projected_generators}, we get
    \begin{align}
        \lim_{N\to\infty }\sup_{\|u\|<N^\epsilon }T\left(\mathcal{E}^{\otimes N}(\psi_{u/\sqrt{N}}^{\otimes N}),\phi_{u/\sqrt{N}}^{\otimes N}\right)=0
    \end{align}
    for $\epsilon\in(0,1/6)$.
\end{proof}

By using Lemma~\ref{lem:rate_changing_part} and Lemma~\ref{lem:covariance_decreasing_part}, we complete the proof of Lemma~\ref{lem:QLAN_asymptotic_conversion}.
\begin{lem*}[Restatement of Lemma~\ref{lem:QLAN_asymptotic_conversion}]
    For sets of Hermitian operators $ \{X_\mu\}_{\mu=1}^m$ and $\{X_\mu'\}_{\mu=1}^m$, we define $\mathcal{U}_{\theta}(\cdot)\coloneqq e^{\ii \theta^\mu X_\mu}(\cdot)e^{-\ii \theta^\nu X_\nu}$ and $\mathcal{U}'_{\theta}(\cdot)\coloneqq e^{\ii \theta^\mu X_\mu'}(\cdot)e^{-\ii\theta^\nu X_\nu'}$ for $\theta\in\mathbb{R}^m$. For given two pure states $\psi$ and $\phi$, we define pure-state statistical models by $\mathcal{U}_{\theta}(\psi)$ and $\mathcal{U}'_{\theta}(\phi)$ and denote their QGTs by $\mathcal{Q}^\psi$ and $\mathcal{Q}^\phi$, respectively. If $r>0$ satisfies $\mathcal{Q}^\psi\geq r\mathcal{Q}^\phi$, then there exists a sequence of quantum channels $\{\mathcal{E}_N\}_N$ such that 
    \begin{align}
        &\lim_{N\to\infty}\sup_{\|u\|<N^\epsilon}T\left(\mathcal{E}_{N}\left(\mathcal{U}_{\frac{u}{\sqrt{N}}}(\psi)^{\otimes N}\right),\mathcal{U}'_{\frac{u}{\sqrt{N}}}(\phi)^{\otimes \floor{rN}}\right)=0
    \end{align}
    for $\epsilon\in(0,1/9)$.
\end{lem*}
\begin{proof}
    The assumption $\mathcal{Q}^\psi\geq r\mathcal{Q}^\phi$ is equivalent to $\cov(\psi,\{X_\mu\}_{\mu=1}^m)\geq r \,\cov(\phi,\{X_\mu'\}_{\mu=1}^m)$. Lemma~\ref{lem:covariance_decreasing_part} implies that there is a channel $\mathcal{E}$ such that
    \begin{align}
        \lim_{N\to\infty}\sup_{\|u\|<N^\epsilon} T\left(\left(\mathcal{E}\left(e^{\ii \frac{u^\mu X_\mu}{\sqrt{N}}}\psi e^{-\ii \frac{u^\mu X_\mu}{\sqrt{N}}}\right)\right)^{\otimes N},\left(e^{\ii \frac{\sqrt{r}u^\mu X_\mu'}{\sqrt{N}}}\phi e^{-\ii \frac{\sqrt{r}u^\mu X_\mu'
        }{\sqrt{N}}}\right)^{\otimes N}\right)=0
    \end{align}
    for $\epsilon\in(0,1/9)$. On the other hand, Lemma~\ref{lem:rate_changing_part} implies there is a sequence of channels $\{\mathcal{E}_N'\}_N$ such that
    \begin{align}
        \lim_{N\to\infty}\sup_{\|u\|<N^\epsilon} T\left(\mathcal{E}'_N\left(e^{\ii \frac{\sqrt{r}u^\mu X_\mu'}{\sqrt{N}}}\phi e^{-\ii \frac{\sqrt{r}u^\mu X_\mu'
        }{\sqrt{N}}}\right)^{\otimes N}, \left(e^{\ii \frac{u^\mu X_\mu'}{\sqrt{N}}}\phi e^{-\ii \frac{u^\mu X_\mu'
        }{\sqrt{N}}}\right)^{\otimes \floor{rN}}\right)=0.
    \end{align}
    Therefore, for $\mathcal{E}_N'\circ \mathcal{E}^{\otimes N}$, we have
    \begin{align}
        \lim_{N\to\infty}\sup_{\|u\|<N^\epsilon} T\left(\mathcal{E}'_N\circ\mathcal{E}^{\otimes N}\left(e^{\ii \frac{u^\mu X_\mu}{\sqrt{N}}}\psi e^{-\ii \frac{u^\mu X_\mu
        }{\sqrt{N}}}\right)^{\otimes N}, \left(e^{\ii \frac{u^\mu X_\mu'}{\sqrt{N}}}\phi e^{-\ii \frac{u^\mu X_\mu
        }{\sqrt{N}}}\right)^{\otimes \floor{rN}}\right)=0.
    \end{align}
\end{proof}

\subsubsection{Proof of measurability of the set defined in Eq.~\texorpdfstring{\eqref{eq:successful_estimation_set}}{(\getrefnumber{eq:successful_estimation_set}}}
\label{sec:measurability}
For $\delta>0$, let us define
\begin{align}
    H(\delta)\coloneqq \{\hat{g}\in G\mid \exists \theta,\, \mathcal{U}_{\theta}\circ \mathcal{U}_{\hat{g}}(\psi)=\mathcal{U}_g(\psi),\,\|\theta\|\leq \delta\}
\end{align}
and prove that $H(\delta)$ is a closed set. Let $\{h_n\}_{n\in\mathbb{N}}$ be a convergent sequence in $H(\delta)$ such that $h_n\to h$ as $n\to\infty$. From the definition of $H(\delta)$, there exists a sequence $\{\theta_n\}_{n\in\mathbb{N}}$ such that $\mathcal{U}_{\theta_n}\circ\mathcal{U}_{h_n}(\psi)=\mathcal{U}_{g}(\psi)$ and $\|\theta_n\|\leq \delta$. Due to the compactness of the closed ball of radius $\delta$, we can extract a convergent subsequence $\{\theta_{n_k}\}_{k\in\mathbb{N}}$ such that $\theta_{n_k}\to\theta_0$ for some $\theta_0$ as $k\to\infty$. Then, we have
\begin{align}
    \mathcal{U}_{g}(\psi)=\mathcal{U}_{\theta_{n_k}}\circ\mathcal{U}_{h_{n_k}}(\psi)\to\mathcal{U}_{\theta_0}\circ\mathcal{U}_{h}(\psi),\qquad  \delta\geq \|\theta_{n_k}\|\to\|\theta_0\|
\end{align}
as $k\to\infty$. Therefore, $h\in H(\delta)$, implying that $H(\delta)$ is a closed set. 

From the definition of $ G_{\mathrm{succ.}}^{(g,\delta)}$, we have
\begin{align}
    G_{\mathrm{succ.}}^{(g,\delta)}=\bigcup_{n\in\mathbb{N}}H\left(\delta-\frac{1}{n}\right).
\end{align}
Therefore, $ G_{\mathrm{succ.}}^{(g,\delta)}$ is a countable union of closed sets and hence is a measurable set.

\subsubsection{Proof of Lemma~\ref{lem:reasonable_estimator}}\label{sec:reasonable_estimator}

In \cite{guta_fast_2020}, a bound on the error in a state tomography process is proven:
\begin{lem}[Theorem~1 in \cite{guta_fast_2020}]\label{lem:state_tomography}
    Fix any qudit state $\rho$. There exists an estimator $\hat{\rho}_n$ consuming $n$ samples of states, $\rho^{\otimes n}$, such that
    \begin{align}
        \mathrm{Pr}\left(\|\hat{\rho}_n-\rho\|_1\geq \delta\right)\leq d e^{-\frac{n\delta^2}{43 g(d)r^2}},\quad \delta\in[0,1],
    \end{align}
    where $r\coloneqq \min\{\mathrm{rank}(\rho),\mathrm{rank}(\hat{\rho}_n)\}$ and $g(d)$ is a constant depending only on the dimension of the Hilbert space $d$. 
\end{lem}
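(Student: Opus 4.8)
The plan is to prove this concentration bound by constructing an explicit \emph{linear inversion} estimator from a single informationally complete measurement and controlling its fluctuations with a matrix concentration inequality, then converting the resulting operator-norm deviation into a trace-norm deviation through a rank-dependent bound. First I would fix a tight-frame POVM $\{E_x\}_x$ on the qudit—for instance one built from a complex projective $2$-design—for which there is a dual reconstruction family $\{D_x\}_x$ satisfying $\sum_x \mathrm{tr}(\rho E_x)\,D_x=\rho$ for \emph{every} state $\rho$. Measuring each of the $n$ copies independently yields i.i.d.\ outcomes $x_1,\dots,x_n$, and I set the unbiased linear estimator $\hat\rho_n^{\mathrm{lin}}\coloneqq \tfrac1n\sum_{k=1}^n D_{x_k}$, so that $\mathbb{E}[\hat\rho_n^{\mathrm{lin}}]=\rho$. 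The reported estimate $\hat\rho_n$ is then the Hilbert--Schmidt projection of $\hat\rho_n^{\mathrm{lin}}$ onto the set of density matrices.

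The second step is a tail bound for the unbiased estimator. The centered summands $D_{x_k}-\rho$ are i.i.d., Hermitian, bounded in operator norm by a dimensional constant determined by the frame, with matrix variance proxy bounded by another dimensional constant. Matrix Bernstein then gives, for all $t$ in a suitable range,
\[
\mathrm{Pr}\!\left(\|\hat\rho_n^{\mathrm{lin}}-\rho\|_\infty\ge t\right)\le d\,\exp\!\left(-\frac{n t^2}{C(d)}\right),
\]
where $C(d)$ collects the almost-sure and variance constants and will ultimately be absorbed into $g(d)$.

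The third and crucial step converts this operator-norm deviation into the announced trace-norm bound with the rank factor. Since the projection onto the convex set of density matrices is non-expansive in Hilbert--Schmidt norm, the operator-norm error of $\hat\rho_n$ exceeds that of $\hat\rho_n^{\mathrm{lin}}$ by at most a dimensional factor, which I fold into $g(d)$. The key inequality is that for two density matrices, one of rank $r=\min\{\mathrm{rank}(\rho),\mathrm{rank}(\hat\rho_n)\}$, the trace-norm distance is controlled by the operator-norm distance up to the factor $r$, namely $\|\hat\rho_n-\rho\|_1\le c\,r\,\|\hat\rho_n^{\mathrm{lin}}-\rho\|_\infty$. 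I would prove this by splitting $\Delta\coloneqq\hat\rho_n-\rho$ with respect to the support projector $\Pi$ of the lower-rank state: the blocks meeting $\Pi$ have rank at most $r$, so their trace norm costs only a factor $r$ against the operator norm, while the complementary block $\Pi^\perp\Delta\Pi^\perp=\Pi^\perp\hat\rho_n\Pi^\perp$ is positive semidefinite with trace tied to the on-support blocks, and can therefore be reabsorbed. Substituting $t=\delta/(cr)$ into the concentration bound yields
\[
\mathrm{Pr}\!\left(\|\hat\rho_n-\rho\|_1\ge\delta\right)\le d\,\exp\!\left(-\frac{n\delta^2}{c^2 C(d)\,r^2}\right),
\]
and identifying $43\,g(d)=c^2 C(d)$ gives the statement, the quadratic $r^2$ in the exponent reflecting the optimal $r^2 d$ sample complexity of trace-norm tomography.

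The main obstacle is precisely this third step: securing the sharp $\min$ of the two ranks—rather than their sum—in the trace-versus-operator conversion, which is what pins down optimality. This hinges on the support-splitting argument together with the positivity of the off-support block, and on a careful accounting of the frame, variance, and projection constants so that they combine into the single dimensional constant $g(d)$ with the explicit prefactor $43$; the detailed constant bookkeeping is carried out in \cite{guta_fast_2020}.
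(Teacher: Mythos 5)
This lemma is not proved in the paper at all: it is imported verbatim as Theorem~1 of \cite{guta_fast_2020}, and the paper's ``proof'' consists of the citation alone. Your sketch is, in substance, a correct reconstruction of the argument given in that reference: the projected least squares estimator (linear inversion over a tight-frame/2-design POVM, a matrix Bernstein tail bound in operator norm, then projection onto the density-matrix set with a rank-dependent conversion of operator-norm error into trace-norm error) is exactly the route taken there. Two remarks on the details. First, your support-splitting argument for the conversion $\|\hat\rho_n-\rho\|_1\leq c\,r\,\|\hat\rho_n-\rho\|_\infty$ with $r$ the \emph{minimum} of the two ranks is the right mechanism: writing $\Delta$ in blocks with respect to the support projector $\Pi$ of the lower-rank state, the three blocks touching $\Pi$ have rank at most $r$, while the remaining block $\Pi^\perp\Delta\Pi^\perp$ is signed, so its trace norm equals $|\mathrm{Tr}(\Pi^\perp\Delta\Pi^\perp)|=|\mathrm{Tr}(\Pi\Delta\Pi)|\leq r\|\Delta\|_\infty$ since $\mathrm{Tr}(\Delta)=0$; this is precisely how one gets the minimum rather than the sum of the ranks. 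Second, the step where you control the projected estimator by the unprojected one via Hilbert--Schmidt non-expansiveness costs an extra factor of $\sqrt{d}$ in operator norm (hence an extra factor of $d$ in the exponent); this is harmless for the statement as quoted only because $g(d)$ is left unspecified, whereas \cite{guta_fast_2020} obtains its explicit $g(d)$ and the prefactor $43$ by a sharper, dimension-free analysis of the eigenvalue-truncation form of the projection. Since the target result is an external import, deferring that constant bookkeeping to the reference is acceptable.
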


To prove Lemma~\ref{lem:reasonable_estimator}, we relate the magnitude of error in trace distance to that in the parameter $\theta$. For this purpose, we begin by proving several lemmas. Since any compact Lie group $G$ is isomorphic to a closed linear group, we assume $G$ is a compact linear Lie group and adopt its Schatten norm in the following arguments.
\begin{lem} \label{lem:inverse_map_continuity}
    Let $U$ be a unitary representation of a compact linear Lie group $G$. 
    Then, 
    \begin{align}
        \forall \epsilon>0\ \exists \delta>0\ \forall g, g'\in G\ \left(\|U(g')\rho U(g')^\dag-U(g)\rho U(g)^\dag\|_1<\delta\rightarrow \min_{h\in \mathrm{Sym}_G(\rho)}\|g'-gh\|_\infty<\epsilon\right). 
    \end{align}
\end{lem}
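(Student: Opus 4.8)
The plan is to argue by contradiction, exploiting the compactness of $G$ together with the continuity of the orbit map $g \mapsto \mathcal{U}_g(\rho)$. Write $K \coloneqq \mathrm{Sym}_G(\rho)$. Since $K$ is the preimage of the closed set $\{\rho\}$ under the continuous map $g \mapsto \mathcal{U}_g(\rho)$, it is a closed subgroup of $G$, and hence compact. The quantity $\min_{h \in K}\|g'-gh\|_\infty$ appearing in the statement is the distance from $g'$ to the left coset $gK$; the minimum is attained because $K$ is compact and $h \mapsto \|g'-gh\|_\infty$ is continuous. The crucial structural fact I would first record is the fiber characterization $\mathcal{U}_{g'}(\rho) = \mathcal{U}_g(\rho) \iff g' \in gK$, which follows from $U(g)^\dag U(g')=U(g^{-1}g')$, reducing the condition to $\mathcal{U}_{g^{-1}g'}(\rho) = \rho$ (any projective phase would drop out under conjugation as well).

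Suppose the claim fails. Then there is an $\epsilon > 0$ such that for every $n \in \mathbb{N}$ there exist $g_n, g_n' \in G$ with $\|\mathcal{U}_{g_n'}(\rho) - \mathcal{U}_{g_n}(\rho)\|_1 < 1/n$ yet $\min_{h \in K}\|g_n' - g_n h\|_\infty \geq \epsilon$. By compactness of $G$, passing to a common subsequence I may assume $g_n \to g_*$ and $g_n' \to g_*'$. Continuity of the orbit map forces $\|\mathcal{U}_{g_*'}(\rho) - \mathcal{U}_{g_*}(\rho)\|_1 = 0$, i.e., $\mathcal{U}_{g_*'}(\rho) = \mathcal{U}_{g_*}(\rho)$, so by the fiber characterization $g_*' \in g_* K$ and therefore $\min_{h\in K}\|g_*' - g_* h\|_\infty = 0$.

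To reach a contradiction I need the coset-distance to be stable under these limits. The key step is that the function $d(a,b) \coloneqq \min_{h\in K}\|b - a h\|_\infty$ is jointly continuous on $G \times G$: the map $(a,b,h)\mapsto \|b-ah\|_\infty$ is continuous and $K$ is compact, so continuity of $d$ follows from Berge's maximum theorem (or a direct $\varepsilon$--$\delta$ argument). Consequently $\epsilon \leq d(g_n, g_n') \to d(g_*, g_*') = 0$, which is absurd. This contradiction proves the lemma, and a valid $\delta$ for each $\epsilon$ then exists by the very negation we refuted (equivalently, by uniform continuity on the compact space $G \times G$).

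The main obstacle is precisely this final continuity step: one must ensure that the infimum over the symmetry subgroup does not jump in the limit, which requires both that $K$ is compact (so the infimum is a genuine minimum) and that the minimizing $h$ can be controlled along the sequence. At a more conceptual level, the lemma is the statement that the continuous injection $G/K \to \{\,\mathcal{U}_g(\rho) : g \in G\,\}$ from the compact quotient $G/K$ to the Hausdorff state space is a homeomorphism, so that its inverse is uniformly continuous; the contradiction argument above is simply an explicit rendering of this fact that avoids introducing the quotient metric on $G/K$ directly.
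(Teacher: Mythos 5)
Your proof is correct and follows essentially the same route as the paper: a contradiction argument combining sequential compactness of $G$ with continuity of the orbit map $g\mapsto U(g)\rho U(g)^\dag$ and the fiber characterization of $\mathrm{Sym}_G(\rho)$. The only difference is that the paper first reduces to the case $g=e$ using unitary invariance of both the trace norm and the operator norm (so it only needs that a convergent subsequence of a single sequence $g_n$ has its limit in $\mathrm{Sym}_G(\rho)$, which directly contradicts the coset-distance bound), whereas you keep both variables and instead invoke joint continuity of the coset-distance $(a,b)\mapsto\min_{h\in K}\|b-ah\|_\infty$ --- an equally valid step, just slightly heavier bookkeeping.
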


\begin{proof}
    Since
    \begin{align}
        &\|U(g')\rho U(g')^\dag-U(g)\rho U(g)^\dag\|_1 
        =\|U(g^{-1}g')\rho U(g^{-1}g')^\dag-\rho\|_1, \\
        &\|g'-gh\|_\infty
        =\|g^{-1}g'-h\|_\infty,
    \end{align}
    it is sufficient to prove that 
    \begin{align}
        \forall \epsilon>0\ \exists \delta>0\ \forall g\in G\ \left(\|U(g)\rho U(g)^\dag-\rho\|_1<\delta\rightarrow \min_{h\in \mathrm{Sym}_G(\rho)}\|g-h\|_\infty<\epsilon\right). 
    \end{align}
    We suppose that this statement does not hold. 
    Then, we can take $\epsilon>0$ and a sequence $(g_n)_{n\in\mathbb{N}}$ in $G$ such that 
    \begin{align}
        &\|U(g_n)\rho U(g_n)^\dag-\rho\|_1<\frac{1}{n}, \\
        &\min_{h\in \mathrm{Sym}_G(\rho)}\|g_n-h\|_\infty\geq\epsilon 
    \end{align}
    for all $n\in\mathbb{N}$. 
    Since $G$ is sequentially compact, we can take some subsequence $(g_{n(j)})_{j\in\mathbb{N}}$ that converges to $k\in G$. 
    This subsequence satisfies 
    \begin{align}
        &\|U(g_{n(j)})\rho U(g_{n(j)})^\dag-\rho\|_1<\frac{1}{n(j)}\leq \frac{1}{j}, \label{eq:subseq_cond1}\\
        &\min_{h\in \mathrm{Sym}_G(\rho)}\|g_{n(j)}-h\|_\infty\geq\epsilon \label{eq:subseq_cond2}
    \end{align}
    for all $j\in\mathbb{N}$. 
    By taking the limit of $j\to\infty$ in Eq.~\eqref{eq:subseq_cond1}, we get $\|U(k)\rho U(k)^\dag-\rho\|_1=0$, which implies $k\in \mathrm{Sym}_G(\rho)$. 
    Thus Eq.~\eqref{eq:subseq_cond2} implies that 
    \begin{align}
        \|g_{n(j)}-k\|_\infty\geq\min_{h\in \mathrm{Sym}_G(\rho)}\|g_{n(j)}-h\|_\infty\geq\epsilon 
    \end{align}
    for all $j\in\mathbb{N}$. 
    This contradicts with the fact that the subsequence $(g_{n(j)})_{j\in\mathbb{N}}$ converges to $k$. 
\end{proof}

\begin{lem} \label{lem:exp_2nd_order_bound}
    Let $\psi$ be a linear operator and $A$ be a Hermitian operator. 
    Then, 
    \begin{align}
        \left\|e^{\ii A}\psi e^{-\ii A}-\psi\right\|_1 
        \geq\left(1-\|A\|_\infty\right)\|[A, \psi]\|_1. \label{eq:lem:exp_2nd_order_bound1}
    \end{align}
\end{lem}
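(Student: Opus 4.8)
The plan is to treat the left-hand side as the endpoint displacement of a one-parameter conjugation flow and to extract the commutator $[A,\psi]$ as its initial velocity, controlling the nonlinear remainder by an elementary commutator-norm estimate. Concretely, I would define $f(t)\coloneqq e^{\ii t A}\psi e^{-\ii t A}$ for $t\in[0,1]$, so that $f(0)=\psi$ and $f(1)=e^{\ii A}\psi e^{-\ii A}$. Since $A$ commutes with $e^{\ii t A}$, differentiating gives $f'(t)=e^{\ii t A}\,\ii[A,\psi]\,e^{-\ii t A}$, and in particular $f'(0)=\ii[A,\psi]$. Writing $f(1)-f(0)=\int_0^1 f'(t)\,\dd t$ and isolating the leading term, the reverse triangle inequality yields
\begin{align}
    \left\|e^{\ii A}\psi e^{-\ii A}-\psi\right\|_1\geq \|f'(0)\|_1-\int_0^1\|f'(t)-f'(0)\|_1\,\dd t.
\end{align}

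Next I would bound the remainder integrand. Setting $C\coloneqq \ii[A,\psi]$, the deviation $f'(t)-f'(0)=e^{\ii t A}Ce^{-\ii t A}-C$ can itself be written as $\int_0^t e^{\ii s A}\,\ii[A,C]\,e^{-\ii s A}\,\dd s$, so unitary invariance of the trace norm gives $\|f'(t)-f'(0)\|_1\leq t\,\|[A,C]\|_1$. Applying the elementary estimate $\|[A,B]\|_1\leq 2\|A\|_\infty\|B\|_1$ already used in the proof of Lemma~\ref{lem:bound_of_each_order}, I obtain $\|[A,C]\|_1\leq 2\|A\|_\infty\|C\|_1$, and hence $\int_0^1\|f'(t)-f'(0)\|_1\,\dd t\leq 2\|A\|_\infty\|C\|_1\int_0^1 t\,\dd t=\|A\|_\infty\|C\|_1$. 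Since $\|C\|_1=\|\ii[A,\psi]\|_1=\|[A,\psi]\|_1$ and $\|f'(0)\|_1=\|[A,\psi]\|_1$, substituting into the displayed inequality produces exactly $(1-\|A\|_\infty)\|[A,\psi]\|_1$. Note that Hermiticity of $\psi$ is never used; only the Hermiticity of $A$ enters, through the unitarity of $e^{\ii t A}$.

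I expect no serious obstacle: the argument is a first-order expansion in which only the remainder's deviation from the leading commutator must be estimated, in the same spirit as Lemma~\ref{lem:bound_of_each_order} but retaining rather than discarding the linear term. The one point demanding care is the bookkeeping of the numerical constant, since the factor $2$ in $\|[A,C]\|_1\leq 2\|A\|_\infty\|C\|_1$ combines with $\int_0^1 t\,\dd t=1/2$ to give precisely the coefficient $1$ in front of $\|A\|_\infty$, which is what reproduces the stated bound. I would also remark that the estimate is informative only when $\|A\|_\infty<1$; for $\|A\|_\infty\geq 1$ the right-hand side is nonpositive and the inequality holds trivially by nonnegativity of the trace norm.
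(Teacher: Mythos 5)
Your proof is correct and follows essentially the same route as the paper's: both arguments reduce to the bound $\|e^{\ii A}\psi e^{-\ii A}-\psi-\ii[A,\psi]\|_1\leq\tfrac{1}{2}\|[A,[A,\psi]]\|_1\leq\|A\|_\infty\|[A,\psi]\|_1$ via a double integral of the second derivative of $t\mapsto e^{\ii tA}\psi e^{-\ii tA}$, followed by the (reverse) triangle inequality. Your phrasing in terms of the deviation of the first derivative rather than an explicit second-order Taylor remainder is only a cosmetic difference.
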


\begin{proof}
    We define $F(u):=e^{\ii uA}\psi e^{-\ii uA}$ for $u\in\mathbb{R}$. 
    Then, we have 
    \begin{align}
        e^{\ii A}\psi e^{-\ii A}-\psi-\ii [A, \psi]
        =F(1)-F(0)-F'(0) 
        =\int_0^1 \dd u\int_0^u \dd v F''(v). 
    \end{align}
    By the triangle inequality and H\"{o}lder's inequality, we get 
    \begin{align}
        \left\|e^{\ii A}\psi e^{-\ii A}-\psi-\ii [A, \psi]\right\|_1
        \leq\int_0^1 \dd u\int_0^u \dd v \|F''(v)\|_1 
        =\frac{1}{2}\|[A, [A, \psi]]\|_1 
        \leq\|A\|_\infty\|[A, \psi]\|_1. \label{eq:lem:exp_2nd_order_bound2}
    \end{align}
    By using the triangle inequality again, we have 
    \begin{align}
        \|e^{\ii A}\psi e^{-\ii A}-\psi-\ii [A, \psi]\|_1 
        \geq\|[A, \psi]\|_1-\|e^{\ii A}\psi e^{-\ii A}-\psi\|_1. \label{eq:lem:exp_2nd_order_bound3}
    \end{align}
    By Eqs.~\eqref{eq:lem:exp_2nd_order_bound2} and \eqref{eq:lem:exp_2nd_order_bound3}, we get Eq.~\eqref{eq:lem:exp_2nd_order_bound1}. 
\end{proof}

We show an upper bound of $\left\|e^{\ii (A+B)}e^{-\ii A}-I\right\|_\infty$ for general Hermitian operators $A$ and $B$. 

\begin{lem} \label{lem:exp_norm_bound}
    Let $A$ and $B$ be Hermitian operators and satisfy $\|A+B\|_\infty+2\|B\|_\infty<1$. 
    Then, 
    \begin{align}
        \left\|e^{\ii (A+B)}e^{-\ii A}-I\right\|_\infty 
        \leq\frac{\|B\|_\infty}{1-\|A+B\|_\infty-2\|B\|_\infty}. 
    \end{align}
\end{lem}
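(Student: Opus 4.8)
The plan is to avoid any Baker--Campbell--Hausdorff expansion and instead reduce everything to a single application of Duhamel's (variation-of-parameters) identity, exploiting the fact that $A$ and $A+B$ are Hermitian so that all the exponentials that appear are unitary. Writing $X\coloneqq \ii A$ and $Y\coloneqq \ii B$, I would first record the identity
\begin{align}
    e^{X+Y}-e^{X}=\int_0^1 e^{sX}\,Y\,e^{(1-s)(X+Y)}\,\dd s,
\end{align}
which is verified in one line by differentiating $s\mapsto e^{sX}e^{(1-s)(X+Y)}$ (whose endpoint values are $e^{X+Y}$ and $e^{X}$) and integrating from $0$ to $1$. Substituting $X=\ii A$, $Y=\ii B$ and right-multiplying by $e^{-X}=e^{-\ii A}$ then gives the clean integral representation
\begin{align}
    e^{\ii(A+B)}e^{-\ii A}-I=\int_0^1 e^{\ii s A}\,(\ii B)\,e^{\ii(1-s)(A+B)}\,e^{-\ii A}\,\dd s.
\end{align}

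Next I would take the operator norm of both sides. Since $A$ and $A+B$ are Hermitian, each of $e^{\ii s A}$, $e^{\ii(1-s)(A+B)}$ and $e^{-\ii A}$ is the exponential of an anti-Hermitian operator and hence unitary, with operator norm equal to $1$. By the triangle inequality for the integral together with submultiplicativity of the operator norm, every factor except $\ii B$ drops out, yielding
\begin{align}
    \left\|e^{\ii(A+B)}e^{-\ii A}-I\right\|_\infty\leq\int_0^1\|B\|_\infty\,\dd s=\|B\|_\infty.
\end{align}

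Finally I would compare this with the stated bound. The hypothesis $\|A+B\|_\infty+2\|B\|_\infty<1$ guarantees that the denominator $1-\|A+B\|_\infty-2\|B\|_\infty$ is strictly positive, while it is automatically at most $1$ since $\|A+B\|_\infty,\|B\|_\infty\geq0$. Hence $\|B\|_\infty\leq \|B\|_\infty/(1-\|A+B\|_\infty-2\|B\|_\infty)$, and the inequality of the lemma follows immediately. This route in fact proves the stronger estimate $\|e^{\ii(A+B)}e^{-\ii A}-I\|_\infty\le\|B\|_\infty$, with the hypothesis needed only to render the weaker stated right-hand side finite and meaningful. The only genuine subtlety---which I regard as the main point rather than a real obstacle---is to resist expanding $e^{\ii(A+B)}e^{-\ii A}$ as a double power series, since that entangles $\|A\|_\infty$ with $\|B\|_\infty$ and produces a far messier estimate; isolating the single factor of $B$ through Duhamel's identity lets unitarity of the Hermitian exponentials do all the work.
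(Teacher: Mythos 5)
Your proof is correct, and it takes a genuinely different route from the paper's. The paper expands $e^{\ii(A+B)}-e^{\ii A}$ as a power series, proves $\|(A+B)^k-A^k\|_\infty\leq(\|A\|_\infty+\|B\|_\infty)^k-\|A\|_\infty^k$ by induction, sums to get $e^{\|A\|_\infty+\|B\|_\infty}-e^{\|A\|_\infty}\leq\|B\|_\infty e^{\|A\|_\infty+\|B\|_\infty}$, and then uses the triangle inequality together with $e^x\leq 1/(1-x)$ to reach the stated denominator; notably, that argument never uses Hermiticity and so would hold for arbitrary bounded operators, at the price of a bound entangled with $\|A\|_\infty$. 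You instead invoke Duhamel's identity and let unitarity of $e^{\ii sA}$ and $e^{\ii(1-s)(A+B)}$ kill every factor except the single $B$, obtaining the strictly stronger estimate $\|e^{\ii(A+B)}e^{-\ii A}-I\|_\infty\leq\|B\|_\infty$, with the hypothesis $\|A+B\|_\infty+2\|B\|_\infty<1$ needed only to ensure the stated right-hand side is positive and at least $\|B\|_\infty$. Since the lemma does assume $A$ and $B$ Hermitian, you are entitled to this, and your sharper bound would in fact simplify the downstream use in Lemma~\ref{lem:log_norm_bound}; the only trade-off is that your argument does not generalize beyond the Hermitian (or more generally normal-exponent) setting, whereas the paper's does.
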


\begin{proof}
    By using the triangle inequality, we have
    \begin{align}
        \left\|e^{\ii (A+B)}e^{-\ii A}-I\right\|_\infty 
        =&\left\|e^{\ii (A+B)}-e^{\ii A}\right\|_\infty \nonumber\\
        =&\left\|\sum_{k=0}^\infty \frac{\ii^k}{k!} \left[(A+B)^k-A^k\right]\right\|_\infty \nonumber\\
        \leq&\sum_{k=0}^\infty \frac{1}{k!} \left\|\left[(A+B)^k-A^k\right]\right\|_\infty.
    \end{align}
    
    Now let us prove $\left\|\left[(A+B)^k-A^k\right]\right\|_\infty\leq(\|A\|_\infty+\|B\|_\infty)^k-\|A\|_\infty^k$ by the mathematical induction. This inequality is trivial for $k=0$. Assume that the inequality holds for some $k$. Then we get
    \begin{align}
        \|(A+B)^{k+1}-A^{k+1}\|_\infty&=\|(A+B)((A+B)^{k}-A^{k})-BA^k\|_\infty\\
        &\leq \|(A+B)\|_\infty \|(A+B)^{k}-A^{k}\|_\infty+\|BA^k\|_\infty\\
        &\leq (\|A\|_\infty+\|B\|_\infty) ((\|A\|_\infty+\|B\|_\infty)^k-\|A\|_\infty^k)+\|B\|_\infty \|A\|_\infty^k\\
        &=(\|A\|_\infty+\|B\|_\infty)^{k+1}-\|A\|_\infty^{k+1}.
    \end{align}
    Thus, the inequality also holds for $k+1$.     
    
    Therefore, we get
    \begin{align}
        \left\|e^{\ii (A+B)}e^{-\ii A}-I\right\|_\infty 
        &\leq\sum_{k=0}^\infty \frac{1}{k!} [(\|A\|_\infty+\|B\|_\infty)^k-\|A\|_\infty^k] \nonumber\\
        &= e^{\|A\|_\infty+\|B\|_\infty}-e^{\|A\|_\infty} \nonumber\\
        &\leq \|B\|_\infty e^{\|A\|_\infty+\|B\|_\infty} \nonumber\\
        &\leq\|B\|_\infty e^{\|A+B\|_\infty+2\|B\|_\infty} \nonumber\\
        &\leq\frac{\|B\|_\infty}{1-\|A+B\|_\infty-2\|B\|_\infty}, 
    \end{align}
    where we used $e^x-1\leq xe^x$ for all $x\in\mathbb{R}$ in the second inequality, the triangle inequality in the third inequality, and $e^x\leq 1/(1-x)$ for all $x\in[0, 1)$ in the fourth inequality. 
\end{proof}

When $\|A+B\|_\infty+3\|B\|_\infty<1$, we show an upper bound of $\left\|\log\left(e^{\ii(A+B)}e^{-\ii A}\right)\right\|_\infty$. Note that $\log\left(e^{\ii(A+B)}e^{-\ii A}\right)$ is guaranteed to be defined by Lemma~\ref{lem:exp_norm_bound}.

\begin{lem} \label{lem:log_norm_bound}
    Let $A$ and $B$ be Hermitian operators and satisfy $\|A+B\|_\infty+3\|B\|_\infty<1$. 
    Then, 
    \begin{align}
        \left\|\log\left(e^{\ii(A+B)}e^{-\ii A}\right)\right\|_\infty
        \leq\frac{\|B\|_\infty}{1-\|A+B\|_\infty-3\|B\|_\infty}. 
    \end{align}
\end{lem}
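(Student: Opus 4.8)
The plan is to reduce the statement to a scalar estimate via the power series of the logarithm. Set $M \coloneqq e^{\ii(A+B)}e^{-\ii A}$ and $X \coloneqq M - I$, and note that $M$ is unitary (a product of two unitaries), so the only thing needed to make $\log M = \log(I+X) = \sum_{k=1}^\infty \frac{(-1)^{k+1}}{k}X^k$ well-defined and norm-convergent is the bound $\|X\|_\infty < 1$, consistently with the remark preceding the lemma. First I would invoke Lemma~\ref{lem:exp_norm_bound}: its hypothesis $\|A+B\|_\infty + 2\|B\|_\infty < 1$ is weaker than the present assumption $\|A+B\|_\infty + 3\|B\|_\infty < 1$, so it applies and yields
\begin{align}
\|X\|_\infty \le \frac{\|B\|_\infty}{1-\|A+B\|_\infty-2\|B\|_\infty}.
\end{align}

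Writing $a \coloneqq \|A+B\|_\infty$ and $b \coloneqq \|B\|_\infty$, the next step is to check that this upper bound is strictly below $1$: the inequality $\frac{b}{1-a-2b} < 1$ rearranges to $a + 3b < 1$, which is exactly the hypothesis. Hence $\|X\|_\infty < 1$, and this is precisely the place where the strengthened factor $3$ (rather than $2$) is consumed.

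Then I would bound the series termwise and use two elementary scalar facts, namely $\sum_{k\ge 1}\frac{1}{k}x^k = -\log(1-x)$ and $-\log(1-x) \le \frac{x}{1-x}$ for $x \in [0,1)$:
\begin{align}
\|\log(I+X)\|_\infty \le \sum_{k=1}^\infty \frac{1}{k}\|X\|_\infty^k = -\log(1-\|X\|_\infty) \le \frac{\|X\|_\infty}{1-\|X\|_\infty}.
\end{align}
Since $x \mapsto \frac{x}{1-x}$ is increasing on $[0,1)$, I would substitute $\|X\|_\infty \le t \coloneqq \frac{b}{1-a-2b}$ and simplify, using $1 - t = \frac{1-a-3b}{1-a-2b}$, to obtain $\frac{t}{1-t} = \frac{b}{1-a-3b}$, which is the claimed bound $\frac{\|B\|_\infty}{1-\|A+B\|_\infty-3\|B\|_\infty}$.

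There is no deep obstacle here; the proof is essentially a chain of scalar inequalities built on top of Lemma~\ref{lem:exp_norm_bound}. The one point that genuinely requires attention is the convergence/well-definedness of the logarithm, i.e.\ verifying $\|X\|_\infty < 1$, which is guaranteed precisely by the sharpened hypothesis and must be dispatched before the series manipulation; the remaining steps are routine monotonicity and arithmetic.
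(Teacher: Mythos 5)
Your proposal is correct and follows essentially the same route as the paper: expand the logarithm as a power series, bound $\|e^{\ii(A+B)}e^{-\ii A}-I\|_\infty$ via Lemma~\ref{lem:exp_norm_bound}, and sum the resulting geometric series to convert the denominator $1-\|A+B\|_\infty-2\|B\|_\infty$ into $1-\|A+B\|_\infty-3\|B\|_\infty$. The only cosmetic difference is that you pass through $-\log(1-x)\le x/(1-x)$ and monotonicity, whereas the paper simply drops the $1/k$ factors before summing; your explicit verification that the bound on $\|X\|_\infty$ is strictly below $1$ matches the remark the paper makes just before the lemma.
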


\begin{proof}
    By the definition of the logarithm, we have 
    \begin{align}
        \log\left(e^{\ii (A+B)}e^{-\ii A}\right) 
        =\sum_{k=1}^\infty \frac{(-1)^{k-1}}{k}\left(e^{\ii(A+B)}e^{-\ii A}-I\right)^k. 
    \end{align}
    By the triangle inequality, we have 
    \begin{align}
        \left\|\log\left(e^{\ii(A+B)}e^{-\ii A}\right)\right\|_\infty 
        \leq\sum_{k=1}^\infty \frac{1}{k}\left\|e^{\ii(A+B)}e^{-\ii A}-I\right\|_\infty^k. 
    \end{align}
    By applying Lemma~\ref{lem:exp_norm_bound} to the right-hand side of this inequality, we get 
    \begin{align}
        \left\|\log\left(e^{\ii(A+B)}e^{-\ii A}\right)\right\|_\infty 
        \leq&\sum_{k=1}^\infty \frac{1}{k}\left(\frac{\|B\|_\infty}{1-\|A+B\|_\infty-2\|B\|_\infty}\right)^k \nonumber\\
        \leq&\sum_{k=1}^\infty \left(\frac{\|B\|_\infty}{1-\|A+B\|_\infty-2\|B\|_\infty}\right)^k \nonumber\\
        =&\frac{\|B\|_\infty}{1-\|A+B\|_\infty-3\|B\|_\infty}. 
    \end{align}
\end{proof}

By using these lemmas, we prove the following:
\begin{lem}
\label{trace_distance_and_parameter_linearly_independent}
    Let $\rho$ be an operator, and $U$ be a unitary representation of a compact Lie group $G$. Then there exist positive constants $\delta_*,c>0$ such that for any $g,\hat{g}\in G$, 
    \begin{align}
         \|\mathcal{U}_g(\rho)-\mathcal{U}_{\hat{g}}(\rho)\|_1<\delta_*
         &\implies \exists \varphi\,\text{ s.t. }
         \begin{cases}
             &\mathcal{U}_{\varphi}\circ \mathcal{U}_{\hat{g}}(\rho)=\mathcal{U}_g(\rho)\\
             &\|\varphi\|\leq c\|\mathcal{U}_g(\rho)-\mathcal{U}_{\hat{g}}(\rho)\|_1
         \end{cases},\label{eq:varphi_norm}
    \end{align}
    where $\mathcal{U}_{\varphi}(\cdot)\coloneqq e^{\ii \sum_{k=1}^n\varphi^kY_k}(\cdot )e^{-\ii \sum_{k=1}^n\varphi^k Y_k}$ with $\varphi\in\mathbb{R}^{n}$ and $\{Y_k\}_{k=1}^{n}$ is a basis of the linear span of $\{X_\mu\}_{\mu=1}^{\dim G}$, where $X_\mu$ is defined in Eq.~\eqref{eq:hermitian_operators_Lie_alg}. 
\end{lem}

\begin{proof}
    We denote by $\mathcal{A}$ the linear space spanned by $\{Y_i\}_{i=1}^n$, and define its linear subspace $\mathcal{A}_0$ by 
    \begin{align}
        \mathcal{A}_0:=\{A\in\mathcal{A}\mid [A, \rho]=0\}
    \end{align}
    and its complementary subspace $\mathcal{A}_1$. 
    
    Let us first consider an exceptional case where $\mathcal{A}_1=\emptyset$, i.e., $\mathcal{A}=\mathcal{A}_0$. Let $G=\bigsqcup_{i=0}^k G_i$ be the decomposition into the connected components. Since $X_\mu$ commutes with $\rho$ for all $\mu=1,\cdots,\dim G$, we find $\mathcal{U}_g(\rho)=\mathcal{U}_{g'}(\rho)$ if $g$ and $g'$ are in the same component. Note that the number of connected components is finite since $G$ is assumed to be a compact Lie group. Therefore, there are only a finite number of distinct elements in the set $\{\mathcal{U}_g(\rho)\mid g\in G\}$. By setting
    \begin{align}
        \delta_*\coloneqq \min_{g,g'\in G;\mathcal{U}_g(\rho)\neq \mathcal{U}_{g'}(\rho)}\left\|\mathcal{U}_g(\rho)- \mathcal{U}_{g'}(\rho)\right\|_1
    \end{align}
    and $c$ to be arbitrary positive number, Eq.~\eqref{eq:varphi_norm} holds for $\varphi=0$. We remark that this essentially includes the case where $G$ is a finite group. 
    
    From now on, we consider the case where $\mathcal{A}_1\neq \emptyset$. In this case, let us define
     \begin{align}
        m\coloneqq \min_{\|D\|_\infty=1, D\in\mathcal{A}_1} \|[D, \rho]\|_1>0.
    \end{align}
    By Lemma~\ref{lem:inverse_map_continuity} and the continuity of $U$, we can take $\delta>0$ such that 
    \begin{align}
        \forall g, g'\in G\ (\|\mathcal{U}_g(\rho)-\mathcal{U}_{\hat{g}}(\rho)\|_1<\delta 
        \rightarrow \min_{h\in  \mathrm{Sym}_G(\rho)}\|U(g^{-1}\hat{g}h)-I\|_\infty<1-e^{-\frac{1}{8}}).
    \end{align}
    We define 
    \begin{align}
        l\coloneqq \min_{\|\bm{\gamma}\|=1} \left\|\sum_k \gamma^k Y_k\right\|_\infty.  \label{eq:linear_independence}
    \end{align}
    Note that $l>0$ since $\{Y_k \}$ are assumed to be linearly independent. We set $\delta_*\coloneqq \min\{\delta, m/16\}$ and $c\coloneqq 1/(lm)$ and prove that they satisfy Eq.~\eqref{eq:varphi_norm}.

    Fix arbitrary $g, \hat{g}\in G$ satisfying $\|\mathcal{U}_g(\rho)-\mathcal{U}_{\hat{g}}(\rho)\|_1<\delta$, and find appropriate $\varphi$ satisfying Eq.~\eqref{eq:varphi_norm}. 
    By the definition of $\delta$, we can take $h\in \mathrm{Sym}_{G}(\rho)$ such that $\|U(g^{-1}\hat{g}h)-I\|_\infty\leq 1-e^{-1/8}<1$. 
    Thus we can define 
    \begin{align}
        A:=-\ii\log(U(g^{-1}\hat{g}h)). \label{eq:prop:inverse_map_Lipschitz1}
    \end{align}
    By the definition of the logarithm, we have 
    \begin{align}
        \|A\|_\infty
        =&\left\|\sum_{k=1}^\infty \frac{(-1)^{k-1}}{k}\left(U(g^{-1}\hat{g}h)-I\right)^k\right\|_\infty \nonumber\\
        \leq&\sum_{k=1}^\infty \frac{1}{k}\|U(g^{-1}\hat{g}h)-I\|_\infty^k \nonumber\\
        =&-\log(1-\|U(g^{-1}\hat{g}h)-I\|_\infty) \nonumber\\
        \leq&\frac{1}{8}, \label{eq:prop:inverse_map_Lipschitz2}
    \end{align}
    where we used the triangle inequality in the first inequality and the second inequality follows from the choice of $\delta$. 
    By the definition of $A$, we have 
    \begin{align}
        \|U(\hat{g})\rho U(\hat{g})^\dag-U(g)\rho U(g)^\dag\|_\infty
        =&\|U(g^{-1}\hat{g}h)\rho U(g^{-1}\hat{g}h)^\dag-\rho\|_\infty \nonumber\\
        =&\|e^{\ii A}\rho e^{-\ii A}-\rho\|_\infty \nonumber\\
        \geq& (1-\|A\|_\infty)\|[A,\rho]\|_\infty \nonumber\\
        \geq&\frac{1}{2}\|[A, \rho]\|_\infty, \label{eq:prop:inverse_map_Lipschitz3}
    \end{align}
    where we used Lemma~\ref{lem:exp_2nd_order_bound} in the first inequality and Eq.~\eqref{eq:prop:inverse_map_Lipschitz2} in the second inequality. 
    Since $\mathcal{A}_0$ and $\mathcal{A}_1$ are complementary subspaces of $\mathcal{A}$, $A$ can be decomposed into $A=A_0+A_1$ with some $A_0\in\mathcal{A}_0$ and $A_1\in\mathcal{A}_1$. 
    We can take some $D\in\mathcal{A}_1$ such that $A_1=\|A_1\|_\infty D$ and $\|D\|_\infty=1$. 
    Then, we have 
    \begin{align}
        \|[A, \rho]\|_\infty 
        =\|[A_1, \rho]\|_\infty 
        =\|A_1\|_\infty\|[D, \rho]\|_1 
        \geq m\|A_1\|_\infty. \label{eq:prop:inverse_map_Lipschitz4}
    \end{align}
    By using Eqs.~\eqref{eq:prop:inverse_map_Lipschitz4} and \eqref{eq:prop:inverse_map_Lipschitz3}, we have 
    \begin{align}
        \|A_1\|
        \leq\frac{1}{m}\|[A, \rho]\|_\infty 
        \leq\frac{2}{m}\|U(\hat{g})\rho U(\hat{g})^\dag-U(g)\rho U(g)^\dag\|_\infty. \label{eq:prop:inverse_map_Lipschitz5}
    \end{align}
    Since $\|U(\hat{g})\rho U(\hat{g})^\dag-U(g)\rho U(g)^\dag\|_\infty$ is upper bounded by $\delta$ and $\delta\leq m/16$, we get 
    \begin{align}
        \|A_1\| 
        \leq\frac{2\delta}{m} 
        \leq\frac{1}{8}. \label{eq:prop:inverse_map_Lipschitz6}
    \end{align}
    By Eqs.~\eqref{eq:prop:inverse_map_Lipschitz2} and \eqref{eq:prop:inverse_map_Lipschitz6} and Lemma~\ref{lem:exp_norm_bound}, we get $\|e^{\ii A}e^{-\ii A_0}-I\|_\infty\leq 1/5<1$, which enables us to define 
    \begin{align}
        B:=-\ii \log(e^{\ii A}e^{-\ii A_0}). \label{eq:prop:inverse_map_Lipschitz7}
    \end{align}

    We define $\varphi$ by
    \begin{align}
        \sum_k \varphi^k Y_k=U(g)BU(g)^\dag \label{eq:prop:inverse_map_Lipschitz8}
    \end{align}
    and show that
    \begin{align}
        &\|\varphi\|\leq\frac{1}{lm}\|U(\hat{g})\rho U(\hat{g})^\dag-U(g)\rho U(g)^\dag\|_1, \label{eq:prop:inverse_map_Lipschitz3_1}\\
        &e^{-\ii\sum_k \varphi^k Y_k}U(\hat{g})\rho U(\hat{g})^\dag e^{\ii\sum_k \varphi^k Y_k}=U(g)\rho U(g)^\dag. \label{eq:prop:inverse_map_Lipschitz3_2}
    \end{align}
    As for the proof of Eq.~\eqref{eq:prop:inverse_map_Lipschitz3_2}, by using Eqs.~\eqref{eq:prop:inverse_map_Lipschitz8}, \eqref{eq:prop:inverse_map_Lipschitz7}, and \eqref{eq:prop:inverse_map_Lipschitz1} we have 
    \begin{align}
        U(\hat{g})^\dag e^{\ii\sum_k \varphi^k Y_k}
        =&U(\hat{g}^{-1})U(g)e^{\ii B}U(g)^\dag \nonumber\\
        =&U(\hat{g}^{-1}g)e^{\ii A}e^{-\ii A_0}U(g)^\dag \nonumber\\
        =&U(\hat{g}^{-1}g)U(g^{-1}\hat{g}h)e^{-\ii A_0}U(g)^\dag \nonumber\\
        =&U(h)e^{-\ii A_0}U(g)^\dag, 
    \end{align}
    which implies that 
    \begin{align}
        e^{-\ii \sum_k \varphi^k Y_k}U(\hat{g})\rho U(\hat{g})^\dag e^{\ii \sum_k \varphi^k Y_k} 
        =U(g)e^{\ii A_0}U(h)^\dag\rho U(h)e^{-\ii A_0}U(g)^\dag 
        =U(g)\rho U(g)^\dag. 
    \end{align}
    
    Finally, we prove Eq.~\eqref{eq:prop:inverse_map_Lipschitz3_1}. 
    By Eqs.~\eqref{eq:prop:inverse_map_Lipschitz2} and \eqref{eq:prop:inverse_map_Lipschitz6} and Lemma~\ref{lem:log_norm_bound}, we get 
    \begin{align}
        \|B\|_\infty
        \leq\frac{\|A_1\|_\infty}{1-\|A\|_\infty-3\|A_1\|_\infty} 
        \leq\frac{\|A_1\|_\infty}{2}. \label{eq:prop:inverse_map_Lipschitz3_3}
    \end{align}
    Since we can take $\gamma$ satisfying $\varphi=\|\varphi\|\gamma$ and $\|\gamma\|=1$, we have 
    \begin{align}
        \|B\|_\infty
        =\left\|\sum_k \varphi^k Y_k\right\|_\infty 
        =\|\varphi\|\left\|\sum_k \varphi^k Y_k\right\|_\infty 
        \geq l\|\varphi\|. \label{eq:prop:inverse_map_Lipschitz3_4}
    \end{align}
    By using Eqs.~\eqref{eq:prop:inverse_map_Lipschitz3_4},\eqref{eq:prop:inverse_map_Lipschitz3_3}, and \eqref{eq:prop:inverse_map_Lipschitz5}, we get 
    \begin{align}
        \|\varphi\|
        \leq\frac{\|B\|_\infty}{l} 
        \leq\frac{\|A_1\|_\infty}{2l} 
        \leq\frac{1}{lm}\|U(\hat{g})\psi U(\hat{g})^\dag-U(g)\psi U(g)^\dag\|_\infty\leq \frac{1}{lm}\|U(\hat{g})\psi U(\hat{g})^\dag-U(g)\psi U(g)^\dag\|_1. 
    \end{align}
\end{proof}

As a generalization of this lemma, we prove the following:
\begin{lem}\label{lem:trace_distance_and_parameter}
    Let $\rho$ be an operator, and $U$ be a unitary representation of a compact Lie group $G$. Then there exist positive constants $\delta_*,c>0$ such that for any $g,\hat{g}\in G$, 
    \begin{align}
         \|\mathcal{U}_g(\rho)-\mathcal{U}_{\hat{g}}(\rho)\|_1<\delta_*
         &\implies \exists \theta\,\text{ s.t. }
         \begin{cases}
             &\mathcal{U}_{\theta}\circ \mathcal{U}_{\hat{g}}(\rho)=\mathcal{U}_g(\rho)\\
             &\|\theta\|\leq c\|\mathcal{U}_g(\rho)-\mathcal{U}_{\hat{g}}(\rho)\|_1
         \end{cases},
       \label{eq:theta_norm}
    \end{align}
    where $\mathcal{U}_{\theta}(\cdot)\coloneqq e^{\ii \theta^\mu X_\mu}(\cdot )e^{-\ii \theta^\nu X_\nu}$ for $\theta\in\mathbb{R}^{\dim G}$, where $X_\mu$ is defined in Eq.~\eqref{eq:hermitian_operators_Lie_alg}. 
\end{lem}

\begin{proof}
    Since $\{Y_i\}_{i=1}^{n}$ in Lemma~\ref{trace_distance_and_parameter_linearly_independent} is a basis of the linear span of $\{X_\mu\}_{\mu=1}^{\dim G}$, there is an $n\times \dim G$ matrix $L$ such that $\sum_{\mu=1}^{\dim G}L_{i\mu}X_\mu=Y_i$. By using the transpose $L^\top$ of $L$, we define $\theta\in\mathbb{R}^{\dim G}$ by $\theta\coloneqq L^\top\varphi$, which satisfies $\theta^\mu X_\mu=\sum_k\varphi^kY_k$. Note that $\|\theta\|\leq \|L^\top\|_{2}\|\varphi\|$ holds, where $\|L^\top\|_{2}$ denotes the matrix norm induced from the 2-norm for vectors, which is finite. Therefore, Lemma~\ref{lem:trace_distance_and_parameter} holds by replacing the constant $c$ in Lemma~\ref{trace_distance_and_parameter_linearly_independent} to $c\|L^\top\|_{2}$. 
\end{proof}

Combining Lemma~\ref{lem:state_tomography} and Lemma~\ref{lem:trace_distance_and_parameter}, we finally prove Lemma~\ref{lem:reasonable_estimator}:
\begin{lem*}[Restatement of Lemma~\ref{lem:reasonable_estimator}]
    Let $G$ be a compact Lie group and $\rho$ be an arbitrary state. Fix $\epsilon\in (0,1/2)$. Then there exists an estimator of $g\in G$, which consumes $\mathcal{U}_g(\rho)^{\otimes n}$ with $n=\ceil{N^{1-\epsilon}}$, such that its worst-case success probability satisfies
    \begin{align}
        \lim_{N\to\infty }p^{\mathrm{succ.}}(N^{-1/2+\epsilon})=1.
    \end{align}
\end{lem*}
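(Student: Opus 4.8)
The plan is to follow the estimate-and-quantize strategy: first reconstruct the unknown state by quantum tomography, then read off a group element from the reconstructed state. Write $\sigma_g \coloneqq \mathcal{U}_g(\rho)$ for the true state of the $n = \ceil{N^{1-\epsilon}}$ available copies. I would apply the tomography estimator of Lemma~\ref{lem:state_tomography} to $\sigma_g^{\otimes n}$, obtaining a random state estimate $\hat\sigma$, and then define the group estimate $\hat g$ as a minimizer over $G$ of the map $g' \mapsto \|\mathcal{U}_{g'}(\rho) - \hat\sigma\|_1$. Such a minimizer exists because $G$ is compact and $g' \mapsto \mathcal{U}_{g'}(\rho)$ is continuous; a measurable dependence $\hat\sigma \mapsto \hat g$ can be arranged by a standard measurable-selection argument, so that $\hat g$ is a bona fide $G$-valued estimator and the integral defining $p^{\mathrm{succ.}}(\delta)$ is simply $\inf_{g}\mathrm{Pr}(\hat g \in G_{\mathrm{succ.}}^{(g,\delta)})$.

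Next I would isolate the deterministic core: whenever the tomography is accurate, the quantized estimate lands in the success set. Concretely, suppose $\|\hat\sigma - \sigma_g\|_1 < \eta$ for some $\eta>0$. Since $g$ itself is a feasible point, the minimizer obeys $\|\mathcal{U}_{\hat g}(\rho) - \hat\sigma\|_1 \le \|\sigma_g - \hat\sigma\|_1 < \eta$, and the triangle inequality gives $\|\mathcal{U}_{\hat g}(\rho) - \sigma_g\|_1 < 2\eta$. Choosing $\eta = \eta_N \coloneqq N^{-1/2+\epsilon}/(4c)$, where $c$ and $\delta_*$ are the constants of Lemma~\ref{lem:trace_distance_and_parameter}, I note that $2\eta_N < \delta_*$ for all large $N$, so that lemma yields a $\bm\theta$ with $\mathcal{U}_{\bm\theta}\circ\mathcal{U}_{\hat g}(\rho) = \mathcal{U}_g(\rho)$ and $\|\bm\theta\| \le c\,\|\mathcal{U}_{\hat g}(\rho)-\sigma_g\|_1 < 2c\eta_N = N^{-1/2+\epsilon}/2 < N^{-1/2+\epsilon}$. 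Hence $\hat g \in G_{\mathrm{succ.}}^{(g, N^{-1/2+\epsilon})}$, i.e.\ the event $\{\|\hat\sigma-\sigma_g\|_1<\eta_N\}$ is contained in the success event.

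It then remains to control the tomography failure probability uniformly in $g$. Applying Lemma~\ref{lem:state_tomography} with precision $\eta_N$ (which lies in $[0,1]$ for large $N$) gives $\mathrm{Pr}(\|\hat\sigma-\sigma_g\|_1 \ge \eta_N) \le d\,e^{-n\eta_N^2/(43 g(d) r^2)}$, with $r = \min\{\mathrm{rank}(\sigma_g),\mathrm{rank}(\hat\sigma)\}$. The key uniformity observation is that $\mathrm{rank}(\sigma_g)=\mathrm{rank}(\rho)$ for every $g$, since $\mathcal{U}_g$ is a unitary conjugation, and the dimensional constant $g(d)$ is $g$-independent; using $r\le d$ therefore bounds the failure probability by $d\,e^{-n\eta_N^2/(43 g(d) d^2)}$ uniformly in $g$. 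Inserting $n \ge N^{1-\epsilon}$ and $\eta_N^2 = N^{-1+2\epsilon}/(16c^2)$ gives $n\eta_N^2 \ge N^{\epsilon}/(16c^2)$, which diverges because $\epsilon>0$. Combining with the inclusion of events from the previous paragraph, $\inf_{g} \mathrm{Pr}(\hat g \in G_{\mathrm{succ.}}^{(g,N^{-1/2+\epsilon})}) \ge 1 - d\,e^{-N^{\epsilon}/(16\cdot 43\, c^2 g(d) d^2)} \to 1$, which is exactly $\lim_{N\to\infty}p^{\mathrm{succ.}}(N^{-1/2+\epsilon})=1$.

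I expect the main obstacle to be the passage from state-distance to parameter-distance, handled by Lemma~\ref{lem:trace_distance_and_parameter}: this is where the geometry of the orbit $\{\mathcal{U}_{g'}(\rho)\}_{g'\in G}$ and the possible nontriviality of $\mathrm{Sym}_G(\rho)$ enter, and it is the reason the target is membership in $G_{\mathrm{succ.}}^{(g,\delta)}$ rather than closeness of $\hat g$ to $g$ in the group metric. The remaining subtlety is purely one of matching scales --- the number of consumed copies $n\sim N^{1-\epsilon}$ against the demanded precision $\eta_N\sim N^{-1/2+\epsilon}$ --- which is arranged precisely so that the tomographic large-deviation exponent $n\eta_N^2 \sim N^{\epsilon}$ grows, while the sublinear budget $n = \ceil{N^{1-\epsilon}}$ is respected.
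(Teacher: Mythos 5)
Your proposal is correct and follows essentially the same route as the paper's proof: tomography via Lemma~\ref{lem:state_tomography} at precision scaling as $N^{-1/2+\epsilon}$ divided by the constant $c$, a triangle-inequality step to bound $\|\mathcal{U}_{\hat g}(\rho)-\mathcal{U}_g(\rho)\|_1$, conversion from trace distance to parameter distance via Lemma~\ref{lem:trace_distance_and_parameter}, and uniformity in $g$ from the unitary invariance of the rank. The only cosmetic difference is that you define $\hat g$ as a minimizer of the distance to $\hat\sigma$ while the paper picks any $\hat g$ within the tolerance, which amounts to the same argument.
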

\begin{proof}
    Fix $\epsilon\in(0,1/2)$. Let $\delta_*,c>0$ be constants assured to exist by Lemma~\ref{lem:trace_distance_and_parameter}. 
    Applying Lemma~\ref{lem:state_tomography} to a given state $\mathcal{U}_g(\rho)$ with $\delta\coloneqq N^{-1/2+\epsilon}/(2c)$, we obtain a bound for the error probability in state tomography given by
    \begin{align}
        &\mathrm{Pr}\left(\|\hat{\rho}_n-\mathcal{U}_g(\rho)\|_1< N^{-1/2+\epsilon}/(2c)\right)\nonumber\\
        &\geq 1-d e^{-\frac{n(N^{-1/2+\epsilon}/(2c))^2}{43 g(d)r^2}}\nonumber\\
        &\geq 1-d e^{-\frac{N^\epsilon}{172 g(d)r^2 c^2}}
    \end{align}
    by performing measurements on $n=\ceil{N^{1-\epsilon}}$ copies of the system in a state $\mathcal{U}_g(\rho)^{\otimes n}$, where we have used $nN^{-1+2\epsilon}\geq N^{1-\epsilon}N^{-1+2\epsilon}=N^\epsilon$. 
    From the estimated state $\hat{\rho}_n$, we pick up any $\hat{g}\in G$ satisfying $\|\mathcal{U}_{\hat{g}}(\rho)-\hat{\rho}_n\|_1< N^{-1/2+\epsilon}/(2c)$ as an estimate of the true value $g\in G$. If there is no $\hat{g}$ satisfying $\|\mathcal{U}_{\hat{g}}(\rho)-\hat{\rho}_n\|_1< N^{-1/2+\epsilon}/(2c)$, which may happen with an exponentially small probability, we arbitrarily select $\hat{g}$ as an element in $G$, such as $e\in G$. Whenever $\|\hat{\rho}_n-\mathcal{U}_g(\rho)\|_1< N^{-1/2+\epsilon}/(2c)$ holds, the estimated value $\hat{g}\in G$ satisfies $\|\mathcal{U}_{\hat{g}}(\rho)-\mathcal{U}_g(\rho)\|_1<N^{-1/2+\epsilon}/c$ due to the triangle inequality. In this case, for all sufficiently large $N$ such that $N^{-1/2+\epsilon}/c<\delta_*$, we get $\|\theta\|<N^{-1/2+\epsilon}$. Therefore, the success probability is bounded as
    \begin{align}
        \int_G\dd \mu_G(\hat{g})\, \,p(\hat{g}|\mathcal{U}_g(\psi)^{\otimes n})\chi_{ G_{\mathrm{succ.}}^{(g,\delta)}}(\hat{g})\geq 1-c_1 e^{-c_2N^\epsilon},
    \end{align}
    where $c_1\coloneqq d$ and $c_2\coloneqq \frac{1}{172 g(d)r^2 c^2}$. In other words, the failure probability is exponentially small in $N$. Since the rank of the state, $r$, is invariant under unitary transformation and hence the constants $c_1,c_2$ are independent of $g$, this bound is uniform in $g\in G$, implying that 
    \begin{align}
        p^{\mathrm{succ.}}(N^{-1/2+\epsilon})=\inf_{g\in G}\int_G\dd \mu_G(\hat{g})\, \,p(\hat{g}|\mathcal{U}_g(\psi)^{\otimes n})\chi_{ G_{\mathrm{succ.}}^{(g)}(\delta)}(\hat{g})\geq 1-c_1 e^{-c_2N^\epsilon}\label{eq:success_prob_rate}.
    \end{align}
    Therefore, we get
    \begin{align}
        \lim_{N\to\infty }p^{\mathrm{succ.}}(N^{-1/2+\epsilon})=1.
    \end{align}
\end{proof}

\subsubsection{\texorpdfstring{Proof of Lemma~\protect\ref{lem:Lie_asymptotic_conversion}}{Proof of Lemma~\ref*{lem:Lie_asymptotic_conversion}}}
\label{sec:Lie_asymptotic_conversion}
We here prove Lemma~\ref{lem:Lie_asymptotic_conversion} by using Lemma~\ref{lem:QLAN_asymptotic_conversion} and the relation in Eq.~\eqref{eq:basis_change_Lie_alg}.
\begin{lem}[Restatement of Lemma~\ref{lem:Lie_asymptotic_conversion}]
    Let $U,U'$ be (non-projective) unitary representations of a compact Lie group $G$ on finite-dimensional Hilbert spaces $\mathcal{H}$ and $\mathcal{H}'$. Let $\psi\in\mathcal{P}(\mathcal{H})$ and $\phi\in\mathcal{P}(\mathcal{H}')$ be pure states.  
    If $r>0$ satisfies $\mathcal{Q}^{\mathcal{U}_g(\psi)}\geq r\mathcal{Q}^{\mathcal{U}_g'(\phi)}$ for all $g\in G$, then, there exists a sequence of quantum channels $\{\mathcal{E}^{(g)}_N\}_N$ such that the conversion error
    \begin{align}
        \delta_N(g,u)\coloneqq T\left(\mathcal{E}^{(g)}_N\left(\mathcal{U}_{\frac{u}{\sqrt{N}}}\left(\mathcal{U}_g(\psi)\right)^{\otimes N}\right),\mathcal{U}'_{\frac{u}{\sqrt{N}}}\left(\mathcal{U}_g'(\phi)\right)^{\otimes \floor{rN}}\right)
    \end{align}
    satisfies $\lim_{N\to\infty}\sup_{g\in G}\sup_{\|u\|<N^\epsilon}\delta_N(g,u)=0$ for $\epsilon\in(0,1/9)$.
\end{lem}
\begin{proof}
    Applying Lemma~\ref{lem:QLAN_asymptotic_conversion} to $\psi$ and $\phi$, we find that there exists a sequence of quantum channels $\{\mathcal{E}_N\}_N$ such that
        \begin{align}
        &\lim_{N\to\infty}\sup_{\|u\|<N^\epsilon}T\left(\mathcal{E}_N\left(\mathcal{U}_{\frac{u}{\sqrt{N}}}(\psi)^{\otimes N}\right),\mathcal{U}'_{\frac{u}{\sqrt{N}}}(\phi)^{\otimes \floor{rN}}\right)=0\label{eq:conversion_representatives}
    \end{align}
    for $\epsilon\in(0,1/9)$.

    For a general $g\in G$, we define a channel $\mathcal{E}_N^{(g)}$ by using $\mathcal{E}_N$ as follows: From Eq.~\eqref{eq:basis_change_Lie_alg}, there exists an invertible $\dim G\times \dim G$ real matrix $V(g)$ satisfying 
    \begin{align}
        U(g)^\dag X_\mu U(g)=\sum_{\nu=1}^{\dim G}V(g)_{\nu\mu }X_\nu.
    \end{align}
    We define $\mathcal{E}_{N}^{(g)}\coloneqq \left(\mathcal{U}'_{g}\right)^{\otimes \floor{rN}}\circ \mathcal{E}_N\circ \left(\mathcal{U}_{g}^{-1}\right)^{\otimes N}$. 
    Since $U(g)^\dag (u^\mu X_\mu)U(g)=v^\mu X_\mu$, where $v\coloneqq uV(g)^{\top}$, for $\psi_g\coloneqq \mathcal{U}_g(\psi)$ we have
    \begin{align}
        \mathcal{E}_{N}^{(g)}\left(\mathcal{U}_{\frac{u}{\sqrt{N}}}\left(\psi_g\right)^{\otimes N}\right)&=\left(\mathcal{U}_{g}\right)^{\otimes \floor{rN}}\circ \mathcal{E}_N\left(\mathcal{U}_{g}^{-1}\circ\mathcal{U}_{\frac{u}{\sqrt{N}}}\circ \mathcal{U}_{g}\left(\psi\right)\right)^{\otimes N}\\
        &=\left(\mathcal{U}'_{g}\right)^{\otimes \floor{rN}}\circ \mathcal{E}_N\left(\mathcal{U}_{\frac{v}{\sqrt{N}}}\left(\psi\right)\right).
    \end{align}
    Therefore, we get
    \begin{align}
        T\left(\mathcal{E}_N^{(g)}\left(\mathcal{U}_{\frac{u}{\sqrt{N}}}(\psi_{g})^{\otimes N}\right),\mathcal{U}'_{\frac{u}{\sqrt{N}}}(\phi_{g})^{\otimes \floor{rN}}\right)
        &=T\left(\mathcal{E}_N\left(\mathcal{U}_{\frac{v}{\sqrt{N}}}\left(\psi\right)^{\otimes N}\right),\mathcal{U}'_{\frac{v}{\sqrt{N}}}\left(\phi\right)^{\otimes \floor{rN}}\right),
    \end{align}
    where we have used $U'(h)^\dag (u^\mu X_\mu')U'(h)=v^\mu X_\mu'$. Note that the induced norm $\|V(g)\|_\infty\coloneqq \max_{u\neq 0}\|V(g)u\|/\|u\|$ is finite for any $g\in G$. Since $\|V(h)\|_\infty$ and continuous in $g\in G$ and $G$ is compact, we find $c\coloneqq \sup_{g\in G}\|V(g)\|_\infty$ is also finite. By using $\|v\|\leq c\|u\|$, for any $g\in G$ and $\|u\|<N^\epsilon$, we get
    \begin{align}
        \delta_N(g,u)&=T\left(\mathcal{E}_N^{(g)}\left(\mathcal{U}_{\frac{u}{\sqrt{N}}}(\psi_{g})^{\otimes N}\right),\mathcal{U}'_{\frac{u}{\sqrt{N}}}(\phi_{g})^{\otimes \floor{rN}}\right)\\
        &\leq \sup_{\|v\|<cN^{\epsilon}}T\left(\mathcal{E}_N\left(\mathcal{U}_{\frac{v}{\sqrt{N}}}\left(\psi\right)^{\otimes N}\right),\mathcal{U}'_{\frac{v}{\sqrt{N}}}\left(\phi\right)^{\otimes \floor{rN}}\right).
    \end{align}
    Since $c$ is independent of $N$, for any $\epsilon\in(0,1/9)$, $cN^\epsilon <N^{\epsilon'}$ holds for all sufficiently large $N$ for any $\epsilon'\in(\epsilon,1/9)$. Therefore, 
    \begin{align}
        \lim_{N\to\infty}\sup_{g\in G}\sup_{\|u\|<N^\epsilon}\delta_N(g,u)
        &\leq \lim_{N\to\infty}\sup_{\|v\|<N^{\epsilon'}}T\left(\mathcal{E}_N\left(\mathcal{U}_{\frac{v}{\sqrt{N}}}\left(\psi\right)^{\otimes N}\right),\mathcal{U}'_{\frac{v}{\sqrt{N}}}\left(\phi\right)^{\otimes \floor{rN}}\right)\\
        &=0,
    \end{align}
    where the last equality follows from Eq.~\eqref{eq:conversion_representatives}. 
\end{proof}

\subsection{Asymptotic conversion error}\label{app:conversion_error}
We study the asymptotic behavior of the conversion error in our channels. While we do not aim to determine the optimal error rate, our results will provide a basis for future higher-order analysis. In this subsection, we repeatedly use the Fuchs--van de Graaf inequalities, i.e.,
\begin{align}
    1-\sqrt{\mathrm{Fid}(\rho,\sigma)}\leq T(\rho,\sigma)\leq \sqrt{1-\mathrm{Fid}(\rho,\sigma)}.
\end{align}

In what follows, we fix some $\epsilon\in(0,1/9)$. We first analyze the errors in the conversion from $\{\psi^{\otimes N}_{u/\sqrt{N}}\}_N$ to $\{\phi_{u/\sqrt{N}}^{\otimes \floor{rN}}\}_N$. For this purpose, we investigate each of the following steps~(i)--(iv):
\begin{align}
    \psi_{u/\sqrt{N}}^{\otimes N}\overset{\mathrm{Lem.}~\ref{lem:interconversion_projected_generators}} {\underset{\mathrm{(i)}}{\approx}}\tilde{\psi}_{u/\sqrt{N}}^{\otimes N}\xrightarrow[\mathrm{(ii)}]{\mathrm{Eq.}~\eqref{eq:QGT_reducing_transf_channels}}\tilde{\phi}^{\otimes N}_{\sqrt{r}u/\sqrt{N}}\xrightarrow[\mathrm{(iii)}]{\mathrm{Eq.}\eqref{eq:QGT_conserving_transf_channel}}\tilde{\phi}^{\otimes \floor{rN}}_{u/\sqrt{N}}\overset{\mathrm{Lem.}~\ref{lem:interconversion_projected_generators}}{\underset{\mathrm{(iv)}}\approx}\phi^{\otimes \floor{rN}}_{u/\sqrt{N}}.
\end{align}
Here, we remind the readers of our notations: $\tilde{\psi}_\theta=e^{\ii \theta^\mu \tilde{X}_\mu}\psi e^{-\ii \theta^\nu \tilde{X}_\nu}$ and $\tilde{\phi}_\theta=e^{\ii \theta^\mu \tilde{X}'_\mu}\psi e^{-\ii \theta^\nu \tilde{X}'_\nu}$ for $\theta\in\mathbb{R}^{\dim G}$ with $\tilde{X}_\mu=(I-\psi)X_\mu \psi+\psi X_\mu (I-\psi)$ and $\tilde{X}'_\mu=(I-\phi)X'_\mu \phi+\phi X_\mu'(I-\phi)$.

\medskip
{\bf Steps~(i) and (iv):} 
From Lemma~\ref{lem:interconversion_projected_generators} and Eq.~\eqref{eq:projected_generators_approximation_error} in its proof, we have 
\begin{align}
        \left|\mathrm{Fid}\left(\psi_{u/\sqrt{N}},\tilde{\psi}_{u/\sqrt{N}} \right)-1\right|&\leq a_1N^{-3/2+3\epsilon}\\
        \left|\mathrm{Fid}\left(\phi_{u/\sqrt{N}},\tilde{\phi}_{u/\sqrt{N}} \right)-1\right|&\leq a_2N^{-3/2+3\epsilon}
\end{align}
with some finite non-negative constants $a_1$ and $a_2$ independent of $N$. Since the fidelity is multiplicative, and is less than or equal to one, we obtain 
\begin{align}
        \mathrm{Fid}\left(\psi_{u/\sqrt{N}}^{\otimes N},\tilde{\psi}_{u/\sqrt{N}}^{\otimes N} \right)&=\mathrm{Fid}\left(\psi_{u/\sqrt{N}},\tilde{\psi}_{u/\sqrt{N}}\right)^N\geq \left(1-a_1N^{-3/2+3\epsilon}\right)^N=1-O(N^{-1/2+3\epsilon})\\
        \mathrm{Fid}\left(\phi_{u/\sqrt{N}}^{\otimes \floor{rN}},\tilde{\phi}_{u/\sqrt{N}}^{\otimes \floor{rN}} \right)&=\mathrm{Fid}\left(\phi_{u/\sqrt{N}},\tilde{\phi}_{u/\sqrt{N}}\right)^{\floor{rN}}\geq \left(1-a_2N^{-3/2+3\epsilon}\right)^{\floor{rN}}=1-O(N^{-1/2+3\epsilon})
\end{align}
where we used $\epsilon<1/9<1/6$. Therefore, we find
\begin{align}
    \sup_{\|u\|<N^\epsilon}T\left(\psi_{u/\sqrt{N}}^{\otimes N},\tilde{\psi}_{u/\sqrt{N}}^{\otimes N} \right) &\leq \inf_{\|u\|<N^\epsilon}\sqrt{1-\mathrm{Fid}\left(\psi_{u/\sqrt{N}}^{\otimes N},\tilde{\psi}_{u/\sqrt{N}}^{\otimes N} \right) }=O(N^{-\frac{1}{4}(1-6\epsilon)}),\label{eq:error_i}\\
     \sup_{\|u\|<N^\epsilon}T\left(\phi_{u/\sqrt{N}}^{\otimes \floor{rN}},\tilde{\phi}_{u/\sqrt{N}}^{\otimes \floor{rN}} \right)&\leq \inf_{\|u\|<N^\epsilon}\sqrt{1-\mathrm{Fid}\left(\phi_{u/\sqrt{N}}^{\otimes \floor{rN}},\tilde{\phi}_{u/\sqrt{N}}^{\otimes \floor{rN}} \right) }=O(N^{-\frac{1}{4}(1-6\epsilon)}).\label{eq:error_iv}
\end{align}

\medskip
{\bf Step~(ii):} 
For $r$ satisfying $\mathcal{Q}^\psi\geq r \mathcal{Q}^\phi$, in Lemma~\ref{lem:interconversion_projected_generators} and its proof, we have constructed a quantum channel $\mathcal{E}$ satisfying Eq.~\eqref{eq:conversion_reduction_QGT_error_rate}, i.e., 
\begin{align}
        \left|\mathrm{Fid}\left(\mathcal{E}(\tilde{\psi}_{u/\sqrt{N}}),\tilde{\phi}_{\sqrt{r}u/\sqrt{N}}\right)-1\right|\leq a_3 N^{-3/2+3\epsilon}
\end{align}
for a finite non-negative constant $a_3$ independent of $N$. Therefore, as in Steps~(i) and (iv), we obtain
\begin{align}
    \sup_{\|u\|<N^\epsilon}T\left(\mathcal{E}^{\otimes N}\left(\tilde{\psi}_{u/\sqrt{N}}^{\otimes N}\right),\tilde{\phi}_{\sqrt{r}u/\sqrt{N}}^{\otimes N} \right) &\leq \inf_{\|u\|<N^\epsilon}\sqrt{1-\mathrm{Fid}\left(\mathcal{E}\left(\tilde{\psi}_{u/\sqrt{N}}\right),\tilde{\phi}_{\sqrt{r}u/\sqrt{N}} \right) ^N}=O(N^{-\frac{1}{4}(1-6\epsilon)}).\label{eq:error_ii}
\end{align}

\medskip
\medskip
{\bf Step~(iii):} 
The sequences of channels $\{\mathcal{T}_N\}_N$ and $\{\mathcal{S}_N\}_N$ in Eqs.\eqref{eq:QLAN_error_rate_Tn} and \eqref{eq:QLAN_error_rate_Sn} are provided in Theorem~3.3 of \cite{lahiry_minimax_2024}. The conversion errors are shown to be polynomial in $N$, i.e., 
\begin{align}
    \sup_{\|u\|<N^{\epsilon}}T\left(\mathcal{T}_N\left(\tilde{\phi}_{\sqrt{r}u/\sqrt{N}}^{\otimes N}\right),\ket{\sqrt{r}C'u}\bra{\sqrt{r}C'u}\right)&=O(N^{-\kappa}),\\
    \sup_{\|u\|<N^{\epsilon}}T\left(\tilde{\phi}_{\sqrt{r}u/\sqrt{N}}^{\otimes N},\mathcal{S}_N(\ket{\sqrt{r}C'u}\bra{\sqrt{r}C'u})\right)&=O(N^{-\kappa})
\end{align}
holds for some $\kappa>0$. 

Since the term arising from Eq.~\eqref{eq:coherence_overlap} is of order $O(N^{-1/2})$ and hence subdominant, we get
\begin{align}
    \sup_{\|u\|<N^{\epsilon'}}T\left(\tilde{\phi}^{\otimes \floor{rN}}_{u/\sqrt{N}},\mathcal{S}_{\floor{rN}}\left(\ket{\sqrt{r}C'u}\bra{\sqrt{r}C'u}\right)\right)=O\left(N^{-\kappa}\right).
\end{align}
Consequently, we get
\begin{align}
    \sup_{\|u\|<N^{\epsilon}}T\left(\mathcal{S}_{\floor{rN}}\circ\mathcal{T}_N\left(\tilde{\phi}_{\sqrt{r}u/\sqrt{N}}^{\otimes N}\right),\tilde{\phi}^{\otimes \floor{rN}}_{u/\sqrt{N}}\right)=O\left(N^{-\kappa}\right).\label{eq:error_iii}
\end{align}

From Eqs.~\eqref{eq:error_i}, \eqref{eq:error_iv}, \eqref{eq:error_ii}, and \eqref{eq:error_iii}, we obtain
\begin{align}
    \sup_{\|u\|<N^{\epsilon}}T\left(\mathcal{S}_{\floor{rN}}\circ\mathcal{T}_N\circ\mathcal{E}^{\otimes N}\left(\psi_{u/\sqrt{N}}^{\otimes N}\right),\phi^{\otimes \floor{rN}}_{u/\sqrt{N}}\right)=O\left(N^{-\kappa'}\right),\quad \kappa'\coloneqq \min \left\{\kappa,\frac{1}{4}(1-6\epsilon)\right\}.
\end{align}
We remark that the conversion channels $\mathcal{S}_{\floor{rN}}$ and $\mathcal{T}_N$ are constructed in the proof of Theorem~3.3 of \cite{lahiry_minimax_2024}, while the channel $\mathcal{E}$ is constructed in Lemma~\ref{lem:interconversion_projected_generators}.

\vspace{1cm}

From these sequences of channels, the sequence of channels $\{\mathcal{E}_N^{(g)}\}$ is constructed in the proof of Lemma~\ref{lem:Lie_asymptotic_conversion}, which satisfies
\begin{align}
    \sup_{g\in G}\sup_{\|u\|<N^\epsilon}\delta_N(g,u)&=\sup_{g\in G}\sup_{\|u\|<N^\epsilon}T\left(\mathcal{E}_N^{(g)}\left(\mathcal{U}_{\frac{u}{\sqrt{N}}}(\psi_{g})^{\otimes N}\right),\mathcal{U}'_{\frac{u}{\sqrt{N}}}(\phi_{g})^{\otimes \floor{rN}}\right)=O\left(N^{-\kappa'}\right),
\end{align}
where $\epsilon_1$ is an arbitrarily small positive constant. 
Therefore, for the channels $\{\mathcal{E}_N\}$ defined in Eq.~\eqref{eq:definition_full_channel}, from Eqs.~\eqref{eq:convergence_fidelity_succ_set}, \eqref{eq:estimate_and_conversion_error_fidelity} and~\eqref{eq:success_prob_rate}, we obtain
\begin{align}
    \inf_{g\in G}\mathrm{Fid}\left(\mathcal{E}_N(\mathcal{U}_g(\psi)^{\otimes N}),\mathcal{U}_g'(\phi)^{\otimes \floor{rN'}}\right)
        \geq p^{\mathrm{succ.}}(N^{-1/2+\epsilon})\times f_{N}=\left(1-O\left(N^{-\kappa'}\right)\right)^2,
\end{align}
implying that 
\begin{align}
    \sup_{g\in G}T\left(\mathcal{E}_N(\mathcal{U}_g(\psi)^{\otimes N}),\mathcal{U}_g'(\phi)^{\otimes \floor{rN'}}\right)=O\left(N^{-\frac{1}{2}\kappa'}\right).
\end{align}

Following the arguments around Eq.~\eqref{eq:direct_part_discard}, for an arbitrarily small $\delta>0$, by using the channel $\Lambda_N$ discarding $\floor{rN'}-\floor{(r-\delta)N}$ copies of the output system, we find 
\begin{align}
    \sup_{g\in G}T\left(\Lambda_N\circ \mathcal{E}_N(\mathcal{U}_g(\psi)^{\otimes N}),\mathcal{U}_g'(\phi)^{\otimes \floor{(r-\delta)N}}\right)=O\left(N^{-\frac{1}{2}\kappa'}\right).
\end{align}

Finally, we introduce
\begin{align}
    \mathcal{E}_N'\coloneqq \int \dd \mu_G(g)\,\mathcal{U}_{g^{-1}}'^{\otimes \floor{(r-\delta)N}}\circ \Lambda_N\circ \mathcal{E}_N\circ \mathcal{U}_{g}^{\otimes N}.
\end{align}
From Lemma~\ref{lem:gcov_and_cptp}, $\{\mathcal{E}_N'\}_N$ is a sequence of $G$-covariant channels such that
\begin{align}
    T\left(\mathcal{E}_N'\left(\psi^{\otimes N}\right),\phi^{\otimes \floor{(r-\delta)N}}\right)=O\left(N^{-\frac{1}{2}\kappa'}\right).
\end{align}
Therefore, with this sequence of conversion channels $\{\mathcal{E}_N'\}_N$, the conversion error in trace distance decays polynomially with $N$.

Adopting the result of \cite{lahiry_minimax_2024}, we see that the coefficient $\kappa'$ is determined by $\kappa$, which characterizes the conversion error in QLAN. This implies that an improvement of the QLAN error rate directly translates into an improvement of our conversion error. To demonstrate this, we use Lemma~C.6 of \cite{lahiry_minimax_2024}, which states that
\begin{align}
    &\sup_{\|u\|<N^{\epsilon}}T\left(\mathcal{T}_{N}\left(\tilde{\phi}^{\otimes N}_{\sqrt{r}u/\sqrt{N}}\right),\ket{\sqrt{r}C'u}\bra{\sqrt{r}C'u}\right)\nonumber \\
    &=O\left(N^{-\frac{1}{4}+\Delta}+N^{-\frac{1}{2}+\zeta}+N^{-\frac{1}{2}+\zeta+\eta}+N^{\frac{9\eta-2}{24}}+N^{-\frac{1}{4}+\frac{3\epsilon}{2}+\Delta}+N^{-\frac{1}{2}+\frac{\alpha}{2}+\frac{\eta}{2}}+N^{-\frac{1}{2}+\frac{3\eta}{2}}+N^{-\frac{\Delta}{2}}\right)
\end{align}
where $\zeta$, $\Delta$, $\alpha$, $\eta$ are constants, which can be arbitrarily chosen as long as they satisfy 
\begin{align}
    0<\zeta<\frac{1}{4},\quad 0<\Delta,\quad \frac{1}{2}+\zeta<\alpha,\quad \eta<\frac{2}{9},\quad 0<\epsilon<\frac{\eta-\Delta}{2}.
\end{align}
The same bound applies $\mathcal{S}_N$ as noted in Eq. (C.9) of \cite{lahiry_minimax_2024}. For example, let $\epsilon_0 \in (4\epsilon/7,4/63)$ and choose
\begin{align}
    \zeta=\epsilon_0,\quad \Delta =\frac{2}{21}-\frac{3}{2}\epsilon_0,\quad \alpha=\frac{1}{2}+2\zeta=\frac{1}{2}+2\epsilon_0,\quad \eta=\frac{2}{21}+2\epsilon_0,
\end{align}
which satisfy the above constraints. Substituting these parameters yields
\begin{align}
    \sup_{\|u\|<N^{\epsilon}}T\left(\mathcal{T}_{N}\left(\tilde{\phi}^{\otimes N}_{\sqrt{r}u/\sqrt{N}}\right),\ket{\sqrt{r}C'u}\bra{\sqrt{r}C'u}\right)=O\left(N^{-\frac{1}{21}+\frac{3}{4}\epsilon_0}\right).
\end{align}
Hence, $\kappa$ can be chosen arbitrarily close to $\frac{1}{21}$.

\newpage
\section{\texorpdfstring{Proof of Eq.~\protect\eqref{eq:cost_ensemble}}{Proof of Eq.~(\ref*{eq:cost_ensemble})}}
\label{app:cost_ensemble}
Following the standard argument using typical sequence \cite{hayden_asymptotic_2001,marvian_operational_2022}, we prove Eq.~\eqref{eq:cost_ensemble}, i.e., $ \sum_{i\in [k]}p_i\mathcal{A}_{\cost}(\rho_i)\geq \mathcal{A}_{\cost}(\rho)$ for $\rho=\sum_{i\in[k]}p_i\rho_i$, where $[k]\coloneqq \{1,2,\cdots,k\}$.

\begin{proof}
Without loss of generality, we can assume that $p_i>0$ for all $i\in[k]$. If $\mathcal{A}_{\cost}(\rho_i)=\infty$, the inequality is trivial. Therefore, we only consider the case where $\mathcal{A}_{\cost}(\rho_i)<\infty$ for all $i\in[k]$ below.

The i.i.d. copies of $\rho$ is given by
 \begin{align}
    \rho^{\otimes N}=\left(\sum_{i\in [k]}p_i\rho_i\right)^{\otimes N}=\sum_{\bm{i}\in[k]^N}p_{\bm{i}}\rho_{\bm{i}},
\end{align}
where $\bm{i}\coloneqq i_1i_2\cdots i_N\in [k]^N$, $p_{\bm{i}}\coloneqq \prod_{j=1}^Np_{i_j}$, and $\rho_{\bm{i}}\coloneqq \bigotimes_{j=1}^N\rho_{i_j}$. For $l\in [k]$, let $n_l(\bm{i})$ denote the number of occurrence of $l$ in $i_1i_2\cdots i_N$. For $\delta>0$, we define $\delta$-typical sequence as
\begin{align}
    \mathcal{T}_\delta\coloneqq \left\{\bm{i}=i_1i_2\cdots i_N\in[k]\,\middle|\,\forall l\in [k],\, \left|\frac{n_{l}(\bm{i})}{N}-p_l\right|\leq \delta\right\}.
\end{align}
The i.i.d. state $\rho^{\otimes N}$ is decomposed as 
\begin{align}
    \rho^{\otimes N}=\sum_{\bm{i}\in\mathcal{T}_\delta}p_{\bm{i}}\rho_{\bm{i}}+\sum_{\bm{i}\notin\mathcal{T}_\delta}p_{\bm{i}}\rho_{\bm{i}}.
\end{align}
Let us introduce a density operator 
\begin{align}
    \tilde{\rho}_N\coloneqq \sum_{\bm{i}\in\mathcal{T}_\delta}p_{\bm{i}}\rho_{\bm{i}}+p_{\mathrm{err}}\rho_{\mathrm{sym}},
\end{align}
where $p_{\mathrm{err}}\coloneqq 1- \sum_{\bm{i}\in\mathcal{T}_\delta}p_{\bm{i}}$ and $\rho_{\mathrm{sym}}$ is some symmetric state, e.g., the maximally mixed state. From the standard argument on a typical sequence, for any $\delta>0$, the probability $p_{\mathrm{err}}$ converges to zero as $N\to\infty$, which implies $\lim_{N\to\infty}T\left(\rho^{\otimes N},\tilde{\rho}_N\right)=0$. Thus, we prepare $\tilde{\rho}_N$ instead of $\rho^{\otimes N}$.

The state $\rho_{\bm{i}}$ is equivalent to $\bigotimes_{l\in[k]}\rho_l^{\otimes n_l(\bm{i})}$ up to permutation. Note that the permutation, which is composed of the SWAP operations among subsystems, is a $G$-covariant operation. For a typical sequence $\bm{i}\in \mathcal{T}_\delta$, $n_l(\bm{i})\leq N(p_l+\delta)$ holds for all $l\in [k]$. Therefore, a state $\rho_{\bm{i}}$ for a typical sequence $\bm{i}\in\mathcal{T}_\delta$ can be created by first preparing $\bigotimes_{l\in[k]}\rho_l^{\otimes \floor{N(p_l+\delta)}}$ via a $G$-covariant operation, discarding some copies if needed, and then operating a permutation. For each $l\in[k]$, for $r_l$ satisfying $r_l>(p_l+\delta)\mathcal{A}_{\cost}(\rho_l)$, it holds
\begin{align}
    \{\phi^{\otimes \ceil{r_lN}}\}_N\gconv \{\rho_l^{\otimes \floor{N(p_l+\delta)}}\}_N .
\end{align}
Therefore, introducing $r\coloneqq \sum_{i\in [k]}r_l$, there exists a sequence of $G$-covariant channels $\{\mathcal{E}_{N}^{(\bm{i})}\}_N$ for any $\bm{i}\in \mathcal{T}_\delta$ such that 
\begin{align}
    \lim_{N\to\infty}T\left(\mathcal{E}_{N}^{(\bm{i})}(\phi^{\otimes \ceil{rN}}),\rho_{\bm{i}}\right)=0 . \label{eq:typical_sequence_conversion}
\end{align}
Defining a sequence of $G$-covariant channels $\{\mathcal{E}_N\}_N$ by
\begin{align}
    \mathcal{E}_N(\cdot)\coloneqq \sum_{\bm{i}\in\mathcal{T}_\delta}p_{\bm{i}}\mathcal{E}_{N}^{(\bm{i})}(\cdot)+p_{\mathrm{err}}\mathcal{E}_{\mathrm{sym},N}(\cdot),
\end{align}
where $\mathcal{E}_{\mathrm{sym},N}(\cdot)$ denotes the $G$-covariant channel that maps arbitrary state into a symmetric state $\rho_{\mathrm{sym}}$, we get
\begin{align}
    T\left(\mathcal{E}_N(\phi^{\otimes \ceil{rN}}),\tilde{\rho}_N\right)
    &=T\left(\sum_{\bm{i}\in\mathcal{T}_\delta}p_{\bm{i}}\mathcal{E}_{N}^{(\bm{i})}(\phi^{\otimes \ceil{rN}})+p_{\mathrm{err}}\rho_{\mathrm{sym}},\sum_{\bm{i}\in\mathcal{T}_\delta}p_{\bm{i}}\rho_{\bm{i}}+p_{\mathrm{err}}\rho_{\mathrm{sym}}\right)\\
    &\leq\sum_{\bm{i}\in\mathcal{T}_\delta}p_{\bm{i}}T\left(\mathcal{E}_{N}^{(\bm{i})}(\phi^{\otimes \ceil{rN}}),\rho_{\bm{i}}\right)\\
    &\leq \max_{\bm{i}\in\mathcal{T}_\delta}T\left(\mathcal{E}_{N}^{(\bm{i})}(\phi^{\otimes \ceil{rN}}),\rho_{\bm{i}}\right),
\end{align}
where we have used the strong convexity of the trace distance in the first inequality and $\sum_{\bm{i}\in\mathcal{T}_\delta}p_{\bm{i}}\leq 1$ in the last inequality. Thus, from Eq.~\eqref{eq:typical_sequence_conversion}, we find $\lim_{N\to\infty}T\left(\mathcal{E}_N(\phi^{\otimes \ceil{rN}}),\tilde{\rho}_N\right)=0$. By using $\lim_{N\to\infty}T\left(\rho^{\otimes N},\tilde{\rho}_N\right)=0$, we get $\lim_{N\to\infty}T\left(\mathcal{E}_N(\phi^{\otimes \ceil{rN}}),\rho^{\otimes N}\right)=0$. Therefore, for $r=\sum_{l\in[k]}r_l$, we find $\{\phi^{\otimes \ceil{rN}}\}_N\gconv \{\rho^{\otimes N}\}_N$. Note that $r$ satisfies
\begin{align}
    r=\sum_{l\in[k]}r_l>\sum_{l\in[k]}(p_l+\delta)\mathcal{A}_{\cost}(\rho_l)=\sum_{l\in[k]}p_l\mathcal{A}_{\cost}(\rho_l)+\delta \times \left(\sum_{l\in[k]}\mathcal{A}_{\cost}(\rho_l)\right).
\end{align}
Since $\delta>0$ can be arbitrarily small and $\sum_{l\in[k]}\mathcal{A}_{\cost}(\rho_l)<\infty$, we find that $\{\phi^{\otimes \ceil{rN}}\}_N\gconv \{\rho^{\otimes N}\}_N$ holds for any $r>\sum_{l\in[k]}p_l\mathcal{A}_{\cost}(\rho_l)$. Therefore, we get $ \sum_{i\in [k]}p_i\mathcal{A}_{\cost}(\rho_i)\geq \mathcal{A}_{\cost}(\rho)$. 
\end{proof}

If a state $\rho$ can be written as $\rho=\sum_{i=1}^{k-1}p_i\psi_i+p_{\mathrm{s}}\rho_{\mathrm{s}}$ for symmetric state $\rho_{\mathrm{s}}$ and pure states $\{\psi_i\}_{i=1}^{k-1}$ satisfying $\mathrm{Sym}_G(\phi)\subset \mathrm{Sym}_G(\psi_i)$, then we get Eq.~\eqref{eq:cost_upper_bound} as an upper bound on the asymmetry cost of $\rho$, given by
\begin{align}
    \sum_{i=1}^{k-1}p_i\inf\left\{r\geq 0\,\middle|\, \forall g\in G,\, r\mathcal{Q}^{\mathcal{U}_g(\phi)} \geq  \mathcal{Q}^{\mathcal{U}_g'(\psi_i)}\right\}\geq \mathcal{A}_{\cost}(\rho).
\end{align}
Minimizing the left-hand side over such an ensemble yields a better bound. 

However, such a decomposition does not always exist. Let us consider an example of a two-qubit system with $G=U(4)$-symmetry. Let $\phi$ be a spin-singlet state and $\rho$ be a mixed state consisting of three spin-triplet states with equal probabilities. Note that $\mathrm{Sym}_G(\phi)=\mathrm{Sym}_G(\rho)$ since $\phi +3\rho=I$. We will now demonstrate that $\rho$ cannot be expressed as a probabilistic mixture of symmetric state $\rho_{\mathrm{s}}$ and pure states $\{\psi_i\}_{i}$ satisfying $\mathrm{Sym}_G(\phi)\subset \mathrm{Sym}_G(\psi_i)$. For any $U\in U(2)$, $U^{\otimes 2}$ leaves $\phi$ invariant, meaning that $U^{\otimes 2}\in \mathrm{Sym}_G(\phi)$. Any state $\psi_i$ for which $U^{\otimes 2} \in \mathrm{Sym}_G(\psi_i)$ can be expressed as a linear combination of the identity and SWAP operators. In particular, the unique pure state satisfying this condition is the singlet state $\phi$. Since the maximally mixed state is the unique symmetric state under $G=U(4)$ symmetry, the state $\rho$ cannot be expressed as a probabilistic mixture of a symmetric state and pure states $\{\psi_i\}_{i}$ satisfying $\mathrm{Sym}_G(\phi)\subset \mathrm{Sym}_G(\psi_i)$.

\section{Proof of an upper bound on the asymmetry cost (Proposition~\ref{prop:asymmetry_of_formation})}\label{app:asymmetry_of_formation}
In this section, we present the proof of Proposition~\ref{prop:asymmetry_of_formation}, which provides an upper bound on the asymmetry cost. It is worth emphasizing that, unlike the result in the previous section, Proposition~\ref{prop:asymmetry_of_formation} does not impose any constraint on the symmetry subgroups in the ensemble decomposition. Intuitively, this relaxation is possible because the symmetry subgroup constraint is only relevant for the estimation step in the estimation-and-conversion strategy introduced in Sec.~\ref{app:direct_part}, while the conversion step succeeds as long as the monotonicity condition of QGTs is satisfied. Below, we carefully modify the argument of typical sequences presented in the previous section, and use it to extend the proof of Proposition~\ref{prop:direct_part}, thereby establishing Proposition~\ref{prop:asymmetry_of_formation}. Following the convention in the main text, we define $\mathcal{U}_g(\cdot)\coloneqq U(g)(\cdot )U(g)^\dag$ for a unitary representation $U$ of $G$. Note that, with a slight abuse of notation, we also write $\mathcal{U}_{\theta}(\cdot)\coloneqq e^{\ii \theta^\mu X_\mu}(\cdot)e^{-\ii\theta^\nu X_\nu}$ for $\theta\in\mathbb{R}^{\dim G}$.

\begin{lem}\label{lem:conversion_ensemble}
    Let $U,U'$ be (non-projective) unitary representations of a compact Lie group $G$ on finite-dimensional Hilbert spaces $\mathcal{H}$ and $\mathcal{H}'$. Let $\rho$ be a state on $\mathcal{H}$ and $\sigma$ be a state on $\mathcal{H}'$. Consider an ensemble $\{p_i,\sigma_i\}_{i\in[k]}$ such that $\sigma=\sum_ip_i\sigma_i$, where $[k]\coloneqq \{1,\cdots, k\}$. Suppose that for each $i$, for some finite $r_i>0$, there exists a sequence of quantum channels $\mathcal{E}_N^{(i,g)}$ such that
    \begin{align}
        \delta_{N}^{(i)}(g,u)\coloneqq T\left(\mathcal{E}^{(i,g)}_N\left(\mathcal{U}_{\frac{u}{\sqrt{N}}}\left(\rho_g\right)^{\otimes \ceil{r_i N}}\right),\mathcal{U}'_{\frac{u}{\sqrt{N}}}\left(\sigma_{i,g}\right)^{\otimes N}\right)
    \end{align}
    satisfies $\lim_{N\to\infty}\sup_{g\in G}\sup_{\|u\|<N^\epsilon}\delta_N^{(i)}(g,u)=0$ for $\epsilon'\in(0,\epsilon)$, where $\rho_g\coloneqq \mathcal{U}_g(\rho)$ and $\sigma_{i,g}\coloneqq \mathcal{U}_{g}'(\sigma_i)$. 
    Then there exists a sequence of quantum channels $\mathcal{E}_N^{(g)}$ and $\epsilon'\in(0,1)$ such that
    \begin{align}
        \lim_{N\to\infty}\sup_{g\in G}\sup_{\|u\|<N^{\epsilon'}}T\left(\mathcal{E}^{(g)}_N\left(\mathcal{U}_{\frac{u}{\sqrt{N}}}\left(\rho_g\right)^{\otimes\ceil{rN}}\right),\mathcal{U}'_{\frac{u}{\sqrt{N}}}\left(\sigma_{g}\right)^{\otimes N}\right)
    \end{align}
    for $r\coloneqq \sum_ip_ir_i+\Delta$ and $\sigma_{g}\coloneqq  \mathcal{U}_{g}'(\sigma)$, where $\Delta$ is an arbitrary positive number.   
\end{lem}
\begin{proof}
    If $p_i=1$ for some $i$, then the statement is trivial. In the following, we only consider the case where $p_i<1$ for all $i$. We take $\delta>0$ such that $p_i+\delta<1$ for all $i$. 
    
    The i.i.d. copies of $\sigma$ are expressed as
    \begin{align}
        \sigma^{\otimes N}=\left(\sum_{i\in [k]}p_i\sigma_i\right)^{\otimes N}=\sum_{\bm{i}\in[k]^N}p_{\bm{i}}\rho_{\bm{i}},
    \end{align}
    where $\bm{i}\coloneqq( i_1,\cdots ,i_N)\in[k]^N$, $p_{\bm{i}}\coloneqq \prod_{j=1}^Np_{i_j}$, and $\sigma_{\bm{i}}\coloneqq \otimes_{j=1}^N\rho_{i_j}$. For $l\in[k]$, let $n_l(\bm{i})$ denote the number of occurrence of $l$ in $(i_1,\cdots,i_N)$. For $\delta>0$, we define $\delta$-typical sequence as
    \begin{align}
        \mathcal{T}_\delta\coloneqq \left\{\bm{i}=(i_1,i_2,\cdots ,i_N)\in[k]\,\middle|\,\forall l\in [k],\, \left|\frac{n_{l}(\bm{i})}{N}-p_l\right|\leq \delta\right\}.
    \end{align}

    Introducing 
    \begin{align}
        \tilde{\sigma}_N\coloneqq \sum_{\bm{i}\in\mathcal{T}_\delta}p_{\bm{i}}\sigma_{\bm{i}}+p_{\mathrm{err}}\sigma_{\mathrm{sym}},
    \end{align}
    where $p_{\mathrm{err}}\coloneqq 1- \sum_{\bm{i}\in\mathcal{T}_\delta}p_{\bm{i}}$ and $\sigma_{\mathrm{sym}}$ is some symmetric state (,e.g., the maximally mixed state), the standard argument on a typical sequence implies
    \begin{align}
        \lim_{N\to\infty}T\left(\sigma^{\otimes N},\tilde{\sigma}_N\right)=0. 
    \end{align}
    From the invariance of the trace distance under unitary operations, it also implies
    \begin{align}
        \lim_{N\to\infty}\sup_{g\in G}\sup_{\|u\|<N^\epsilon}T\left(\left(\mathcal{U}_{\frac{u}{\sqrt{N}}}(\sigma_g)\right)^{\otimes N},\left(\mathcal{U}_{\frac{u}{\sqrt{N}}}^{\prime \otimes N}\circ \mathcal{U}_g^{\prime\otimes N}\right)\left(\tilde{\sigma}_N\right)\right)=0\label{eq:typical_sequence_approximation}.
    \end{align}

    Since the state $\sigma_{\bm{i}}$ is equivalent to $\bigotimes_{l\in[k]}\sigma_l^{\otimes n_l(\bm{i})}$ up to permutation, let us first consider a channel that approximately creates $\bigotimes_{l\in[k]}\sigma_l^{\otimes n_l(\bm{i})}$ for a typical sequence $\bm{i}\in\mathcal{T}_\delta$. Define a sequence of quantum channels $\mathcal{F}_{N}$ by
    \begin{align}
        \mathcal{F}_{N}^{(g)}\coloneqq \bigotimes_{l\in [k]}\left(\mathcal{E}^{(l,g)}_{N_l}\right),
    \end{align}
    where we have introduced $N_l\coloneqq \ceil{N(p_l+\delta)}$. 
    From the subadditivity of the trace distance, we get
    \begin{align}
        &T\left(\mathcal{F}_{N}^{(g)}\left(\bigotimes_{l\in[k]}\left(\mathcal{U}_{\frac{u}{\sqrt{N}}}(\rho_g)\right)^{\otimes  \ceil{r_lN_l}}\right),\bigotimes_{l\in[k]}\left(\mathcal{U}'_{\frac{u}{\sqrt{N}}}(\sigma_{l,g})\right)^{ \otimes N_l}\right)\nonumber \\
        &\leq \sum_{l\in[k]}T\left(\mathcal{E}^{(l,g)}_{N_l}\left(\mathcal{U}_{\frac{u}{\sqrt{N}}}(\rho_g)^{\otimes  \ceil{r_lN_l}}\right),\mathcal{U}'_{\frac{u}{\sqrt{N}}}(\sigma_{l,g})^{ \otimes N_l}\right)\nonumber \\
        &=
        \sum_{l\in [k]} \delta_{N_l}^{(l)}\left(g,u\frac{\sqrt{N_l}}{\sqrt{N}}\right).
    \end{align}
    Since $\delta$ satisfies $p_l+\delta<1$, we have $N_l\leq N$, implying that $\frac{\sqrt{N_l}}{\sqrt{N}}\leq 1$. For all sufficiently large $N$, $N^{\epsilon'}<N_l^\epsilon$ holds for $\epsilon'\in (0,\epsilon)$, meaning that
    \begin{align}
        \|u\|<N^\epsilon\implies \left\|u\frac{\sqrt{N_l}}{\sqrt{N}}\right\|< N_l^\epsilon.
    \end{align}
    Therefore, we get
    \begin{align}
        &\sup_{g\in G}\sup_{\|u\|<N^{\epsilon'}}T\left(\mathcal{F}_{N}^{(g)}\left(\bigotimes_{l\in[k]}\left(\mathcal{U}_{\frac{u}{\sqrt{N}}}(\rho_g)\right)^{\otimes  \ceil{r_lN_l}}\right),\bigotimes_{l\in[k]}\left(\mathcal{U}_{\frac{u}{\sqrt{N}}}(\sigma_{l,g})\right)^{\otimes N_l}\right)\nonumber \\
        &\leq \sum_{l\in [k]}\sup_{g\in G}\sup_{\|u\|<N_l^\epsilon}\delta_{N_l}^{(l)}\left(g,u\right)
    \end{align}
    holds for all sufficiently large $N$. 
    Since $N_l\geq  n_l(\bm{i})$ holds for a typical sequence $\bm{i}\in \mathcal{T}_\delta$, we can introduce a channel $\mathcal{D}_{N_l\to n_l(\bm{i)}}$ that discards $N_l- n_l(\bm{i})$ subsystems if $N_l>n_l(\bm{i})$ and does nothing if $N_l=n_l(\bm{i})$. For $\mathcal{D}_N^{\bm{i}}\coloneqq \bigotimes_{l\in [k]}\mathcal{D}_{N_l\to n_l(\bm{i)}}$, we get
    \begin{align}
        &\sup_{g\in G}\sup_{\|u\|<N^{\epsilon'}}T\left(\mathcal{D}_N^{(\bm{i})}\circ \mathcal{F}_{N}^{(g)}\left(\bigotimes_{l\in[k]}\left(\mathcal{U}_{\frac{u}{\sqrt{N}}}(\rho_g)\right)^{\otimes  \ceil{r_lN_l}}\right),\bigotimes_{l\in[k]}\left(\mathcal{U}_{\frac{u}{\sqrt{N}}}(\sigma_{l,g})\right)^{\otimes n_l(\bm{i})}\right)\nonumber \\
        &\leq \sum_{l\in [k]}\sup_{g\in G}\sup_{\|u\|<N_l^\epsilon}\delta_{N_l}^{(l)}\left(g,u\right).
    \end{align}
    Denoting by $\mathcal{P}_{\bm{i}}$ a channel that swaps subsystems so that $\mathcal{P}^{(\bm{i})}(\bigotimes_{l\in[k]}\sigma_l^{\otimes n_l(\bm{i})})=\sigma_{\bm{i}}$, which commutes with $\mathcal{U}^{\otimes N}_{u}\circ \mathcal{U}^{\otimes N}_{g}$, we get
    \begin{align}
        &\sup_{g\in G}\sup_{\|u\|<N^{\epsilon'}}T\left(\mathcal{E}^{(\bm{i},g)}_N\left(\bigotimes_{l\in[k]}\left(\mathcal{U}_{\frac{u}{\sqrt{N}}}(\rho_g)\right)^{\otimes  \ceil{r_lN_l}}\right),\mathcal{U}_{\frac{u}{\sqrt{N}}}^{\otimes N}(\sigma_{\bm{i},g})\right)\nonumber \\
        &\leq \sum_{l\in [k]}\sup_{g\in G}\sup_{\|u\|<N_l^\epsilon}\delta_{N_l}^{(l)}\left(g,u\right)\label{eq:sum_trace_distance_l}
    \end{align}
    for $\mathcal{E}^{(\bm{i},g)}_N\coloneqq \mathcal{P}^{(\bm{i})}\circ \mathcal{D}_N^{\bm{i}}\circ \mathcal{F}_{N}^{(g)}$. 

    We now fix $\Delta>0$ so that $\sum_l \ceil{r_lN_l}<\ceil{rN}$ for $r\coloneqq \sum_lp_lr_l+\Delta$ for all sufficiently large $N$. Since $\delta>0$ can be chosen arbitrarily small and $r_l$ is finite from the assumption, $\Delta$ can also be made arbitrarily small. Defining a sequence of channels $\{\mathcal{E}_N\}_N$ by
    \begin{align}
        \mathcal{E}_N^{(g)}(\cdot)\coloneqq \sum_{\bm{i}\in\mathcal{T}_\delta}p_{\bm{i}}\mathcal{E}_{N}^{(\bm{i},g)}\circ \mathcal{D}_{\ceil{rN}\to \sum_l\ceil{r_lN_l }}(\cdot)+p_{\mathrm{err}}\mathcal{E}_{\mathrm{sym},N}(\cdot),
    \end{align}
    where $\mathcal{E}_{\mathrm{sym},N}(\cdot)$ denotes a channel that maps an arbitrary state into a symmetric state $\rho_{\mathrm{sym}}$, we get
    \begin{align}
         &T\left(\mathcal{E}_N^{(g)}\left(\mathcal{U}_{\frac{u}{\sqrt{N}}}(\rho_g)\right)^{\otimes  \ceil{rN}},\mathcal{U}_{\frac{u}{\sqrt{N}}}^{\prime \otimes N}(\tilde{\sigma}_{N,g})\right)\\
        &=T\left(\sum_{\bm{i}\in\mathcal{T}_\delta}p_{\bm{i}}\mathcal{E}^{(\bm{i},g)}_N\left(\bigotimes_{l\in[k]}\left(\mathcal{U}_{\frac{u}{\sqrt{N}}}(\rho_g)\right)^{\otimes  \ceil{r_lN_l}}\right)+p_{\mathrm{err}}\rho_{\mathrm{sym}},\sum_{\bm{i}\in\mathcal{T}_\delta}p_{\bm{i}}\left(\mathcal{U}_{\frac{u}{\sqrt{N}}}'(\sigma_{\bm{i},g})\right)^{\otimes N}+p_{\mathrm{err}}\rho_{\mathrm{sym}}\right)\\
        &\leq\sum_{\bm{i}\in\mathcal{T}_\delta}p_{\bm{i}}T\left(\mathcal{E}^{(\bm{i},g)}_N\left(\bigotimes_{l\in[k]}\left(\mathcal{U}_{\frac{u}{\sqrt{N}}}(\rho_g)\right)^{\otimes  \ceil{r_lN_l}}\right),\mathcal{U}_{\frac{u}{\sqrt{N}}}^{\otimes N}(\sigma_{\bm{i},g})\right)\\
        &\leq \max_{\bm{i}\in\mathcal{T}_\delta}T\left(\mathcal{E}^{(\bm{i},g)}_N\left(\bigotimes_{l\in[k]}\left(\mathcal{U}_{\frac{u}{\sqrt{N}}}(\rho_g)\right)^{\otimes  \ceil{r_lN_l}}\right),\mathcal{U}_{\frac{u}{\sqrt{N}}}^{\otimes N}(\sigma_{\bm{i},g})\right),\label{eq:max_trace_evaluation}
    \end{align}
    where we have used the joint convexity of the trace distance and $\sum_{\bm{i}\in\mathcal{T}_\delta}p_{\bm{i}}\leq 1$. From Eqs.~\eqref{eq:typical_sequence_approximation}, \eqref{eq:sum_trace_distance_l}, \eqref{eq:max_trace_evaluation}, we get
    \begin{align}
        \lim_{N\to\infty}\sup_{g\in G}\sup_{\|u\|<N^{\epsilon'}}T\left(\mathcal{E}^{(g)}_N\left(\mathcal{U}_{\frac{u}{\sqrt{N}}}\left(\rho_g\right)^{\otimes\ceil{rN}}\right),\mathcal{U}'_{\frac{u}{\sqrt{N}}}\left(\sigma_{g}\right)^{\otimes N}\right)=0. 
    \end{align}

\end{proof}

\section{Convex roof of QGT}\label{app:convex_roof_QGT}
We show that the minimum of $\sum_{i}p_i \mathcal{Q}^{\psi_i}$, in the sense of matrix inequality, over the set of all finite ensembles $\{\{p_i\},\{\psi_i\},p_{\mathrm{s}},\rho_{\mathrm{s}}\}$ such that $\rho=\sum_{i=1}^{k-1}p_i\psi_i+p_{\mathrm{s}}\rho_{\mathrm{s}}$ does not exist in general. Here, $\rho_{\mathrm{s}}$ is a symmetric (possibly mixed) state, $\{\psi_i\}_{i=1}^{k-1}$ are pure states, $\{p_{\mathrm{s}},\{p_i\}_i\}$ is a probability distribution, i.e., $p_{\mathrm{s}}, p_i\geq 0$ and $\sum_{i}p_i+p_{\mathrm{s}}=1$. 

Here we provide an example where the minimum of $\sum_{i}p_i \mathcal{Q}^{\psi_i}$ does not exist. Consider a qubit system with a unitary representation $e^{\ii \sum_{i=x,y,z}\theta^i\sigma_i}$ of $G=SU(2)$ symmetry, where $ \sigma_x,\sigma_y,\sigma_z$ are the Pauli operators. Consider a mixed state $\rho=\frac{1}{2}(I+\epsilon\sigma_z)$ for some $\epsilon\in (0,1)$. We will analyze two different decompositions $\rho=\epsilon \ket{0}\bra{0}+(1-\epsilon)\frac{1}{2}I$ and $\rho=\frac{1}{2}(\ket{\psi_+}\bra{\psi_+}+\ket{\psi_+}\bra{\psi_+})$, where $\ket{\psi_{\pm}}\coloneqq \cos\frac{\varphi}{2}\ket{0}\pm \sin\frac{\varphi}{2}\ket{1}$ for $\varphi\in\mathbb{R}$ such that $\cos\varphi=\epsilon$. The QGT for each pure state is calculated as follows:
\begin{align}
    \mathcal{Q}^{\ket{0}}=\mathcal{Q}^{\ket{1}}=
    \begin{pmatrix}
        1&\ii &0\\
        -\ii& 1& 0\\
        0&0&0
    \end{pmatrix},\quad \mathcal{Q}^{\psi_\pm}=
    \begin{pmatrix}
        \cos^2\varphi& \ii \cos\varphi&\mp \cos\varphi\\
        -\ii\cos\varphi &1 &\pm \ii\sin\varphi\\
        \pm \cos\varphi\sin\varphi & \mp \ii\sin\varphi & \sin^2\varphi
    \end{pmatrix}. 
\end{align}
For the first decomposition $\rho=\epsilon \ket{0}\bra{0}+(1-\epsilon)\frac{1}{2}I$, the average QGT is given by
\begin{align}
    \epsilon \mathcal{Q}^{\ket{0}}=
    \begin{pmatrix}
        \cos\varphi&\ii\cos\varphi &0\\
        -\ii\cos\varphi& \cos\varphi& 0\\
        0&0&0
    \end{pmatrix},\label{eq:aveQGT_decompsition_1}
\end{align}
while the average QGT for the second decomposition $\rho=\frac{1}{2}(\ket{\psi_+}\bra{\psi_+}+\ket{\psi_+}\bra{\psi_+})$ is given by 
\begin{align}
    \frac{1}{2}\mathcal{Q}^{\psi_+}+\frac{1}{2}\mathcal{Q}^{\psi_-}=
    \begin{pmatrix}
        \cos^2\varphi& \ii \cos\varphi& 0\\
        -\ii\cos\varphi &1 & 0\\
        0 & 0 & \sin^2\varphi
    \end{pmatrix}.\label{eq:aveQGT_decompsition_2}
\end{align}

Now, suppose that there is an optimal decomposition of the state $\rho$ such that $\rho=\sum_ip_i\psi_i+p_{\mathrm{s}}\rho_{\mathrm{s}}$, where $\{\psi_i\}$ are pure states and $\rho_{\mathrm{s}}$ is a symmetric state and denote the minimum of $\sum_ip_i\mathcal{Q}^{\psi_i}$ by $\tilde{Q}^{\rho}$. On the one hand, from Eq.~\eqref{eq:aveQGT_decompsition_1}, $\tilde{Q}^{\rho}$ must satisfy $(\tilde{Q}^{\rho})_{33}=0$. On the other hand, Eq.~\eqref{eq:aveQGT_decompsition_2} implies that $(\tilde{Q}^{\rho})_{11}<\cos^2\varphi$. The condition $(\tilde{Q}^{\rho})_{33}=0$ implies that all the pure states in the optimal decomposition are the eigenvectors of $\sigma_z$. For such a decomposition, however, $(\tilde{Q}^{\rho})_{11}=\cos\varphi$, which contradicts the condition $(\tilde{Q}^{\rho})_{11}<\cos^2\varphi$ since $\cos^2\varphi<\cos\varphi$. Therefore, the minimum of $\sum_ip_i\mathcal{Q}^{\psi_i}$ does not exist for $\rho=\frac{1}{2}(I+\epsilon\sigma_z)$.

\section{QGT of purification}\label{app:QGT_of_purification}
For a mixed state $\rho$ on a system of interest $S$, we consider minimizing the QGT over the set of all purifications $\ket{\Phi_\rho}$ of $\rho$, by extending the arguments in \cite{marvian_operational_2022}. Let $A$ denote the ancillary system added to purify $\rho$. For sets of Hermitian operator $\{X_{S,\mu}\}_{\mu=1}^{\dim G}$ and $\{X_{A,\mu}\}_{\mu=1}^{\dim G}$, we define
\begin{align}
    O_{tot}\coloneqq O_S\otimes I_A+I_S\otimes O_A,\quad O_S=\gamma^\dag X_S\coloneqq \gamma^{\mu*}X_{S,\mu},\quad O_A=\gamma^\dag X_A\coloneqq \gamma^{\mu*}X_{A,\mu}
\end{align}
for $\gamma\in\mathbb{C}^{\dim G}$. We introduce $\Phi_\rho\coloneqq \ket{\Phi_\rho}\bra{\Phi_\rho}$. 
We will minimize
\begin{align}
    V\left(\Phi_\rho,O_{tot}\right)\coloneqq \braket{\Phi_\rho|O_{tot}(I-\Phi_\rho)O_{tot}^\dag |\Phi_\rho}=\braket{\Phi_\rho|O_{tot}O_{tot}^\dag |\Phi_\rho}-\braket{\Phi_\rho |O_{tot}|\Phi_\rho}\braket{\Phi_\rho |O_{tot}^\dag |\Phi_\rho}
\end{align}
over purifications $\Phi_\rho$ of $\rho$ and the sets of Hermitian operators $\{X_{A,\mu}\}_{\mu=1}^{\dim G}$ on $A$. We remark that the QGT $\mathcal{Q}^{\Phi_\rho}$ of purification $\Phi_\rho$ satisfies $V\left(\Phi_\rho,O_{tot}\right)=\gamma^\dag \mathcal{Q}^{\Phi_\rho} \gamma$. 

For two different purifications $\ket{\Phi_\rho}$ and $\ket{\Phi_\rho'}$ of $\rho$, there exists a unitary operator $U_A$ on $A$ such that $\ket{\Phi_\rho'}=I_S\otimes U_A\ket{\Phi_\rho}$. Since
\begin{align}
    \braket{\Phi_\rho'|O_{tot}(I-\Phi_\rho')O_{tot}^\dag |\Phi_\rho'}=  \braket{\Phi_\rho|O_{tot}'(I-\Phi_\rho)O_{tot}^{\prime \dag} |\Phi_\rho},  
\end{align}
where $O_{tot}'\coloneqq O_S\otimes I_A+I_S\otimes \gamma^\dag X_A'$ and $X_{A,\mu}'\coloneqq U_A^\dag X_{A,\mu}U_A$, it suffices to minimize over possible $\{X_{A,\mu}\}_{\mu=1}^{\dim G}$ for a fixed purification. In the following, we denote by $\ket{\Phi_\rho}$ a purification of $\rho$ given by
\begin{align}
    \ket{\Phi_\rho}=\sum_i\sqrt{p_i}\ket{\phi_i}\ket{\phi_i}=(\sqrt{\rho}\otimes I)\sum_i\ket{\phi_i}\ket{\phi_i},
\end{align}
where $\rho=\sum_{i}p_i\ket{\phi_i}\bra{\phi_i}$ is the eigenvalue decomposition of $\rho$. 

By redefining
\begin{align}
    X_{A,\mu}\to \tilde{X}_{A,\mu}\coloneqq X_{A,\mu}-\braket{\Phi_\rho|X_{S,\mu}\otimes I_A+I_S\otimes X_{A,\mu}|\Phi_\rho}I_A,
\end{align}
we get
\begin{align}
    \braket{\Phi_\rho|X_{S,\mu}\otimes I_A+I_S\otimes \tilde{X}_{A,\mu}|\Phi_\rho}=0
\end{align}
for all $\mu$, implying that $\braket{\Phi_\rho|\gamma^\dag X_{S}\otimes I_A+I_S\otimes \gamma^\dag \tilde{X}_{A}|\Phi_\rho}=0$. Since adding the multiple of $I_A$ to $X_{A,\mu}$ does not change $V\left(\Phi_\rho,O_{tot}\right)$, we assume, without loss of generality, that $\braket{\Phi_\rho|O_{tot}|\Phi_\rho}=\braket{\Phi_\rho|O_{tot}^\dag |\Phi_\rho}=0$ in the following. In this case, we have
\begin{align}
    V\left(\Phi_\rho,O_{tot}\right)
    &=\sum_{i}p_i\braket{\phi_i|O_SO_S^\dag |\phi_i}+\sum_{i}p_i\braket{\phi_i|O_AO_A^\dag|\phi_i}+\sum_{i,j}\sqrt{p_ip_j}\braket{\phi_i\otimes \phi_i|(O_S\otimes O_A^\dag +O_S^\dag \otimes O_A )|\phi_j\otimes\phi_j}.
\end{align}

Under a small variation of $X_{A,\mu}$ to $X_{A,\mu}+\delta X_{A,\mu}$ for $\mu=1,\ldots,\dim G$ with Hermitian operators $\delta X_{A,\mu}$, the operator $O_A$ changes $O_A\to O_A+\delta O_A$, where $\delta O_A\coloneqq \gamma^{\mu*} \delta X_{A,\mu}$, and hence we get
\begin{align}
    \delta V\left(\Phi_\rho,O_{tot}\right)&=\sum_ip_i\braket{\phi_i|O_A(\delta O_A)^\dag |\phi_i}+\sum_ip_i\braket{\phi_i|(\delta O_A) O_A^\dag |\phi_i}\nonumber\\
    &\quad +\sum_{i,j}\sqrt{p_ip_j}\braket{\phi_i\otimes \phi_i|(O_S\otimes( \delta O_A)^\dag +O_S^\dag \otimes(\delta O_A))|\phi_j\otimes\phi_j}+O((\delta O_A)^2)\\
    &=\mathcal{K}_{\mu\nu}\gamma^{\mu*}\gamma^\nu+O((\delta O_A)^2),
\end{align}
where
\begin{align}
    \mathcal{K}_{\mu\nu}&\coloneqq \sum_ip_i\braket{\phi_i|X_{A,\mu}(\delta X_{A,\nu})|\phi_i}+\sum_{i,j}\sqrt{p_ip_j}\braket{\phi_i\otimes \phi_i|X_{S,\mu}\otimes( \delta X_{A,\nu}) |\phi_j\otimes\phi_j}\nonumber\\
    &\quad +\sum_ip_i\braket{\phi_i|(\delta X_{A,\mu}) X_{A,\nu}|\phi_i}+\sum_{i,j}\sqrt{p_ip_j}\braket{\phi_i\otimes \phi_i|X_{S,\nu} \otimes( \delta X_{A,\mu}) |\phi_j\otimes\phi_j}.
\end{align}
Since $\mathcal{K}$ is Hermitian, $\delta \left\|\ii [\Phi_\rho,O_{tot} ]\right\|_{f_q,\Phi_\rho}^2$ vanishes up to the first order with respect to $\delta O_A$ for all $\gamma$ if and only if $\mathcal{K}=0$.

$\mathcal{K}_{\mu\mu}=0$ holds for all $\{\delta X_{A,\mu}\}_{\mu=1}^{\dim G}$ only if
\begin{align}
    \braket{\phi_i|X_{A,\mu}|\phi_j}=-\frac{2\sqrt{p_ip_j}}{p_i+p_j}\braket{\phi_j|X_{S,\mu}|\phi_i},
\end{align}
or equivalently,
\begin{align}
    X_{A,\mu}=-\sum_{i,j}\frac{2\sqrt{p_ip_j}}{p_i+p_j}\braket{\phi_j|X_{S,\mu}|\phi_i}\ket{\phi_i}\bra{\phi_j}.
\end{align}
However, for $\dim G>1$, $\mathcal{M}_{\mu\nu}\neq 0$ for $\mu\neq \nu$ in general. For example, for a variation such that $\delta X_{A,\nu}=0$ for $\nu>1$, we get 
\begin{align}
    \mathcal{K}_{2,1}&=\sum_ip_i\braket{\phi_i|X_{A,2}(\delta X_{A,1})|\phi_i}+\sum_{i,j}\sqrt{p_ip_j}\braket{\phi_i\otimes \phi_i|X_{S,2}\otimes( \delta X_{A,1}) |\phi_j\otimes\phi_j},
\end{align}
which does not vanish for $X_{A,\mu}=-\sum_{i,j}\frac{2\sqrt{p_ip_j}}{p_i+p_j}\braket{\phi_j|X_{S,\mu}|\phi_i}\ket{\phi_i}\bra{\phi_j}$. Therefore, for $\dim G>1$, the minimum of QGT of purification does not exist in general.

To explore further, let us minimize
\begin{align}
    \left\|\ii [\Phi_\rho,O_{tot} ]\right\|_{f_q,\Phi_\rho}^2=\frac{1}{1-q}V\left(\Phi_\rho,O_{tot}\right)+\frac{1}{q}V\left(\Phi_\rho,O_{tot}^\dag \right)
\end{align}
over purification $\Phi_\rho$ of $\rho$ and the set of Hermitian operators $\{X_{A,\mu}\}_{\mu=1}^{\dim G}$ on $A$. Note that $f_q$ for $q=1/2$ corresponds to the SLD norm. In this case, we find
\begin{align}
    \delta \left\|\ii [\Phi_\rho,O_{tot} ]\right\|_{f_q,\Phi_\rho}^2
    &=\frac{1}{1-q}\delta V\left(\Phi_\rho,O_{tot}\right)+\frac{1}{q}\delta V\left(\Phi_\rho,O_{tot}^\dag \right)\\
    &=\sum_i\left(\frac{1}{1-q}p_i\braket{\phi_i|O_A(\delta O_A)^\dag |\phi_i}+\frac{1}{q}p_i\braket{\phi_i|(\delta O_A)^\dag  O_A |\phi_i}\right)\nonumber\\
    &\quad +\sum_{i,j}\sqrt{p_ip_j}\left(\frac{1}{1-q}+\frac{1}{q}\right)\braket{\phi_i\otimes \phi_i|O_S\otimes( \delta O_A)^\dag |\phi_j\otimes\phi_j}\nonumber\\
    &\quad +\sum_i\left(\frac{1}{1-q}p_i\braket{\phi_i|(\delta O_A) O_A^\dag |\phi_i}+\frac{1}{q}p_i\braket{\phi_i|O_A^\dag  (\delta O_A )|\phi_i}\right)\nonumber\\
    &\quad +\sum_{i,j}\sqrt{p_ip_j}\left(\frac{1}{1-q}+\frac{1}{q}\right)\braket{\phi_i\otimes \phi_i|O_S^\dag \otimes( \delta O_A) |\phi_j\otimes\phi_j}+O((\delta O_A)^2)\\
    &=\mathcal{M}_{\mu\nu}\gamma^{\mu*}\gamma^\nu+O((\delta O_A)^2),
\end{align}
where
\begin{align}
    \mathcal{M}_{\mu\nu}&\coloneqq \sum_i\left(\frac{1}{1-q}p_i\braket{\phi_i|X_{A,\mu}(\delta X_{A,\nu})|\phi_i}+\frac{1}{q}p_i\braket{\phi_i|(\delta X_{A,\nu})  X_{A,\mu} |\phi_i}\right)\nonumber\\
    &\quad +\sum_{i,j}\sqrt{p_ip_j}\left(\frac{1}{1-q}+\frac{1}{q}\right)\braket{\phi_i\otimes \phi_i|X_{S,\mu}\otimes( \delta X_{A,\nu}) |\phi_j\otimes\phi_j}\nonumber\\
    &\quad +\sum_i\left(\frac{1}{1-q}p_i\braket{\phi_i|(\delta X_{A,\mu}) X_{A,\nu}|\phi_i}+\frac{1}{q}p_i\braket{\phi_i|X_{A,\nu}  (\delta X_{A,\mu} )|\phi_i}\right)\nonumber\\
    &\quad +\sum_{i,j}\sqrt{p_ip_j}\left(\frac{1}{1-q}+\frac{1}{q}\right)\braket{\phi_i\otimes \phi_i|X_{S,\nu} \otimes( \delta X_{A,\mu}) |\phi_j\otimes\phi_j}.
\end{align}
Since $\mathcal{M}$ is Hermitian, $\delta \left\|\ii [\Phi_\rho,O_{tot} ]\right\|_{f_q,\Phi_\rho}^2$ vanishes up to the first order with respect to $\delta O_A$ for all $\gamma$ if and only if $\mathcal{M}=0$. 

From $\mathcal{M}_{\mu\mu}=0$, we have
\begin{align}
    \left(\frac{1}{1-q}+\frac{1}{q}\right) (p_i+p_j)\braket{\phi_i|X_{A,\mu}|\phi_j}=-2\sqrt{p_ip_j}\left(\frac{1}{1-q}+\frac{1}{q}\right)\braket{\phi_j|X_{S,\mu}|\phi_i},
\end{align}
i.e.,
\begin{align}
    \braket{\phi_i|X_{A,\mu}|\phi_j}=-\frac{2\sqrt{p_ip_j}}{p_i+p_j}\braket{\phi_j|X_{S,\mu}|\phi_i},
\end{align}
or equivalently,
\begin{align}
    X_{A,\mu}=-\sum_{i,j}\frac{2\sqrt{p_ip_j}}{p_i+p_j}\braket{\phi_j|X_{S,\mu}|\phi_i}\ket{\phi_i}\bra{\phi_j}.\label{eq:variation_vanish_solution}
\end{align}
However, for $\dim G>1$, $\mathcal{M}_{\mu\nu}\neq 0$ for $\mu\neq \nu$ in general. For example, for a variation such that $\delta X_{A,\nu}=0$ for $\nu>1$, we get 
\begin{align}
    \mathcal{M}_{2,1}&=\sum_i\left(\frac{1}{1-q}p_i\braket{\phi_i|X_{A,2}(\delta X_{A,1})|\phi_i}+\frac{1}{q}p_i\braket{\phi_i|(\delta X_{A,1})  X_{A,2} |\phi_i}\right)\nonumber\\
    &\quad +\sum_{i,j}\sqrt{p_ip_j}\left(\frac{1}{1-q}+\frac{1}{q}\right)\braket{\phi_i\otimes \phi_i|X_{S,2}\otimes( \delta X_{A,1}) |\phi_j\otimes\phi_j},
\end{align}
which vanishes for $X_{A,\mu}=-\sum_{i,j}\frac{2\sqrt{p_ip_j}}{p_i+p_j}\braket{\phi_j|X_{S,\mu}|\phi_i}\ket{\phi_i}\bra{\phi_j}$ only for $q=1/2$. We remark that for $q=1/2$, $\mathcal{M}=0$ for the operators $X_{A,\mu}$ in Eq.~\eqref{eq:variation_vanish_solution}, which corresponds to the result in \cite{marvian_operational_2022} stating that the SLD quantum Fisher information satisfies $\mathcal{F}_H(\rho)=\min_{\Phi_\rho,H_A} 4V(H+H_A,\Phi_\rho)$, where the minimization is taken over all the purifications and the Hermitian operators $H_A$.

\section{Properties of QGT and its extension}\label{app:properties_S}
We here summarize the fundamental properties of the tensor $\mathcal{Q}_q$ defined in Eq.~\eqref{eq:definition_S_q}. Taking the limit $q\to 1^-$, they also hold for $\mathcal{Q}$, defined in Eq.~\eqref{eq:definition_S_matrix}, which coincides with the QGT $\mathcal{Q}$ for pure states. These properties are inherited from those of Petz's monotone metric:
\begin{prop}[Properties of $\mathcal{Q}_q$]
    \leavevmode\par
    \begin{enumerate}[(i)]
        \item Positivity: For any state $\rho$, $\mathcal{Q}_q^{\rho}\geq 0$. 
        \item Additivity: For any pair of states $\rho$ and $\sigma$, $\mathcal{Q}_q^{\rho\otimes \sigma}=\mathcal{Q}_q^{\rho}+\mathcal{Q}_q^{\sigma}$.
        \item Convexity: For any probability distribution $\{p_k\}_k$ and set of states $\{\rho_k\}_k$, $\sum_kp_k \mathcal{Q}_q^{\rho_k}\geq  \mathcal{Q}_q^{\sum_{k}p_k\rho_k}$
        \item Monotonicity: For any $G$-covariant quantum channel $\mathcal{E}$ and state $\rho$, $\mathcal{Q}_q^{\rho}\geq \mathcal{Q}_q^{\mathcal{E}(\rho)}$. 
        \item Strong monotonicity (also referred to as selective monotonicity): Let $\{\mathcal{E}_k\}_k$ be any $G$-covariant instrument, that is, a set of $G$-covariant completely-positive and trace-nonincreasing maps such that $\sum_k\mathcal{E}_k$ is trace-preserving. For $p_k\coloneqq \mathrm{Tr}(\mathcal{E}_k(\rho))$ and $\rho_k\coloneqq \frac{\mathcal{E}_k(\rho)}{p_k}$, 
        \begin{align}
            \mathcal{Q}_q^\rho\geq \sum_kp_k\mathcal{Q}_q^{\rho_k}.
        \end{align}
    \end{enumerate}
\end{prop}
\begin{proof}
    The positivity and the additivity of the norm $\|\cdot\|_{f,\rho}$ imply (i) and (ii), respectively. The monotonicity (iv) has already been proved in Eq.~\eqref{eq:monotonicity_norm_general_fq}. Below, we show (iii) and (v). 

    Let $\{p_k\}_{k}$ be a probability distribution and $\{\rho_k\}_k$ be a set of states. We define $\sigma\coloneqq \sum_kp_k\rho_k\otimes \ket{k}\bra{k}$, where $\{\ket{k}\}_k$ is an orthonormal basis of an ancillary system. For any linear operator $O$, we show
    \begin{align}
        \sum_kp_k \|\ii[\rho_k,O]\|_{f,\rho_k}^2=\left\|\ii\left[\sigma,O\otimes \mathbb{I}\right]\right\|_{f,\sigma}^2 \label{eq:flag}
    \end{align}
    for an operator monotone function $f$.
    To prove this equality, let $\rho_k=\sum_{i=1}^d\lambda_i^{(k)}\ket{i^{(k)}}\bra{i^{(k)}}$ be the eigenvalue decomposition of $\rho_k$. Since 
    \begin{align}
        \sigma=\sum_k \sum_i p_k\lambda_i^{(k)}\ket{i^{(k)}}\bra{i^{(k)}}\otimes \ket{k}\bra{k}
    \end{align}
    is the eigenvalue decomposition of $\sigma$, we get
    \begin{align}
        &\left\|\ii\left[\sigma,O\otimes \mathbb{I}\right]\right\|_{f,\sigma}^2\nonumber \\
        &=\sum_{i,j,k,l;m_f(p_k\lambda^{(k)}_i,p_l\lambda^{(l)}_j)>0}\frac{1}{m_f(p_k\lambda^{(k)}_i,p_l\lambda^{(l)}_j)}\braket{i^{(k)}\otimes k|\left(\ii [\sigma,O\otimes \mathbb{I}]\right)^\dag |j^{(l)}\otimes l}\braket{j^{(l)}\otimes l|\ii [\sigma,O\otimes \mathbb{I}] |i^{(k)}\otimes k}\nonumber \\
        &=\sum_{i,j,k;m_f(p_k\lambda^{(k)}_i,p_k\lambda^{(k)}_j)>0}\frac{1}{m_f(p_k\lambda^{(k)}_i,p_k\lambda^{(k)}_j)}\braket{i^{(k)}|\left(\ii [p_k\rho_k,O]\right)^\dag |j^{(k)}}\braket{j^{(k)}|\ii [p_k\rho_k,O] |i^{(k)}}\nonumber \\
        &=\sum_{k;p_k>0}p_k\sum_{i,j;m_f(\lambda^{(k)}_i,\lambda^{(k)}_j)>0}\frac{1}{m_f(\lambda^{(k)}_i,\lambda^{(k)}_j)}\braket{i^{(k)}|\left(\ii [\rho_k,O]\right)^\dag |j^{(k)}}\braket{j^{(k)}|\ii [\rho_k,O] |i^{(k)}}\nonumber \\
        &=\sum_{k;p_k>0}p_k \|\ii[\rho_k,O]\|_{f,\rho_k}^2=\sum_kp_k \|\ii[\rho_k,O]\|_{f,\rho_k}^2,
    \end{align}
    where in the third equality, we have used the homogeneity of $m_f$, i.e., $m_f(ay,ay)=a m_f(x,y)$ for $a,x,y\in\mathbb{R}$.

    For the ancillary system, we adopt the trivial unitary representation of $G$. Consequently, the projective unitary representation $V$ of $G$ on the total system satisfies $V(g)=U(g)\otimes \mathbb{I}$ for $g\in G$, where $U$ denotes the representation on the quantum system on which $\rho_k$ is defined. By differentiating $V$ as defined in Eq.~\eqref{eq:hermitian_operators_Lie_alg}, we get
    \begin{align}
        Y_\mu\coloneqq \frac{\partial}{\partial \lambda^\mu} V(g(\lambda))\biggl|_{\lambda=0}=X_\mu\otimes \mathbb{I},
    \end{align}
    where $X_\mu\coloneqq\frac{\partial}{\partial \lambda^\mu} U(g(\lambda))|_{\lambda=0}$. By applying Eq.~\eqref{eq:flag} for $O\coloneqq \gamma^\dag X$ with $\gamma\in\mathbb{C}^{\dim G}$, we get
    \begin{align}
        \sum_kp_k \|\ii[\rho_k,\gamma^\dag X]\|_{f,\rho_k}^2=\left\|\ii\left[\sigma,\gamma^\dag Y\right]\right\|_{f,\sigma}^2.\label{eq:flag_diff}
    \end{align}
    
    We now prove the properties~(iii) and (v) from this relation and the monotonicity for $f_q(x)=(1-q)+q x$. From the monotonicity under partial trace, which is a $G$-covariant operation, we obtain
    \begin{align}
        \left\|\ii\left[\sigma,\gamma^\dag Y\right]\right\|_{f_q,\sigma}^2\geq \left\|\ii\left[\sum_{k}p_k\rho_k,\gamma^\dag X\right]\right\|_{f_q,\sum_{k}p_k\rho_k}^2.
    \end{align}
    Thus, from Eq.~\eqref{eq:flag_diff}, we get
    \begin{align}
        \sum_kp_k \|\ii[\rho_k,\gamma^\dag X]\|_{f_q,\rho_k}^2\geq \left\|\ii\left[\sum_{k}p_k\rho_k,\gamma^\dag X\right]\right\|_{f_q,\sum_{k}p_k\rho_k}^2.
    \end{align}
    Multiplying both sides by $f_q(0)$, we find
    \begin{align}
        \sum_kp_k\gamma^\dag \mathcal{Q}_q^{\rho_k}\gamma\geq \gamma^\dag \mathcal{Q}_q^{\sum_kp_k\rho_k}\gamma
    \end{align}
    holds for any $\gamma\in\mathbb{C}^{\dim G}$, which completes the proof of the property~(iii).

    In order to prove the property~(v), we define a quantum channel $\tilde{\mathcal{E}}(\cdot)\coloneqq \sum_{k}\mathcal{E}_k(\cdot)\otimes \ket{k}\bra{k}$, where $\{\mathcal{E}_k\}_k$ is a $G$-covariant instrument. Since the group transformation $G$ trivially acts on the ancillary system, $\tilde{\mathcal{E}}(\cdot)$ is a $G$-covariant operation. For a state $\rho$, we define
    \begin{align}
        p_k\coloneqq \mathrm{Tr}\left(\mathcal{E}_k(\rho)\right), \quad \rho_k\coloneqq \frac{1}{p_k}\mathcal{E}_k(\rho)
    \end{align}
    so that $\sigma=\tilde{\mathcal{E}}(\rho)$. From the monotonicity under $\tilde{\mathcal{E}}$, we obtain
    \begin{align}
         \left\|\ii\left[\rho,\gamma^\dag X\right]\right\|_{f_q,\rho}^2\geq \left\|\ii\left[\sigma,\gamma^\dag Y\right]\right\|_{f_q,\sigma}^2.
    \end{align}
    Therefore, from Eq.~\eqref{eq:flag_diff}, we get
    \begin{align}
        \left\|\ii\left[\rho,\gamma^\dag X\right]\right\|_{f_q,\rho}^2\geq \sum_kp_k \|\ii[\rho_k,\gamma^\dag X]\|_{f_q,\rho_k}^2.
    \end{align}
    Again, multiplying both sides by $f_q(0)$, we find
    \begin{align}
        \gamma^\dag \mathcal{Q}_q^{\rho}\gamma\geq \sum_{k}p_k\gamma^\dag \mathcal{Q}_q^{\rho_k}\gamma
    \end{align}
     holds for any $\gamma\in\mathbb{C}^{\dim G}$, which completes the proof of the property~(v). 
\end{proof}

\section{Conversion condition in quantum thermodynamics and its relation to RTA}\label{app:quantum_thermo_setup_review_and_relation}
We begin by briefly reviewing the setup of state conversion in quantum thermodynamics in Refs.~\cite{faist_macroscopic_2019,sagawa_asymptotic_2021}, where auxiliary systems that provide work and coherence are explicitly taken into account. A $(w,\eta)$-work/coherence-assisted thermal operation is a quantum process that can be achieved through a thermal operation with the help of ancillary systems that provide work $w$ and coherence whose energy range is bounded by $\eta$. Precisely, a completely positive and trace-nonincreasing map $\mathcal{E}_{S\to S'}$ is referred to as a $(w,\eta)$-work/coherence-assisted thermal operation if there exist 
\begin{itemize}
    \item quantum systems $W,C,W',C'$ with respective Hamiltonians $H_W,H_C,H_{W'},H_{C'}$ satisfying $\|H_C\|_\infty<\eta$ and $\|H_{C'}\|_\infty<\eta$,
    \item energy eigenstates $\ket{E}_W$ and $\ket{E'}_{W'}$ of $H_{W}$ and $H_{W'}$ satisfying $E-E'=w$,
    \item pure states $\ket{\zeta}_{C}$ and $\ket{\zeta'}_{C'}$,
    \item a thermal operation $\tilde{\mathcal{E}}_{SWC\to S'W'C'}$
\end{itemize}
such that
\begin{align}
    &\mathcal{E}_{S\to S'}(\rho_S)=\mathrm{Tr}_{W'C'}\left(\ket{E'}\bra{E'}_{W'}\otimes \ket{\zeta'}\bra{\zeta'}_{C'}\tilde{\mathcal{E}}_{SWC\to S'W'C'}\left( \rho_{S}\otimes \ket{E}\bra{E}_{W}\otimes \ket{\zeta}\bra{\zeta}_{C}\right)\right).\label{eq:CPTNI_TO}
\end{align}
We say that a state $\rho_S$ is $(w,\eta,\epsilon)$-transformable into $\sigma_{S'}$ by a thermal operation if there exists $(w,\eta)$-work/coherence-assisted thermal operation such that $T(\mathcal{E}_{S\to S'}(\rho_S),\sigma_{S'})\leq \epsilon$. In this case, we write $(\rho,H_S)\xrightarrow[\mathrm{TO}]{(w,\eta,\epsilon)}(\sigma_{S'},H_{S'})$, where $H_{S}$ and $H_{S'}$ denote the Hamiltonians of the systems $S$ and $S'$, respectively. Note that for subnormalized states $\rho$ and $\sigma$, the generalized trace distance~\cite{tomamichel_duality_2010,tomamichel_framework_2012} is defined by $T(\rho,\sigma)\coloneqq \frac{1}{2}\|\rho-\sigma\|_1+\frac{1}{2}|\mathrm{Tr}(\rho)-\mathrm{Tr}(\sigma)|$. Moreover, for sequences of states $\{\rho_N\}_N$ and $\{\sigma_N\}_N$ and sequences of Hamiltonians $\{H_N\}_N$ and $\{H'_N\}_N$, we say that $\{\rho_N\}_N$ is asymptotically convertible to $\{\sigma_N\}_N$ at a work rate $w$ if there exist sequences of $w_N,\eta_N,\epsilon_N$ such that $(\rho_N,H_N)\xrightarrow[\mathrm{TO}]{w_N,\eta_N,\epsilon_N}(\sigma_N,H_N')$ and 
\begin{align}
    \lim_{N\to\infty}\frac{w_N}{N}=w,\quad \lim_{N\to\infty}\frac{\eta_N}{N}=0,\quad \lim_{N\to\infty}\epsilon_N=0.\label{eq:conversion_condition_TO}
\end{align}
In this case, we write $\{\rho_N,H_N\}_N\xrightarrow[\mathrm{TO}]{w} \{\sigma_N,H_N'\}_N$.

In Sec.~\ref{sec:quantum_thermodynamics}, we aimed to quantify the coherence in $\ket{E}\bra{E}_W\otimes \ket{\zeta}\bra{\zeta}_{C}$ that must be supplied from the external systems. To formalize this argument, we first prove the following lemma:
\begin{lem}\label{lem:generalized_TD_and_gentle_measurement}
    For quantum systems $A$ and $B$, let $\tau_{AB}$ and $\sigma_A$ be states on $AB$ and $A$, respectively. For any pure state $\psi_B\coloneqq \ket{\psi}\bra{\psi}_B$ on $B$, it holds
    \begin{align}
        T(\tau_{AB},\sigma_A\otimes \psi_B)\leq 2T\left(\chi_A,\sigma_A\right)+\sqrt{2T(\chi_A,\sigma_A)},\label{eq:trace_distance_inequalities}
    \end{align}
    where $\chi_{A}\coloneqq \mathrm{Tr}_B((\mathbb{I}_A\otimes \psi_{B})\tau_{AB})$ is a subnormalized state. 
\end{lem}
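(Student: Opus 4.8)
The plan is to reduce the statement to the gentle measurement lemma by viewing $\chi_A$ as the (subnormalized) post-measurement state produced by the rank-one projective measurement $\{P,\mathbb{I}-P\}$ on $B$, where $P\coloneqq \mathbb{I}_A\otimes \psi_B$. First I would verify the algebraic identity $P\tau_{AB}P=\chi_A\otimes \psi_B$: since $P=\mathbb{I}_A\otimes\ket{\psi}\bra{\psi}_B$, sandwiching $\tau_{AB}$ gives $(\mathbb{I}_A\otimes\ket{\psi}_B)(\bra{\psi}_B\tau_{AB}\ket{\psi}_B)(\mathbb{I}_A\otimes\bra{\psi}_B)$, and the middle factor is precisely $\chi_A=\mathrm{Tr}_B(P\tau_{AB})$. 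In particular $\mathrm{Tr}(P\tau_{AB})=\mathrm{Tr}(\chi_A)$, so setting $\epsilon\coloneqq 1-\mathrm{Tr}(\chi_A)=\mathrm{Tr}((\mathbb{I}-P)\tau_{AB})$ measures the failure probability of the projective measurement.

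The second step is the triangle inequality $T(\tau_{AB},\sigma_A\otimes\psi_B)\le T(\tau_{AB},\chi_A\otimes\psi_B)+T(\chi_A\otimes\psi_B,\sigma_A\otimes\psi_B)$, which I would bound term by term. For the second term, because both arguments are tensored with the same pure state $\psi_B$ and $\|\,\cdot\,\otimes\psi_B\|_1=\|\cdot\|_1$ (as $\psi_B$ has unit trace norm while $\mathrm{Tr}(\chi_A\otimes\psi_B)=\mathrm{Tr}(\chi_A)$ and $\mathrm{Tr}(\sigma_A\otimes\psi_B)=\mathrm{Tr}(\sigma_A)$), the generalized trace distance collapses to $T(\chi_A\otimes\psi_B,\sigma_A\otimes\psi_B)=T(\chi_A,\sigma_A)$. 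For the first term, using $\chi_A\otimes\psi_B=P\tau_{AB}P$ and the definition of the generalized trace distance for subnormalized operators, $T(\tau_{AB},\chi_A\otimes\psi_B)=\tfrac12\|\tau_{AB}-P\tau_{AB}P\|_1+\tfrac12\epsilon$, and the gentle measurement lemma supplies $\tfrac12\|\tau_{AB}-P\tau_{AB}P\|_1\le \sqrt{\epsilon}$.

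To close the argument I would relate $\epsilon$ to $\delta\coloneqq T(\chi_A,\sigma_A)$. Since $\sigma_A$ is normalized, the trace-difference part of the generalized trace distance already forces $\delta\ge \tfrac12|\mathrm{Tr}(\chi_A)-\mathrm{Tr}(\sigma_A)|=\tfrac{\epsilon}{2}$, i.e.\ $\epsilon\le 2\delta$. Combining the two bounds and using monotonicity of the square root then yields $T(\tau_{AB},\sigma_A\otimes\psi_B)\le \sqrt{\epsilon}+\tfrac{\epsilon}{2}+\delta\le \sqrt{2\delta}+\delta+\delta=\sqrt{2\delta}+2\delta$, which is exactly \eqref{eq:trace_distance_inequalities}.

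I expect the main obstacle to be invoking the gentle measurement lemma with the \emph{tight} constant $2\sqrt{\epsilon}$, since the naive decomposition $\tau_{AB}=P\tau_{AB}P+P\tau_{AB}Q+Q\tau_{AB}P+Q\tau_{AB}Q$ (with $Q\coloneqq \mathbb{I}-P$) only yields the weaker $2\sqrt{\epsilon(1-\epsilon)}+\epsilon$, which would degrade the final constant. The sharp bound instead follows from the two-step estimate $\|\tau_{AB}-P\tau_{AB}P\|_1\le\|\tau_{AB}-P\tau_{AB}\|_1+\|P\tau_{AB}-P\tau_{AB}P\|_1$, bounding each summand by Cauchy--Schwarz through $\|(\mathbb{I}-P)\sqrt{\tau_{AB}}\|_2^2=\mathrm{Tr}((\mathbb{I}-P)\tau_{AB})=\epsilon$ and $\|\sqrt{\tau_{AB}}\|_2=\|P\sqrt{\tau_{AB}}\|_2\le 1$; a secondary technical point is keeping careful track of the trace-deficit contributions, as every intermediate object except $\tau_{AB}$ and $\sigma_A$ is subnormalized.
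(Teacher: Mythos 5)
Your proof is correct and follows essentially the same route as the paper's: both pass through the projected operator $P\tau_{AB}P=\chi_A\otimes\psi_B$ via the triangle inequality, invoke the gentle measurement lemma with the sharp constant $2\sqrt{\epsilon}$, and convert $\epsilon=1-\mathrm{Tr}(\chi_A)$ into $2T(\chi_A,\sigma_A)$ using the trace-deficit part of the generalized trace distance. The only (cosmetic) difference is that you keep the intermediate state subnormalized, absorbing the deficit $\epsilon/2$ into the first term, whereas the paper normalizes it to $\gamma_{AB}=P\tau_{AB}P/p$ and instead pays the factor in the bound $\|\chi_A/p-\sigma_A\|_1\le 4\delta$ on the second term; both bookkeepings land on the identical final bound $2\delta+\sqrt{2\delta}$.
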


\begin{proof}   
    Define the projector $P_{AB}\coloneqq \mathbb{I}_A\otimes \psi_B$ and introduce $p\coloneqq\mathrm{Tr}(P_{AB}\tau_{AB})=\mathrm{Tr}_A (\chi_A)$. If $p=0$, $T(\tau_{AB},\sigma_A\otimes \psi_B)=T(\chi_A,\sigma_A)=0$, implying that Eq.~\eqref{eq:trace_distance_inequalities} holds trivially. When $p>0$, we introduce a normalized state
    \begin{align}
        \gamma_{AB}\coloneqq \frac{P_{AB}\tau_{AB}P_{AB}}{p}= \frac{\chi_A}{p}\otimes \psi_B.
    \end{align}
    From the triangle inequality for the trace distance, we have
    \begin{align}
        T(\tau_{AB},\sigma_A\otimes \psi_B)\leq T(\tau_{AB},\gamma_{AB})+T(\gamma_{AB},\sigma_A\otimes \psi_B).\label{eq:triangle_ineq}
    \end{align}
    We derive an upper bound for each term in the right-hand side. 

    For $\delta\coloneqq T(\chi_A,\sigma_A)$, since
    \begin{align}
        \delta=\frac{1}{2}\|\chi_A-\sigma_A\|_1+\frac{1}{2}|\mathrm{Tr}(\chi_A)-\mathrm{Tr}(\sigma_A)|\geq \frac{1}{2}|\mathrm{Tr}(\chi_A)-\mathrm{Tr}(\sigma_A)|=\frac{1}{2}(1-p),
    \end{align}
    we have
    \begin{align}
        \mathrm{Tr}\left(P_{AB}\tau_{AB}\right)=p\geq 1-2\delta. 
    \end{align}
    From the gentle measurement lemma~\cite{wilde_quantum_2017}, we have $\|\tau_{AB}-\gamma_{AB}\|_1\leq 2\sqrt{2\delta}$, implying that 
    \begin{align}
        T(\tau_{AB},\gamma_{AB})\leq \sqrt{2\delta}.\label{eq:gentle_measurement}
    \end{align}
    
    To derive an upper bound for the second term, we use
    \begin{align}
        \left\|\frac{\chi_A}{p}-\sigma_A\right\|_1&=\left\|\chi_A-\sigma_A+\left(1-\frac{1}{p}\right)\chi_A\right\|_1\\
        &\leq \left\|\chi_A-\sigma_A\right\|_1+\left\|\left(1-\frac{1}{p}\right)\chi_A\right\|_1\\
        &=2\delta+\left|\left(1-\frac{1}{p}\right)\right|\mathrm{Tr}(\chi_A)\\
        &=2\delta+(1-p)\leq 4\delta.
    \end{align}
    Therefore, we get
    \begin{align}
        T(\gamma_{AB},\sigma_A\otimes\psi_B)&=\frac{1}{2}\left\| \frac{\chi_A}{p}\otimes \psi_B-\sigma_A\otimes\psi_B\right\|_1\\
        &=\frac{1}{2}\left\|\frac{\chi_A}{p}-\sigma_A\right\|_1\\
        &\leq 2\delta.\label{eq:proj_state}
    \end{align}

    From Eqs.~\eqref{eq:triangle_ineq}, \eqref{eq:gentle_measurement}, and \eqref{eq:proj_state}, we obtain
    \begin{align}
        T(\tau_{AB},\sigma_A\otimes\psi_B)\leq 2\delta+\sqrt{2\delta}=2T(\chi_A,\sigma_A)+\sqrt{2T(\chi_A,\sigma_A)}.
    \end{align}
\end{proof}

As an immediate consequence of this lemma, we find a relation between conversion in quantum thermodynamics to that in RTA under time translation symmetry:
\begin{prop}
    Suppose that $\{\rho_N,H_N\}_N\xrightarrow[\mathrm{TO}]{w} \{\sigma_N,H_N'\}_N$. Then there exists a sequence of covariant quantum channels $\{\Lambda_N\}_N$ such that $\lim_{N\to\infty}T(\Lambda_N(\rho_N\otimes \xi_N),\sigma_N)=0$, where $\{\xi_N\}_N$ is a sequence of the initial pure states of ancillary systems. 
\end{prop}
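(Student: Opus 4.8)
The plan is to unpack the assumed thermodynamic conversion and to build the required $G$-covariant channel by \emph{discarding}, rather than post-selecting on, the output work and coherence systems. For each $N$ the hypothesis $\{\rho_N,H_N\}_N\xrightarrow[\mathrm{TO}]{w}\{\sigma_N,H_N'\}_N$ supplies numbers $w_N,\eta_N,\epsilon_N$ with $\epsilon_N\to 0$, a thermal operation $\tilde{\mathcal{E}}_{SWC\to S'W'C'}$, and the pure ancillary states of Eq.~\eqref{eq:CPTNI_TO}, such that the post-selected map $\mathcal{E}_{S\to S'}$ obeys $T(\mathcal{E}_{S\to S'}(\rho_N),\sigma_N)\le\epsilon_N$ in the generalized trace distance. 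Writing $\xi_N\coloneqq\ket{E}\bra{E}_W\otimes\ket{\zeta}\bra{\zeta}_{C}$ for the initial pure ancilla, I would define $\Lambda_N\coloneqq\mathrm{Tr}_{W'C'}\circ\tilde{\mathcal{E}}_{SWC\to S'W'C'}$ and set $\tau_N\coloneqq\tilde{\mathcal{E}}_{SWC\to S'W'C'}(\rho_N\otimes\xi_N)$, so that $\Lambda_N(\rho_N\otimes\xi_N)=\mathrm{Tr}_{W'C'}(\tau_N)$. The trace-nonincreasing feature of the original map, which came from projecting onto $\ket{E'}\bra{E'}_{W'}\otimes\ket{\zeta'}\bra{\zeta'}_{C'}$, is thereby replaced by a genuine partial trace.

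The first point to verify is that $\Lambda_N$ is $G$-covariant for $G=\mathbb{R}$ (time translations). Since $\tilde{\mathcal{E}}$ is a thermal operation and the bath Gibbs state is always time-translation invariant, $\tilde{\mathcal{E}}$ is covariant with respect to the free evolutions generated by $H_S+H_W+H_C$ on the input and $H_{S'}+H_{W'}+H_{C'}$ on the output. Because the output Hamiltonian splits additively over the independent subsystems $S',W',C'$, discarding $W'C'$ intertwines these free evolutions and is itself covariant; hence the composition $\Lambda_N$ is a trace-preserving $G$-covariant channel from $SWC$ to $S'$. Note that $\mathcal{E}_{S\to S'}$ itself need not be covariant, which is precisely why we pass to the full channel $\Lambda_N$.

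To control the resulting approximation I would apply Lemma~\ref{lem:generalized_TD_and_gentle_measurement} with $A=S'$, $B=W'C'$, $\psi_B=\ket{E'}\bra{E'}_{W'}\otimes\ket{\zeta'}\bra{\zeta'}_{C'}$, $\tau_{AB}=\tau_N$, and $\sigma_A=\sigma_N$. By construction $\chi_A=\mathrm{Tr}_{W'C'}((\mathbb{I}_{S'}\otimes\psi_B)\tau_N)=\mathcal{E}_{S\to S'}(\rho_N)$, so $T(\chi_A,\sigma_N)\le\epsilon_N$ and the lemma gives $T(\tau_N,\sigma_N\otimes\psi_B)\le 2\epsilon_N+\sqrt{2\epsilon_N}$. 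Using monotonicity of the trace distance under the partial trace $\mathrm{Tr}_{W'C'}$ together with $\mathrm{Tr}_{W'C'}(\sigma_N\otimes\psi_B)=\sigma_N$, I obtain $T(\Lambda_N(\rho_N\otimes\xi_N),\sigma_N)\le 2\epsilon_N+\sqrt{2\epsilon_N}\to 0$, which is the assertion.

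The calculation is short once the construction is in place; the weight is carried by the conceptual step of trading the post-selected, trace-nonincreasing map for the trace-preserving covariant channel $\Lambda_N$. I expect the main obstacle to be the bookkeeping that confirms covariance of $\Lambda_N$---in particular checking that discarding $W'C'$ respects time-translation symmetry because the output free Hamiltonian decomposes additively across the subsystems---together with correctly identifying the subnormalized output $\mathcal{E}_{S\to S'}(\rho_N)$ with the quantity $\chi_A$ of Lemma~\ref{lem:generalized_TD_and_gentle_measurement}, so that the generalized trace distance is invoked consistently.
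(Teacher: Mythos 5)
Your proposal is correct and follows essentially the same route as the paper: you construct the same trace-preserving covariant channel $\Lambda_N=\mathrm{Tr}_{W'C'}\circ\tilde{\mathcal{E}}$, apply Lemma~\ref{lem:generalized_TD_and_gentle_measurement} with the identical identifications $\chi_A=\mathcal{E}_{S\to S'}(\rho_N)$ and $\psi_B=\ket{E'}\bra{E'}_{W'}\otimes\ket{\zeta'}\bra{\zeta'}_{C'}$, and conclude via the data-processing inequality under the partial trace. The only addition is your explicit justification of covariance of $\Lambda_N$, which the paper asserts without elaboration.
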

\begin{proof}
    By definition, $(\rho_N,H_N)\xrightarrow[\mathrm{TO}]{w_N,\eta_N,\epsilon_N}(\sigma_N,H_N')$ implies that there exist systems $W_N,C_N,W_N',C_N'$ with Hamiltonians $H_{W_N},H_{C_N},H_{W_N'},H_{C_N'}$, eigenstates $\ket{E_N}_{W_N}$ and $\ket{E_N'}_{W_N'}$, pure states $\ket{\zeta_N}_{C_N}$ and $\ket{\zeta_N'}_{C_N'}$, and a thermal operation $\tilde{\mathcal{E}}_N$ such that
    \begin{align}
       T\left( \mathrm{Tr}_{W_N'C_N'}\left(\ket{E_N'}\bra{E_N'}_{W_N'}\otimes \ket{\zeta_N'}\bra{\zeta_N'}_{C_N'}\tilde{\mathcal{E}}_N\left(\rho_N\otimes \xi_N\right)\right),\sigma_N\right)\leq \epsilon_N,
    \end{align}
    where $\xi_N\coloneqq \ket{E_N}\bra{E_N}_{W_N}\otimes\ket{\zeta_N}\bra{\zeta_N}_{C_N}$. From Lemma~\ref{lem:generalized_TD_and_gentle_measurement}, 
    \begin{align}
        T\left( \tilde{\mathcal{E}}_N\left(\rho_N\otimes \eta_N\right),\ket{E_N'}\bra{E_N'}_{W_N'}\otimes \ket{\zeta_N'}\bra{\zeta_N'}_{C_N'}\otimes \sigma_N\right)\leq 2\epsilon_N+\sqrt{2\epsilon_N}.
    \end{align}
    Therefore, from the data-processing inequality, we obtain
    \begin{align}
        T\left(\Lambda_N(\rho_N\otimes \xi_N),\sigma_N\right) \leq 2\epsilon_N+\sqrt{2\epsilon_N}, \label{eq:conversion_TI_operation}
    \end{align}
    where $\Lambda_N\coloneqq  \mathrm{Tr}_{W_N'C_N'}\circ \tilde{\mathcal{E}}_N$ is a covariant channel under time translation symmetry. 
    
    If $\{\rho_N,H_N\}_N\xrightarrow[\mathrm{TO}]{w} \{\sigma_N,H_N'\}_N$, then $(\rho_N,H_N)\xrightarrow[\mathrm{TO}]{w_N,\eta_N,\epsilon_N} (\sigma_N,H_N')$ such that $\lim_{N\to\infty}\epsilon_N=0$. In this case, Eq.~\eqref{eq:conversion_TI_operation} leads to $\lim_{N\to\infty}T(\Lambda_N(\rho_N\otimes \xi_N),\sigma_N)=0$.
\end{proof}

\section{Circumventing constraint imposed by symmetry subgroup}
\label{sec:sym_subgroup_conversion_rate}

When converting a pure state $\psi$ to another pure state $\phi$ in the asymptotic setup, the monotonicity of the QGTs yields an upper bound on possible conversion rate as shown in Sec.~\ref{sec:monotonicity_QGT_sketch}. However, the symmetry subgroups further restrict achievable conversion rate: the conversion rate always vanishes if $\mathrm{Sym}_G(\psi)\not \subset\mathrm{Sym}_G(\phi)$. In this section, we investigate ways to circumvent this restriction due to the symmetry subgroups by modifying the setup. 

\subsection{Synchronization of asymmetry}
Let us consider a scenario in which one aims to convert copies of pure states, $\psi$ and $\phi$, into copies of one of them---say, $\phi$. For instance, such a process achieves the synchronization of quantum clocks in RTA for $G = U(1)$, where copies of states $\psi$ and $\phi$, possibly having different periods, are converted into copies of $\phi$. 
See Fig.~\ref{fig:sync} for a schematic figure of this setup. 
We here first analyze the conversion rate in this synchronization scenario for $G=U(1)$ and later generalize it for an arbitrary compact Lie group. In such a synchronization scenario, as a corollary of Theorem~\ref{thm:conversion_rate_projective_finite_number}, we find that the restriction arising from the symmetry subgroups can be bypassed. 

\begin{figure}[tb]
    \centering
    \includegraphics[width=8.3cm]{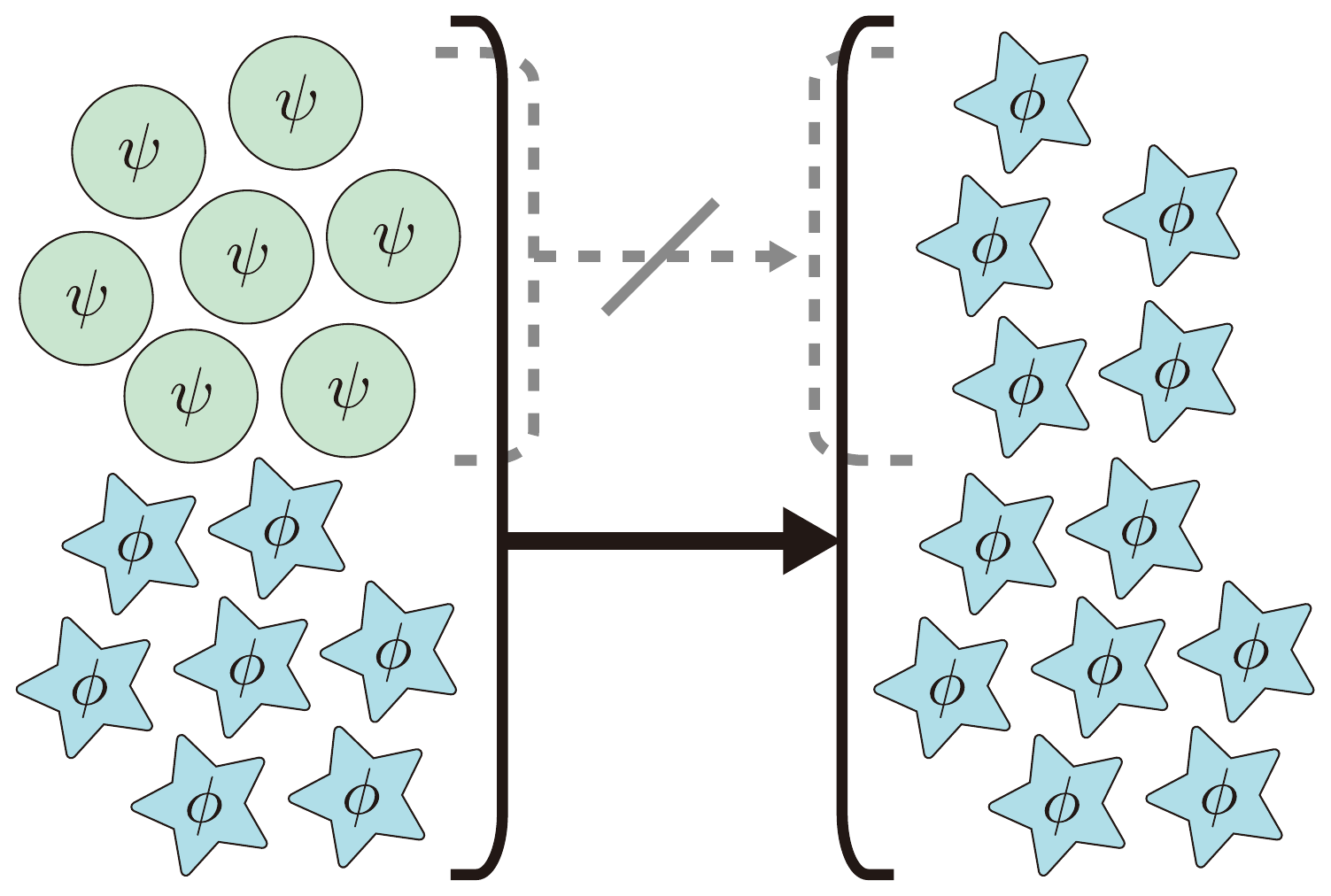}
    \caption{Schematic picture of the synchronization scenario. When $\mathrm{Sym}_G(\psi)\not\subset \mathrm{Sym}_G(\phi)$, the conversion rate from $\psi$ to $\phi$ vanishes, as shown by the dashed arrow in the figure. However, this restriction is circumvented when i.i.d. copies of $\psi\otimes \phi$ are converted into i.i.d. copies of $\phi$, as indicated by the solid arrow in the figure. }
    \label{fig:sync}
\end{figure}

Consider a pair of systems in pure states $\psi$ and $\phi$ with unitary representations $U(e^{\ii \theta})=e^{\ii H\theta}$ and $U'(e^{\ii \theta})=e^{\ii H'\theta}$ of $G=U(1)$ for $\theta\in[0,2\pi)$, where $H$ and $H'$ are Hermitian operators with integer eigenvalues. As mentioned in Sec.~\ref{sec:U(1)_same_periods} and Sec.~\ref{sec:irreversiblity}, when converting $\psi$ into $\phi$, Theorem~\ref{thm:conversion_rate_projective_finite_number} implies
\begin{align}
    \rap(\psi\to\phi)=
    \begin{cases}
        \frac{V(\psi,H)}{V(\phi,H')}\, &(\text{if }\exists k\in \mathbb{N},\,\tau=k\tau')\\
        0&(\text{otherwise})
    \end{cases},
\end{align}
where $\tau$ and $\tau'$ denotes the periods of $\psi$ and $\phi$, respectively. Now, note that in a synchronization scenario where $\psi\otimes \phi$ is converted into $\phi$, the period of $\psi\otimes \phi$ is always a positive integer multiple of the period of $\phi$. Therefore, from Theorem~\ref{thm:conversion_rate_projective_finite_number}, we immediately get
\begin{align}
    \rap(\psi\otimes \phi\to\phi)&=\frac{V(\psi\otimes \phi,H+H')}{V(\phi,H')}\nonumber \\
    &=\frac{V(\psi,H)}{V(\phi,H')}+1,
\end{align} 
regardless of the periods of $\psi$ and $\phi$. Importantly, even when $R(\psi\to\phi)=0$, $\{\psi^{\otimes N}\otimes \phi^{\otimes N}\}_N$ is asymptotically convertible to $\{\phi^{\otimes \floor{Nr^*}}\otimes \phi^{\otimes N}\}_N$ for any $r^*\in[0,V(\psi,H)/V(\phi,H'))$. In this case, therefore, $\phi^{\otimes N}$ acts like a catalyst, enhancing the conversion rate from zero to $V(\psi, H)/V(\phi, H')$ by bypassing the restriction on state convertibility imposed by their periods. 

Let us generalize the above observation to a scenario where copies of $\psi\otimes \phi$ are asymptotically converted to copies $\phi$ in RTA for a general compact Lie group $G$, which we shall refer to as the synchronization of asymmetry. Since $\mathrm{Sym}_G(\psi\otimes \phi)=\mathrm{Sym}_G(\psi)\cap \mathrm{Sym}_G( \phi)\subset \mathrm{Sym}_G(\phi)$, Eq.~\eqref{eq:conversion_rate_formula_forall_g} implies
\begin{align}
    &\rap (\psi\otimes \phi\to \phi)\nonumber \\
    &=\sup\{r\geq0 \mid\forall g\in G,\,\mathcal{Q}^{\mathcal{U}_g(\psi)\otimes \mathcal{U}_g'(\phi)}\geq r\mathcal{Q}^{\mathcal{U}'_g(\phi)}\}\\
    &=\sup\{r\geq0 \mid\forall g\in G,\,\mathcal{Q}^{\mathcal{U}_g(\psi)}\geq r\mathcal{Q}^{\mathcal{U}'_g(\phi)}\}+1,
\end{align}
where in the last equality, we used the additivity of the QGT, i.e., $\mathcal{Q}^{\mathcal{U}_g(\psi)\otimes \mathcal{U}'_g(\phi)}=\mathcal{Q}^{\mathcal{U}_g(\psi)}+\mathcal{Q}^{\mathcal{U}'_g(\phi)}$. Therefore, even when $\mathrm{Sym}_G(\psi)\not \subset \mathrm{Sym}_G(\phi)$ and hence $R(\psi\to\phi)=0$, it is possible to asymptotically convert $\{\psi^{\otimes N}\otimes \phi^{\otimes N}\}_N$ into $\{\phi^{\otimes \floor{r^*N}}\otimes \phi^{\otimes N}\}_N$ with vanishing error if $r^*<\sup\{r\geq0 \mid\forall g\in G,\,\mathcal{Q}^{\mathcal{U}_g(\psi)}\geq r\mathcal{Q}^{\mathcal{U}'_g(\phi)}\}$. 

So far, we have analyzed the case where copies of $\psi$ and $\phi$ are initially available at a ratio of 1:1 for simplicity. We here remark that the same argument applies to a more general mixture ratio. Specifically, let us consider the case where the mixture ratio is $m_1:m_2$ for positive integers $m_1$ and $m_2$, i.e., the conversion from copies of $\psi^{\otimes m_1} \otimes \phi^{\otimes m_2}$ into copies of $\phi$. Note that a general ratio $1:s$ for $s>0$ falls into this setup when $s$ is approximated by a rational ratio $m_2/m_1$. Since $\mathrm{Sym}_G(\psi^{\otimes m_1}\otimes \phi^{\otimes m_2})=\mathrm{Sym}_G(\psi)\cap \mathrm{Sym}_G(\phi)$, Eq.~\eqref{eq:conversion_rate_formula_forall_g} implies $\rap(\psi^{\otimes m_1}\otimes \phi^{\otimes m_2}\to\phi)=m_1\sup\{r\geq0 \mid\forall g\in G,\,\mathcal{Q}^{\mathcal{U}_g(\psi)}\geq r\mathcal{Q}^{\mathcal{U}'_g(\phi)}\}+m_2$. Therefore, $\sup\{r\geq0 \mid\forall g\in G,\,\mathcal{Q}^{\mathcal{U}_g(\psi)}\geq r\mathcal{Q}^{\mathcal{U}'_g(\phi)}\}$ generally provides the conversion rate from $\psi$ to $\phi$ in the synchronization scenario, regardless of the inclusion relation between their symmetry subgroups. 

\subsection{Sublinear additional resource}
In the previous subsection, we demonstrated that the conversion rate in the synchronization scenario is given by the expression in Eq.~\eqref{eq:conversion_rate_formula_forall_g}, regardless of the inclusion relation of symmetry subgroups. 
In this subsection, we investigate a generalized setup, showing that the restriction on the asymptotic conversion rate arising from the symmetry subgroup can be removed when a sublinear number of i.i.d. copies of a resource state is additionally available. Specifically, we prove the following:
\begin{thm}\label{thm:subliner}
    Let $\psi$ and $\phi$ be pure states. Fix any state $\chi$ on a finite-dimensional Hilbert space such that $\mathrm{Sym}_G(\chi)\subset \mathrm{Sym}_G(\phi)$. Consider a scenario where a sublinear number of $\chi$, $\chi^{\otimes \ceil{N^{1-\epsilon}}}$ with $\epsilon\in(0,1/9)$, is additionally available as a resource when converting $\psi^{\otimes N}$ into $\phi^{\otimes \floor{rN}}$. The optimal achievable rate in this setup, defined by $R^{*}(\psi\to\phi\, ;\chi)\coloneqq \sup\{r\geq 0\mid \{\psi^{\otimes N}\otimes \chi^{\otimes N^{1-\epsilon}}\}_N\gconv\{\phi^{\otimes \floor{rN}}\}_N\}$
    is given by
    \begin{align}
        R^{*}(\psi\to\phi\, ;\chi)=\sup\{r\geq 0\mid  \forall g\in G,\, \mathcal{Q}^{\mathcal{U}_g(\psi)}\geq r\mathcal{Q}^{\mathcal{U}_g'(\phi)}\}.
    \end{align}
\end{thm}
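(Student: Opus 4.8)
The plan is to prove the two inequalities $R^{*}(\psi\to\phi\,;\chi)\le\sup\{r\mid\forall g,\ \mathcal{Q}^{\mathcal{U}_g(\psi)}\ge r\mathcal{Q}^{\mathcal{U}'_g(\phi)}\}$ and its reverse separately, by adapting the converse argument of Sec.~\ref{sec:converse_part} and the estimation-then-conversion scheme of Sec.~\ref{sec:direct_part} to the enlarged input $\psi^{\otimes N}\otimes\chi^{\otimes\ceil{N^{1-\epsilon}}}$, viewed as a single state carrying the tensor-product representation.

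\emph{Converse (upper bound).} Suppose $\{\psi^{\otimes N}\otimes\chi^{\otimes\ceil{N^{1-\epsilon}}}\}_N\gconv\{\phi^{\otimes\floor{rN}}\}_N$. Treating the combined input as $\rho_N$, the $G$-covariance of the conversion channel again yields $\mathcal{E}_N(\ii[\rho_N,O^{\mathrm{tot}}])=\ii[\sigma_M,O'_M]$ with $M=\floor{rN}$, and Lemma~\ref{lem:monotonicity_fr} gives monotonicity of the Petz metric. The only new ingredient is the additivity splitting
\begin{align}
\|\ii[\psi^{\otimes N}\otimes\chi^{\otimes\ceil{N^{1-\epsilon}}},O^{\mathrm{tot}}]\|_{f_q,\rho_N}^2=N\|\ii[\psi,O]\|_{f_q,\psi}^2+\ceil{N^{1-\epsilon}}\|\ii[\chi,O_\chi]\|_{f_q,\chi}^2,
\end{align}
where $O_\chi$ is the corresponding generator on the $\chi$ system and the second term is sublinear in $N$. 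Dividing by $N$, multiplying by $f_q(0)$, bounding the right-hand side below through Lemma~\ref{lem:asymptotics_norm_asymmetric}, and taking $N\to\infty$ for fixed $q$ discards the $\chi$-contribution because $\ceil{N^{1-\epsilon}}/N\to0$; the subsequent limit $q\to1^-$ then reproduces $\mathcal{Q}^{\psi}\ge r\mathcal{Q}^{\phi}$, and repeating with $\mathcal{U}_g(\psi),\mathcal{U}'_g(\phi)$ for every $g\in G$ gives the upper bound. This direction uses no relation between symmetry subgroups, consistent with the remark following Proposition~\ref{prop:converse_part}.

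\emph{Achievability (lower bound).} I would reuse the proof of Proposition~\ref{prop:direct_part} almost verbatim, with one modification: the estimation step is performed on $\psi\otimes\chi$ rather than on $\psi$ alone. Applying Lemma~\ref{lem:reasonable_estimator} to $\rho=\psi\otimes\chi$ consumes $\ceil{N^{1-\epsilon}}$ copies of each, exactly the available budget and still sublinear, and produces an estimate $\hat g$ whose worst-case success probability tends to $1$, where the set $G_{\mathrm{succ.}}^{(g,\delta)}$ is now defined through invariance of $\psi\otimes\chi$. The decisive point is that, since $\psi\otimes\chi$ is a product state, the defining relation $\mathcal{U}_{\bm{\theta}}\circ\mathcal{U}_{\hat g}(\psi\otimes\chi)=\mathcal{U}_g(\psi\otimes\chi)$ forces both tensor factors to match; hence $g$ and $\hat g$ differ, up to the small $\bm{\theta}$, by an element of $\mathrm{Sym}_G(\psi\otimes\chi)=\mathrm{Sym}_G(\psi)\cap\mathrm{Sym}_G(\chi)\subset\mathrm{Sym}_G(\phi)$, the inclusion following from the hypothesis $\mathrm{Sym}_G(\chi)\subset\mathrm{Sym}_G(\phi)$. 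Consequently $\mathcal{U}'_{\bm{\theta}}\circ\mathcal{U}'_{\hat g}(\phi)=\mathcal{U}'_g(\phi)$, so the fidelity identity in Eq.~\eqref{eq:estimator_and_fidelity} continues to hold \emph{even when} $\mathrm{Sym}_G(\psi)\not\subset\mathrm{Sym}_G(\phi)$. The conversion step then applies Lemma~\ref{lem:Lie_asymptotic_conversion} to the remaining $N'=N-\ceil{N^{1-\epsilon}}$ copies of $\psi_{\hat g}$, the fidelity chain of Eq.~\eqref{eq:convergence_fidelity_succ_set} carries over unchanged, and after discarding a vanishing fraction of the outputs as in Eq.~\eqref{eq:direct_part_discard} one obtains $R^{*}(\psi\to\phi\,;\chi)\ge\sup\{r\mid\forall g,\ \mathcal{Q}^{\mathcal{U}_g(\psi)}\ge r\mathcal{Q}^{\mathcal{U}'_g(\phi)}\}$.

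The main obstacle lies in the achievability direction, specifically in guaranteeing that the sublinear resource refines the estimate enough to circumvent the symmetry-subgroup obstruction. Its resolution is the identity $\mathrm{Sym}_G(\psi\otimes\chi)\subset\mathrm{Sym}_G(\phi)$ together with the product-state argument splitting invariance of $\psi\otimes\chi$ into its two factors: this is precisely what replaces the assumption $\mathrm{Sym}_G(\psi)\subset\mathrm{Sym}_G(\phi)$ used in Proposition~\ref{prop:direct_part} to ensure that $\hat g$ is fine enough to reproduce $\mathcal{U}'_g(\phi)$. One must also verify that a single exponent $\epsilon\in(0,1/9)$ is simultaneously admissible for the estimator (Lemma~\ref{lem:reasonable_estimator}, requiring $\epsilon\in(0,1/2)$) and for the conversion channels (Lemma~\ref{lem:Lie_asymptotic_conversion}, requiring $\epsilon\in(0,1/9)$), so that the composite error vanishes uniformly in $g$ and Lemma~\ref{lem:gcov_and_cptp} can be invoked to conclude the asymptotic convertibility.
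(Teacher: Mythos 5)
Your proposal is correct and follows essentially the same route as the paper's proof: the converse isolates the sublinear $\chi$-contribution via additivity of the $f_q$-metric and lets $\ceil{N^{1-\epsilon}}/N\to 0$ before invoking Lemma~\ref{lem:asymptotics_norm_asymmetric}, and the achievability runs the estimate-and-convert protocol of Proposition~\ref{prop:direct_part} with the estimation performed on $\psi\otimes\chi$, using that invariance of the product state splits into its factors so that $\mathrm{Sym}_G(\chi)\subset\mathrm{Sym}_G(\phi)$ restores Eq.~\eqref{eq:estimator_and_fidelity}. This is exactly the argument given in Sec.~\ref{sec:sym_subgroup_conversion_rate}.
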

See Fig.~\ref{fig:sublinear} for a schematic picture of the setup of Theorem~\ref{thm:subliner}.
\begin{figure}[tb]
    \centering
    \includegraphics[width=8.3cm]{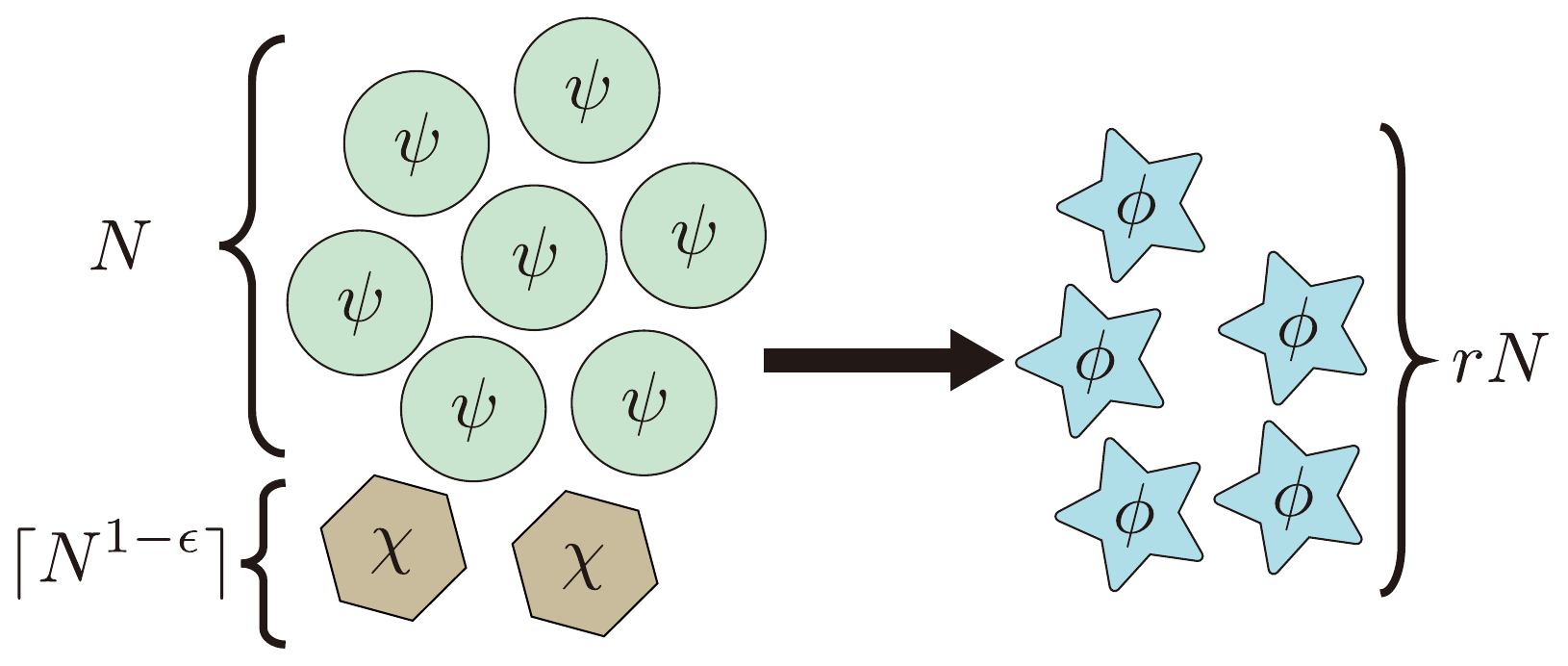}
    \caption{Schematic picture of the setup of Theorem~\ref{thm:subliner}, where a sublinear additional resource, $\chi^{\otimes \ceil{N^{1-\epsilon}}}$, is available.}
    \label{fig:sublinear}
\end{figure}

It should be emphasized that this theorem is valid regardless of the inclusion relation between the symmetry subgroups of pure states $\psi$ and $\phi$. The special case of $\chi=\phi$ reproduces the results in the previous subsection, in which the use of $\phi^{\otimes \ceil{N^{1-\epsilon}}}$ was not obvious because we had focused only on the linear conversion rate. 
As will be explained in the proof of Theorem~\ref{thm:subliner} below, $\chi^{\otimes \ceil{N^{1-\epsilon}}}$ is used when identifying the underlying group transformations in the estimation step of the direct part. 
A state $\chi$ such that $\mathrm{Sym}_G(\chi)=\{e\}$, which satisfies $\mathrm{Sym}_G(\chi)\subset \mathrm{Sym}_G(\phi)$ for any $\phi$, is of particular importance, since such a state $\chi$ ensures that no ambiguity originating from the symmetry subgroup arises in the estimation. 
For $G=U(1)$, an example of such a state is the coherence bit, given by a superposition of the ground state and the first excited state. Given that Lemma~\ref{lem:reasonable_estimator} is valid for any state $\rho$, we emphasize that Theorem~\ref{thm:subliner} applies for a general state $\chi$, including mixed states.

\begin{proof}[Proof of Theorem~\ref{thm:subliner}]
    Converse part: The converse part remains valid since adding a sublinear number of resource states does not change the optimal possible conversion rate. More precisely, instead of Eq.~\eqref{eq:fq_norm_monotonicity}, we get
    \begin{align}
        &f_q(0)\|\ii[\psi,O]\|_{f_q,\psi}^2+\frac{\ceil{N^{1-\epsilon}}}{N} \|\ii[\chi,O'']\|_{f_q,\chi}^2\nonumber \\
        &\quad \geq  \frac{M}{N}\gamma^\dag \mathcal{Q}^\phi\gamma-\frac{M}{N}h(\epsilon)+\frac{1}{N}o\left(M\right),
    \end{align}
    where $O''$ is defined by $O''\coloneqq \gamma^\dag X''$ and $X''_{\mu}\coloneqq -\ii \frac{\partial}{\partial\lambda^\mu}U''(g(\lambda))|_{\lambda=0}$ by using the projective unitary representation of $G$ on the Hilbert space on which $\chi$ is defined. Since $ \|\ii[\chi,O'']\|_{f_q,\chi}^2<\infty$ for any finite-dimensional system, the second term in the left-hand side vanishes in the limit of $N\to\infty$. By following the arguments from Eq.~\eqref{eq:fq_norm_monotonicity} to Eq.~\eqref{eq:monotonicity_QGT_converse_part}, we therefore get
    \begin{align}
        R^{*}(\psi\to\phi\,;\chi)\leq \sup\{r\geq 0\mid  \forall g\in G,\, \mathcal{Q}^{\mathcal{U}_g(\psi)}\geq r\mathcal{Q}^{\mathcal{U}_g'(\phi)}\}. 
    \end{align}

    Direct part: In the estimation step, we estimate $g\in G$ by consuming $(\mathcal{U}_g(\psi)\otimes \mathcal{U}_g''(\chi))^{\otimes \ceil{N^{1-\epsilon}}}$. The estimator $\hat{g}$ ensured to exist by Lemma~\ref{lem:reasonable_estimator} satisfies $((\mathcal{U}_{\theta}\otimes \mathcal{U}''_{\theta})\circ (\mathcal{U}_{\hat{g}}\otimes \mathcal{U}_{\hat{g}}''))(\psi\otimes \chi)=(\mathcal{U}_g\otimes \mathcal{U}''_g)(\psi\otimes \chi)$, i.e., 
    \begin{align}
        \mathcal{U}_{\theta}\circ\mathcal{U}_{\hat{g}}(\psi) =\mathcal{U}_g(\psi)\, \land\,  \mathcal{U}''_{\theta}\circ\mathcal{U}''_{\hat{g}}(\chi) =\mathcal{U}''_g(\chi)
    \end{align}
    with a failure probability that asymptotically vanishes as $N\to\infty$. From the assumption $\mathrm{Sym}_G(\chi)\subset \mathrm{Sym}_G(\phi)$, if $\theta$ satisfies $\mathcal{U}''_{\theta}\circ\mathcal{U}''_{\hat{g}}(\chi) =\mathcal{U}''_g(\chi)$, then $\mathcal{U}'_{\theta}\circ\mathcal{U}'_{\hat{g}}(\phi) =\mathcal{U}'_g(\phi)$ also holds, implying that Eq.~\eqref{eq:estimator_and_fidelity} remains valid in this case. 

    We slightly modify the channel in Eq.\eqref{eq:definition_full_channel} as
    \begin{align}
        &\mathcal{E}_N(\mathcal{U}_g(\psi)^{\otimes N}\otimes (\mathcal{U}_g''(\chi))^{\otimes \ceil{N^{1-\epsilon}}})\nonumber\\
        &\coloneqq \int \dd\mu_G( \hat{g})\, \,p(\hat{g}|(\mathcal{U}_g(\psi)\otimes \mathcal{U}''_g(\chi))^{\otimes \ceil{N^{1-\epsilon}}})\mathcal{E}_{N'}^{(\hat{g})}\left(\mathcal{U}_g(\psi)^{\otimes N'}\right).
    \end{align}
    For this sequence of the channels, by following the arguments from Eq.~\eqref{eq:estimator_and_fidelity} to Eq.~\eqref{eq:direct_part_discard}, we complete the proof of the direct part, i.e., 
    \begin{align}
        R^{*}(\psi\to\phi\,;\chi)\geq \sup\{r\geq 0\mid  \forall g\in G,\, \mathcal{Q}^{\mathcal{U}_g(\psi)}\geq r\mathcal{Q}^{\mathcal{U}_g'(\phi)}\}. 
    \end{align}
\end{proof}

\makeatletter
\let\addcontentsline\orig@addcontentsline
\makeatother

\end{document}